\documentclass[11pt]{article} 

\usepackage{amssymb,xspace,graphicx,relsize,bm,xcolor,amsmath,breqn,algorithm,algpseudocode,multirow}
\usepackage[pagebackref]{hyperref}
\usepackage[usestackEOL]{stackengine}
\usepackage[amsmath,thmmarks]{ntheorem}

\usepackage{cleveref}
\usepackage{thm-restate}
\usepackage{array}
\usepackage{parskip}

\newcommand{\BInf}{\operatorname{BInf}}
\newcommand{\sm}{\mathrm{sm}}
\newcommand{\cost}{\mathrm{cost}}

\usepackage{url}
\def\01{\{0,1\}}
\newcommand{\ceil}[1]{\lceil{#1}\rceil}
\newcommand{\floor}[1]{\lfloor{#1}\rfloor}
\newcommand{\eps}{\varepsilon}
\newcommand{\ket}[1]{|#1\rangle}
\newcommand{\bra}[1]{\langle#1|}

\newcommand{\occ}{\operatorname{occ}}
\newcommand{\Prb}{\mathbf{Pr}}
\newcommand{\cR}{\mathsf{R}}
\newcommand{\cQ}{\mathsf{Q}}
\newcommand{\one}{\mathbf{1}}

\newcommand{\E}{\mathbb{E}}
\newcommand{\Exp}{\mathbb{E}}

\newcommand{\F}{\ensuremath{\mathbb{F}}}

\newcommand{\A}{\ensuremath{\mathcal{A}}}

\newcommand{\R}{\ensuremath{\mathbb{R}}}

\usepackage{ntheorem}

\newcommand{\dist}{\operatorname{dist}}
\newcommand{\Bin}{\operatorname{Bin}}

\newcommand{\ed}{\ensuremath{\mathsf{ed}}}

\DeclareMathOperator{\poly}{poly}

\DeclareMathOperator{\polylog}{polylog}

\newtheorem{theorem}{Theorem}[section]
\newtheorem{definition}[theorem]{Definition}

\newtheorem{fact}[theorem]{Fact}
\newtheorem{remark}[theorem]{Remark}
\newtheorem{lemma}[theorem]{Lemma}

\newtheorem{corollary}[theorem]{Corollary}

\newtheorem{claim}[theorem]{Claim}

\usepackage{ifdraft}

\newcommand{\alg}{\mathsf{Alg}}

\newenvironment{proof}{\par\noindent{\bf Proof.}\quad}{\hfill  $\Box$}
\newenvironment{proofof}[1]{\par\noindent{\bf Proof of #1.}\quad}{\hfill  $\Box$}

\usepackage[margin=1in]{geometry}
\hypersetup{
	colorlinks,
	linkcolor={magenta!100!black},
	citecolor={blue!100!black},
}
\usepackage{tcolorbox}

\renewcommand\thmcontinues[1]{Formal Statement}

\usepackage{footnote}

\newcommand{\Simon}{\ensuremath{\mathsf{Simon}}}

\makeatletter
\def\thm@space@setup{\thm@preskip=\parskip \thm@postskip=0pt}
\usepackage{algpseudocode}

  \title{Quantum Query Complexity Beyond the Worst Case}

\author{
Srinivasan Arunachalam\\[2mm]
IBM Research\\
\small Silicon Valley Lab\\
\small \texttt{Srinivasan.Arunachalam@ibm.com}
\and
Yanlin Chen\\[2mm]
 QuICS\\ \small University of Maryland  \\
\small \texttt{yanlin@umd.edu}
\and
Amin Shiraz Gilani\\[2mm]
 QuICS\\ \small University of Maryland  \\
\small \texttt{asgilani@umd.edu}
}

\begin{document}
\date{}
\maketitle

\begin{abstract}
Smoothed analysis is a central framework in classical algorithms for explaining performance of algorithms beyond the worst case, often explaining why algorithms perform well in practice. We initiate a systematic study of its quantum counterpart and show the following:
\vspace{2mm}
\begin{enumerate}
    \item We show that there is a total function whose smoothed quantum query complexity is exponentially smaller than its classical query complexity.
    \vspace{1mm}
    \item We give near-tight characterizations of smoothed randomized and quantum query complexities for symmetric Boolean functions. This unifies the worst-case complexity results of [Beals et al, FOCS'98] and average-case complexity results of [Ambainis and de Wolf, STACS'00] for such functions. 
    \vspace{1mm}
    \item We study two variants of the pattern matching problem, one where the pattern can be accessed through queries and one where it is given as part of the input. {In both settings we show that the smoothed quantum query complexity can be \emph{superpolynomially} better than the worst-case quantum query complexity, if the pattern and text are of comparable length.} Furthermore, in the former setting, we show that its smoothed quantum complexity can be superpolynomially smaller than the corresponding classical complexity. In the latter setting, we prove \emph{optimal} bounds on the randomized and quantum smoothed complexity, proving a quadratic quantum speed-up.
    \vspace{1mm}
   \item We give new quantum algorithms for worst-case Ulam distance, yielding polynomially faster quantum algorithms for smoothed edit distance. Our main technical ingredients include a near-tight quantum algorithm for $\varepsilon$-approximating the number of collisions between two non-repetitive strings, improving the result of Le Gall and Ng [QIC'22].\end{enumerate}
\vspace{2mm}
Together, our results show that smoothing can reveal larger quantum speedups than worst-case analysis suggests, opening a path towards quantum advantage on more realistic~inputs. 
\end{abstract}


\newpage 
{\small \tableofcontents}
\newpage 
\section{Introduction}
\paragraph{Classical motivation.} Spielman and Teng~\cite{spielman2004smoothed} introduced the \emph{smoothed complexity model} to understand a longstanding puzzle surrounding the well-known simplex algorithm: despite its exponential worst-case complexity, it is remarkably efficient in practice. To understand this discrepancy, they observed that the hard instances underlying the exponential lower bound formed a highly fragile set, and that even small random perturbations of these instances caused the simplex algorithm to become efficient. To understand this,~\cite{spielman2004smoothed} introduced the
\emph{smoothed complexity} framework: Let $P$ be a problem over
$x\in \R^d$, let $\alg$ be an algorithm for~$P$. Denote by $C(\alg,x)$ the
complexity of $\alg$ on input $x$. In the worst-case setting, the complexity
of solving $P$ using $\alg$ is
\[
    \max_{x\in \R^d} [C(\alg,x)].
\]
Smoothed complexity instead measures
\[
    \max_{x\in \R^d}
    \mathop{\Exp}_{g\sim \mathcal{G}(0,\sigma)^d}
    \bigl[C(\alg,x+g)\bigr],
\]
where $\mathcal{G}(0,\sigma)$ denotes a mean-zero Gaussian distribution with
variance $\sigma$. Thus, an adversary is still free to choose the worst possible
input $x$, but the algorithm is evaluated only after $x$ has been subjected to a
small random perturbation. When $\sigma=0$, this recovers worst-case complexity;
as the magnitude of the perturbation increases, the model moves toward the
average-case regime. Hence, smoothed complexity interpolates between
worst-case and average-case analysis. 
Spielman and Teng called an algorithm
\emph{smoothed efficient} if its smoothed complexity is bounded by
$\poly(n,1/\sigma)$. Their seminal result showed that the simplex algorithm,
despite its exponential worst-case complexity, has polynomial smoothed
complexity. Since then there have been a sequence of works that improved the dependencies of the algorithm~\cite{spielman2004smoothed,dadush2018friendly}. For this result, they won the Godel Prize in 2008. Since their introduction, several works have been analyzed in the smoothed model and shown to have polynomial-time smoothed complexity much more efficient than the worst case complexity of the problem~\cite{chen2020polynomial,chen2022near,sankar2006smoothed,roglin2007integer,kalai2009learning,roglin2009multiobjective,arthur2011kmeans,etscheid2014maxcut,bhaskara2014tensor,ge2015gaussian}.

\textbf{Quantum motivation.} This classical motivation suggests a different perspective on the \emph{search for quantum advantage}. Rather than focusing exclusively on worst-case complexity, one can ask whether quantum speedups emerge under more realistic input models. Smoothed complexity provides precisely such a lens, comparing quantum and classical algorithms on arbitrary instances subjected to small random perturbations. There is, however, an important distinction between the classical and quantum settings. Classical smoothed analysis was developed only after decades of empirical evidence showed that worst-case complexity could poorly predict the behavior of algorithms such as simplex. In quantum computing, we need not wait for large-scale fault-tolerant devices---and decades of experience with them---before asking the analogous question. With such devices potentially arriving in the coming years, and with new settings for quantum speedups increasingly difficult to identify, this motivates the systematic study of \emph{quantum algorithms in the smoothed complexity model}.

\subsection{Main results} As far as we are aware, we initiate the first systematic study of this topic. Our results provide several first steps towards understanding quantum query algorithms in the smoothed model. We obtain both general query characterizations of symmetric functions, total function quantum-classical separations  and concrete quantum speedups for string problems in the smoothed model. In the next three subsections we summarize our contributions and give high-level proof ideas.

\paragraph{\emph{1.} Total function separation.}
One of the canonical exponential separations between quantum and classical query complexity is Simon's problem~\cite{Simon} on $N=2^n$ bits, where the input oracle is \emph{promised} to either be one-to-one or to hide a nonzero period. One can show that the quantum query complexity for this problem is $\poly(n)$ and the randomized query complexity is $\Omega(\sqrt{2^n})$. The seminal work of Beals et al.~\cite{beals2001quantum} showed that this partial promise on the function is \emph{necessary}, i.e., for total Boolean functions, quantum and classical query complexities are polynomially related. Thus, the exponential advantage of Simon's algorithm cannot directly survive after removing the promise  in the  worst-case query model. For our second application, we consider a natural \emph{total version} of Simon's problem where there is no promise on the function and the goal is to decide if there is a period. The quantum worst case complexity of this problem is $\Omega(2^{n/2})$ and surprisingly we show that it admits a polynomial-query smoothed quantum algorithm, witnessing  an exponential separation between quantum smoothed and classical smoothed complexity.

\emph{Fourier sampling for other groups?} The proof of our quantum smoothed algorithm follows~\cite{ambainis2001average} in using  the weak Fourier sampling algorithm for the usual Simon's problem, but showing why it works in the smoothed model is non-trivial and deferred to Section~\ref{sec:total}. We find this example conceptually suggestive. The quantum part of the algorithm is not a new problem-specific primitive: it is precisely the Fourier-sampling procedure at the heart of hidden-subgroup algorithms.   Conceptually, this result closely mirrors the philosophy of classical smoothed analysis: we do not design a new quantum algorithm specifically for the perturbed problem, rather what changes is the input model. Smoothing creates enough additional structure for this standard primitive to become efficient, even though it can be exponentially inefficient on arbitrary total instances. This raises the broader possibility that other standard quantum primitives for hidden-structure problems---including Fourier-sampling techniques beyond the abelian setting---may reveal new speedups when viewed through the lens of smoothed complexity.

\paragraph{\emph{2.} Characterizing symmetric functions.}
Symmetric Boolean functions have long served as a simple but rich class of functions for understanding query complexity. In a seminal work, Beals et al.~\cite{beals2001quantum} gave tight characterizations of the quantum query complexity of \emph{every} symmetric Boolean function. More generally, symmetric functions provide a particularly clean setting in which query complexity reduces to distinguishing inputs of different Hamming weights. Our first result gives an analogous characterization in the smoothed model. We introduce two combinatorial parameters, the \emph{transition cost} and \emph{weighted influence}, which characterize, up to polylogarithmic factors, the randomized and quantum query complexities of every symmetric Boolean function. We defer their precise definitions and the proof of the characterization to Section~\ref{sec:symmetric}.

Our proof first characterizes the expected number of queries needed on inputs of each Hamming weight. Symmetrization allows us to obtain lower bounds from known results on distinguishing two Hamming weights. For the upper bounds, we estimate the weight with increasing precision, using random sampling classically and amplitude estimation quantumly, and stop once all weights consistent with the estimates give the same function value. The main technical difficulty is averaging these costs over the perturbed weight distribution: inputs near a transition, where the function changes value between adjacent weights, can be much more expensive than typical inputs. For the randomized bound, concentration of the perturbed weight allows us to control the average using the transition cost. For the quantum bound, contributions from several transitions can accumulate, giving the additional weighted influence term. To control these contributions, we partition the Hamming weights into intervals on which the function is constant and use the unimodality of the perturbed weight distribution.

\paragraph{\emph{3.} Speedups for pattern matching.} 
Our third application turns to concrete, natural problems and asks whether quantum algorithms in the smoothed model can yield polynomial or even supe-polynomial speedups. To that end, we focus on two fundamental string problems: \emph{pattern matching} and \emph{edit distance}.  Our initial motivation to look at string problems was the  average-case quantum pattern-matching algorithm by Montanaro~\cite{montanaro2017quantum}. Recall that in pattern matching, we are given a pattern and a string (via an oracle) and the goal is to decide if the pattern occurs in the string or not. He showed that when both the text and pattern are \emph{random} say of length $n, n/100$, pattern matching admits a quantum algorithm running in time
$2^{O(\sqrt{\log n})}$, yielding a superpolynomial improvement over the best classical complexity whose complexity scales $\Omega(\sqrt{n})$. In the worst case, note that the quantum-classical separation is quadratic, i.e., $O(\sqrt{n})$ quantum upper bound versus classical $\Omega(n)$ lower bound. This provides a natural example where moving away from worst-case inputs exposes a much larger quantum advantage. Motivated by this result, we ask whether the same phenomenon persists in the smoothed model, where the underlying instance remains adversarial and only a small random perturbation is applied. 

Our first observation is that, one can redo the analysis in Montanaro and show that the corresponding quantum algorithm continues to give a super-polynomial improvement even in the \emph{smoothed} setting, i.e., slightly going away from the worst case already gives a super-polynomial quantum-classical separation.\footnote{We remark that the definition of smoothed model is also well-defined for discrete problems in classical literature, which we define in the prelimnaries.} We then consider the standard \emph{exact pattern-matching} problem, in which an arbitrary pattern $P\in\Sigma^m$ is given explicitly to the algorithm (instead of being given as an oracle) and the goal is to determine whether it occurs in a text of length $n$. Here the pattern is completely unrestricted and only the text is perturbed. This is both the usual formulation of string matching and a more adversarial setting than the random-pattern model: the algorithm cannot rely on any generic structure of the pattern $P$, and furthermore remain correct on every realized perturbed text. Restricting for simplicity to the binary alphabet $\Sigma=\mathbb{F}_2$, each coordinate of the text is independently rerandomized with probability $\sigma$. We give optimal bounds for this~task.

\paragraph{\emph{4.} Speedups for Edit distance.}Our second string application concerns \emph{edit distance}, one of the canonical problems in fine-grained complexity. The standard dynamic-programming algorithm runs in $O(n^2)$ time, and Backurs and Indyk~\cite{BackursIndyk15} showed that even an $O(n^{1.99})$-time algorithm for exact edit distance would refute SETH. Quantumly, the QSETH framework of Buhrman, Patro, and Speelman~\cite{BPS21} gives an analogous $n^{1.5-o(1)}$ conditional barrier for exact edit distance. 
These barriers concern the running-time complexity of the exact worst-case problem.
This makes approximation and smoothed input models natural avenues for going beyond the fine-grained complexity of the exact worst-case problem. Indeed, smoothed edit distance was studied classically in~\cite{andoni2012smoothed,AN10UlamTester}, where substantially faster approximation algorithms were obtained. 
Here, instead, we focus on the quantum query complexity of approximating edit distance in the smoothed model and show that there exists a quantum algorithm that solves the $O(1/\sigma)$-approximate edit distance problem in the $\sigma$-smoothed model using 
$$\widetilde O\left(
        \frac{1}{\sigma}\left(\frac{n}{R}+\frac{n^{2/3}}{R^{1/3}}\right)
    \right)$$
queries, where $R$ denotes the edit distance between the two smoothed strings.

\emph{Comparison with prior work.}
We first compare our result with the classical bound for smoothed edit distance. Andoni and Krauthgamer~\cite{andoni2012smoothed} gave a reduction from smoothed edit distance to worst-case Ulam distance, namely, edit distance between two non-repetitive strings. Combining this reduction with the worst-case Ulam distance algorithm of Andoni and Nguyen~\cite{AN10UlamTester} gives
\[
    \widetilde O \!\left(
        \frac{1}{\sigma}
        \left(
            \frac{n}{R}+\sqrt n
        \right)
    \right)
\]
queries to the smoothed strings. Our quantum algorithm replaces the $\sqrt n$ term by $n^{2/3}/R^{1/3}$. Hence, when $R=\omega(\sqrt n)$, our query complexity is polynomially smaller; in particular, for $R=\Theta(n)$ it decreases from $\widetilde O(\sqrt n/\sigma)$ to $\widetilde O(n^{1/3}/\sigma)$.

On the quantum side, Le Gall and Seddighin~\cite{le2022quantum} gave a $\widetilde O(\sqrt n)$-query algorithm for approximating worst-case Ulam distance, which together with the smoothed reduction we mentioned above gives a $\widetilde O(\sqrt{n}/\sigma)$-query algorithm for smoothed edit distance. In contrast, our algorithm above gives a better algorithm whenever $R=\omega(\sqrt{n})$, in particular for $R=\Theta(n)$, it improves the $\sqrt{n}/\sigma$ complexity to $O(n^{1/3}/\sigma)$.

\emph{Improved quantum collision estimation.} We do not sketch our quantum algorithm but state a technical contribution that might be of independent interest. 
Collision estimation is a basic primitive underlying sublinear algorithms for Ulam distance. Recently, Le Gall and Ng~\cite{Gall2022approximatecollisioncounting} gave a quantum algorithm for estimating the number $m$ of collisions between two non-repetitive strings to relative error $\varepsilon$ using
$\widetilde O\!\left(\varepsilon^{-25/24}(n/\sqrt m)^{2/3}\right)$ queries. We improve the dependence on $\varepsilon$ and obtain a query complexity of
$\widetilde O\!\left(\varepsilon^{-2/3}(n/\sqrt m)^{2/3}\right)$.
Building on this improvement, we further obtain a sharper, gap-sensitive collision-counting primitive: distinguishing at most $\ell_1$ collisions from at least $\ell_2$ collisions requires only
\[
\widetilde O\!\left(
\left(
\frac{n\sqrt{\ell_2}}{\ell_2-\ell_1}
\right)^{2/3}
\right)
\]
queries.
Our algorithm combines quantum walks on the Johnson graph for approximate counting along with a sharper analysis tailored to collision thresholds, yielding improved gap, multiplicative, and additive collision estimators. In complementary steps of the Ulam distance algorithm, we need to recover the collisions themselves rather than merely estimate their number. For this we adapt the recent multiple collision algorithm of~\cite{Bonnetain2025multiplecollision} to the fixed matching induced by two non-repetitive strings. The results of \cite{Bonnetain2025multiplecollision} concern random functions but their arguments can be modified to get a (tight) quantum algorithm for enumerating all $m$ collisions between two non-repetitive strings in $\widetilde O(n^{2/3}(m+1)^{1/3})$ queries. Our argument is much simpler than the argument of \cite{Bonnetain2025multiplecollision} since the collision statistics for a matching are much easier to establish than that of a random function, and we are guaranteed to only have singletons and collision pairs (and no larger collision tuples). Appendix~\ref{app:fixed-matching-collisions} gives a self-contained proof of this result. Thus, our new collision estimators together with this enumeration algorithm provide the two complementary primitives needed to quantize the Ulam distance framework of~\cite{AN10UlamTester}. 
Along the way, we also extend the framework of variable-time quantum search~\cite{Ambainis10VariableTime} to \emph{variable-time approximate counting}, which may be of independent interest.

\subsection{Open questions}
A central goal at the moment in the field  is to discover new problems for which quantum computers offer substantial computational advantages. Smoothed complexity provides a different lens through which to search for such algorithms: rather than restricting attention to worst-case instances, one can ask whether small random perturbations expose quantum speedups that are otherwise hidden. It is natural to ask whether this perspective leads to new examples of superquadratic, or even superpolynomial, quantum speedups. We view our work as initiating the study of quantum algorithms in the smoothed complexity model, and it leaves open several basic questions for this~direction.
\begin{enumerate}
    \item \textbf{Fourier sampling} for non-abelian groups: We show that a minor modification of Fourier sampling can be useful for solving total Simon's problem exponentially faster than classical algorithms. Perhaps can we use weak/strong Fourier sampling and show it solves Graph isomorphism on smoothed instances? Although this result is known classically, seeing if Fourier sampling solves it is an interesting question. 
    \item \textbf{More practical problems} wherein smoothed complexity can help: In this work, we restricted our attention only to string problems, but it is natural to consider other problems and see when algorithms work well on realistic instances, rather than worst case instances.
    \item \textbf{State learning} transitioning between Frobenius and operator norm. There have been many works in learning quantum objects in Frobenius norm or operator norm, which can be viewed as the average case and worst case metric respectively. Is there a smoothed notion that interpolates these norms for which one can obtain learning algorithms.  There have been a couple of works~\cite{baumer2025approximate,kahanamoku2025log,zhao2021smooth,yi2026certifying,johnston2023computing} that have considered some version of average case for their tasks, motivating the question of what happens for these tasks in the smoothed model.

    \item \textbf{Quantum heuristics.}  Some prominent quantum algorithms such as QAOA and quantum annealing are heuristic with limited provable guarantees. It would be interesting to analyze such algorithms in the smoothed model and ask whether small random perturbations of the input can lead to rigorous performance guarantees, providing a theoretical explanation for when and why these heuristics perform well in practice?
\end{enumerate}

\paragraph{AI disclosure.} 
All ideas in this paper were generated by the authors.
We relied on GPT 5.5's help in proving Lemma~\ref{lem:localization} (a  technical lemma for obtaining tight symmetric function characterization) and 
during the preparation of this manuscript, the authors used  GPT-5.5 and GPT-5.6 Sol for literature searches, mathematical discussions revolving~\cite{andoni2012smoothed,andoni2013shift}, and editorial revisions.  The authors independently verified all mathematical arguments and references and take full responsibility for the content of the paper.

\section{Preliminaries}

\subsection{Useful lemmas}
\begin{lemma}[Chernoff bound]\label{lem:chernoff2}
Let $X_1,\ldots,X_k$ be independent Bernoulli random variables,
and put $X=\sum_{i=1}^k X_i$ and $\mu=\E[X]$.
Then, for every $0<\delta<1$, we have
$\Pr[X\le(1-\delta)\mu]\le\exp(-\delta^2\mu/2)$.
\end{lemma}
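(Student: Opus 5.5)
We prove the lower-tail Chernoff bound with the exponential-moment (Chernoff) method, applied to $e^{-tX}$ for a parameter $t>0$ to be optimized. Write $p_i=\Pr[X_i=1]$, so that $\mu=\sum_i p_i$.

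\emph{Step 1 (Markov).} For every $t>0$, the map $x\mapsto e^{-tx}$ is decreasing. Markov's inequality therefore gives
\[
\Pr[X\le(1-\delta)\mu]=\Pr\bigl[e^{-tX}\ge e^{-t(1-\delta)\mu}\bigr]\le e^{t(1-\delta)\mu}\,\E[e^{-tX}].
\]

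\emph{Step 2 (independence).} By independence, $\E[e^{-tX}]=\prod_i \E[e^{-tX_i}]=\prod_i\bigl(1-p_i(1-e^{-t})\bigr)$. Using $1+y\le e^{y}$, each factor is at most $\exp(-p_i(1-e^{-t}))$. Multiplying these gives $\E[e^{-tX}]\le \exp(-\mu(1-e^{-t}))$. The bound now depends only on $\mu$, which is what makes the heterogeneous $p_i$ harmless.

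\emph{Step 3 (optimize $t$).} Choose $t=-\ln(1-\delta)>0$, which is valid since $0<\delta<1$. Then $e^{-t}=1-\delta$, and combining Steps 1 and 2 yields
\[
\Pr[X\le(1-\delta)\mu]\le \left(\frac{e^{-\delta}}{(1-\delta)^{1-\delta}}\right)^{\mu}
=\exp\bigl(-\mu[\delta+(1-\delta)\ln(1-\delta)]\bigr).
\]

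\emph{Step 4 (elementary inequality).} This is the only step needing care. It remains to show $f(\delta):=\delta+(1-\delta)\ln(1-\delta)\ge \delta^2/2$ on $[0,1)$. Set $g(\delta)=f(\delta)-\delta^2/2$. Then $g(0)=0$, and
\[
g'(\delta)=1-\ln(1-\delta)-1-\delta=-\ln(1-\delta)-\delta\ge 0,
\]
since $-\ln(1-\delta)=\sum_{k\ge1}\delta^k/k\ge\delta$. Hence $g$ is nondecreasing and $g\ge 0$ on $[0,1)$.

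Substituting $f(\delta)\ge\delta^2/2$ into the bound of Step 3 gives $\Pr[X\le(1-\delta)\mu]\le\exp(-\delta^2\mu/2)$, as claimed.
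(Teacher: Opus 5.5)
Your proof is correct: the Markov step on $e^{-tX}$, the bound $\E[e^{-tX}]\le\exp(-\mu(1-e^{-t}))$ via $1+y\le e^{y}$, the choice $t=-\ln(1-\delta)$, and the calculus check $\delta+(1-\delta)\ln(1-\delta)\ge\delta^2/2$ all go through, and the degenerate case $\mu=0$ is trivially covered. The paper gives no proof of this lemma---it is quoted in the preliminaries as a standard tool---and your argument is the usual textbook exponential-moment proof of that standard fact.
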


\begin{lemma}[Bernstein inequality]\label{thm: Bernstein}
    Let $X_1, \dots, X_N$ be independent, zero-mean random variables satisfying $|X_i| \leq K$ all $i$. Then, for every $t \geq 0$, we have
    \[
        \mathbb{P}\left\{ \left| \sum_{i=1}^N X_i \right| \geq t \right\} \leq 2 \exp \left( - \frac{t^2/2}{\sigma^2 + Kt/3} \right)
    \]
    where $\sigma^2 = \sum_{i=1}^N \mathbb{E} X_i^2$ is the variance of the sum.
\end{lemma}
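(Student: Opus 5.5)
The Bernstein inequality is a standard tail bound, and I would prove it by the Chernoff (exponential moment) method. Write $S=\sum_{i=1}^N X_i$. It suffices to bound the upper tail $\mathbb{P}\{S\ge t\}$ by $\exp\bigl(-\tfrac{t^2/2}{\sigma^2+Kt/3}\bigr)$. The lower tail then follows by applying the same bound to the variables $-X_i$, which satisfy the same hypotheses, and a union bound gives the factor $2$.

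\textbf{Step 1: bound each moment generating function.} For $\lambda>0$, Markov's inequality and independence give
\[
\mathbb{P}\{S\ge t\}\le e^{-\lambda t}\prod_{i=1}^N \mathbb{E}e^{\lambda X_i}.
\]
Expand $e^{\lambda X_i}=1+\lambda X_i+\sum_{k\ge2}\lambda^k X_i^k/k!$ and use $\mathbb{E}X_i=0$. The bound $|X_i^k|\le K^{k-2}X_i^2$ gives
\[
\mathbb{E}e^{\lambda X_i}\le 1+\mathbb{E}X_i^2\sum_{k\ge2}\frac{\lambda^kK^{k-2}}{k!}.
\]
Since $k!\ge 2\cdot 3^{k-2}$, the series is at most $\frac{\lambda^2}{2}\sum_{j\ge0}(\lambda K/3)^j=\frac{\lambda^2/2}{1-\lambda K/3}$ whenever $\lambda K<3$. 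Using $1+x\le e^x$ then yields
\[
\mathbb{E}e^{\lambda X_i}\le \exp\!\Bigl(\frac{\lambda^2\,\mathbb{E}X_i^2/2}{1-\lambda K/3}\Bigr).
\]

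\textbf{Step 2: multiply and choose $\lambda$.} Taking the product over $i$ gives
\[
\mathbb{P}\{S\ge t\}\le\exp\!\Bigl(-\lambda t+\frac{\lambda^2\sigma^2/2}{1-\lambda K/3}\Bigr).
\]
Rather than optimizing exactly, I would plug in $\lambda=t/(\sigma^2+Kt/3)$, which satisfies $\lambda K/3<1$. Then $1-\lambda K/3=\sigma^2/(\sigma^2+Kt/3)$, so the second term equals
\[
\frac{\lambda^2\sigma^2/2}{1-\lambda K/3}=\frac{t^2/2}{\sigma^2+Kt/3}.
\]
The first term is $-\lambda t=-\frac{t^2}{\sigma^2+Kt/3}$. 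The exponent is therefore exactly $-\frac{t^2/2}{\sigma^2+Kt/3}$.

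\textbf{Degenerate cases.} The case $t=0$ is trivial, since the right-hand side is then $2\ge1$. If $\sigma^2=0$, then every $X_i=0$ almost surely and there is nothing to prove for $t>0$.

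The only step requiring care is the series bound $k!\ge 2\cdot3^{k-2}$, which is immediate by induction on $k$. Beyond that I expect no real obstacle: the argument is routine, and the only other point to get right is the choice of $\lambda$ in Step 2.
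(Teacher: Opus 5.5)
Your proof is correct and complete: bounding each moment generating function via $|X_i|^k\le K^{k-2}X_i^2$ and $k!\ge 2\cdot 3^{k-2}$, and then substituting $\lambda=t/(\sigma^2+Kt/3)$, is the standard Chernoff-method proof of Bernstein's inequality. Taking the expectation of the exponential series term by term is justified because the $X_i$ are bounded. The paper gives no proof of this lemma; it quotes it as a standard tool in the preliminaries, so your argument is simply the usual textbook one and there is nothing further to compare.
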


\begin{lemma}[McDiarmid's inequality]\label{lem:mcdiarmid}
Let $X_1,\ldots,X_N$ be independent random variables, and let
$Z=h(X_1,\ldots,X_N)$ be real-valued. Suppose that changing only
the $i$th argument of $h$ changes its value by at most $c_i$,
where $c_i>0$. Then, for every $t\ge0$, we have
$\Pr[Z-\E[Z]\le-t]\le
\exp(-2t^2/\sum_{i=1}^N c_i^2)$.
\end{lemma}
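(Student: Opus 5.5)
The plan is the standard martingale (Doob) argument with the Azuma--Hoeffding exponential-moment method, using Hoeffding's lemma for bounded increments. Set $\mu=\E[Z]$ and form the Doob martingale $Z_i=\E[Z\mid X_1,\ldots,X_i]$ for $i=0,\ldots,N$, so that $Z_0=\mu$ and $Z_N=Z$. Write $D_i=Z_i-Z_{i-1}$, so that $Z-\E[Z]=\sum_{i=1}^N D_i$. Because the event of interest is a lower deviation, I will apply everything to $-Z$. This is legitimate because $-h$ satisfies the same bounded-differences hypothesis with the same $c_i$, so it suffices to bound $\Pr[\sum_i D_i\ge t]$ for the martingale of $h$ itself.

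\textbf{Step 1: the conditional range of each increment.} Using independence, I can write
\[
Z_i=\int h(X_1,\ldots,X_i,y_{i+1},\ldots,y_N)\,d\mu_{i+1}(y_{i+1})\cdots d\mu_N(y_N).
\]
Conditioned on $X_1,\ldots,X_{i-1}$, define
\[
L_i=\inf_x \E[Z\mid X_{<i},X_i=x]-Z_{i-1},\qquad U_i=\sup_x \E[Z\mid X_{<i},X_i=x]-Z_{i-1}.
\]
Then $D_i$ lies in $[L_i,U_i]$. The bounded-differences hypothesis, applied inside the integral (the remaining coordinates are integrated against the same product measure for both values of $x$), gives $U_i-L_i\le c_i$. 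This is the one step that really uses independence, and it is where care is needed. The point to check is that the range has width $c_i$, not $2c_i$, which is what yields the constant $2$ in the exponent.

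\textbf{Step 2: Hoeffding's lemma.} Conditioned on $X_{<i}$, the increment $D_i$ has mean zero and lies in an interval of length at most $c_i$, with endpoints determined by $X_{<i}$. Hoeffding's lemma then gives
\[
\E[e^{\lambda D_i}\mid X_{<i}]\le \exp(\lambda^2c_i^2/8).
\]
I will prove this lemma by the usual argument: bound $e^{\lambda x}$ by its chord on $[a,b]$, take expectations using the zero mean, and check that $\phi(u)=-\theta u+\log(1-\theta+\theta e^u)$ satisfies $\phi''\le 1/4$, where $\theta=-a/(b-a)$ and $u=\lambda(b-a)$.

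\textbf{Step 3: chaining and optimizing.} Peeling off conditional expectations one index at a time via the tower property gives
\[
\E[e^{\lambda\sum_i D_i}]\le \exp\Bigl(\lambda^2\sum_i c_i^2/8\Bigr).
\]
Markov's inequality then yields
\[
\Pr\Bigl[\sum_i D_i\ge t\Bigr]\le \exp\Bigl(-\lambda t+\lambda^2\sum_i c_i^2/8\Bigr).
\]
Choosing $\lambda=4t/\sum_i c_i^2$ gives the bound $\exp(-2t^2/\sum_i c_i^2)$. Applying this to $-h$ gives exactly the stated inequality $\Pr[Z-\E[Z]\le -t]\le\exp(-2t^2/\sum_{i=1}^N c_i^2)$. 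The only genuine obstacle is the careful measure-theoretic justification in Step 1. If the paper allows it, this can instead be treated as a standard citation, as the paper does with its other tail bounds.
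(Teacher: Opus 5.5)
Your proof is correct and complete. It is the standard Doob-martingale argument combined with Hoeffding's lemma. You handle the two points that matter:
\begin{itemize}
\item Conditionally on $X_{<i}$, the increment $D_i$ ranges over an interval of width $c_i$ rather than $2c_i$, which is what produces the constant $2$ in the exponent.
\item The stated lower-tail bound follows by applying the upper-tail bound to $-h$.
\end{itemize}

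The paper gives no proof of this lemma. It simply states it in the preliminaries as a standard tool, alongside the Chernoff and Bernstein bounds. So your closing suggestion to cite it is exactly how the paper treats it, and your argument supplies the textbook proof behind that citation.

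On the measure-theoretic care you flag in Step~1, you never need measurability of $L_i$ and $U_i$. Fix $x_{<i}$. By independence, $D_i$ is then a function of $X_i$ alone under the law of $X_i$, and that function takes values in an interval of width at most $c_i$. Apply Hoeffding's lemma pointwise in $x_{<i}$ and integrate.
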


\begin{lemma}[\cite{AuldNeammanee24}]
\label{lem:local-limit}
Let $X=\sum_{i=1}^N X_i$, where the $X_i$ are independent Bernoulli random variables, and let
\[
    \mu:=\E[X],
    \qquad
    v:=\operatorname{Var}(X).
\]
For every constant $C>0$, if $v$ is sufficiently large and $z\in\mathbb Z$ satisfies
\[
    |z-\mu|\leq C\sqrt v,
\]
then 
$    \Pr[X=z]
    =
    \Theta_C\left({1}/{\sqrt v}\right).$
\end{lemma}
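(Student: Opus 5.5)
This is a local limit theorem for sums of independent, non-identical Bernoulli variables. I would prove the two directions separately: the upper bound $O(1/\sqrt v)$ for every $z$, and the lower bound $\Omega_C(1/\sqrt v)$ for $|z-\mu|\le C\sqrt v$.

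Write $p_i=\Pr[X_i=1]$, so $v=\sum_i p_i(1-p_i)$. The distribution of $X$ is a Poisson binomial distribution. Its generating polynomial $\prod_i(1-p_i+p_i t)$ has only real roots. By Newton's inequalities, the sequence $a_z=\Pr[X=z]$ is therefore log-concave in $z$, and in particular unimodal with no internal zeros. I would use log-concavity as the structural backbone of both directions.

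\textbf{Upper bound.} I would use Fourier inversion:
\[
\Pr[X=z]=\frac1{2\pi}\int_{-\pi}^{\pi}\prod_i\bigl(1-p_i+p_ie^{i\theta}\bigr)e^{-iz\theta}\,d\theta .
\]
Each factor satisfies $|1-p_i+p_ie^{i\theta}|^2=1-2p_i(1-p_i)(1-\cos\theta)$. Since $1-\cos\theta\ge 2\theta^2/\pi^2$ on $[-\pi,\pi]$ and $1-x\le e^{-x}$, each factor is at most $\exp(-c\,p_i(1-p_i)\theta^2)$ for an absolute constant $c>0$. Hence the modulus of the integrand is at most $e^{-cv\theta^2}$, and integrating gives $\Pr[X=z]\le O(1/\sqrt v)$ uniformly in $z$.

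\textbf{Lower bound.} I would combine Chebyshev's inequality with log-concavity.
\begin{enumerate}
\item By Chebyshev, $\Pr[|X-\mu|\le 2\sqrt v]\ge 3/4$. This window contains at most $4\sqrt v+1$ integers, so some $z_0$ with $|z_0-\mu|\le 2\sqrt v$ has $a_{z_0}\ge c_0/\sqrt v$.
\item Take a target $z$ with $|z-\mu|\le C\sqrt v$. Without loss of generality $z>z_0$, so the distance is $z-z_0\le (C+2)\sqrt v$.
\item By Chebyshev again, $\Pr[X\ge \mu+K\sqrt v]\le 1/K^2$. Choose $K=K(C)$ large, and let $w=\lceil\mu+K\sqrt v\rceil$, which lies to the right of $z$.
\item Suppose $a_z\le \eta/\sqrt v$ for a small $\eta$. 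Log-concavity makes the ratios $a_{k+1}/a_k$ nonincreasing. The decay from $a_{z_0}$ to $a_z$ occurs over at most $(C+2)\sqrt v$ steps, so some ratio up to $z$ is at most $(\eta/c_0)^{1/((C+2)\sqrt v)}$. All later ratios are no larger, so beyond $z$ the sequence decays geometrically at this rate.
\item Summing that geometric tail bounds $\Pr[X\ge z]$ by roughly $\eta\cdot O_C(1)$. Since $z\le \mu+C\sqrt v$, the same mass is also bounded below by Chebyshev/Paley--Zygmund-type arguments, specifically $\Pr[X\ge\mu+C\sqrt v]\ge c_C>0$. For this anti-concentration I would invoke the Berry--Esseen theorem (the CLT for independent bounded summands), which is valid once $v$ is large. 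Choosing $\eta$ small then gives a contradiction.
\end{enumerate}

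A cleaner alternative for the lower bound is to use Berry--Esseen directly. It shows that each interval of length $\delta\sqrt v$ near $z$ carries mass about $\Theta_C(\delta)$, up to an error $O(1/\sqrt v)$. Log-concavity (unimodality) then forbids any single point in the window from having mass far below the interval average, because the sequence is monotone on each side of the mode.

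\textbf{Main obstacle.} I expect the key difficulty to be turning averaged, interval-level mass into a pointwise lower bound. Berry--Esseen only controls the CDF to additive error $O(1/\sqrt v)$, which is the same order as the bound we want. Log-concavity is exactly what closes this gap: the point mass at $z$ is at least the minimum of the masses at the two ends of an interval containing $z$, and both ends can be chosen in regions of provably large mass. This is why the lemma needs $v$ sufficiently large as a function of $C$.
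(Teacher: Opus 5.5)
Your proof is correct, but it takes a different route from the paper. The paper gives no argument of its own: it imports the statement as a quantitative local limit theorem for Poisson binomial variables from Auld--Neammanee. You instead derive it from two standard facts.

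\emph{Upper bound.} Your Fourier bound $|1-p_i+p_ie^{i\theta}|\le\exp(-2p_i(1-p_i)\theta^2/\pi^2)$ gives $\Pr[X=z]=O(1/\sqrt v)$ uniformly in $z$ and for every $v$.

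\emph{Lower bound.} Your second route is the cleanest complete version. Berry--Esseen applies with Lyapunov ratio at most $1/\sqrt v$, because $\E|X_i-p_i|^3\le p_i(1-p_i)$. It therefore puts mass $\Omega_C(1)$ in each of $[z-\sqrt v,z)$ and $(z,z+\sqrt v]$ once $v\ge v_0(C)$. Each interval then contains an atom of size $\Omega_C(1/\sqrt v)$, and unimodality gives $a_z\ge\min\{a_x,a_y\}$ for $x<z<y$. Your first route is also sound, since the geometric-tail contradiction goes through. However, Step~3 is never used, and Chebyshev alone cannot lower-bound $\Pr[X\ge\mu+C\sqrt v]$; it is Berry--Esseen that does the work there.

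\emph{Comparison.} The citation buys more than the paper needs, namely a pointwise comparison of each atom with the Gaussian density. Your argument yields only $\Theta_C$ constants, but it is self-contained. It uses the same log-concavity (via Newton's inequalities) that the paper already invokes in Lemma~\ref{lem:poisson-binomial-facts}. A side benefit is that your Fourier bound proves the estimate $\max_k p(k)\le C/(1+\sqrt v)$ of that lemma directly, for all $v$, rather than deducing it from the local limit theorem at the mode.
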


\begin{lemma}\label{lem:poisson-binomial-facts}
Let $Z$ be a sum of independent Bernoulli random variables, with mean $\mu$,
variance $v>0$, and probability mass function $p$. Then $p$ is log-concave
and has a mode $m$ satisfying $|m-\mu|<1$. Moreover, for absolute constants
$c,C>0$ and every $u\ge0$,
\[
\Prb[|Z-\mu|\ge u]
\le
2\exp\left(-c\min\left\{\frac{u^2}{v},u\right\}\right),
\qquad
\max_kp(k)\le\frac{C}{1+\sqrt v}.
\]
\end{lemma}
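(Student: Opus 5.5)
We prove the three claims of Lemma~\ref{lem:poisson-binomial-facts} separately, using the representation $Z=\sum_{i=1}^N X_i$ with $X_i\sim\mathrm{Bernoulli}(p_i)$, $\mu=\sum_i p_i$ and $v=\sum_i p_i(1-p_i)$. Coordinates with $p_i\in\{0,1\}$ only shift $Z$ by a constant and contribute nothing to $v$, so we first discard them and assume $p_i\in(0,1)$ for all $i$.

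\textbf{Log-concavity and mode.} The probability generating function is $\prod_i\bigl((1-p_i)+p_i t\bigr)$, a polynomial with only real (negative) roots. By Newton's inequalities, its coefficients $p(k)$ satisfy
\[
p(k)^2\ge p(k-1)p(k+1)\,\frac{(k+1)(N-k+1)}{k(N-k)}\ge p(k-1)p(k+1).
\]
Alternatively, log-concavity follows by induction, since convolving a log-concave sequence with a two-point Bernoulli law preserves log-concavity. The mode statement $|m-\mu|<1$ is Darroch's theorem for Poisson-binomial distributions: the mode is $\lfloor\mu\rfloor$ or $\lceil\mu\rceil$, which we cite. If we want a self-contained argument, we would compare the ratios $p(k+1)/p(k)$ around $k=\lfloor\mu\rfloor$ using the strengthened Newton inequality, but citing is cleaner.

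\textbf{Tail bound.} This is Bernstein's inequality (Lemma~\ref{thm: Bernstein}) applied to the centered variables $X_i-p_i$, which satisfy $|X_i-p_i|\le1$ and have total variance $v$. It gives
\[
\Pr\bigl[|Z-\mu|\ge u\bigr]\le 2\exp\!\left(-\frac{u^2/2}{v+u/3}\right).
\]
We then split into cases. If $u\le v$, the exponent is at least $u^2/(8v/3)$. If $u>v$, the exponent is at least $u^2/(8u/3)=3u/8$. In both cases the exponent is at least $c\min\{u^2/v,u\}$ with $c=3/8$.

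\textbf{Maximum of the mass function.} This is the only step needing care, and we split on the size of $v$.

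When $v\le v_0$ for a fixed constant $v_0$, the trivial bound $p(k)\le1$ already suffices, because $1\le(1+\sqrt{v_0})/(1+\sqrt v)$.

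When $v$ is large, we would like to invoke Lemma~\ref{lem:local-limit}, but it only controls $p(z)$ for $z$ near $\mu$. We bridge this using the shape established above. By log-concavity, $p$ is unimodal, and its mode $m$ satisfies $|m-\mu|<1\le\sqrt v$. Applying Lemma~\ref{lem:local-limit} with $C=1$ at $z=m$ gives $p(m)=O(1/\sqrt v)$, and since $p(m)=\max_k p(k)$ this bounds the whole mass function. Combining the two regimes yields $\max_k p(k)\le C/(1+\sqrt v)$ for an absolute constant $C$.

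As a fallback that avoids the local limit theorem, we can use a Chebyshev-style argument. Chebyshev's inequality puts mass at least $3/4$ in the window $[\mu-2\sqrt v,\mu+2\sqrt v]$. Log-concavity then implies that $p$ is nearly flat near its peak on a scale comparable to $\sqrt v$, which would give $\max_k p(k)=O(1/\sqrt v)$. The argument via Lemma~\ref{lem:local-limit} is shorter, so that is our primary route.
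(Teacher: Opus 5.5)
Your proposal is correct and follows the paper's proof: Newton's inequalities give log-concavity, Darroch's theorem gives the mode, Bernstein's inequality on the centered Bernoulli variables gives the tail bound, and Lemma~\ref{lem:local-limit} applied at the mode handles large $v$, with the trivial bound $p(k)\le 1$ for bounded $v$. Your explicit case split $u\le v$ versus $u>v$ (yielding $c=3/8$) only fills in a step the paper leaves implicit, and the Chebyshev fallback is not needed.
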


\begin{proof}
The probability generating polynomial of $Z$ is a product of linear
polynomials with nonpositive real roots. Newton's inequalities therefore show
that its coefficient sequence is log-concave. Darroch's theorem
\cite{darroch1964distribution} gives a mode $m$ with $|m-\mu|<1$. Apply Bernstein's inequality to the centered Bernoulli variables to obtain the
tail bound. If $v$ is sufficiently large, the bound on the largest atom follows
from Lemma~\ref{lem:local-limit}, applied at $m$. If $v$ is bounded, it follows
from the trivial bound $p(k)\le1$.
\end{proof}

\begin{lemma}\label{lem:monotone-sequence}
Let $a_0,\ldots,a_T\geq 0$ be nondecreasing, and let
$A=\sum_{t=0}^T a_t$. Then
$\sum_{t=0}^T a_t/(t+1)\le CA\log T/(T+1)$
for an absolute constant $C>0$.
\end{lemma}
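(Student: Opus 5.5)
The plan is to apply Chebyshev's sum inequality directly. The sequence $(a_t)_{t=0}^T$ is nondecreasing, and the weights $b_t:=1/(t+1)$ are nonincreasing, so the two sequences are oppositely ordered. Chebyshev's inequality for oppositely ordered sequences gives
\[
\sum_{t=0}^T a_t b_t\;\le\;\frac{1}{T+1}\Bigl(\sum_{t=0}^T a_t\Bigr)\Bigl(\sum_{t=0}^T b_t\Bigr)=\frac{A\,H_{T+1}}{T+1},
\]
where $H_{T+1}=\sum_{k=1}^{T+1}1/k$ is the harmonic number.

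To make this self-contained, I would verify Chebyshev's inequality via the rearrangement identity
\[
\sum_{s,t}(a_s-a_t)(b_s-b_t)=2(T+1)\sum_t a_tb_t-2\Bigl(\sum_t a_t\Bigr)\Bigl(\sum_t b_t\Bigr).
\]
Each term $(a_s-a_t)(b_s-b_t)$ on the left is $\le 0$: if $s<t$, then $a_s\le a_t$ while $b_s\ge b_t$, and symmetrically if $s>t$. Hence the right-hand side is nonpositive, which is exactly the inequality above.

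Next I would use the standard bound $H_{T+1}\le 1+\ln(T+1)$. This gives
\[
\sum_{t=0}^T \frac{a_t}{t+1}\le \frac{A\bigl(1+\ln(T+1)\bigr)}{T+1}.
\]
For $T\ge 2$ we have $1+\ln(T+1)\le C\log T$ for an absolute constant $C$, since $\log T\ge\log 2>0$ and $\ln(T+1)\le 2\ln T$. This is the claimed bound.

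The only delicate point is the degenerate range $T\in\{0,1\}$, where $\log T\le 0$ and the stated inequality fails literally. For example, $T=1$ with $a_0=a_1=1$ gives a positive left-hand side and a zero right-hand side. I would therefore interpret $\log T$ as $\log(T+2)$, or equivalently as $\max\{1,\log T\}$, as is customary in such statements; under this reading the argument above covers every $T\ge0$. Apart from this convention, there is no real obstacle: the whole content is the Chebyshev rearrangement step, which is what converts the monotonicity of $(a_t)$ into the saving of the factor $\log T/(T+1)$.
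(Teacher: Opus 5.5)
Your proof is correct, but it takes a different route from the paper's.

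\textbf{The paper's route.} The paper uses monotonicity pointwise. Since $a_t\le a_s$ for all $s\ge t$, one has $(T-t+1)a_t\le\sum_{s\ge t}a_s\le A$, that is, $a_t\le A/(T-t+1)$. It then sums $\sum_t 1/((t+1)(T-t+1))=2H_{T+1}/(T+2)$, using $(t+1)+(T-t+1)=T+2$.

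\textbf{Your route.} You invoke Chebyshev's sum inequality for oppositely ordered sequences. This gives $\sum_t a_t/(t+1)\le A H_{T+1}/(T+1)$ directly, with about half the paper's constant. Your verification via the double-sum identity is correct, and so is the step $1+\ln(T+1)\le C\log T$ for $T\ge 2$.

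\textbf{What each buys.} The paper's argument is a one-line tail bound, but it relies on the specific shape of the weights $1/(t+1)$. For general nonincreasing weights $b_t$ (e.g.\ $b_t\equiv 1$), bounding $a_t$ by $A/(T-t+1)$ and summing loses a logarithmic factor. Your argument instead gives the sharp bound $\sum_t a_tb_t\le \frac{A}{T+1}\sum_t b_t$ for every nonincreasing $(b_t)$, with equality when $(a_t)$ is constant.

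\textbf{The degenerate cases.} Your remark about $T\in\{0,1\}$ is right: the statement as written fails at $T=1$ and is undefined at $T=0$, and the paper's proof has the same implicit restriction. It is harmless there, because in the proof of Lemma~\ref{lem:localization} the factor $\log T$ is always replaced by $\log n$ with $n\ge 2$.
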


\begin{proof}
Since the sequence is nondecreasing, $a_t\le A/(T-t+1)$.
The claim follows by summing $1/((t+1)(T-t+1))$.
\end{proof}

\begin{lemma}\label{lem:expected-query-truncation}
Let $X_0$ and $X_1$ be the two parts of a promise problem, and let $A$ be a
bounded-error randomized or quantum query algorithm for this problem. Suppose
that the expected number of queries made by $A$ is at most $t$ on every input
in $X_0$. Then there is a bounded-error algorithm for the same promise problem
that makes $O(t)$ queries in the worst case.
\end{lemma}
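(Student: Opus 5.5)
The plan is to truncate $A$ and then reduce the error by repetition. Let $B$ be the algorithm that runs $A$ but halts once $A$ has made $10t$ queries. If $A$ finishes within this budget, $B$ outputs whatever $A$ outputs. If the budget is hit, $B$ outputs the answer ``$1$'', meaning ``input lies in $X_1$''. This gives a worst-case bound of $10t$ queries. Quantumly, an algorithm with a variable number of queries is modeled, as usual, by a sequence of unitary queries interleaved with two-outcome measurements that decide whether to halt. Truncating after $10t$ queries is therefore a well-defined procedure with a fixed query budget.

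Next I bound the error of $B$ on each side. Let $\delta\le 1/3$ be the error of $A$; by standard amplification I first push this down to $\delta\le 1/10$, which multiplies the expected cost by only a constant. Take $x\in X_0$. Markov's inequality applied to the (random) number of queries shows that $A$ exceeds $10t$ queries with probability at most $1/10$. Therefore
\[\Pr[B(x)\neq 0]\le \Pr[A(x)\neq 0]+\Pr[\text{budget hit}]\le 1/10+1/10=1/5.\]
Now take $x\in X_1$. Here we know nothing about the running time, but $B$ errs only in the event that $A$ halts within the budget with the wrong answer. That event is contained in the event $A(x)\neq 1$, so its probability is at most $1/10$. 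Thus $B$ has error at most $1/5$ on both sides and makes $O(t)$ queries in the worst case.

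The main point needing care is this asymmetry, together with the quantum meaning of ``expected number of queries''. The default output on truncation must be the answer for the side where no running-time bound is assumed. The expected number of queries should be defined via the halting probabilities of the intermediate measurements, so that Markov's inequality applies to the induced distribution of the halting time. With these conventions the argument goes through identically in the randomized and quantum cases. If a specific error constant is required, one can finish by running $B$ a constant number of times and taking the majority vote.
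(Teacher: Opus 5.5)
Your argument is correct and is essentially the paper's: truncate at a constant multiple of $t$, output the $X_1$ answer on timeout so that only the $X_0$ side pays the Markov term, and fix the constants by repetition. The paper truncates first and amplifies afterwards, and also checks that $t=\Omega(1)$ so that its rounded-up budget $\lceil 12t\rceil$ is $O(t)$. One bookkeeping slip: after your preliminary amplification with $k$ runs, the expected cost on $X_0$ is $kt$, so the budget must be $10kt$ rather than $10t$ for Markov's inequality to give $1/10$.
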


\begin{proof}
The probability that $A$ makes at least one query is independent of the input
until the first query is made. The output distributions on an input in $X_0$
and an input in $X_1$ differ in total variation distance by at least $1/3$,
whereas their no-query branches are identical. Thus the probability of making
a query is at least $1/3$, and hence $t=\Omega(1)$.

Truncate $A$ after $\lceil12t\rceil$ queries and, upon timeout, output the value
corresponding to $X_1$. On $X_1$ this does not increase the error. On $X_0$,
Markov's inequality shows that the timeout probability is at most $1/12$, so
the error is at most $1/3+1/12<1/2$. A constant number of repetitions gives
error at most $1/3$ and uses $O(t)$ queries.
\end{proof}

\subsection{Quantum subroutines}\label{sec:q_subroutines}
\begin{theorem}[Amplitude Estimation~\cite{brassard2002AmpAndEst}]\label{thm:amplitude_estimation}
Let $\delta \in(0,1)$. Given a natural number $M$ and access to an $(n + 1)$-qubit unitary $U$ satisfying
\[
U\ket{0^n}\ket{0}= \sqrt{a}\ket{\phi_0}\ket{0} +\sqrt{1-a}\ket{\phi_1}\ket{1},
\]
where $\ket{\phi_0}$ and $\ket{\phi_1}$ are arbitrary $n$-qubit states and $ a \in [0,1]$,
there exists a quantum algorithm that uses ${O}(M\log(1/\delta))$ applications of $U$ and $U^\dagger$ and $\tilde{{O}}(M\log(1/\delta))$ elementary gates, and outputs an estimator $\lambda$ such that, with probability $\geq 1-\delta$,
\[
|{a}-\lambda| \leq \frac{\sqrt{a(1-a)}}{M}+\frac{1}{M^2}.
\]
\end{theorem}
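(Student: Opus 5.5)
The plan is to follow the standard Brassard--H{\o}yer--Mosca--Tapp construction: run phase estimation on the Grover iterate built from $U$, then boost the success probability by taking a median over independent repetitions.

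\textbf{Grover iterate.} Let $\Pi_0=I-2\ket{0^{n+1}}\bra{0^{n+1}}$ and $\Pi_{\mathrm{good}}=I\otimes(I-2\ket{0}\bra{0})$. Define
\[
Q=-U\Pi_0U^\dagger\Pi_{\mathrm{good}}.
\]
Each application of $Q$ uses one $U$ and one $U^\dagger$. Write $a=\sin^2\theta_a$ with $\theta_a\in[0,\pi/2]$. The two-dimensional subspace spanned by $\ket{\phi_0}\ket{0}$ and $\ket{\phi_1}\ket{1}$ is invariant under $Q$, and on it $Q$ acts as a rotation by angle $2\theta_a$. Hence $U\ket{0^{n+1}}$ decomposes as an equal-weight superposition of two eigenvectors of $Q$, with eigenvalues $e^{\pm 2i\theta_a}$.

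\textbf{Phase estimation.} Apply phase estimation with $M$ controlled powers of $Q$ to $U\ket{0^{n+1}}$. This gives an estimate $\tilde\theta$ which, with probability at least $8/\pi^2$, satisfies $|\tilde\theta-\theta_a|\le\pi/M$. The sign ambiguity between the two eigenvalues is harmless, because $\sin^2$ is symmetric. Output $\lambda=\sin^2\tilde\theta$. The cost is $O(M)$ uses of $U$ and $U^\dagger$, and the quantum Fourier transform on $\log M$ qubits adds $\tilde O(M)$ gates.

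\textbf{Error bound.} Using $\sin^2(\theta+\epsilon)-\sin^2\theta=\sin(2\theta+\epsilon)\sin\epsilon$, we get
\[
|\lambda-a|\le 2\pi\frac{\sqrt{a(1-a)}}{M}+\frac{\pi^2}{M^2}.
\]
Replacing $M$ by a constant multiple $cM$ in the procedure absorbs the constants $2\pi$ and $\pi^2$, giving the stated form $\sqrt{a(1-a)}/M+1/M^2$ at cost still $O(M)$.

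\textbf{Amplification.} Repeat the procedure $O(\log(1/\delta))$ times independently and output the median of the estimates. The good event has probability at least $8/\pi^2>1/2$ per run, and the target interval around $a$ is convex. So by a Chernoff bound (Lemma~\ref{lem:chernoff2}), the median lies in the interval with probability at least $1-\delta$. This yields the total cost $O(M\log(1/\delta))$.

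\textbf{Main obstacle.} The most delicate step is the spectral analysis of $Q$ in the degenerate cases $a\in\{0,1\}$. There the two eigenvectors coincide up to phase, and the invariant subspace is one-dimensional. I would handle these cases directly: phase estimation then returns $\tilde\theta=\theta_a$ exactly, so the bound holds trivially. Beyond this, the argument is the standard analysis of~\cite{brassard2002AmpAndEst}, which I would cite for the phase-estimation success probability rather than rederive.
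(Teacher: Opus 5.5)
Your proposal is correct: it is exactly the Brassard--H{\o}yer--Mosca--Tapp argument (Grover iterate $Q$, phase estimation, the $\sin^2$ perturbation bound, then rescaling $M$ by a constant and taking a median over $O(\log(1/\delta))$ runs), which is what the paper relies on, since it states this theorem by citation without a proof of its own. One small remark: the degenerate cases $a\in\{0,1\}$ need no separate spectral analysis, because $U\ket{0^{n+1}}$ is then itself an eigenvector of $Q$ and the same $\pi/M$ phase-estimation guarantee applies; exact recovery when $a=1$ additionally requires $M$ even (e.g.\ a power of two), but the stated bound does not need it.
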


\begin{theorem}[Fixed-point amplitude amplification~\cite{gilyen2018QSingValTransf,yoder2014FixedPointSearch}]\label{thm:Fixed_AA}
Let $a,\delta>0$, let $U$ be a unitary that maps $\ket{0}\rightarrow\ket{\psi}$, and let $R_{\mathcal{A}},R_{\ket{0}}$ be quantum circuits that reflect through $\mathcal{A}$ and (the span of) $\ket{0}$, respectively. Suppose $\|\Pi_{\mathcal{A}}\ket{\psi}\|\geq a$. There is a quantum algorithm that prepares $\ket{\psi'}$ satisfying $\|\ket{\psi'}-\frac{\Pi_{\mathcal{A}}\ket{\psi}}{\|\Pi_{\mathcal{A}}\ket{\psi}\|}\|\leq\delta$ using $\mathcal{O}(\log(1/\delta)/a)$ applications of $U,U^{-1}$, controlled $R_{\mathcal{A}},R_{\ket{0}}$, and additional single-qubit gates. Moreover, if $q:=\|\Pi_{\mathcal A}\ket{\psi'}\|^2$ denotes the acceptance probability of the algorithm, then for constant $\delta$ and arbitrary $\ket{\psi}$, $q=O\!\left(\min\{1,\|\Pi_{\mathcal A}\ket{\psi}\|^2/a^2\}\right)$.
\end{theorem}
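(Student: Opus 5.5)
The first part (preparing $\ket{\psi'}$ within distance $\delta$ using $O(\log(1/\delta)/a)$ applications) is the standard fixed-point amplitude amplification guarantee. I would take it directly from \cite{yoder2014FixedPointSearch} or from the QSVT formulation in \cite{gilyen2018QSingValTransf}. What needs an argument is the ``moreover'' bound on the acceptance probability when the promise $\|\Pi_{\mathcal A}\ket{\psi}\|\ge a$ may fail. My plan is to use the QSVT description of the algorithm and bound the implemented polynomial near the origin with Bernstein's inequality.

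\textbf{Step 1: reduce to a single polynomial.} Write $\ket{\psi}=\sqrt{p}\,\ket{g}+\sqrt{1-p}\,\ket{b}$, where $p=\|\Pi_{\mathcal A}\ket{\psi}\|^2$ and $\ket{g}\in\mathcal A$. The projected unitary $\Pi_{\mathcal A}U\ket{0}\bra{0}$ has a single singular value $\sqrt p$. The fixed-point circuit alternates $U,U^{-1}$ with phased versions of $R_{\mathcal A},R_{\ket 0}$, so by QSVT it implements an odd polynomial $P_L$ of degree $L=O(\log(1/\delta)/a)$ in this singular value. Concretely, $\Pi_{\mathcal A}\ket{\psi'}=P_L(\sqrt p)\ket{g}$ up to a phase, with the following properties:
\begin{itemize}
\item $|P_L(x)|\le 1$ for all $x\in[-1,1]$;
\item $|P_L(x)|\ge\sqrt{1-\delta^2}$ for all $x\in[a,1]$.
\end{itemize}
Consequently $q=|P_L(\sqrt p)|^2$, and this identity holds for arbitrary $\ket\psi$, not only those satisfying the promise.

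\textbf{Step 2: Bernstein's inequality near the origin.} Since $P_L$ has degree $L$ and is bounded by $1$ on $[-1,1]$, Bernstein's inequality gives $|P_L'(x)|\le L/\sqrt{1-x^2}$. On $|x|\le 1/2$ this yields $|P_L'(x)|\le 2L/\sqrt3\le 2L$. Because $P_L$ is odd, $P_L(0)=0$. Integrating the derivative bound then gives $|P_L(x)|\le 2Lx$ for $x\in[0,1/2]$.

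\textbf{Step 3: case split on $p$.}
\begin{itemize}
\item If $\sqrt p\le 1/2$, then $q\le 4L^2p$. For constant $\delta$ we have $L=O(1/a)$, so $q=O(p/a^2)$.
\item If $\sqrt p>1/2$, then $q\le 1$ trivially. Moreover $p/a^2\ge 1/4$ because $a\le 1$, so $q\le 1=O(\min\{1,p/a^2\})$.
\end{itemize}
Combining the two cases gives $q=O(\min\{1,p/a^2\})$.

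\textbf{Main obstacle.} The only delicate point is Step 1. One must check that the specific fixed-point construction of \cite{yoder2014FixedPointSearch,gilyen2018QSingValTransf} really is an odd, degree-$O(L)$ polynomial that is bounded by $1$ on all of $[-1,1]$, rather than only on the promise region $[a,1]$. I would verify this from the QSVT parity and boundedness guarantees, which hold for every singular value by construction. The degree must be linear in $1/a$; a naive use of Markov's inequality, $|P'|\le L^2$, would only give the weaker bound $q=O(p/a^4)$.
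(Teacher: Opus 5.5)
The paper does not prove this theorem. It is quoted from \cite{gilyen2018QSingValTransf,yoder2014FixedPointSearch}, and the ``moreover'' bound on $q$ when the promise fails is asserted without argument. Your proposal supplies that missing justification, and it is correct. In the QSVT implementation, the good-subspace amplitude of $\ket{\psi'}$ is $P_L(\sqrt p)$ for a fixed odd polynomial of degree $L=O(\log(1/\delta)/a)$ that is bounded by $1$ on $[-1,1]$. Using Bernstein's interior derivative bound rather than Markov's inequality is exactly what gives $p/a^2$ instead of $p/a^4$.

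A few points can be stated more cleanly:
\begin{itemize}
\item The step you flag as the main obstacle is immediate. $P_L$ is determined by the phase sequence alone, not by $U$. For every $x\in[0,1]$ some unitary has projected singular value $x$, so $|P_L(x)|$ is the norm of a projected unit vector and hence at most $1$. Oddness extends this to $[-1,1]$.
\item QSVT in general only prescribes the real part of the implemented polynomial, so $P_L$ may have complex coefficients. Bernstein's inequality still holds: at each $x$, apply the real version to $\mathrm{Re}(e^{i\varphi}P_L)$, with $\varphi$ chosen so that $e^{i\varphi}P_L'(x)\geq 0$.
\item Integrating Bernstein's bound from $0$ gives $|P_L(x)|\le L\arcsin x\le \pi Lx/2$ on all of $[0,1]$. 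Hence $q\le\min\{1,\pi^2L^2p/4\}$ directly, and the case split in Step~3 is unnecessary. For non-constant $\delta$ this reads $q=O(\min\{1,p\log^2(1/\delta)/a^2\})$.
\end{itemize}
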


We will also apply the quantum-walk framework to Johnson graphs.
Following~\cite{MNRS11,BCSS23}, let $J([N],r)$ denote the graph whose vertices are the $r$-subsets of $[N]$, with $S,S'$ adjacent whenever $|S\cap S'|=r-1$. The corresponding random walk is reversible with respect to the uniform distribution and, for $r\leq N/2$, has spectral gap $\Theta(1/r)$. In its quantum implementation, we maintain a database of the queried input values indexed by the current subset; moving to a neighbor therefore changes only one stored index.

\begin{theorem}[Quantum walk cost~\cite{Gall2022approximatecollisioncounting}] \label{thm:approx_marked_fraction}
    Let $P$ be a reversible ergodic Markov chain, $\tau,\delta > 0$ be the spectral gap of $P$ and $\lambda > 0$ be a known lower bound on the fraction $p_M$ of marked states of $P$ with respect to its stationary distribution. Let $S$, $U$ and $C$ be the number of queries needed to implement, respectively, the setup operation, the update operation and the checking operation. For any $\eta \in (0,1)$, there exists a quantum algorithm which outputs with success probability $\geq 1-\tau$, an estimate $\widetilde{p}_M$ such that $|p_M - \widetilde{p}_M| \leq \eta p_M$ with query complexity
    \begin{align*}
        \widetilde{O}\left(S + \frac{1}{\eta} \frac{1}{\sqrt{\lambda}}\left(\frac{1}{\sqrt{\delta}} U + C\right) \right).
    \end{align*}
\end{theorem}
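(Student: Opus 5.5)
This result is cited from Le Gall and Ng~\cite{Gall2022approximatecollisioncounting}, so the argument I would give is a reconstruction. It combines the MNRS quantum walk search framework~\cite{MNRS11} with amplitude estimation (Theorem~\ref{thm:amplitude_estimation}). The walk will serve as an approximate reflection about the marked subspace, and amplitude estimation will then recover the marked fraction $p_M$.

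First, I would build the usual MNRS state. Running the setup operation (cost $S$) prepares
\[
\ket{\pi}=\sum_x\sqrt{\pi_x}\ket{x}\ket{p_x},
\]
and the checking operation (cost $C$) marks the component on marked states. The squared amplitude of this component is exactly $p_M$. Phase estimation on the walk operator $W(P)$ implements an approximate reflection about $\ket{\pi}$. Because the phase gap of $W(P)$ is $\Theta(\sqrt{\delta})$, each reflection costs $\widetilde O(U/\sqrt{\delta})$ queries. The precision of this phase estimation is set so that the per-use error is $O(\eta\sqrt{\lambda})$ divided by a polynomial factor. This keeps the accumulated error over all uses negligible and only adds logarithmic factors.

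Next, I would run amplitude estimation with Grover iterate $G=R_{\pi}\cdot R_{\mathrm{marked}}$. The goal is relative error $\eta$ on $a=p_M\ge\lambda$, so I choose $M=\Theta(1/(\eta\sqrt{\lambda}))$. By Theorem~\ref{thm:amplitude_estimation} the error is then at most
\[
\frac{\sqrt{p_M}}{M}+\frac{1}{M^2}\le O\bigl(\eta\sqrt{\lambda p_M}+\eta^2\lambda\bigr)\le O(\eta p_M),
\]
where the last step uses $\lambda\le p_M$. Repeating $O(\log(1/\tau))$ times and taking the median brings the failure probability down to $\tau$.

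The total cost is then one setup plus $M$ applications of $G$, which gives
\[
S+\frac{1}{\eta\sqrt{\lambda}}\left(\frac{U}{\sqrt{\delta}}+C\right)
\]
up to logarithmic factors, as claimed. Since the paper treats $\ket{\pi}$ as reusable, the setup cost is paid once per run and is not multiplied by $M$.

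The main obstacle is controlling the error of the approximate reflection inside amplitude estimation. The phase-estimation-based reflection is exact on $\ket{\pi}$ but only approximate on the orthogonal complement. These errors must be shown to accumulate additively over the $M$ iterations, so each step needs accuracy $1/(\mathrm{poly}\cdot M)$. Handling this takes a hybrid argument together with the $\log$-precision repetition trick of MNRS.
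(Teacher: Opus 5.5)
Your reconstruction is correct and is the standard argument behind this cited result. You prepare $\ket{\pi}$ once at cost $S$ and replace the reflection about the initial state in amplitude estimation by the MNRS phase-estimation reflection on the Szegedy walk, at cost $\widetilde O(U/\sqrt{\delta})$ per use, with $O(\log M)$ repetitions so that errors over $M=\Theta(1/(\eta\sqrt{\lambda}))$ iterations stay a small constant. You then use $\lambda\le p_M$ to convert the additive error into relative error $\eta$. The paper does not reprove this statement; it imports it from Le Gall and Ng, whose proof is exactly this generalization of Brassard--H{\o}yer--Tapp approximate counting to the MNRS quantum-walk framework.
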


\subsection{Smoothed model and query complexity}

In this paper we will be concerned mostly with the query model of computation (leaving time model of computation for future work). 
In particular, consider the following query model: for every $x\in \01^n$, let $N_\sigma(x)$ be the distribution of all bit strings $y\in \01^n$ that can be obtained by letting\footnote{Although many works in smoothed function literature deal with real-valued problems, this smoothed definition for discrete objects has been considered in the well-known works  in~\cite{chen2020polynomial,spielman2006smoothed,andoni2012smoothed}.}  
\[ y_i=\begin{cases} 
      x_i & \text{with probability }1-\sigma \\
      1 & \text{with probability }\sigma/2 \\
      0 & \text{with probability }\sigma/2.
   \end{cases}
\]
We say an algorithm $\A$ computes $f$, if on every input $x$, the output of $\A$ on input $x$, equals $f(x)$ with probability $\geq 2/3$. This algorithm could be quantum or classical. 
For an algorithm $\A$, define $T(\A,y)$ as the expected query complexity of $\A$ on input $y$. In this work, we define the smoothed query complexities as
$$
Q_{\mathrm{sm},\sigma}(f)=\min_\A \max_x \mathop{\mathbb{E}}_{\textbf{y}\sim N_\sigma(x)} [T(\A,\textbf{y})],
$$
where the minimization is over all quantum algorithms $\A$ that compute $f$, and the expectation is over $y\sim N_\sigma(x)$. Similarly define $R_{\mathrm{sm},\sigma}(f)$, where the minimization is over all \emph{randomized} algorithms computing $f$.
We also write $Q_\sigma(f)$ and $R_\sigma(f)$ for short when it is clear from the context that we refer to smoothed query complexity. Naturally when discussing the complexity of the smoothed model, the complexity of the algorithm would depend on $1/\sigma$ and it is desirable to have query and gate complexity that scales polynomially with $1/\sigma$.

When $\sigma=0$, then these definitions give the usual worst-case expected query complexities. When $\sigma=1$, the perturbed input is uniform, so the cost is the uniform average expected cost, while correctness is still required on every input. A seminal result of Beals et al.~\cite{beals2001quantum} showed that for the worst case complexity $Q(f)$ and $R(f)$ are polynomially related to one another, ruling out exponential separations between quantum and classical query complexity. Subsequently, Ambainis and de Wolf~\cite{ambainis2001average} considered the average case query complexity showed that $Q(f)$ and $R(f)$ can be exponentially separated. Like we mentioned in the motivation for the smoothed model, the worst-case model seems to capture instances of function computations that could be hard and also rare and the average-case model captures the complexity of the problem that doesn't mimic ``real-life instances", motivating what is the relation between quantum and classical smoothed query complexities.  

\section{Total function separation}
\label{sec:total}

In this section, we will consider the smoothed quantum query complexity of the total version of the Simon's problem. In particular, we will consider the problem defined by Ambainis and de Wolf \cite{ambainis2001average}, who showed that in the average case, this problem has a quantum algorithm with query complexity $O(n)$ but any classical algorithm must make $\Omega(2^{n/2}$ queries.  We consider their same algorithm (which is essentially Fourier sampling approach used for the hidden subgroup problem) and show that its smoothed quantum query complexity is $\poly(n)$ and by smoothed here, we mean smoothing the bits of the truth table as we defined in the previous section.

\begin{definition}[Total Simon's problem ($\Simon$)]
   Let $f:\{0,1\}^n \to \{0,1\}^n$. Given oracle access to $f$, decide if there is an $s \in \{0,1\}^n\backslash \{0^n\}$ such that $f(x) = f(x \oplus s)$ for all $x \in \{0,1\}^n$.
\end{definition}
Note that this is a total function since, we are not promised that $f$ satisfy any promise. 
It is not hard to see that $Q(\Simon)=\Omega(2^{n/2})$ and $R(\Simon)=\Omega(2^n)$. To see this, one can consider the problem of distinguishing between \begin{enumerate}
    \item there is a unique $s \in \{0,1\}^n$ such that $f(x) = f(x \oplus s)$ for all $x \in \{0,1\}^n$
    \item there is a unique $s \in \{0,1\}^n$ and indices $y,z$ such that $y \oplus z = s$, $f(y) \neq f(z)$ and $f(x) = f(x \oplus s)$ for all $x \in \{0,1\}^n \setminus \{y,z\}$.
\end{enumerate} Clearly distinguishing the two cases above is as hard as the search problem on $2^n$ bits, so the quantum query complexity is $\Omega(2^{n/2})$ and the classical query complexity will be $\Omega(2^n)$. 
The main result we show here is the following. 
\begin{theorem}
\label{thm:smoothed-simon}
For every $\sigma\in(0,1]$ satisfying $\sigma\geq \Omega(n/N)$, the smoothed quantum query complexity of
Total Simon satisfies
   $Q_{\mathrm{sm},\sigma}(\mathrm{Simon})
    = O(n/\sigma).$
\end{theorem}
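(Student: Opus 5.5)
Let $N=2^n$. The input is the truth table of $f$, which has $nN$ bits, and each bit is independently rerandomized with probability $\sigma$. My plan is to run the Ambainis--de Wolf algorithm unchanged and then give an analysis that holds on perturbed inputs.

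\textbf{Algorithm.} The algorithm takes $k=\Theta(n)$ Fourier samples. Each sample prepares $\sum_x\ket{x}\ket{f(x)}$, applies the Hadamard transform to the first register and measures, giving a $y\in\{0,1\}^n$. It then tests whether the samples span $\{0,1\}^n$. If they do, it outputs ``no period.'' Otherwise it solves for a candidate nonzero $s$ in the orthogonal complement and verifies it classically. Correctness on every input requires care. A genuine period $s$ forces every sample to satisfy $y\cdot s=0$, so the span never becomes full and the algorithm never falsely outputs ``no period'' when a period exists. The converse direction cannot be certified cheaply in the worst case. The algorithm therefore falls back to a bounded-error worst-case procedure, for instance the brute-force $O(N)$ one, whenever the span is deficient. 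The expected cost is then
\[ O(n)+\Pr_{\mathbf y}[\text{span deficient}]\cdot O(N). \]
So the goal is to show that this failure probability is at most $O(1/N)\cdot\poly(n)$ on perturbed inputs. Since $\sigma$ appears in the bound, the plan is to use $k=\Theta(n/\sigma)$ samples and aim for failure probability $\le 2^{-\Omega(\sigma k)}$, which matches the claimed $O(n/\sigma)$.

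\textbf{Key step.} I need to show that, for a perturbed $g$ and any fixed nonzero $s$, a Fourier sample satisfies $y\cdot s=0$ with probability at most $1-\Omega(\sigma)$. For the state $\sum_x\ket{x}\ket{g(x)}$, a standard computation gives
\[ \Pr[y\cdot s=1]=\tfrac12\Pr_x[g(x)\neq g(x\oplus s)]. \]
So it suffices to show that with high probability over the smoothing, for all $s\neq0$ simultaneously, the fraction of $x$ with $g(x)\neq g(x\oplus s)$ is $\Omega(\sigma)$. For a fixed pair $\{x,x\oplus s\}$, the event that at least one coordinate among the $2n$ bits is rerandomized has probability at least $\sigma$. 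Conditioned on that, the two outputs differ with probability at least $1/2$. Pairs are disjoint across $x$ for fixed $s$, so these events are independent. A Chernoff bound (Lemma~\ref{lem:chernoff2}) then gives at least $\sigma N/8$ differing pairs except with probability $e^{-\Omega(\sigma N)}$. A union bound over the $N$ choices of $s$ costs a factor $N$, and the hypothesis $\sigma\ge\Omega(n/N)$ is exactly what makes $Ne^{-\Omega(\sigma N)}\le 1/N^2$. If a bad smoothing occurs, we pay at most $O(nN)$, which contributes $o(1)$.

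\textbf{Bounding the failure probability.} On a good smoothing, the span argument goes as follows. For each $s\neq0$, the probability that all $k$ samples are orthogonal to $s$ is at most $(1-\sigma/16)^k$. A union bound over $s$ gives failure probability $N(1-\sigma/16)^k\le N^{-2}$ when $k=C n/\sigma$. Hence the expected cost is $O(n/\sigma)+N^{-2}\cdot O(nN)=O(n/\sigma)$.

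\textbf{Main obstacle.} The hard part is reconciling correctness on every input with expected-cost accounting. In the case where the span is deficient, the candidate $s$ may be a non-period, and the algorithm must then decide exactly, which costs $\Theta(N)$ classically or $\Theta(\sqrt N)$ quantumly via Grover search for a violating pair. This is the point where I expect the precise constants to matter. If needed, I would boost $k$ by a constant factor so that the failure probability is $N^{-2}$, which keeps the fallback contribution negligible.
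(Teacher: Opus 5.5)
Your proof is correct, but it reaches the key estimate by a different route than the paper.

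\textbf{The paper's route.} It proves a structural lemma (Lemma~\ref{lem:many-singletons}): with probability $1-e^{-\Omega(N\sigma)}$ the perturbed function has at least $N\sigma/64$ singleton fibers. The proof is a two-case analysis on the image size of the unresampled part, plus McDiarmid's inequality. A measured singleton fiber yields an exactly uniform Fourier sample. So one smoothing event gives $\Pr[a\cdot z=1]\ge\sigma/128$ for all nonzero $a$ at once.

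\textbf{Your route.} You use the exact identity $\Pr[y\cdot s=1]=\tfrac12\Pr_x[g(x)\neq g(x\oplus s)]$, which checks out. This reduces the question to how much $g$ disagrees with its $s$-shift. You get $\Omega(\sigma)$ disagreement for each fixed $s$ by a Chernoff bound over the $N/2$ disjoint pairs $\{x,x\oplus s\}$. You then pay a union bound over the $N-1$ shifts at the smoothing stage, which the hypothesis $\sigma N\ge Cn$ absorbs, as in the paper.

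\textbf{What each buys.} Your argument is more elementary and more robust to the noise model. It works for the bitwise perturbation of the preliminaries, which you use, and also for the entrywise uniform resampling that the paper's proof actually adopts. The singleton-fiber count relies on resampled entries being uniform on $\mathbb F_2^n$, and it fails under bitwise noise with small $\sigma$. For constant $f$ and $\sigma\approx1/n$, the perturbed values concentrate on low-weight strings and singleton fibers are rare. Your identity still gives $\Pr[y\cdot s=1]\ge\sigma/8$ there, since non-singleton fibers also contribute. The paper's route, in turn, avoids the per-shift union bound and ties the speedup directly to uniform Fourier samples.

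\textbf{Two small remarks.}
\begin{enumerate}
\item The candidate-verification and Grover discussion in your last paragraph is unnecessary. Querying the whole truth table whenever the span is deficient already makes the algorithm zero-error, and a deficient span may leave many candidate periods anyway.
\item Your cost accounting charges one query per Fourier sample. That is the value oracle $\ket{x,w}\mapsto\ket{x,w\oplus g(x)}$, as in the paper. If queries instead went to individual bits of the $nN$-bit table, each sample would cost $n$ queries.
\end{enumerate}
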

We make an important remark about the theorem above. The algorithm that we use to prove the theorem above is well-known standard method/Fourier-sampling used in solving the hidden subgroup problem. To this end, our Simon result gives another interpretation of smoothed quantum complexity: the standard Fourier-sampling primitive underlying the hidden subgroup algorithm, which has exponential worst-case complexity for the \emph{total} problem, becomes efficient in smoothed expected complexity. Random perturbations create further structure, and these generate uniform Fourier samples that rapidly certify the absence of a hidden period. Potentially this idea could also be further explored to understand if the strong Fourier sampling approach for non-Abelian Hidden subgroup problem could yield faster algorithms for non-Abelian groups (say dihedral group or symmetric group). Finally our algorithm is directly inspired by the one in~\cite{ambainis2001average}, we just adapt their analysis to go from average-case analysis to smoothed-case analysis.

\begin{proof}
  We use the following algorithm on an arbitrary oracle
$g:\F_2^n\to \F_2^n$.

\begin{tcolorbox}
\begin{enumerate}

    \item Prepare the uniform superposition
over $\F_2^n$. Query the oracle to get  
$        \frac{1}{\sqrt N}\sum_{x\in \F_2^n}\ket{x,g(x)}.$

    \item Measure the second register, suppose that the outcome is
    $y\in \F_2^n$.  Let $F_y:=g^{-1}(y)=\{x\in \F_2^n:g(x)=y\}$.  The first register collapses to
    \[
        \frac{1}{\sqrt{|F_y|}}
        \sum_{x\in F_y}\ket{x}.
    \]

    \item Apply $H^{\otimes n}$ to the first register, obtaining
    \[
        \frac{1}{\sqrt{N|F_y|}}
        \sum_{z\in \F_2^n}
        \left(\sum_{x\in F_y}(-1)^{z\cdot x}\right)\ket{z},
    \]
    and measure to obtain $z\in \F_2^n$.

    \item Repeat Steps 1--3, $t=512 n/\sigma$  times, obtaining
    $z_1,\ldots,z_t$. Let
    $W:=\operatorname{span}_{\mathbb F_2}\{z_1,\ldots,z_t\}$.

    \item If $W=\F_2^n$, output \textsc{No}. Else, query the entire
    truth table of $g$ to determine the output.
\end{enumerate}
\end{tcolorbox}

We first verify that this algorithm is correct on every input and then
bound its expected number of queries under smoothing.

We first observe that {YES instances are never rejected.}
Suppose that $g:\F_2^n\to \F_2^n$ has a nonzero period $s\in \F_2^n$, so that
$g(x)=g(x+s)$ for every $x\in \F_2^n$. Fix any measurement outcome $y$.
The fiber $F_y=g^{-1}(y)$ is invariant under translation by $s$, and,
since $s\neq0$, its elements can be partitioned into disjoint pairs
$\{x,x+s\}$. After applying the Hadamard transform, the amplitude of $\ket z$ is
proportional to $\sum_{x\in F_y}(-1)^{z\cdot x}$. If $z\cdot s=1$,
the contribution of each pair $\{x,x+s\}$ is
\[
    (-1)^{z\cdot x}+(-1)^{z\cdot(x+s)}
    =
    (-1)^{z\cdot x}\bigl(1+(-1)^{z\cdot s}\bigr)
    =
    0.
\]
Consequently every Fourier sample satisfies $z\cdot s=0$. Thus
$z_1,\ldots,z_t\in s^\perp$, where
$s^\perp:=\{z\in \F_2^n:z\cdot s=0\}$. Since $s\neq0$, the space
$s^\perp$ is a proper subspace of $\F_2^n$, and hence $W\neq \F_2^n$. Therefore the algorithm never outputs \textsc{No} in Step~$(5)$ on a
\textsc{Yes} instance. It instead reaches step $(5)$,
which returns the correct answer. On a \textsc{No} instance, either
$W=\F_2^n$, in which case the algorithm correctly outputs \textsc{No}, or
step $(5)$ returns the correct answer. Hence the
algorithm is correct on every $g$.

We now analyze the expected cost of this fixed algorithm when its input
is a $\sigma$-perturbation of an arbitrary function $f:\F_2^n\to \F_2^n$.
To this end, let $f^\sigma$ be a $\sigma$-perturbation of $f$, obtained by
independently resampling each truth-table entry with probability $\sigma$;
a resampled value is chosen uniformly from $\mathbb F_2^n$. In particular, for $y\neq f(x)$ we have
$\Pr[f^\sigma(x)=y]=\sigma/N$, whereas
$\Pr[f^\sigma(x)=f(x)]=1-\sigma+\sigma/N$. 
For a function $g:\F_2^n\to \F_2^n$, let
$m(g):=|\{y\in \F_2^n:|g^{-1}(y)|=1\}|$ denote the number of singleton
fibers of $g$. The following lemma is the key consequence of~smoothing.

\begin{lemma}
\label{lem:many-singletons}
Let $f:\F_2^n\to \F_2^n$ be arbitrary. Then
\[
    \Pr\left[
        m(f^\sigma)\ge \frac{N\sigma}{64}
    \right]
    \ge
    1-e^{-N\sigma/8}-e^{-N\sigma/4096}.
\]
In particular, 
$m(f^\sigma)\ge N\sigma/64$ with probability
$1-2^{-\Omega(n)}$.
\end{lemma}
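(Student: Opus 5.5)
The two error terms, $e^{-N\sigma/8}$ and $e^{-N\sigma/4096}$, suggest the structure of the argument. The first controls how many truth-table entries are resampled. The second is a McDiarmid-type bound on the number of singleton fibers. The plan is to condition on the set of resampled positions, lower-bound the expected number of singleton values contributed by those positions, and then concentrate.

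\textbf{Step 1 (enough resampled positions).} Let $S\subseteq\F_2^n$ be the random set of inputs $x$ whose entry is resampled, so $|S|\sim\Bin(N,\sigma)$. Lemma~\ref{lem:chernoff2} with $\delta=1/2$ gives $|S|\ge N\sigma/2$ with probability at least $1-e^{-N\sigma/8}$. Fix such an $S$. Since only a subset of $S$ is needed, we may pass to a subset $S'\subseteq S$ with $|S'|=k:=\lceil N\sigma/2\rceil\le N$ and drop the rest.

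\textbf{Step 2 (expected singletons among resampled values).} Condition on $S$. The values $f^\sigma(x)$ for $x\notin S$ are fixed and equal $f(x)$; call their image $I$. For $x\in S$ the values are i.i.d.\ uniform on $\F_2^n$. Let $Z$ be the number of $x\in S'$ such that $f^\sigma(x)\notin I$ and $f^\sigma(x)\ne f^\sigma(x')$ for every other $x'\in S$. Each such $f^\sigma(x)$ is a singleton fiber of $f^\sigma$, so $m(f^\sigma)\ge Z$.

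The difficulty is that $I$ could be large, up to $N-|S|$ values. To handle this, use $|I|\le N-|S|$ and bound
\[
\Pr[\text{$x$ counted}]\ge \frac{N-|I|-(|S|-1)}{N}\cdot(\ldots).
\]
This bound fails, since $N-|I|-|S|$ may be $0$.

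The alternative is a value-counting view. Let $J=\F_2^n\setminus I$, so $|J|\ge |S|$. A value $y\in J$ is a singleton fiber exactly when exactly one resampled entry lands on it. By linearity,
\[
\E[\#]=|J|\cdot |S|\cdot\frac1N\Bigl(1-\frac1N\Bigr)^{|S|-1}\ge |S|\cdot\frac{|S|}{N}\cdot e^{-1}\ge \frac{N\sigma^2}{4e}.
\]
This is off by a factor of $\sigma$ from the target, so the argument must be refined. The refinement is to also count inputs $x\notin S$ whose original value $f(x)$ ends up a singleton fiber.

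Split $I$ according to the original fiber sizes. Let $A$ be the set of values $y$ with $|f^{-1}(y)|\le 2$, and let $B$ be the set of values with larger fibers. Since $|B|\le N/3$, at least $2N/3$ values have original fiber size at most $2$, namely $A$ together with the values missing from the image of $f$.

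For an empty original fiber, $y$ becomes a singleton with probability about $N\cdot(\sigma/N)e^{-\sigma}\ge\sigma/e$.

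For a fiber of size $1$ or $2$, $y$ becomes a singleton with probability at least roughly $\sigma(1-\sigma)\cdot(\ldots)$ plus $(1-\sigma)\cdot\sigma$-type terms, which are $\Omega(\sigma)$ when $\sigma\le 1/2$. For $\sigma$ near $1$, the empty-fiber-like resampling gives $\Omega(1)$ directly.

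Summing over the at least $2N/3$ such values gives $\E[m(f^\sigma)]\ge N\sigma/32$. This step avoids conditioning on $S$ entirely, and the Chernoff term then serves only as a safety margin, or is used to control atypical events in Step 3.

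\textbf{Step 3 (concentration).} $m(g)$ is a function of the $N$ independent entries $f^\sigma(x)$. Changing one entry changes $m$ by at most $2$. A McDiarmid bound with $c_i=2$ gives a deviation of $N\sigma/64$ with probability only $\exp(-2(N\sigma/64)^2/(4N))=\exp(-N\sigma^2/8192)$, which again loses a factor of $\sigma$.

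The fix is to reveal only the resampling choices. First reveal $S$; outside $S$, entries are deterministic. The function of the $|S|$ resampled values has Lipschitz constants $2$. On the event $|S|\le 2N\sigma$, which has Chernoff probability of failure at most $e^{-N\sigma/8}$ (the first term), McDiarmid (Lemma~\ref{lem:mcdiarmid}) gives failure at most $\exp(-2(N\sigma/64)^2/(8N\sigma))=e^{-N\sigma/4096}$, matching the stated constant.

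Since the conditional expectation given $S$ must also be at least $N\sigma/32$ for typical $S$, Step 2 should be carried out conditionally on $|S|=k\approx N\sigma$. The per-value singleton estimates above go through conditionally with the same constants.

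The main obstacle is Step 2: obtaining an $\Omega(N\sigma)$ rather than an $\Omega(N\sigma^2)$ lower bound on the expected number of singletons uniformly over adversarial $f$. This is resolved by the fiber-size counting, since at most $N/3$ values can have original fibers of size at least $3$.

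The final claim $1-2^{-\Omega(n)}$ follows from $N\sigma\ge\Omega(n)$, for a suitable constant in the hypothesis $\sigma\ge\Omega(n/N)$.
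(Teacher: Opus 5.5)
\textbf{There is a genuine gap where Steps 2 and 3 meet.} In Step 3 you condition on the resampled set $S$ and apply McDiarmid to the $|S|$ fresh values. That only concentrates $m(f^\sigma)$ around $\E[m(f^\sigma)\mid S]$. You therefore need this conditional expectation to be $\Omega(N\sigma)$ for every $S$ with $|S|\ge N\sigma/2$, or for all but an exponentially unlikely set of them.

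Step 2 does not supply this. Your unconditional count is fine: at least $2N/3$ values have original fiber size at most $2$, and each becomes a singleton with probability $\Omega(\sigma)$. Those per-value estimates also survive conditioning on $|S|=k$. They do not survive conditioning on $S$ itself, which is the conditioning your McDiarmid step uses. For example, if $f$ is $2$-to-$1$ and $S$ is a union of whole fibers, every value whose fiber lies outside $S$ has conditional probability $0$ of becoming a singleton.

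Conditioning only on $|S|=k$ does not repair this. The positions of $S$ are then still random and sampled without replacement. The inequality you invoke treats only the fresh values as the random inputs, so it no longer applies as stated. Plain McDiarmid over the $N$ resampling indicators loses a factor of $\sigma$, as you observed yourself. You would need a bounded-differences inequality for random $k$-subsets, or a variance-sensitive inequality, and the proposal provides neither.

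\textbf{The missing ingredient is the uniform bound $\E[m(f^\sigma)\mid S]\ge |S|/16$ for every fixed $S$.} You were close to it when you proposed counting unresampled points whose value stays a singleton, before switching to the unconditional fiber count. Let $M=N-|S|$, let $r$ be the number of distinct values $f$ takes outside $S$, and let $q$ be the number taken exactly once there, so $q\ge 2r-M$.
\begin{itemize}
\item If $r\ge M/2+|S|/8$, then $q\ge |S|/4$. Each such value stays a singleton exactly when no fresh output hits it, which happens with probability $(1-1/N)^{|S|}\ge 1/4$.
\item Otherwise $r<N/2$, so more than $N/2$ values are unused outside $S$. Each of these is hit by exactly one fresh output with probability at least $|S|/(eN)$.
\end{itemize}
This is the paper's two-case argument, and it makes your worry that $I$ could be large harmless.

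With this bound in hand, measure the deviation relative to $|S|$ rather than $N\sigma$. McDiarmid with $c_i=2$ gives $\Pr[m(f^\sigma)<|S|/32\mid S]\le e^{-|S|/2048}$. On the event $|S|\ge N\sigma/2$ this is at most $e^{-N\sigma/4096}$, which is exactly the stated constant. You then no longer need the upper-tail event $|S|\le 2N\sigma$. That event is not provided by Lemma~\ref{lem:chernoff2}, and using it would add a second exponential term.

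Separately, your own computation $2(N\sigma/64)^2/(8N\sigma)$ equals $N\sigma/16384$, not $N\sigma/4096$.
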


\begin{proof}
Let $\mathcal P\subseteq \F_2^n$ be the set of points at which $f$ is
resampled, and let $\eta:=|\mathcal P|$. Since
$\eta\sim\Bin(N,\sigma)$, we have that $\E[\eta]=N\sigma$ and, by Chernoff bound (Lemma~\ref{lem:chernoff2}), we get
\[
    \Pr\!\left[\eta<{N\sigma}/2\right]
    \le e^{-N\sigma/8}.
\]
Condition now on a fixed resampled set $\mathcal P$ of size $\eta$.
On $\mathcal P$, the values of $f^\sigma$ are independent and uniform
in $\F_2^n$; outside $\mathcal P$, they are unchanged. 
Let $\mathcal N:=\F_2^n\setminus\mathcal P$, so
$|\mathcal N|=M:=N-\eta$, and let $R:=f(\mathcal N)$ with
$r:=|R|$. Let $q$ be the number of values in $R$ that occur exactly
once among the points of $\mathcal N$. Since the remaining $r-q$
values occur at least twice,
$    M\ge q+2(r-q)=2r-q,$
and therefore $q\ge 2r-M$. We now consider two cases.
\paragraph{Case 1.}
Suppose $r\ge M/2+\eta/8$. Then $q\ge\eta/4$. Let $X$ be the number
of these $q$ singleton values that are not hit by any of the $\eta$
fresh uniform outputs. Every value counted by $X$ remains a singleton
fiber of $f^\sigma$, and
\[
    \mathbb E[X]
    =
    q\left(1-\frac1N\right)^\eta
    \ge
    \frac q4
    \ge
    \frac{\eta}{16},
\]
where we used $\eta\le N$ and
$(1-1/N)^N\ge1/4$ for $N\ge2$. Changing one of the $\eta$ fresh outputs can change $X$ by at most
$2$. Therefore McDiarmid's inequality
(Lemma~\ref{lem:mcdiarmid}) gives us
\[
    \Pr\left[X<\frac{\eta}{32}\right]
    \le
    \exp\left(
        -\frac{2(\eta/32)^2}{4\eta}
    \right)
    =
    e^{-\eta/2048}.
\]
Thus, except with probability $e^{-\eta/2048}$,
$m(f^\sigma)\ge X\ge\eta/32$.

\paragraph{Case 2.}
Suppose instead that $r<M/2+\eta/8$. Since $M=N-\eta$, this implies
$r<N/2-3\eta/8\le N/2$, and hence $|\F_2^n\setminus R|\ge N/2$. Let $Y$ be the number of values in $\F_2^n\setminus R$ that are hit
exactly once by the $\eta$ fresh uniform outputs. Every value counted
by $Y$ is a singleton fiber of $f^\sigma$. For each
$y\in \F_2^n\setminus R$, the probability that exactly one fresh output
equals $y$ is
$\frac{\eta}{N}(1-\frac1N)^{\eta-1}$. Consequently,
\[
    \mathbb E[Y]
    =
    |\F_2^n\setminus R|\cdot 
    \frac{\eta}{N}
    \left(1-\frac1N\right)^{\eta-1}
    \ge
    \frac{\eta}{2}
    \left(1-\frac1N\right)^{N-1}
    \ge
    \frac{\eta}{2e}
    >
    \frac{\eta}{6}.
\]
Again, changing one fresh output changes $Y$ by at most $2$.
McDiarmid's inequality
(Lemma~\ref{lem:mcdiarmid}) therefore gives
\[
    \Pr\left[Y<\frac{\eta}{12}\right]
    \le
    \exp\left(
        -\frac{2(\eta/12)^2}{4\eta}
    \right)
    =
    e^{-\eta/288}.
\]
Thus, except with probability $e^{-\eta/288}$,
$m(f^\sigma)\ge Y\ge\eta/12\ge\eta/32$.

Combining the two cases, for every \emph{fixed perturbation} set
$\mathcal P$ of size $\eta$,
\[
    \Pr\left[
        m(f^\sigma)<\frac{\eta}{32}
        \,\middle|\,
        \mathcal P
    \right]
    \le
    e^{-\eta/2048}.
\]
On the event $\eta\ge\mu/2$, this implies
$m(f^\sigma)\ge\eta/32\ge\mu/64=N\sigma/64$ except with conditional
probability at most $e^{-\mu/4096}$. Combining this with
$\Pr[\eta<\mu/2]\le e^{-\mu/8}$ proves
\[
    \Pr\left[
        m(f^\sigma)<\frac{N\sigma}{64}
    \right]
    \le
    e^{-N\sigma/8}+e^{-N\sigma/4096}.
\]
For a sufficiently large enough constant in $\sigma\geq \Omega(n/N)$, we have
$e^{-N\sigma/4096}\le e^{-2n}$, so the failure probability is
$2^{-\Omega(n)}$.
\end{proof}

We have now so far argued about the correctness of the algorithm and also the complexity of it being small, except when the algorithm reaches step $5$ in which case it makes the exponentially many $N=2^n$ many queries to the truth table. To prove our main theorem statement, we  now bound the probability that  Step~5 is invoked. Conditioned on the event
$m(f^\sigma)\ge N\sigma/64$ from
Lemma~\ref{lem:many-singletons}, consider one execution of the Fourier-sampling procedure. Measuring
the second register yields a value $y$ with probability
$|(f^\sigma)^{-1}(y)|/N$. Consequently, the probability of observing
a singleton fiber is exactly $m(f^\sigma)/N\ge\sigma/64$.
 Conditioned on observing a singleton fiber $\{x\}$, the first register
is exactly $\ket x$, and applying the Hadamard transform gives
\[
    H^{\otimes n}\ket x
    =
    \frac{1}{\sqrt N}
    \sum_{z\in \F_2^n}(-1)^{x\cdot z}\ket z.
\]
Thus the resulting Fourier sample $z$ is uniformly distributed over
$\F_2^n$. Fix any nonzero $a\in \F_2^n$. A uniform $z\in \F_2^n$ satisfies
$a\cdot z=1$ with probability $1/2$. Hence in each repetition,
$ \Pr[a\cdot z_i=1]
    \ge
    {\sigma}/{64}\cdot 1/2
    =
    {\sigma}/{128}.$
Conditioned on the fixed oracle $f^\sigma$, the $t$ repetitions are
independent, so
\[
    \Pr[
        a\cdot z_i=0
        \text{ for every }i\in[t]
    ]
    \le
    \left(1-\frac{\sigma}{128}\right)^t
    \le
    e^{-\sigma t/128}.
\]
If $W=\operatorname{span}\{z_1,\ldots,z_t\}$ is a proper subspace of
$\F_2^n$, then $W^\perp$ contains some nonzero vector $a$. A union bound
over the $2^n-1$ possible nonzero $a$ therefore gives
\[
    \Pr[W\neq \F_2^n]
    \le
    (2^n-1)e^{-\sigma t/128}
    <
    2^n e^{-\sigma t/128}\leq  2^ne^{-4n}
    =
    e^{-(4-\ln2)n}
    =
    2^{-\Omega(n)},
\]
where we used  $t=\Theta(n/\sigma)$.  
Combining this with Lemma~\ref{lem:many-singletons}, the unconditional
probability, over both the smoothing and the measurements, that the
algorithm reaches step $(5)$ satisfies
\[
    \Pr[\text{step $(5)$}]
    \le
    e^{-N\sigma/8}
    +
    e^{-N\sigma/4096}
    +
    e^{-(4-\ln2)n}\leq 
        e^{-1024n}
        +
        e^{-2n}
        +
        e^{-(4-\ln2)n}=2^{-Cn},
\]
for some $C>1$.
The initial Fourier-sampling stage uses
$t=O(n/\sigma)$ queries, while the exhaustive fallback costs at most
$N$ additional queries. Hence for every center $f:\F_2^n\to \F_2^n$,
\[
    \mathbb E_{f^\sigma}
    [T_A(f^\sigma)]
    \le
    O(n/\sigma)
    +
    N\Pr[\text{fallback}]
    =
    O(n/\sigma).
\]
Taking the maximum over all centers $f$ gives
$Q_{\mathrm{sm},\sigma}(\mathrm{Simon})=O(n/\sigma)$, as claimed.
\end{proof}

\begin{lemma}[Classical smoothed lower bound]
\label{lem:smoothed-simon-classical-lower-bound}
For every $\sigma\in[0,1]$,
\[
R_{\mathrm{sm},\sigma}(\mathrm{Simon})
\geq R_{\mathrm{unif}}(\mathrm{Simon})
=\Omega(2^{n/2}).
\]
\end{lemma}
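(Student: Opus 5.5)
The statement has two parts, and I would prove them separately: the comparison $R_{\mathrm{sm},\sigma}(\Simon)\ge R_{\mathrm{unif}}(\Simon)$, and the average-case bound $R_{\mathrm{unif}}(\Simon)=\Omega(2^{n/2})$.

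\emph{The comparison.} Here $R_{\mathrm{unif}}$ means the least expected number of queries, averaged over a uniformly random $g:\F_2^n\to\F_2^n$, made by a randomized algorithm that is correct with probability $\ge 2/3$ on every input. Fix any algorithm $\A$ that computes $\Simon$ and is optimal for the smoothed measure. Take the worst center $f$ and then average over centers $f$ drawn uniformly. A maximum is at least an average, so
\[
\max_f \E_{g\sim N_\sigma(f)}[T(\A,g)] \;\ge\; \E_{f\ \mathrm{unif}}\,\E_{g\sim N_\sigma(f)}[T(\A,g)].
\]
If $f$ is uniform, then its $\sigma$-perturbation is also uniform, because each entry independently either keeps its uniform value or is resampled uniformly. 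The right-hand side is therefore $\E_{g\ \mathrm{unif}}[T(\A,g)]$. Since $\A$ is correct on every input, this is at least $R_{\mathrm{unif}}(\Simon)$. The argument works for every $\sigma\in[0,1]$.

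\emph{The average-case bound.} For $\Omega(2^{n/2})$ I would follow Ambainis and de Wolf. First, a uniformly random $g$ has a nonzero period only with probability $2^{-\Omega(N)}$. Next, fix an algorithm that makes fewer than $c2^{n/2}$ queries on most uniform inputs. By Yao-style reasoning, truncating at a constant times the expected cost loses only constant error; this step mirrors Lemma~\ref{lem:expected-query-truncation}. I would then compare two distributions: $\mathcal D_0$, uniform $g$, whose answer is almost surely No, and $\mathcal D_1$, a uniform $s\ne 0$ together with a $g$ that is uniform subject to $g(x)=g(x\oplus s)$, whose answer is Yes. A classical algorithm making $q$ queries distinguishes $\mathcal D_0$ from $\mathcal D_1$ only if its queried points contain a pair $x,x'$ with $x\oplus x'=s$, or if the answers on $\mathcal D_0$ happen to collide. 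Before such an event the two transcripts are identically distributed: both consist of fresh uniform values. By a union bound over the $\binom q2$ pairs, the probability of the event is at most $O(q^2/2^n)$. So $q=\Omega(2^{n/2})$ queries are needed to distinguish.

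The step I expect to be the main obstacle is the final connection. The algorithm is only required to be cheap in expectation on uniform inputs, not on $\mathcal D_1$, and it must be correct on the $\mathcal D_1$ inputs as well. The fix I would try is to run the algorithm truncated at $12$ times its uniform-average cost. On $\mathcal D_0$ it then still outputs No with probability about $2/3-1/12$. On $\mathcal D_1$ it must output Yes with probability $\ge 2/3$ whenever it has not been truncated. The coupling above shows that both distributions are indistinguishable until a pair-collision event occurs, so the truncation probabilities and outputs agree up to $O(q^2/N)$. This yields a contradiction unless $q=\Omega(2^{n/2})$.
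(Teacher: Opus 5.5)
Your argument is correct, and the comparison step is exactly the paper's: bound the worst center by the average over a uniformly random center, and use the fact that resampling entries of a uniform truth table leaves it uniform, so the smoothed cost of any everywhere-correct algorithm is at least its uniform-average cost. The only difference is that the paper simply cites Ambainis and de Wolf for $R_{\mathrm{unif}}(\mathrm{Simon})=\Omega(2^{n/2})$, whereas you re-derive it. You use the hidden-period coupling (transcripts agree until some queried pair differs by $s$, which happens with probability $O(q^2/2^n)$) together with truncation at $12$ times the uniform-average cost. This is sound, provided you either fix a default output on timeout as in Lemma~\ref{lem:expected-query-truncation}, or transfer the truncation probability through the coupling as you indicate.
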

\begin{proof}
Let $\mathcal U$ denote the uniform distribution over functions from $\F_2^n$ to $\F_2^n$. For every randomized algorithm $A$ that is correct on every oracle,
\[
\max_f \mathbb E_{f^\sigma}[T_A(f^\sigma)]
\geq
\mathbb E_{f\sim\mathcal U}\mathbb E_{f^\sigma}[T_A(f^\sigma)]
=
\mathbb E_{g\sim\mathcal U}[T_A(g)].
\]
The equality holds because, when $f\sim\mathcal U$, independently resampling any truth table entry with a uniform value leaves the resulting function $f^\sigma$ uniformly distributed. Taking the minimum over all such algorithms $A$ and applying the average-case lower bound of~\cite{ambainis2001average} proves the claim.
\end{proof}

Together with Theorem~\ref{thm:smoothed-simon}, Lemma~\ref{lem:smoothed-simon-classical-lower-bound} gives an exponential quantum-classical separation for every $\sigma\geq 1/\operatorname{poly}(n)$.

\section{Characterizing query complexity of symmetric functions}
\label{sec:symmetric}
\newcommand{\wInf}{\operatorname{wInf}}
\newcommand{\transcost}{\operatorname{transcost}}
In this section, we characterize the smoothed query complexity of symmetric Boolean functions. Beals et al.~\cite{beals2001quantum} gave a tight characterization of their worst-case quantum query complexity, while Ambainis and de Wolf~\cite{ambainis2001average} studied query complexity averaged over an input distribution, with correctness on every input. We ask how complexity changes when the input is obtained by perturbing an arbitrary instance, interpolating between worst-case and average-case complexity.

For a symmetric function, the difficulty of evaluating an input depends on its Hamming weight and the nearby weights on which the function takes a different value. The \emph{Hamming layer of weight $k$} is the set $\{x\in\{0,1\}^n:|x|=k\}$ of inputs with exactly $k$ ones. In the worst case, the hardest pair of adjacent weights determines the quantum query complexity. Under smoothing, however, the weight is random, so we must also account for how often the perturbed input reaches the difficult layers. A hard layer may have little probability, while contributions from several different layers may add up.

Our characterization uses two quantities, defined below in terms of the transitions of $f$, namely the adjacent Hamming weights on which its value changes. We write $\transcost_\sigma(f)$ and $\wInf_\sigma(f)$, abbreviating \emph{transition cost} and \emph{weighted influence}, respectively. The weighted influence $\wInf_\sigma(f)$ in~\eqref{eq:symmetric-binf} sums the quantum query bounds at these transitions, weighted by the probabilities of the corresponding layers. The transition cost $\transcost_\sigma(f)$ in~\eqref{eq:symmetric-g} measures the largest contribution of a single transition after accounting for the mean and spread of the perturbed weight. We show the following.

\begin{theorem}
\label{thm:symmetric}
For every total symmetric Boolean function $f$ and every $\sigma\in[0,1]$,
the randomized complexity is
$R_{\sm,\sigma}(f)=\widetilde\Theta(\min\{n,\transcost_\sigma(f)^2\})$,
and the quantum complexity is
$Q_{\sm,\sigma}(f)=\widetilde\Theta(\wInf_\sigma(f)+\transcost_\sigma(f))$.
\end{theorem}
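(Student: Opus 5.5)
Throughout, I assume that $\wInf_\sigma(f)$ and $\transcost_\sigma(f)$ are defined as described in the text: the weighted influence is a sum over transitions $t$ (with $f$ changing between weights $t$ and $t+1$) of the quantum cost of distinguishing those weights, weighted by the probability that the perturbed weight lands near $t$. The transition cost is the largest single-transition contribution after normalising by the mean $\mu$ and spread $\sqrt v$ of the perturbed weight.

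The plan is to reduce everything to the Hamming weight of the perturbed input $y\sim N_\sigma(x)$. By symmetry, the centre $x$ matters only through $|x|$. Then $|y|$ is a sum of independent Bernoullis with mean $\mu=(1-\sigma)|x|+\sigma n/2$ and variance $v=\Theta(\sigma n)$ (for $\sigma$ not tiny). Lemma~\ref{lem:poisson-binomial-facts} and Lemma~\ref{lem:local-limit} then give concentration, log-concavity, unimodality and the atom bound $O(1/\sqrt v)$. The proof has two stages. Stage one characterises, up to polylog factors, the \emph{pointwise} expected cost $c(k)$ of an optimal algorithm on inputs of weight $k$. Stage two averages $c(|y|)$ against the law of $|y|$, and shows that the average matches the stated parameters from above and below.

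For stage one, let $d(k)$ be the distance from $k$ to the nearest transition, and write $\bar k=\min\{k,n-k\}$. The upper bound is the adaptive estimator from the overview. Quantumly, I run amplitude estimation (Theorem~\ref{thm:amplitude_estimation}) with $M=2^j$ for $j=0,1,2,\dots$, each round with failure probability $1/\mathrm{poly}(n)$. I stop as soon as the confidence interval for the weight contains no transition, and stop unconditionally after $O(n)$ queries. This gives $c_Q(k)=\widetilde O(\sqrt{n/d(k)}+\sqrt{\bar k\,n}/d(k))$. Classically, sampling with doubling gives $c_R(k)=\widetilde O(\min\{n,\, n\bar k/d(k)^2+n/d(k)\})$. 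Because we stop only on certified intervals, correctness holds on every input. For the matching pointwise lower bounds, I use the standard bounds for distinguishing weight $t$ from $t+1$ (via symmetrization / the Beals et al.\ polynomial method quantumly, and sampling lower bounds classically). These are worst-case bounds, but Lemma~\ref{lem:expected-query-truncation} converts an algorithm with small expected cost on one layer into a bounded-error worst-case algorithm for the promise problem. The upshot is that the expected cost on layer $t$ or $t+1$ must be at least the distinguishing cost.

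Stage two is the averaging. For the randomized bound, the key observation is that the cost $n\bar k/d^2$ is essentially maximised near a single transition. The perturbed weight lies within $\widetilde O(\sqrt v)$ of $\mu$ with probability $1-n^{-\omega(1)}$, and inside this window the weighted sum over $k$ of $\Pr[|y|=k]\cdot n\bar k/d(k)^2$ is dominated by the closest or heaviest transition. This is where $\transcost_\sigma(f)^2$ appears, capped by $n$. For the lower bound, I choose the centre $x$ so that $\mu$ sits at the maximising transition. Lemma~\ref{lem:local-limit} then puts mass $\Theta(1/\sqrt v)$ on each of the $\Theta(\sqrt v)$ layers nearby, and summing the pointwise lower bounds recovers the same quantity. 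The quantum bound is analogous, but the cost $\sqrt{n/d}$ decays only slowly in $d$, so several transitions inside the window contribute additively. To handle this, I partition the weights into maximal intervals on which $f$ is constant, and bound the contribution of each interval by its endpoints' transition costs times the local probability mass. The sums of the form $\sum_d a_d/d$ that arise are bounded with Lemma~\ref{lem:monotone-sequence}, using that $\Pr[|y|=k]$ is monotone on each side of the mode. This produces $\wInf_\sigma(f)+\transcost_\sigma(f)$ up to logs. The quantum lower bound takes a maximum of two arguments: one uses the single transition giving $\transcost$, and the other shows that an algorithm must pay the distinguishing cost on each layer adjacent to each transition, which summed with the layer probabilities gives $\wInf$.

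I expect the main obstacle to be the quantum upper-bound averaging. The difficulty is to control the accumulation from many transitions, including the tails where $|y|$ is far from $\mu$ yet close to a transition, and to do so losing only polylog factors. My first attempt would be the interval decomposition plus unimodality described above. If the tails cause trouble, I would fall back on the exponential tail bound of Lemma~\ref{lem:poisson-binomial-facts} to discard layers at distance $\omega(\sqrt v\log n)$, which are paid for by the cap at $O(n)$ queries.
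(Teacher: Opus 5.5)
Your upper-bound plan is essentially the paper's: doubling-precision weight estimation, then averaging via the partition into maximal constant intervals, unimodality of the perturbed weight, and Lemma~\ref{lem:monotone-sequence}. Your $\operatorname{wInf}_\sigma$ lower bound from adjacent layers is also fine. The lower bound for the $\operatorname{transcost}_\sigma(f)$ term, however, has a genuine gap.

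In Stage one your lower bounds come only from distinguishing weight $t$ from $t+1$ at a transition. They therefore constrain the cost only on the two layers adjacent to each transition. Stage two then places $\mu$ on the maximising transition and sums ``pointwise lower bounds'' over the $\Theta(\sqrt v)$ nearby layers, but you never established bounds on those layers. Both steps fail in general.

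\begin{itemize}
\item The achievable means $\mu_{r,\sigma}=\sigma n/2+(1-\sigma)r$ only cover $[\sigma n/2,\,n-\sigma n/2]$. This is exactly why $\operatorname{transcost}_\sigma$ has $\dist(\mu,[j,j+1])$ in its denominator.
\item For \textsc{Or} with $\sigma=n^{-1/2}$, every achievable mean is at least $\sqrt n/2$, and layers $0,1$ have probability $e^{-\Omega(\sqrt n)}$. Your bounds give essentially nothing, yet $\operatorname{transcost}_\sigma(\textsc{Or})=\Theta(n^{1/4})$. The true costs are $\widetilde\Theta(n^{1/4})$ quantumly and $\widetilde\Theta(\sqrt n)$ classically.
\item Even when $\mu$ can sit on the transition, adjacent layers do not suffice classically. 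For \textsc{Majority} at $\sigma=1$ they give $\Theta(n^{-1/2})\cdot n=\Theta(\sqrt n)$, whereas $\min\{n,\operatorname{transcost}_1^2\}=\Theta(n)$.
\end{itemize}

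The missing ingredient is the two-layer bound for \emph{arbitrary} weights $a<b$ (Lemma~\ref{lem:two-layers}): $\Theta(\sqrt{(n-a)b}/(b-a))$ quantumly and $\Theta(\min\{n,(n-a)b/(b-a)^2\})$ classically. You also need Lemma~\ref{lem:symmetrization}: a random coordinate permutation makes any algorithm's expected cost constant on each layer, which Lemma~\ref{lem:expected-query-truncation} requires. Together these give $\Omega(q_f(k))$, resp.\ $\Omega(r_f(k))$, at \emph{every} weight $k$, measured against the nearest weight of opposite value.

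With this, no special centre is needed. Take any $r$ and any $k$ with $|k-\mu_{r,\sigma}|\le2s_\sigma$. The numerator factors of $q_{j,0}(k)$ are within constant factors of those at $\mu_{r,\sigma}$, and $1+\dist(k,[j,j+1])\le3(s_\sigma+\dist(\mu_{r,\sigma},[j,j+1]))$. Hence $q_f(k)\ge c\,q_{j,\sigma}(\mu_{r,\sigma})$ for every transition $j$. Chebyshev puts mass at least $3/4$ on these $k$, giving $\Omega(\max_j q_{j,\sigma}(\mu_{r,\sigma}))$ and its randomized analogue for each $r$, with no local limit theorem. Maximising over $r$ yields the $\operatorname{transcost}_\sigma$ lower bounds.

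A smaller point concerns the quantum upper bound. Your pointwise bound $\sqrt{n/d}+\sqrt{\bar k n}/d$ is not of the form $t_j/(1+\dist(k,[j,j+1]))$ that the interval argument controls. You need a split like the paper's $q_{j,0}(k)\le t_j/(k-j)+\sqrt{r_f(k)}$, and then average the second term by Cauchy--Schwarz against the randomized bound.
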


To derive the above bounds, we combine two steps. We first bound the expected number of queries needed on inputs of each Hamming weight, using the complexity of distinguishing inputs of two different Hamming weights and algorithms that estimate the weight only until the value of $f$ is determined. We then average these bounds over the perturbed Hamming-weight distribution. In Section~\ref{sec:symmetric-layers}, we give the symmetrization argument and prove matching lower and upper bounds for each Hamming weight in both query models. In Section~\ref{sec:symmetric-averaging}, we prove Lemma~\ref{lem:localization}, which expresses the resulting averages in terms of $\wInf_\sigma(f)$ and $\transcost_\sigma(f)$ and gives the claimed smoothed query complexities. Finally, Section~\ref{sec:symmetric-examples} illustrates the characterization at $\sigma=0$ and $\sigma=1$ and compares the two query models for \textsc{Or}, thresholds at $\sqrt n$ and $n/4$, \textsc{Majority}, and \textsc{Parity}.

Throughout this section, let $f:\{0,1\}^n\to\{0,1\}$ be a total
symmetric Boolean function, and write $f(x)=g(|x|)$ for some
$g:\{0,\ldots,n\}\to\{0,1\}$. We assume $n\ge2$.\footnote{The case
$n=1$ is immediate.} For an original input $x$ of Hamming weight $r$,
let $Y\sim N_\sigma(x)$, write $W_{r,\sigma}=|Y|$, and let
$p_{r,\sigma}(k)=\Pr[W_{r,\sigma}=k]$. The distribution of the perturbed
weight depends on $x$ only through $r$. Each of its $r$ coordinates equal to one
remains one with probability $1-\sigma/2$, while each coordinate equal to zero
becomes one with probability $\sigma/2$. Thus $W_{r,\sigma}$ is the sum
of independent variables with distributions $\Bin(r,1-\sigma/2)$ and
$\Bin(n-r,\sigma/2)$. Its mean is
$\mu_{r,\sigma}=\sigma n/2+(1-\sigma)r$, and we write
$s_\sigma=1+\sqrt{n\sigma(2-\sigma)}/2$ for one plus its standard deviation.
The transition set is $\mathcal T(f)=\{j\in\{0,\ldots,n-1\}:g(j)\ne g(j+1)\}$.
For a transition $j$ and $z\in[0,n]$, define
\begin{equation}
t_j=\sqrt{(n-j)(j+1)},
\qquad
q_{j,\sigma}(z)=
\frac{\sqrt{(n-\min\{j,z\})\max\{j+1,z\}}}
{s_\sigma+\dist(z,[j,j+1])}.
\label{eq:symmetric-transition-cost}
\end{equation}
By Lemma~\ref{lem:two-layers} below, $t_j$ gives, up to constant
factors, the quantum query complexity of distinguishing inputs of
Hamming weights $j$ and $j+1$. More generally, for $k\le j$ the
quantity $q_{j,0}(k)$ is the cost of distinguishing weights $k$ and
$j+1$, and for $k\ge j+1$ it is the cost of distinguishing weights
$j$ and $k$. It therefore measures the cost of distinguishing an
input from the nearest weight across this transition. The denominator
increases with the distance to the transition, so this cost is largest
at the adjacent weights. In particular,
$q_{j,\sigma}(z)\le q_{j,0}(z)\le t_j$.

Under smoothing, the weight is typically within $O(s_\sigma)$ of its
mean. We account for this spread by replacing the denominator
$1+\dist(z,[j,j+1])$ with $s_\sigma+\dist(z,[j,j+1])$. The resulting
quantity $q_{j,\sigma}(\mu_{r,\sigma})$ measures the cost of this
transition after accounting for the spread of the perturbed weight. We now define
the two parameters appearing in Theorem~\ref{thm:symmetric}.

The \emph{weighted influence} sums the costs of transitions weighted by the
probability of obtaining either adjacent Hamming weight:
\begin{equation}
\wInf_\sigma(f)
=\max_{0\le r\le n}\sum_{j\in\mathcal T(f)}
\bigl(p_{r,\sigma}(j)+p_{r,\sigma}(j+1)\bigr)\sqrt{(n-j)(j+1)}.
\label{eq:symmetric-binf}
\end{equation}
Here the factor $p_{r,\sigma}(j)+p_{r,\sigma}(j+1)$ is the probability
of obtaining an input at one of the two weights adjacent to transition
$j$, and $t_j$ is the corresponding query cost. Summing over transitions
allows contributions from several different parts of the distribution
to accumulate. The maximum over $r$ selects the original input weight
for which their total is largest. The term weighted influence refers
to this particular weighted sum.

The \emph{transition cost} is
\begin{equation}
\transcost_\sigma(f)
=\max_{\substack{0\le r\le n\\j\in\mathcal T(f)}}
\frac{\sqrt{(n-\min\{j,\mu_{r,\sigma}\})\max\{j+1,\mu_{r,\sigma}\}}}
{s_\sigma+\dist(\mu_{r,\sigma},[j,j+1])}.
\label{eq:symmetric-g}
\end{equation}
Equivalently, it is the maximum of $q_{j,\sigma}(\mu_{r,\sigma})$ over
the original input weight $r$ and the transition $j$. This parameter
can remain significant even when the probability of landing exactly
beside a transition is small: distinguishing a typical perturbed weight
from a weight across that transition may still require many queries.
Thus $\transcost_\sigma(f)$ measures the largest contribution of one
transition, whereas $\wInf_\sigma(f)$ sums the contributions at all
transitions. We set both parameters to zero when $f$ is constant.

\subsection{Query complexity at a fixed Hamming weight}
\label{sec:symmetric-layers}
We first characterize the expected query cost on inputs of each Hamming
weight. Symmetry lets us assume this cost is the same for all inputs
with a given weight. We then prove lower bounds by distinguishing two
weights on which $f$ differs, and matching upper bounds by estimating
the weight until the value of $f$ is determined.

\begin{lemma}[Symmetrization]
\label{lem:symmetrization}
Let $\A$ be a randomized or quantum query algorithm computing $f$.
There is an algorithm computing $f$ with the same error bound, no larger
smoothed query cost, and expected query cost $T_{|y|}$ on every input
$y$, for some $T_0,\ldots,T_n$. Its smoothed query cost is
$\max_{0\le r\le n}\sum_{k=0}^n p_{r,\sigma}(k)T_k$.
\end{lemma}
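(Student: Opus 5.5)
The plan is the standard symmetrization trick: randomize over the symmetric group. Given $\A$, define $\A'$ as follows. Sample a uniformly random permutation $\pi\in S_n$, then run $\A$ on the permuted input $y^\pi$ with $y^\pi_i=y_{\pi(i)}$. Each query of $\A$ to index $i$ is answered by querying $y_{\pi(i)}$. In the quantum case the permutation is a fixed classical choice made before the run, so each oracle call of $\A$ becomes one oracle call of $\A'$ composed with a relabeling of the index register. Either way, the number of queries is preserved exactly on every run.

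\emph{Correctness.} Since $f$ is symmetric, $f(y^\pi)=f(y)$. For each fixed $\pi$, $\A$ outputs $f(y^\pi)$ with probability at least $2/3$, so averaging over $\pi$ gives the same error bound for $\A'$ on every input.

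\emph{Constant cost on each layer.} The expected cost of $\A'$ on $y$ is $\frac{1}{n!}\sum_\pi T(\A,y^\pi)$. As $\pi$ ranges over $S_n$, $y^\pi$ ranges uniformly, with equal multiplicity, over the Hamming layer of weight $|y|$. So this cost equals $T_{|y|}$, the average of $T(\A,\cdot)$ over that layer.

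\emph{Smoothed cost does not increase.} For a center $x$ of weight $r$, the perturbed weight has law $p_{r,\sigma}$ independently of which such $x$ is chosen. Hence
\[
\E_{y\sim N_\sigma(x)}[T(\A',y)]=\sum_k p_{r,\sigma}(k)T_k ,
\]
which gives the claimed formula after maximizing over $r$. To compare with $\A$, note that the noise $N_\sigma$ commutes with permutations: $N_\sigma(x^\pi)$ is the law of $y^\pi$ for $y\sim N_\sigma(x)$. Therefore
\[
\E_{y\sim N_\sigma(x)}[T(\A',y)]=\frac1{n!}\sum_\pi \E_{y\sim N_\sigma(x^{\pi})}[T(\A,y)]\le \max_{x'}\E_{y\sim N_\sigma(x')}[T(\A,y)],
\]
since an average is at most a maximum.

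\emph{Main subtlety.} The nontrivial point is checking that permuting query indices is a legitimate query-preserving transformation in the quantum model. Because $\pi$ is sampled classically and applied as a fixed unitary relabeling $|i\rangle\mapsto|\pi(i)\rangle$ before and after each oracle call, the query count and the output distribution for that $\pi$ match those of $\A$ on $y^\pi$. The remaining steps only use linearity of expectation and the permutation-equivariance of $N_\sigma$.
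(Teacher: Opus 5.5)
Your proposal is correct and follows the paper's proof essentially step for step. Both run $\A$ on a uniformly randomly permuted input, get correctness from the symmetry of $f$, get layer-constant cost because a random permutation makes all inputs of a given weight identically distributed, and bound the smoothed cost using the fact that coordinate permutations commute with $N_\sigma$, so that it is an average, hence at most the maximum, of $\A$'s smoothed costs.
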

\begin{proof}
Choose a uniformly random permutation of the coordinates and run $\A$
on the permuted input, answering each query with the corresponding
coordinate of the actual input. This uses the same number of queries
as $\A$. Since $f$ is symmetric, permuting coordinates preserves its
value and hence the error bound. Any two inputs of the same Hamming
weight have the same distribution after this random permutation, so
the new expected cost depends only on the weight.

It remains to compare the smoothed costs. Coordinate permutations
commute with the perturbation, because every coordinate is perturbed
independently according to the same rule. Consequently, for each
original input $x$, the new smoothed cost is the average of the old
smoothed costs over permutations of $x$. This average is at most the
maximum smoothed cost of $\A$. Finally, if $|x|=r$, the perturbed
weight equals $k$ with probability $p_{r,\sigma}(k)$, giving the formula
in the statement.
\end{proof}

We use the following characterization of the cost of distinguishing
inputs of two different Hamming weights.
\begin{lemma}[\cite{nayak1999quantum,PodderYaoYe2025}]
\label{lem:two-layers}
For $0\le a<b\le n$, distinguishing inputs of Hamming weight $a$ from inputs of
Hamming weight $b$ has quantum query complexity
$\Theta(\sqrt{(n-a)b}/(b-a))$ and randomized query complexity
$\Theta(\min\{n,(n-a)b/(b-a)^2\})$.
\end{lemma}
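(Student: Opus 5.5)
The statement is a cited result (Nayak–Wu for the quantum bound and Podder–Yao–Ye for the randomized one), so the plan is to reconstruct both sides in each model from standard tools. Throughout, write $d=b-a$.

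\textbf{Quantum upper bound.} We estimate the Hamming weight with amplitude estimation (Theorem~\ref{thm:amplitude_estimation}), applied to the unitary that queries a uniformly random index. Its acceptance amplitude is $w/n$, where $w$ is the weight. We need to separate $a/n$ from $b/n$. The estimation error must therefore be at most $d/(3n)$ at both $w=a$ and $w=b$. Taking $M$ accuracy parameters gives error $\sqrt{p(1-p)}/M+1/M^2$ with $p=w/n$. Choosing $M=\Theta(\sqrt{(n-a)b}/d)$ suffices, by the following checks:
\begin{itemize}
\item For $w\in\{a,b\}$ we have $p(1-p)\le b(n-a)/n^2$, so $\sqrt{p(1-p)}/M\le d/(6n)$ for a suitable constant.
\item The term $1/M^2$ equals $\Theta(d^2/((n-a)b))$. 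Since $(n-a)b\ge nd$ (because $d\le b$ and $d\le n-a$), this is at most $O(d/n)$, which is fine after adjusting constants.
\end{itemize}

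\textbf{Randomized upper bound.} We sample $k$ random coordinates and threshold the empirical weight at $(a+b)/2$. The count is a binomial, so Bernstein's inequality (Lemma~\ref{thm: Bernstein}) gives constant error once $k\gtrsim p(1-p)n^2/d^2+n/d$. The first term is at most $(n-a)b/d^2$, and the second term $n/d$ is dominated by it, again because $(n-a)b\ge nd$. Sampling without replacement only helps, and we can always cap at $n$ by reading the whole input. This gives $O(\min\{n,(n-a)b/d^2\})$.

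\textbf{Lower bounds.} For the quantum lower bound we use Ambainis's adversary method with the relation pairing $x$ of weight $a$ with $y$ of weight $b$ and $y\ge x$ coordinatewise. Each $x$ has $\binom{n-a}{d}$ partners, and each $y$ has $\binom{b}{d}$ partners. Fixing $x$ and a coordinate $i$ with $x_i=0$, a $d/(n-a)$ fraction of the partners differ at $i$; symmetrically, a $d/b$ fraction for $y$. The weighted adversary bound is then $\Omega(\sqrt{(n-a)/d\cdot b/d})=\Omega(\sqrt{(n-a)b}/d)$.

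The randomized lower bound is the main obstacle. Here I would use a distributional argument: hide the input as a uniformly random permutation of a weight-$a$ or weight-$b$ string. Then $k$ adaptive queries are equivalent to drawing $k$ coordinates without replacement, so the transcript is hypergeometric. One must show that the hypergeometric laws with $a$ versus $b$ marked items have constant total variation distance only when $k=\Omega(\min\{n,(n-a)b/d^2\})$. I would bound the Hellinger (or $\chi^2$) distance per sample, obtaining roughly $d^2/(n\,\cdot\,\min(b,n-a))$ in the binomial approximation. Tensorizing then gives the bound, with care needed near $k\approx n$ where the without-replacement correction matters. Finally, by symmetry we may assume $a+b\le n$, so that $\min(b,n-a)$ is comparable to $(n-a)b/n$, and this matches the claimed bound.
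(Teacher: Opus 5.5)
The paper does not prove this lemma; it is imported from \cite{nayak1999quantum,PodderYaoYe2025}. Your quantum upper bound, randomized upper bound and Ambainis-adversary lower bound are correct and complete. The two upper bounds are the same amplitude-estimation and Bernstein arguments, driven by $(n-a)b\ge n(b-a)$, that the paper itself uses in Lemma~\ref{lem:symmetric-layer-costs}. The gap is in the randomized lower bound, where the one nontrivial step is left as ``care needed''.

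After the random-permutation symmetrization, the transcript is $k$ draws \emph{without} replacement. Given $s$ ones among the first $j$ answers, the next bit is $\mathrm{Ber}((w-s)/(n-j))$, which depends on the history. This law is not a product measure, so ``per-sample Hellinger distance, then tensorize'' does not apply to it for any $k$, not just near $k\approx n$. The naive coupling with sampling with replacement (condition on no repeated position) costs $O(k^2/n)$ in total variation. That is useless once $k\gtrsim\sqrt n$, while the bound you must prove can be $\Theta(n)$.

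The missing ingredient is a with/without-replacement comparison that is linear in $k/n$. You only need to rule out $k\le\epsilon\min\{n,(n-a)b/d^2\}$ for a small constant $\epsilon$, so you never go near $k\approx n$. In this regime the Diaconis--Freedman finite de~Finetti bound applies: for a two-colour urn, $k$ draws with and without replacement are within total variation $O(k/n)$. For the transcript laws $\mathrm{Hyp}_a,\mathrm{Hyp}_b$ under weights $a,b$ this gives
\[
\mathrm{TV}(\mathrm{Hyp}_a,\mathrm{Hyp}_b)\le O(\epsilon)+\mathrm{TV}\bigl(\mathrm{Ber}(a/n)^{\otimes k},\mathrm{Ber}(b/n)^{\otimes k}\bigr).
\]
Your per-sample estimate $H^2\asymp d^2/(n\min\{b,n-a\})$ then bounds the last term by $O(\sqrt\epsilon)$, because $\min\{b,n-a\}\asymp(n-a)b/n$. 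This comparison holds unconditionally since $b+(n-a)>n$, so the reduction to $a+b\le n$ is unnecessary. With that lemma inserted, your argument is complete.

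If you prefer a chain rule on the conditional probabilities $(w-s)/(n-j)$ instead, you must control the drift of $s$. You must also work with Hellinger rather than KL, since those probabilities can hit $0$ (e.g.\ once $s=a$), making KL infinite in one direction.
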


For a fixed weight $k$, any algorithm computing $f$ must distinguish
weight $k$ from every weight $\ell$ with $g(\ell)\ne g(k)$. The hardest
such distinction motivates the following quantities:
\begin{equation}
q_f(k)=\max_{\ell:\,g(\ell)\ne g(k)}
\frac{\sqrt{(n-\min\{k,\ell\})\max\{k,\ell\}}}{|k-\ell|},
\qquad
r_f(k)=\min\{n,q_f(k)^2\}.
\label{eq:symmetric-layer-costs}
\end{equation}
We set both quantities to zero for constant $f$. For nonconstant $f$,
$q_f(k)$ and $r_f(k)$ are the quantum and randomized query complexities,
up to constant factors, of the hardest distinction involving weight
$k$. The next lemma shows that they also characterize the expected
cost of computing $f$ on inputs of weight $k$, up to logarithmic factors.
The upper bounds are achieved by a single algorithm in each model;
the algorithm is not given the input weight.

\begin{lemma}
\label{lem:symmetric-layer-costs}
Every symmetrized bounded-error algorithm computing $f$ has expected
query cost $\Omega(r_f(k))$ in the randomized model and $\Omega(q_f(k))$
in the quantum model on inputs of Hamming weight $k$. In each model, one algorithm attains
the corresponding upper bounds $\widetilde O(r_f(k))$ and
$\widetilde O(q_f(k))$ simultaneously for all $k$. Consequently,
$R_{\sm,\sigma}(f)=\widetilde\Theta(\max_r\sum_k p_{r,\sigma}(k)r_f(k))$
and
$Q_{\sm,\sigma}(f)=\widetilde\Theta(\max_r\sum_k p_{r,\sigma}(k)q_f(k))$.
\end{lemma}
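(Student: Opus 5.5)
Lower bounds, upper bounds, then combine with Lemma~\ref{lem:symmetrization}.

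\textbf{Lower bound.} Fix $k$ and let $\ell$ attain the maximum in~\eqref{eq:symmetric-layer-costs}, so $g(\ell)\ne g(k)$. A symmetrized algorithm $\A$ computing $f$ has expected cost $T_k$ on every weight-$k$ input and distinguishes weight $k$ from weight $\ell$ with bounded error. This is a promise problem with parts $X_0=\{|y|=k\}$ and $X_1=\{|y|=\ell\}$. Lemma~\ref{lem:expected-query-truncation} converts $\A$ into a bounded-error algorithm with worst-case cost $O(T_k)$; it only needs the expected-cost bound on $X_0$, which is why one layer suffices. Lemma~\ref{lem:two-layers} then gives $T_k=\Omega(q_f(k))$ quantumly and $T_k=\Omega(\min\{n,q_f(k)^2\})=\Omega(r_f(k))$ classically, since $(n-a)b/(b-a)^2=q^2$ with $a=\min\{k,\ell\}$ and $b=\max\{k,\ell\}$.

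\textbf{Upper bound.} I would run a doubling estimation of the weight. In round $i=1,2,\dots$ use budget $2^i$, obtaining a confidence interval $I_i\ni|y|$ with failure probability $\delta_i=1/(100 i^2)$, so the union bound over rounds gives total error below $1/3$.
\begin{itemize}
\item Classically, take $O(4^i\log(1/\delta_i))$ uniform samples. Bernstein (Lemma~\ref{thm: Bernstein}) gives an interval of half-width about $\sqrt{k(n-k)}/2^i+n/4^i$ around $k=|y|$.
\item Quantumly, apply amplitude estimation (Theorem~\ref{thm:amplitude_estimation}) with $M=2^i$ to $a=k/n$. This gives half-width $O(\sqrt{k(n-k)}/M+n/M^2)$.
\item Stop as soon as $g$ is constant on $I_i\cap\{0,\dots,n\}$ and output that value. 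Classically, once the budget reaches $n$, query everything.
\end{itemize}
The algorithm is correct whenever all intervals contain $k$.

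The cost analysis is the main obstacle. I must show the stopping round satisfies $2^i=O(q_f(k))$ quantumly, and likewise the sample count is $\widetilde O(r_f(k))$ classically. This requires that the interval excludes every $\ell$ with $g(\ell)\ne g(k)$ once $M\gtrsim q_f(k)$. Set $d=|k-\ell|$, and take $\ell>k$ for concreteness, so $q_f(k)\ge\sqrt{(n-k)\ell}/d$.
\begin{itemize}
\item The first term satisfies $\sqrt{k(n-k)}/M\le\sqrt{(n-k)\ell}/M\le d/C$ when $M\ge C q_f(k)$.
\item For the second term, $n/M^2\le n d^2/(C^2(n-k)\ell)$. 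This is small relative to $d$ only when $nd\lesssim(n-k)\ell$. I expect this holds up to constants because $d\le n-k$ and $d\le\ell$, so $nd\le 2\max\{n-k,\ell\}\,d\le 2(n-k)\ell$ after checking the cases. The symmetric argument handles $\ell<k$.
\end{itemize}
An interval of half-width below $d/2$ around the estimate then contains no such $\ell$. Since this holds for every opposite-valued $\ell$, the algorithm halts by the round with $2^i=\Theta(q_f(k))$, with overall cost $O(q_f(k)\log\log q_f(k))$. The classical case is the same with $4^i$ samples. The subtle point is that expected cost also includes the unlikely event that the interval fails to contain $k$. Because $\delta_i$ decays only polynomially, the tail $\sum_i 2^i\delta_i$ does not converge. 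I would therefore use $\delta_i$ decaying like $2^{-2i}$ beyond the expected stopping round, costing $O(i)$ repetitions per round, which gives $\widetilde O$.

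\textbf{Combining.} Lemma~\ref{lem:symmetrization} lets us assume the optimal algorithm is symmetrized with layer costs $T_k$, and expresses its smoothed cost as $\max_r\sum_k p_{r,\sigma}(k)T_k$. The lower bound gives $T_k=\Omega(q_f(k))$, and similarly $\Omega(r_f(k))$ classically. The single upper-bound algorithm, after symmetrization, achieves $T_k=\widetilde O(q_f(k))$ and $\widetilde O(r_f(k))$ respectively. Substituting these into the formula gives both claimed expressions for $Q_{\sm,\sigma}(f)$ and $R_{\sm,\sigma}(f)$.
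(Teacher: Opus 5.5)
Your proposal is correct and follows essentially the same route as the paper. The lower bound uses symmetrization, Lemma~\ref{lem:expected-query-truncation} and Lemma~\ref{lem:two-layers}, and the upper bound uses doubling-precision weight estimation (Bernstein classically, amplitude estimation quantumly) with $n(b-a)\le(n-a)b$ controlling the second error term. The paper only differs in bookkeeping: it takes per-stage failure probability $(n+2)^{-12}$, retains only weights consistent with every estimate so far (which also settles how to form a computable interval, since you test each candidate $\ell$ with $\ell$ in place of $k$), and caps every run at $O(n)$ queries so the bad event costs $O(1)$.

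Your tail concern is unfounded even with $\delta_i=1/(100i^2)$. Once $2^i\gtrsim q_f(k)$, each later round stops unless its own independent estimate fails. So reaching round $i^*+j$ has probability at most $100^{-j}$, which beats the doubling cost; your $4^{-i}$ fix is valid but not needed.
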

\begin{proof}
For the lower bounds, fix $k$ and $\ell$ with $g(k)\ne g(\ell)$.
An algorithm computing $f$ solves the promise problem of distinguishing
these two weights: its output identifies which of $g(k)$ and $g(\ell)$
is the function value. By symmetrization, its expected cost on every
input of weight $k$ is the same value $T_k$.
Lemma~\ref{lem:expected-query-truncation} converts an algorithm with
expected cost $T_k$ on inputs of Hamming weight $k$ into an
$O(T_k)$-query algorithm distinguishing weights $k$ and $\ell$. Lemma~\ref{lem:two-layers}, followed by
maximization over $\ell$, gives both lower bounds. Notice that this
argument uses the expected cost on weight $k$; it does not require
the costs on weights $k$ and $\ell$ to be comparable.

For the upper bounds, assume $f$ is nonconstant. Both algorithms estimate
the Hamming weight with increasing precision, retain all weights
consistent with their estimates, and stop when $g$ is constant on the
retained set, outputting this common value. Initially every weight
from $0$ to $n$ is retained. A weight is removed as soon as it is
inconsistent with one of the estimates. If the true weight remains
in the retained set, stopping in this way gives the correct answer.
If that set is empty, the algorithms query the entire input. They
also query the entire input if the next stage would make the total cost
exceed $n$. Thus every run uses $O(n)$ queries.

\paragraph{Randomized algorithm.}
At stages $t=1,2,4,\ldots$, sample $t$ uniformly random coordinates with
replacement and let $\widehat k$ be $n$ times the fraction of ones observed.
Bernstein's inequality (Lemma~\ref{thm: Bernstein}) gives
$\lvert{\widehat k-k}\rvert
\le C(\sqrt{k(n-k)\log n/t}+n\log n/t)$
except with probability at most $(n+2)^{-12}$, for a suitable constant
$C$. A candidate weight is retained only if it satisfies this inequality
in place of $k$ for every estimate so far. If weights $a<b$ both satisfy
the inequality for one estimate, then
$b-a\le 2C(\sqrt{(n-a)b\log n/t}+n\log n/t)$.
Indeed, the distance between two consistent weights is at most the
sum of their estimation errors, and both $a(n-a)$ and $b(n-b)$ are
at most $(n-a)b$. Since $(n-a)b\ge n(b-a)$, the second error term
is also controlled at the number of samples needed for the first.
Thus the displayed inequality is impossible once
$t\ge C'(n-a)b\log n/(b-a)^2$, for a sufficiently large
$C'$. To apply this to an input of weight $k$, set
$a=\min\{k,\ell\}$ and $b=\max\{k,\ell\}$ for each weight $\ell$
with $g(\ell)\ne g(k)$. Maximizing the required number of samples
over these $\ell$ gives $O(q_f(k)^2\log n)$. If this exceeds the
cost of reading the input, the algorithm uses the exact procedure.
Hence all weights on which $g$ takes the opposite value are eliminated within
$\widetilde O(r_f(k))$ queries, or querying the entire input already gives this
bound.

\paragraph{Quantum algorithm.}
At each stage, use amplitude estimation
(Theorem~\ref{thm:amplitude_estimation}) on the uniform superposition
over input coordinates, with parameter $t=1,2,4,\ldots$. Amplifying
its success probability uses
$O(t\log n)$ queries and produces an estimate satisfying
$\lvert{\widehat k-k}\rvert
\le C(\sqrt{k(n-k)}/t+n/t^2)$
except with probability at most $(n+2)^{-12}$. Retain weights consistent
with every estimate so far and use the same stopping rule as in the
randomized algorithm.
If $a<b$ both remain consistent, then
$b-a\le 2C(\sqrt{(n-a)b}/t+n/t^2)$.
For $t\ge C'\sqrt{(n-a)b}/(b-a)$, the first term inside the parentheses
is at most $(b-a)/C'$. The second is at most $(b-a)/(C')^2$, since
$n(b-a)\le(n-a)b$. Choosing $C'$ sufficiently large therefore makes
the right-hand side strictly smaller than $b-a$, a contradiction.
As in the randomized case, maximizing over weights $\ell$ with
$g(\ell)\ne g(k)$ shows that all weights on which $g$ takes the opposite value are
eliminated within $\widetilde O(q_f(k))$ queries, or querying the entire input
gives this bound.

There are $O(\log n)$ stages in either algorithm, so a union bound
shows that all estimates are valid with
probability at least $1-(n+2)^{-2}$. On this event, the true weight is
always retained. Once every weight on which $g$ takes the opposite value has been removed,
the algorithm stops and outputs $g(k)$. Since the stage sizes double,
the total cost through any stage is at most a constant multiple of
the cost of that stage, apart from the stated amplification factor.
If reading the entire input becomes cheaper, the exact procedure has
the same upper bound. The exceptional event adds at most
$O(n)(n+2)^{-2}=O(1)$ to the expected cost, which is absorbed since
$q_f(k),r_f(k)\ge1$. Finally, averaging the lower bounds over the
perturbed weight distribution gives a lower bound for every algorithm.
The single algorithms just constructed achieve the corresponding upper
bounds for all weights simultaneously. Maximizing these averages over
$r$ and applying Lemma~\ref{lem:symmetrization} gives the two
characterizations in the statement.
\end{proof}

\subsection{Averaging over the perturbed weight}
\label{sec:symmetric-averaging}
We now average the costs for the different Hamming weights and express
the result in terms of the weighted influence and transition cost.
Lemma~\ref{lem:symmetric-layer-costs} reduces the query problem to
these averages. The remaining issue is that a small set of weights
near transitions can have much larger cost than weights near the mean.
The following lemma shows that $\transcost_\sigma(f)$ suffices for the
randomized bound, while the quantum bound also requires
$\wInf_\sigma(f)$ to account for contributions from several transitions.

\begin{lemma}
\label{lem:localization}
For every total symmetric Boolean function $f$ and every $\sigma\in[0,1]$,
\begin{align}
\max_r\sum_k p_{r,\sigma}(k)r_f(k)
&=\widetilde\Theta\!\left(\min\{n,\transcost_\sigma(f)^2\}\right),
\label{eq:randomized-localization}\\
\max_r\sum_k p_{r,\sigma}(k)q_f(k)
&=\widetilde\Theta\!\left(\wInf_\sigma(f)+\transcost_\sigma(f)\right).
\label{eq:quantum-localization}
\end{align}
\end{lemma}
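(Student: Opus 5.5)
First I relate the per-weight costs to the transitions. For a weight $k$ with $g(k)=0$, say, the hardest opposite weight $\ell$ is the nearest transition endpoint on either side. The numerator $\sqrt{(n-\min)\max}$ changes by at most a constant factor when $\ell$ moves to a farther opposite weight, while the denominator $|k-\ell|$ only grows. Hence I will show
\[
q_f(k)=\Theta\Bigl(\max_{j\in\mathcal T(f)}q_{j,0}(k)\Bigr)
\]
up to constants, with the maximum attained at one of the two transitions nearest to $k$. Then
\[
\sum_k p_{r,\sigma}(k)q_f(k)\asymp\sum_k p_{r,\sigma}(k)\max_j q_{j,0}(k).
\]
Throughout, $W=W_{r,\sigma}$ is a Poisson-binomial variable. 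By Lemma~\ref{lem:poisson-binomial-facts} it is log-concave and unimodal near $\mu=\mu_{r,\sigma}$, has subexponential tails on scale $s_\sigma$, and has atoms at most $C/s_\sigma$. Lemma~\ref{lem:local-limit} gives atoms of order $1/s_\sigma$ within $O(s_\sigma)$ of $\mu$.

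\textbf{Lower bounds.} For $\wInf_\sigma(f)$, use $q_f(j),q_f(j+1)\ge t_j$. Summing over transitions overcounts each $k$ at most twice, since every weight is adjacent to at most two transitions; so the average is at least $\tfrac12\sum_j(p(j)+p(j+1))t_j$. For $\transcost_\sigma(f)$, fix the maximizing $r,j$. With constant probability (Lemma~\ref{lem:local-limit} and Chernoff), $W$ lies in a window of width $s_\sigma$ around $\mu$, on the same side of the transition as $\mu$ or straddling it. For each such $k$, $q_f(k)\ge q_{j,0}(k)\gtrsim q_{j,\sigma}(\mu)$: when $|\mu-j|\gtrsim s_\sigma$ the distances are comparable, and otherwise $|k-j|\le O(s_\sigma)$. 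In the small-$v$ regime I use the trivial bound that the mass at $\lfloor\mu\rfloor$ or $\lceil\mu\rceil$ is at least a constant. For the randomized quantity, the same window with $r_f\ge\min\{n,q_{j,0}(k)^2\}$ gives $\Omega(\min\{n,\transcost^2\})$.

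\textbf{Randomized upper bound.} Bound $r_f(k)\le\min\{n,\max_j q_{j,0}(k)^2\}$ and split $k$ by dyadic distance $d\in[2^i s_\sigma,2^{i+1}s_\sigma)$ from $\mu$. Any transition $j$ at distance $D$ from $\mu$ satisfies $q_{j,0}(k)\le q_{j,0}(\mu)\cdot O((s_\sigma+D)/(1+\dist(k,[j,j+1])))$. The squared cost on shell $i$ can grow by at most $\poly(2^i)\cdot\transcost^2$ in the numerator, but a $k$ near a transition can blow up the ratio. I handle this using the atom bound $p(k)\le C/s_\sigma$: transition-adjacent $k$ contribute at most $\sum_{m\ge 0}(C/s_\sigma)\,t_j^2/(1+m)^2$. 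I expect this to give $n/s_\sigma^2\cdot s_\sigma^2$-type terms that are dominated by $\min\{n,(\transcost\cdot s_\sigma)^2/s_\sigma^2\}$ up to logarithmic factors. Tails beyond $\log n$ shells cost $n e^{-\Omega(\log^2 n)}$ and are negligible. The key point is that squaring makes the sum over $m$ converge, so one transition (the worst) dominates and $\transcost^2$ suffices.

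\textbf{Quantum upper bound (main obstacle).} Here $\sum_m 1/(1+m)$ diverges only logarithmically, but contributions from many transitions add up. I partition $\{0,\dots,n\}$ into maximal constant intervals $I_1<I_2<\cdots$. For $k\in I_\ell$, the value $q_f(k)$ is the larger of the two boundary terms $t_j/(1+\dist(k,j))$, up to constants. On each interval the pmf is monotone, except on the interval containing the mode, since log-concavity gives unimodality. Lemma~\ref{lem:monotone-sequence} then bounds $\sum_{k\in I_\ell}p(k)t_j/(1+\dist)$ by $O(\log n)$ times $t_j$ times the average mass on the part of $I_\ell$ closer to the boundary that is nearer the mode. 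There are two cases. If the boundary adjacent weight is on the side facing the mode, then its mass is at least this average, since it is the larger end; this gives $p(j)t_j$ or $p(j+1)t_j$, which sums to $\wInf_\sigma(f)$. If the boundary is on the side away from the mode, the decreasing sequence gives a bound by the boundary mass again, after reversing the monotone direction so that Lemma~\ref{lem:monotone-sequence} is applied with the decreasing orientation. The mode's own interval, and intervals where $p$ is nearly flat, are controlled by $\max p\le C/s_\sigma$. This gives $(t_j/s_\sigma)\log n$, which is $\le\transcost$ up to logarithmic factors because $\mu$ is within $O(s_\sigma)$ of that boundary, or else the tail bound applies. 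Getting this orientation bookkeeping and the comparison $t_j/s_\sigma\lesssim q_{j,\sigma}(\mu)$ right is the delicate step, and I would carry it out first on a single transition before summing.
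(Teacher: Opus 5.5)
There is a genuine gap. Your lower bounds are essentially the paper's: the window around $\mu$ for $\transcost_\sigma$, where you also need that the numerator factors at $k$ and at $\mu$ are comparable, which follows from $\min\{\mu,n-\mu\}\ge(s_\sigma-1)^2$; and double counting for $\wInf_\sigma$. The interval and unimodality architecture of your quantum upper bound also matches the paper. Both upper bounds, however, have gaps.

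\textbf{Quantum upper bound.} The most serious problem is your pointwise claim that $q_f(k)$ is, up to constants, $\max_{j\in\partial I(k)}t_j/(1+\dist(k,[j,j+1]))$. For a left boundary $j<k$ one has $q_{j,0}(k)=\sqrt{(n-j)k}/(k-j)$. Its numerator exceeds $t_j$ by the factor $\sqrt{k/(j+1)}$, which can be as large as $\sqrt n$. For \textsc{Or} at $k=n/2$, $q_f(k)=\sqrt2$, while $t_0/(1+\dist)=2/\sqrt n$.

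Your first paragraph has the same flaw: the numerator does not change by only a constant factor. Its conclusion $q_f(k)=\max_{j\in\partial I(k)}q_{j,0}(k)$ is nevertheless exactly true, because the two-weight cost decreases monotonically as the opposite weight moves away.

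The missing idea is the split
\[
q_{j,0}(k)\le\frac{t_j}{k-j}+\sqrt{\frac{n-j}{k-j}}\le\frac{t_j}{1+\dist(k,[j,j+1])}+\sqrt{r_f(k)}.
\]
The second term is averaged by Cauchy--Schwarz together with the \emph{randomized} upper bound. So the quantum bound is not independent of the randomized one, and only the first term, with its fixed numerator $t_j$, goes through the interval argument.

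Your orientation step is also wrong. On an interval $I=[a,b]$ to the left of the mode $m$, the endpoint $a$ of the boundary facing away from the mode carries the \emph{smallest} mass, so nothing can be charged to it. Lemma~\ref{lem:monotone-sequence} only gives $O(t_{a-1}p(I)\log n/|I|)$, and the paper then splits into two cases:
\begin{itemize}
\item If $|I|\ge s_\sigma+m-b+2$, then $t_{a-1}/|I|=O(\transcost_\sigma)$.
\item Otherwise $p(I)\le|I|p(b)$, and $t_{a-1}=O(t_b)$ because $n-b\ge\max\{1,(s_\sigma-1)^2+m-b-1\}$. This charges the contribution to the mode-facing term $p(b)t_b$ of $\wInf_\sigma$.
\end{itemize}

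\textbf{Randomized upper bound.} This is not established either. Your atom estimate $\sum_{i\ge0}(C/s_\sigma)t_j^2/(1+i)^2=O(t_j^2/s_\sigma)$ fails for a single transition near $\mu$. There the target is $\min\{n,\Theta(t_j^2/s_\sigma^2)\}$, so the estimate is off by a factor of order $s_\sigma$ and can even exceed $n$. The ``I expect'' step does not close this.

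The key inequality you are missing is that the numerator of $q_{j,\sigma}(\mu)$ is always at least $c\sqrt n\,s_\sigma$. This holds because $(n-\min\{j,\mu\})\max\{j+1,\mu\}\ge\max\{n,\mu(n-\mu)\}$ and $\mu(n-\mu)\ge n(s_\sigma-1)^2$. With it, no shells or atom bounds are needed:
\begin{itemize}
\item If some transition lies within $C_0s_\sigma\log n$ of $\mu$, then $\transcost_\sigma\ge c\sqrt n/\log n$, and the trivial bound $r_f\le n$ suffices.
\item Otherwise, by the tail bound of Lemma~\ref{lem:poisson-binomial-facts}, $W$ stays within half the distance to every transition except with probability $(n+2)^{-10}$. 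On that event every transition keeps its side, and its distance and numerator change by constant factors. Hence $q_f(W)=O(\transcost_\sigma)$, and the exceptional event contributes negligibly since $\transcost_\sigma=\Omega(1)$.
\end{itemize}
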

\begin{proof}
Assume $f$ is nonconstant. Fix $r$ and abbreviate
$p(k)=p_{r,\sigma}(k)$, $\mu=\mu_{r,\sigma}$, and $s=s_\sigma$.
For this proof, write
$\wInf_r=\sum_{j\in\mathcal T(f)}(p(j)+p(j+1))t_j$ and
$\transcost_r=\max_{j\in\mathcal T(f)}q_{j,\sigma}(\mu)$ for the
corresponding costs before maximizing over the original input's Hamming weight.
We first prove the claimed comparisons for this fixed $r$, and take
the maximum over $r$ at the end.

For a weight $k$, let $I(k)$ be the maximal interval containing $k$ on
which $g$ is constant, and let $\partial I(k)$ be its existing boundary
transitions. If $I(k)=[a,b]$, these are $a-1$ when $a>0$ and $b$ when
$b<n$. The nearest weight to the left of $k$ with a different function
value, if one exists, is $a-1$; the nearest such weight to the right
is $b+1$. For fixed $k$, the cost of distinguishing it from another
weight decreases as that weight moves farther away on either side.
For these two boundary transitions, $q_{j,0}(k)$ is exactly the cost
of the corresponding distinction. Transitions farther from $k$ give
no larger value, whether or not their nearest endpoint has the
opposite function value. Consequently,
$q_f(k)=\max_{j\in\partial I(k)}q_{j,0}(k)
=\max_{j\in\mathcal T(f)}q_{j,0}(k)$.

If $\sigma=0$, then $p$ is a point mass at $r$ and
$\transcost_r=q_f(r)$. Also $\wInf_r\le2q_f(r)$, so both claims
follow immediately. Henceforth assume $\sigma>0$.

\paragraph{Lower bounds.}
We first show that the cost of a typical perturbed weight is at least
a constant multiple of $\transcost_r$. We have
$\min\{\mu,n-\mu\}\ge (s-1)^2$. If $|k-\mu|\le2s$, the
two factors in the numerator of $q_{j,0}(k)$ are within constant factors
of those at $\mu$, uniformly in $j$: each changes by at most $2s$, each
is at least one, and at $\mu$ each is at least $(s-1)^2$.
To see the constant-factor comparison, if $(s-1)^2\ge64$, changing
either factor by at most $2s$ changes it by at most half its value.
If $(s-1)^2<64$, the change is bounded by an absolute constant, and
both factors remain at least one. Moreover,
$1+\dist(k,[j,j+1])\le3(s+\dist(\mu,[j,j+1]))$. Thus
$q_{j,0}(k)\ge c\,q_{j,\sigma}(\mu)$ whenever $|k-\mu|\le2s$.
Chebyshev's inequality puts at least $3/4$ of the probability in this
interval. Choosing a transition attaining $\transcost_r$ gives
$\sum_k p(k)q_f(k)=\Omega(\transcost_r)$ and
$\sum_k p(k)r_f(k)=\Omega\!\left(\min\{n,\transcost_r^2\}\right)$.
The weighted influence gives another quantum lower bound. For each
transition $j$, an algorithm must distinguish its two adjacent weights,
so $q_f(j),q_f(j+1)\ge t_j$. When these inequalities are weighted by
$p(j)$ and $p(j+1)$ and summed over transitions, any weight is counted
at most twice. Consequently
$\wInf_r\le2\sum_k p(k)q_f(k)$.

\paragraph{Randomized upper bound.}
Let $d=\min_{j\in\mathcal T(f)}\dist(\mu,[j,j+1])$ be the distance
to the nearest transition, and fix a sufficiently large constant $C_0$.
We distinguish whether this transition lies within $C_0s\log n$ of
the mean. If it does, the transition cost is already large enough
that the trivial $n$-query upper bound suffices. If it does not, the
perturbed weight stays far from every transition with high probability,
and its cost is comparable to the cost at the mean.
For every $j$,
\[
(n-\min\{j,\mu\})\max\{j+1,\mu\}
\ge\max\{n,\mu(n-\mu)\}
\ge cns^2.
\]
Here the last inequality follows from
$\mu(n-\mu)\ge n(s-1)^2/2$ and $s^2\le2(1+(s-1)^2)$.
If $d\le C_0s\log n$, a nearest transition therefore gives
$\transcost_r\ge c\sqrt n/\log n$. The trivial bound $r_f(k)\le n$
then proves the desired upper bound, since
$\min\{n,\transcost_r^2\}\ge cn/(\log n)^2$.

If $d>C_0s\log n$, Lemma~\ref{lem:poisson-binomial-facts} gives
$\Pr[|W_{r,\sigma}-\mu|>d/2]\le(n+2)^{-10}$. On the complementary
event, every transition remains on the same side of the weight, and its
distance and numerator in~\eqref{eq:symmetric-transition-cost} change
by at most constant factors. Indeed, for a transition to the left of
the mean, only the factor $\mu$ in the numerator varies, and for a
transition to the right, only the factor $n-\mu$ varies. In either
case that factor is at least the distance to the transition, so a
change of at most $d/2$ changes it by at most a constant factor.
Since $\dist(\mu,[j,j+1])>C_0s\log n$, we obtain
$q_{j,0}(W_{r,\sigma})=O(q_{j,\sigma}(\mu))$ for every $j$.
The transition identity now gives
$q_f(W_{r,\sigma})=O(\transcost_r)$. The exceptional event contributes
at most $n(n+2)^{-10}$. This is negligible: the numerator of
$q_{j,\sigma}(\mu)$ is at least $c\sqrt n\,s$ and at least
$1+\dist(\mu,[j,j+1])$, so $\transcost_r=\Omega(1)$. We have proved
$\sum_k p(k)r_f(k)
=\widetilde\Theta\!\left(\min\{n,\transcost_r^2\}\right)$.

\paragraph{Quantum upper bound.}
The quantum bound requires a separate estimate of the contributions
near transitions. We first split the cost into a term that decreases
inversely with the distance to a transition and a term controlled by
the randomized bound. For a left boundary $j$ of $I(k)$,
\[
q_{j,0}(k)=\frac{\sqrt{(n-j)k}}{k-j}
\le\frac{t_j}{k-j}+\sqrt{\frac{n-j}{k-j}}
\le\frac{t_j}{k-j}+\sqrt{r_f(k)}.
\]
The last inequality holds because the square-root term is at most both
$\sqrt n$ and $q_f(k)$. The right boundary is symmetric, so
$q_f(k)\le\max_{j\in\partial I(k)}\frac{t_j}{1+\dist(k,[j,j+1])}
+\sqrt{r_f(k)}$.
By Cauchy--Schwarz and the randomized bound, the
average of the last term is $\widetilde O(\transcost_r)$. It remains
to prove
\begin{equation}
\sum_k p(k)\max_{j\in\partial I(k)}\frac{t_j}{1+\dist(k,[j,j+1])}
\le C\log n\bigl(\wInf_r+\transcost_r\bigr).
\label{eq:profile-bound}
\end{equation}

To prove this inequality, partition the weights into the maximal
intervals on which $g$ is constant. On either side of the mode of
$p$, the probabilities are monotone. This lets us bound the contribution
of an interval using its total probability and the probability at
the endpoint closer to the mode. The former is controlled by
$\transcost_r$, and the latter contributes to $\wInf_r$.

Let $m$ be a mode of $p$ with $|m-\mu|<1$, as supplied by
Lemma~\ref{lem:poisson-binomial-facts}. Partition the weights into the
maximal intervals on which $g$ is constant, and write
$p(I)=\sum_{k\in I}p(k)$. We bound the two boundary contributions
separately, using that their maximum is at most their sum. A missing
boundary is omitted from every expression below.

Consider first an interval $I=[a,b]$ strictly to the left of $m$.
For its right boundary, $p(k)\le p(b)$, and summing
$1/(b-k+1)$ gives a factor $O(\log n)$. For its left boundary,
$p(k)$ is nondecreasing while $1/(k-a+1)$ is nonincreasing.
Lemma~\ref{lem:monotone-sequence} therefore bounds their weighted
sum by $O(p(I)\log n/|I|)$. Together these give
\[
\sum_{k\in I}p(k)
\left(\frac{t_{a-1}}{k-a+1}+\frac{t_b}{b-k+1}\right)
\le C\log n
\left(\frac{t_{a-1}}{|I|}p(I)+p(b)t_b\right).
\]
If $|I|\ge s+m-b+2$, then $s+\dist(\mu,[a-1,a])=O(|I|)$, so
$t_{a-1}/|I|=O(\transcost_r)$. Here we used that the numerator of
$q_{a-1,\sigma}(\mu)$ is at least $t_{a-1}$. In this case the left
boundary is controlled by the total probability of the interval.
Otherwise, $p(I)\le |I|p(b)$ and
$t_{a-1}=O(t_b)$, since
\[
\frac{t_{a-1}^2}{t_b^2}\le1+\frac{|I|}{n-b}=O(1),
\qquad
n-b\ge\max\{1,(s-1)^2+m-b-1\}.
\]
Thus this interval contributes at most
$C\log n(\transcost_r\,p(I)+p(b)t_b)$.
Reflection gives the corresponding bound
$C\log n(\transcost_r\,p(I)+p(a)t_{a-1})$ for intervals to the
right of $m$.

There is one remaining interval $I=[a,b]$, namely the interval containing
$m$. Its probabilities need not be monotone over the entire interval,
so we split the sum at $m$. Consider its left boundary.
If $m-a+1\ge s$, apply Lemma~\ref{lem:monotone-sequence} on $[a,m]$;
on $[m+1,b]$, the denominator is at least $m-a+1$, so that part of
the sum is at most $t_{a-1}p(I)/(m-a+1)$. Since
$t_{a-1}/(m-a+1)=O(\transcost_r)$, their total contribution is
$O(\transcost_r\,p(I)\log n)$. If $m-a+1<s$, use
$p(k)\le C/s$ and $t_{a-1}/s=O(\transcost_r)$ to obtain
$O(\transcost_r\log n)$ by summing the harmonic series.
The right boundary is handled in the same way after reflection.
When we sum over all intervals, the terms proportional to $p(I)$
sum to at most $O(\transcost_r\log n)$, because the intervals partition
the weights. Each endpoint term is one of the summands
$p(j)t_j$ or $p(j+1)t_j$ in $\wInf_r$. The interval containing the
mode contributes at most another $O(\transcost_r\log n)$. This proves
\eqref{eq:profile-bound}.
Together with the comparison above and the lower bounds, this proves
$\sum_k p(k)q_f(k)
=\widetilde\Theta\!\left(\wInf_r+\transcost_r\right)$.
Maximize both bounds over $r$. For the quantum
bound, use that the maximum of a sum of two nonnegative quantities is
within a factor of two of the sum of their maxima. This proves the lemma
and, with Lemma~\ref{lem:symmetric-layer-costs},
Theorem~\ref{thm:symmetric}.
\end{proof}

\subsection{Examples and comparisons}
\label{sec:symmetric-examples}
We now apply Theorem~\ref{thm:symmetric} to compare the two query models
at $\sigma=0$ and $\sigma=1$. We consider \textsc{Or}, the thresholds
at $\sqrt n$ and $n/4$, \textsc{Majority}, and \textsc{Parity}.
A threshold at $t$ outputs one exactly when $|x|\ge\lceil t\rceil$.
These examples
illustrate both the location of the transitions and the effect of
having many transitions. In Tables~\ref{tab:symmetric-sigma-zero}
and~\ref{tab:symmetric-sigma-one}, each entry in a parameter column
denotes a $\Theta$ bound, and each entry in a query complexity column
denotes a $\widetilde\Theta$ bound.

\paragraph{No perturbation: $\sigma=0$.}
The perturbed weight equals the original Hamming weight deterministically:
$W_{r,0}=r$, $\mu_{r,0}=r$, and $s_0=1$. Choosing an original input
adjacent to a hardest transition gives
$\transcost_0(f)=\max_{j\in\mathcal T(f)}t_j$. A fixed weight is adjacent
to at most two transitions, so $\wInf_0(f)$ lies between this maximum
and twice its value. Hence Theorem~\ref{thm:symmetric} recovers
$Q_{\sm,0}(f)=\widetilde\Theta(\max_j t_j)$ and
$R_{\sm,0}(f)=\widetilde\Theta(n)$ for nonconstant $f$, since
$t_j^2=(n-j)(j+1)\ge n$.

For \textsc{Or}, the only transition is $j=0$, with $t_0=\sqrt n$.
The threshold at $\sqrt n$ has its transition at $j=\lceil\sqrt n\rceil-1$,
so $t_j=\sqrt{(n-\lceil\sqrt n\rceil+1)\lceil\sqrt n\rceil}
=\Theta(n^{3/4})$. The threshold at $n/4$ has its transition at
$j=\lceil n/4\rceil-1$, with $t_j=\Theta(n)$. The transition of
\textsc{Majority} is near $n/2$, where $t_j=\Theta(n)$ as well.
\textsc{Parity} has a transition at every weight and therefore also
has transitions of cost $\Theta(n)$.
These observations give Table~\ref{tab:symmetric-sigma-zero}.

\begin{table}[htbp]
\centering
\renewcommand{\arraystretch}{1.25}
\begin{tabular}{lcccc}
\hline
$f$ & $\transcost_0(f)$ & $\wInf_0(f)$ & $R_{\sm,0}(f)$ & $Q_{\sm,0}(f)$\\
\hline
\textsc{Or} & $\sqrt n$ & $\sqrt n$ & $n$ & $\sqrt n$\\
Threshold at $\sqrt n$ & $n^{3/4}$ & $n^{3/4}$ & $n$ & $n^{3/4}$\\
Threshold at $n/4$ & $n$ & $n$ & $n$ & $n$\\
\textsc{Majority} & $n$ & $n$ & $n$ & $n$\\
\textsc{Parity} & $n$ & $n$ & $n$ & $n$\\
\hline
\end{tabular}
\caption{The two parameters and query complexities when $\sigma=0$.}
\label{tab:symmetric-sigma-zero}
\end{table}

\paragraph{Uniform input: $\sigma=1$.}
Every coordinate is rerandomized, so the perturbed input is uniform
on $\{0,1\}^n$, independently of the original input. Consequently,
$W_{r,1}\sim\Bin(n,1/2)$, $\mu_{r,1}=n/2$, and
$s_1=1+\sqrt n/2$. In particular,
$p_{r,1}(k)=2^{-n}\binom nk$, and the maximization over $r$ has no
effect. The location of a transition relative to $n/2$ now matters:
weights within $O(\sqrt n)$ of the mean carry most of the probability,
while transitions a linear distance from the mean have exponentially
small adjacent probabilities.

For \textsc{Or}, the transition is at $j=0$, a distance $\Theta(n)$
from the mean. The numerator and denominator in
\eqref{eq:symmetric-g} are both $\Theta(n)$, giving
$\transcost_1(\textsc{Or})=\Theta(1)$.
Theorem~\ref{thm:symmetric} therefore gives
$R_{\sm,1}(\textsc{Or})=\widetilde\Theta(1)$ and
$Q_{\sm,1}(\textsc{Or})=\widetilde\Theta(1)$. In fact, the randomized
algorithm that reads coordinates until it finds a one, or has read
the entire input, has constant expected cost on a uniform input.

For the threshold at $\sqrt n$, the transition lies a distance
$\Theta(n)$ from the mean, so its transition cost is $\Theta(1)$.
Its weighted influence is exponentially small in $n$, and both query
complexities are $\widetilde\Theta(1)$.

For the threshold at $n/4$, the transition lies a distance $\Theta(n)$
from the mean. As for \textsc{Or}, this gives transition cost
$\Theta(1)$. Its weighted influence is exponentially small in $n$.
Both query complexities are
$\widetilde\Theta(1)$. Thus a threshold away from the mean can become
easy in both models, even though its worst-case complexities are
both linear up to logarithmic factors.

For \textsc{Majority}, the single transition is within a constant
distance of the mean. The denominator in \eqref{eq:symmetric-g}
is $\Theta(\sqrt n)$ and its numerator is $\Theta(n)$, so
$\transcost_1(\textsc{Majority})=\Theta(\sqrt n)$.
The quantum complexity becomes $\widetilde\Theta(\sqrt n)$,
while the randomized complexity remains $\widetilde\Theta(n)$.
Compared with $\sigma=0$, full smoothing therefore gives a polynomial
improvement only in the quantum model.

\textsc{Parity} has the same transition cost
$\transcost_1(\textsc{Parity})=\Theta(\sqrt n)$, because a transition
near the mean attains this value and no transition gives a larger
order of growth. Its weighted influence is
$\wInf_1(\textsc{Parity})=\Theta(n)$, so both query complexities
remain $\widetilde\Theta(n)$.

\begin{table}[htbp]
\centering
\renewcommand{\arraystretch}{1.3}
\begin{tabular}{lcccc}
\hline
$f$ & $\transcost_1(f)$ & $\wInf_1(f)$ & $R_{\sm,1}(f)$ & $Q_{\sm,1}(f)$\\
\hline
\textsc{Or} & $1$ & $n^{3/2}2^{-n}$ & $1$ & $1$\\
Threshold at $\sqrt n$ & $1$ & $n^{3/4}2^{-n}\binom{n}{\lceil\sqrt n\rceil}$ & $1$ & $1$\\
Threshold at $n/4$ & $1$ & $n2^{-n}\binom{n}{\lceil n/4\rceil}$ & $1$ & $1$\\
\textsc{Majority} & $\sqrt n$ & $\sqrt n$ & $n$ & $\sqrt n$\\
\textsc{Parity} & $\sqrt n$ & $n$ & $n$ & $n$\\
\hline
\end{tabular}
\caption{The two parameters and query complexities when $\sigma=1$.}
\label{tab:symmetric-sigma-one}
\end{table}

The comparison between \textsc{Majority} and \textsc{Parity} shows why
the transition cost alone does not characterize quantum smoothed
query complexity. Both functions have transition cost $\Theta(\sqrt n)$
under the uniform distribution, but the weighted influence differs
by a factor of $\sqrt n$. For \textsc{Or} and the thresholds at
$\sqrt n$ and $n/4$, the weighted influence is exponentially small
and the transition cost
determines the bound. Throughout these comparisons, correctness is
required on every input; only the expected query cost is averaged
over the perturbation.

\section{Pattern matching in the smoothed model}\label{sec:smoothedpattern}
In this section, we consider the quantum query complexity of the smoothed pattern matching problem. We consider two variants of the problem. In the first setting, both the pattern and the text are accessed through queries, whereas in the second setting, the pattern is given explicitly and only the text is accessed through queries. 

Throughout this section, for a string $S:[\ell]\to\Sigma$, let $N_\sigma(S)$ denote the distribution obtained by independently rerandomizing each character of $S$ with probability $\sigma$, where a rerandomized character is chosen uniformly from $\Sigma$. We also let $\rho=\sigma(1-1/q)$.

\begin{definition}[Pattern matching]\label{def:patternmatching}
    Let $n, m, q \geq 2$ be positive integers with $m \leq n$. 
    Let $\Sigma$ be a finite alphabet of size $q$.     Let  $T:[n]\rightarrow\Sigma$ 
    denote a text and $P:[m]\rightarrow\Sigma$     denote a pattern.
    Let $\occ(T,P)=\left\{s\in\{0,\ldots,n-m\}:T_{s+j}=P_j\text{ for every }j\in[m]    \right\}$ denote the set of occurrences of $P$ in $T$. 
    Define the pattern matching function $\mathsf{match}(T,P)=\one\{\occ(T,P)\ne\textsf{Var}nothing\}$. 
    We will also write $\mathsf{match}(T,P)$ as $\mathsf{match}_P(T)$ for applications where the pattern is fully known.
\end{definition}

We will also consider the corresponding search version ofthe  pattern matching problem defined above: whenever $\operatorname{match}(T,P)=1$, in addition to deciding that a match exists, the goal is to output any
$s^*\in\operatorname{occ}(T,P)$.

We use the following simple observation about the perturbation $N_\sigma$ in both settings.

\begin{lemma}\label{lem:coordinate} 
    Let $\Sigma$ be the alphabet of size $q$, $\sigma>0$, and $\rho = \sigma(1-1/q)$.
    Let $a, a'\in \Sigma$. If $b \sim N_\sigma(a)$, then $\Prb[b\ne a']\ge\rho$ and $\Prb[b=a']\le1-\rho$. 
\end{lemma}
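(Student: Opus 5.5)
The plan is to compute the distribution of $b$ exactly and bound the probability of the single value $a'$. By the definition of $N_\sigma$, with probability $1-\sigma$ the character is kept, so $b=a$. With probability $\sigma$ it is rerandomized, and then $b$ is uniform on $\Sigma$. Hence for every $c\in\Sigma$,
\[
\Prb[b=c]=(1-\sigma)\one\{c=a\}+\frac{\sigma}{q}.
\]

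Next we bound this for the fixed target $a'$ in two cases. If $a'=a$, then $\Prb[b=a']=1-\sigma+\sigma/q=1-\sigma(1-1/q)=1-\rho$. If $a'\ne a$, then $\Prb[b=a']=\sigma/q$. Since $q\ge2$ we have $\sigma/q\le\sigma/2$, and we need $\sigma/q\le 1-\sigma+\sigma/q$, which holds because $\sigma\le1$. So in both cases $\Prb[b=a']\le1-\rho$.

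Taking complements gives $\Prb[b\ne a']\ge\rho$. No step is a real obstacle. The only point needing care is the case $a'\ne a$, where the bound $\sigma/q\le1-\rho$ relies on $\sigma\le1$. It also helps to note that the extremal case is $a'=a$, where both inequalities hold with equality.
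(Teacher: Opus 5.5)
Your proposal is correct and follows essentially the same argument as the paper. The paper also splits into the cases $a'=a$, where $\Prb[b\ne a']=\sigma(1-1/q)=\rho$ exactly, and $a'\ne a$, where $\Prb[b=a']=\sigma/q\le 1-\sigma+\sigma/q=1-\rho$ using $\sigma\le 1$. Your remark that $\sigma/q\le\sigma/2$ is harmless but not needed.
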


\begin{proof}
    If \(a=a'\), then $b \neq a'$ exactly when the $a$ is rerandomised and the new symbol is not \(a'\), whose probability is \(\sigma(1-1/q)=\rho\). If \(a\ne a'\), the only way to have \(b=a'\) is to rerandomise and draw \(a'\), which has probability $\sigma/q\le 1-\sigma+\sigma/q=1-\rho$.
\end{proof}

\subsection{Smoothed pattern matching}
\begin{definition}[Smoothed Pattern matching]
\label{defn:smoothedpatmatch}
    Let $T:[n]\rightarrow \Sigma, P:[m]\rightarrow \Sigma$. Consider $\widetilde{T}\sim N_\sigma(T)$. Define $\widetilde{P}$ as follows:
    \begin{enumerate}
        \item If $\operatorname{occ}(T,P)\neq\emptyset$, fix any $s^*\in\operatorname{occ}(T,P)$ and let $\widetilde{P}(x)=\widetilde{T}(x+s^*)$ for all $x\in [m]$.
        \item If $\operatorname{occ}(T,P)=\emptyset$, let $\widetilde{P}\sim N_\sigma(P)$. 
    \end{enumerate}
    The goal is to distinguish which is the case given access to $\widetilde{P},\widetilde{T}$ with high probability (where the probability is taken over $\widetilde{P},\widetilde{T}$). In case $(1)$, one needs to output $s^*$.
\end{definition} 

Recall that if $\sigma=0$, then $\widetilde{T}=T$ and $\widetilde{P}=P$, so the problem above reduces exactly to the corresponding search version of pattern matching in Definition~\ref{def:patternmatching}. Below, we show that in the smoothed setting defined above, one can obtain a superpolynomial quantum speedup (inspired by Montanaro's algorithm~\cite{montanaro2017quantum}).

\begin{theorem}
\label{thm:smoothedpattern}
Let $\Sigma$ be the alphabet of size $q$, $\sigma>0$, $\rho = \sigma(1-1/q)$ Let $T:[n]\rightarrow \Sigma$ and $P:[m]\rightarrow \Sigma$ be text and pattern respectively.  Suppose that $\rho m= \Omega (\log n)$.  There is a $O(\sqrt{n/m}\cdot 2^{O(\sqrt{\log m})}\cdot 1/\sigma)$-query quantum algorithm for smoothed pattern matching in Definition~\ref{defn:smoothedpatmatch}.
\end{theorem}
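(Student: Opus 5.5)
We adapt Montanaro's average-case algorithm~\cite{montanaro2017quantum} to this setting. The only property of random inputs that algorithm really uses is that a fixed misaligned shift $s\neq s^*$ is quickly refuted: comparing $\widetilde P$ with $\widetilde T$ at a few random positions of a window exposes a mismatch. Under smoothing this is supplied by Lemma~\ref{lem:coordinate}.

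\textbf{Step 1: the key smoothing fact.} Fix any shift $s$ that is not a planted occurrence. In case (1) this means $s\neq s^*$, so that $\widetilde P(x)=\widetilde T(x+s^*)$ is compared against $\widetilde T(x+s)$. In case (2), $\widetilde P$ and $\widetilde T$ are independent perturbations. We claim that for many positions $j$, conditioned on the other coordinates, the event $\widetilde P(j)\neq\widetilde T(s+j)$ holds with probability at least $\rho$.
\begin{itemize}
\item In case (2), Lemma~\ref{lem:coordinate} applies coordinatewise, since $\widetilde P(j)$ is independent of $\widetilde T$.
\item In case (1), we need positions $j$ where $\widetilde T(j+s^*)$ and $\widetilde T(j+s)$ are independently perturbed coordinates. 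We select a set of such $j$ of size $\Omega(m)$: take every other element along the arithmetic progressions of difference $s-s^*$. Then each chosen comparison involves fresh coordinates, and Lemma~\ref{lem:coordinate} applies to one of them conditioned on the other.
\end{itemize}
Consequently, $k$ random comparisons among these positions all agree with probability at most $(1-\Omega(\rho))^{k}$.

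With $\rho m=\Omega(\log n)$ and a Chernoff bound over the $\Omega(m)$ independent comparisons, with probability $1-1/\poly(n)$ simultaneously for all $\le n$ bad shifts every bad shift has $\Omega(\rho m)$ mismatches among its first $m'$ positions, for all relevant prefix lengths $m'\ge C\log n/\rho$. This is the "no accidental near-match" structure Montanaro exploits.

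\textbf{Step 2: the algorithm.} We follow Montanaro's recursive scheme. Search with Grover/amplitude amplification over the $n-m+1$ shifts, but first restrict to shifts that are candidates modulo blocks. Split the pattern into blocks and find occurrences of short sub-patterns of length $\ell\approx 2^{\sqrt{\log m}}$ at positions spaced about $m$ apart, so that only $O(n/m)$ text windows need to be searched. Inside each window, use nested amplitude amplification with $O(\sqrt{\log m})$ levels of recursion, each level growing the verified prefix length by a factor $2^{\sqrt{\log m}}$. The level overhead is $2^{O(\sqrt{\log m})}$, giving $\sqrt{n/m}\cdot 2^{O(\sqrt{\log m})}$ in total.

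Each verification of a candidate shift at a given level samples $O(1/\rho)=O(1/\sigma)$ random positions. By Step 1 this rejects a bad shift with constant probability, so using the variable-time/fixed-point amplification of Theorem~\ref{thm:Fixed_AA} the checking cost contributes the multiplicative factor $1/\sigma$. A final exact check of the found $s^*$ against random positions, plus output, handles the search version. Correctness in case (1) is automatic, since $s^*$ is never rejected: the algorithm has one-sided error on the true occurrence.

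\textbf{Main obstacle.} The delicate part is Step 1 in case (1). There $\widetilde P$ is a copy of a window of $\widetilde T$, so comparisons at shift $s$ are between two coordinates of the same perturbed text, and these overlap when $|s-s^*|<m$. The dependence chains along residues mod $|s-s^*|$ are what must be handled. We first try the alternating-positions decoupling described above, conditioning on the unchosen coordinates. The second difficulty is checking that Montanaro's level-by-level probability estimates, which originally use full uniformity, go through using only the mismatch-rate guarantee with $\rho$ in place of $1-1/q$. We would redo his recursion with this weaker per-comparison rejection probability, absorbing the loss into the $1/\sigma$ factor.
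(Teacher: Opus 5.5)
Your Step~1 is sound. Choosing vertex-disjoint pairs along the progressions of difference $s-s^*$ and applying Lemma~\ref{lem:coordinate} conditionally does give $\Omega(\rho m)$ mismatches at every non-matching offset with high probability. That is the mismatch parameter $\gamma=\Omega(\rho)$ of Theorem~\ref{thm:montanarothm2}. The paper handles case~(2) the same way; in case~(1) it only derives the weaker $\gamma=\Omega(1/\nu)$ from injectivity, so your bound is if anything better.

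The gap is your premise that refuting misaligned shifts is the only property of random inputs Montanaro's algorithm uses, and the Step~2 built on it.
\begin{itemize}
\item The factor $2^{O(\sqrt{\log m})}$ in \cite{montanaro2017quantum} comes from a Kuperberg-style sieve for the hidden shift problem, run inside each of the $O(n/m)$ text windows.
\item That subroutine requires \emph{injective} functions. Montanaro obtains them by replacing each position with the block of the next $\nu$ characters, which needs all length-$\nu$ substrings of both $\widetilde T$ and $\widetilde P$ to be distinct. This is the hypothesis $\nu(T),\nu(P)\le\nu$ of Theorem~\ref{thm:montanarothm2}.
\item Each evaluation of the injective function costs $\nu$ queries. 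The $1/\sigma$ in the statement comes from $\nu=O((\log n)/\sigma)$, not from verification, which costs only $1/\sqrt{\gamma}$.
\end{itemize}
Your proposal never bounds the injectivity length. Step~1 compares $\widetilde P$ with $\widetilde T$ at length $m$, and says nothing about self-repetitions of $\widetilde T$ or of $\widetilde P$ at length $\Theta((\log n)/\sigma)$.

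Your replacement for the hidden-shift subroutine does not work either. Inside a window there are $\Theta(m)$ candidate shifts and typically only one correct one. Any scheme that amplitude-amplifies over the shift register starts with amplitude about $1/\sqrt m$ on the answer, so it needs $\Omega(\sqrt m)$ rounds, each making at least one query. This holds however the prefix checks are nested, so the claimed $2^{O(\sqrt{\log m})}$ level overhead is unsupported.

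The repair is within reach of your own tools:
\begin{enumerate}
\item Use Theorem~\ref{thm:montanarothm2} as a black box.
\item Apply your disjoint-pairs decoupling to two length-$k$ windows of the \emph{same} perturbed string; they agree with probability at most $(1-\sigma/2)^{k/2}$.
\item Take $k=\lceil 12\log n/\sigma\rceil$ and union bound over the at most $n^2$ pairs of starting positions. This gives $\nu(\widetilde T)=O((\log n)/\sigma)$ with high probability, which is Lemma~\ref{lem:injectiveonsmoothing}.
\item In case~(1), $\widetilde P$ is a substring of $\widetilde T$ and inherits this bound. In case~(2), $\widetilde P$ is itself a $\sigma$-perturbation, so the same lemma applies.
\item Plug $\nu=O((\log n)/\sigma)$ and $\gamma=\Omega(\sigma)$ into Theorem~\ref{thm:montanarothm2} to obtain the claimed bound.
\end{enumerate}
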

Proving Theorem~\ref{thm:smoothedpattern} is fairly simple. To prove this, we will use the following general theorem by Montanaro~\cite{montanaro2017quantum} (which applies to all texts and patterns) which we analyze for smoothed~inputs. 
We need one notation before that. For a string $S:[n]\rightarrow \Omega$ and $k\leq n$, define $S^{\rhd k}:[n-k]\rightarrow \Omega^k$~as:  
$$
S^{\rhd k}(s)=\Big(S(s+1),\ldots,S(s+k)\Big) \text{ for every } s\in [n-k].
$$
Define the \emph{injectivity length} of a string $S$ as\footnote{Below, by distinct, here we mean that $S^{\rhd k}(s)\neq S^{\rhd k}(s')$ for all $s\neq s'$ where $S^{\rhd k}(s)$ is viewed as a string in $\Omega^k$.}
$$
\nu(S)=\min\{k: S^{\rhd k} \text{ is non-repetitive, i.e., all values in }S^{\rhd k} \text{ are distinct}\}.
$$
We are now ready to state the theorem that we will use. \begin{theorem}{\cite{montanaro2017quantum}}
\label{thm:montanarothm2}
Fix $d=O(1)$. Let $T:[n]\rightarrow \Sigma$, $P:[m]\rightarrow \Sigma$. Suppose $\nu$ is such that $\nu(T),\nu(P)\leq \nu\leq m/2$. Furthermore, suppose that for every offset $s$ such that $P$ does not match $T$, the fraction of $i\in [m]$ such that $P(i)\neq T(i+s)$ is at least $\gamma$. Then, there is a bounded-error quantum algorithm that outputs $s\in [m]$ such that $P$ matches $T$ at offset $s$ (if such an $s$ exists) and otherwise requires not found.  The algorithm uses in total
$$
\widetilde{O}\Big(\sqrt{n/m}\cdot 2^{O(\sqrt{\log m})}\cdot (\nu+1/\sqrt{\gamma})\Big)
$$
queries to $T,P$. \end{theorem}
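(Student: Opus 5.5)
This theorem is imported from \cite{montanaro2017quantum}, and the paper only applies it to smoothed inputs. Still, here is how I would reconstruct its proof. The outer layer is a reduction to windows. Partition $T$ into $O(n/m)$ overlapping windows of length $2m$, each starting at a multiple of $m/2$, so that any occurrence of $P$ lies entirely inside some window. Then run quantum search (Grover, with bounded-error subroutines via standard error reduction) over the windows. The subroutine decides whether $P$ occurs in a given window and returns the offset. This contributes the factor $\sqrt{n/m}$. The logarithmic factors hidden in $\widetilde O$ come from amplifying the subroutine's success probability to $1-1/\poly(n)$.

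Inside a window $W$ of length $2m$, I would use injectivity in two steps: first localize a unique candidate offset, then verify it.
\begin{itemize}
\item \emph{Anchoring.} Since $\nu(T)\le\nu$, every length-$\nu$ block of $P$ occurs at most once in $T$. So fixing a block $B=P(i+1)\cdots P(i+\nu)$ determines at most one candidate offset $s$, namely the unique position where $T^{\rhd \nu}$ equals $B$.
\item \emph{Verification.} Given a candidate $s$, Grover-search for an index $i\in[m]$ with $P(i)\ne T(i+s)$. By the $\gamma$-hypothesis, a non-matching offset has at least $\gamma m$ such indices. Hence $O(1/\sqrt\gamma)$ queries suffice to reject it with constant probability, and a true occurrence is never rejected. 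This gives the $1/\sqrt\gamma$ term.
\item \emph{The $\nu$ term.} Reading a block of $P$, or comparing a block of $P$ to a block of $T$, costs $O(\nu)$ queries.
\end{itemize}

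The heart of the argument, and the step I expect to be the main obstacle, is locating the anchor $B$ inside $W$ in $m^{o(1)}$ rather than $\sqrt m$ queries. I would do this by recursion on the scale.
\begin{itemize}
\item Let $r=2^{\Theta(\sqrt{\log m})}$. Split the current pattern scale $\ell$ into $r$ sub-blocks of length $\ell/r$. A match at scale $\ell$ forces each sub-block to match at the correspondingly shifted position.
\item The key structural point is that $\nu(P)\le\nu$ forces distinct shifts to produce distinct $\nu$-grams. So a match of one sub-block in a length-$2\ell/r$ subwindow pins down the shift, and this candidate is then checked at the coarser scale.
\item Each level therefore costs a Grover search over $O(r)$ subwindows, that is $O(\sqrt r)$ invocations of the level below, plus a constant-factor overhead for error reduction and verification.
\item The depth is $\log m/\log r=\Theta(\sqrt{\log m})$, so the total overhead is $(C\sqrt r)^{\sqrt{\log m}}=2^{O(\sqrt{\log m})}$ once the constants are balanced. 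The recursion bottoms out at length-$\nu$ blocks, which are read directly at cost $O(\nu)$.
\end{itemize}

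The delicate parts of this recursion are twofold. First, the hypothesis $\nu\le m/2$ must be used to guarantee that sub-block matches are unique at every level. Second, the per-level constant loss must be controlled so that it compounds only to $2^{O(\sqrt{\log m})}$. To handle error reduction I would use variable-time search in the style of \cite{Ambainis10VariableTime} rather than naive repetition, so that the $\log$ factors do not accumulate multiplicatively across levels.

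Multiplying the three layers — outer search $\sqrt{n/m}$, anchor localization $2^{O(\sqrt{\log m})}$, and the cost $\nu+1/\sqrt\gamma$ of anchor reads and verification — gives the stated bound.
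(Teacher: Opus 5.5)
You correctly identify the two outer layers of Montanaro's argument, which the paper imports without proof. These are quantum search over $O(n/m)$ overlapping windows, giving the factor $\sqrt{n/m}$, and Grover verification of one candidate offset in $O(1/\sqrt\gamma)$ queries.

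\textbf{The gap is the core step:} locating the offset inside a window with $m^{o(1)}$ queries. Your recursion does not achieve this. With branching $r$ and depth $D=\log m/\log r$ you have $r^{D}=m$, so the cost is $(C\sqrt r)^{D}=C^{D}r^{D/2}=C^{D}\sqrt m$. With $r=2^{\Theta(\sqrt{\log m})}$ this is $2^{O(\sqrt{\log m})}\sqrt m$, not $2^{O(\sqrt{\log m})}$. Nested Grover searches over a tree with $\Theta(m)$ leaves cost about $\sqrt m$ however you choose the branching. Better bookkeeping cannot fix this. Any procedure that localizes an anchor by searching for $x\in W$ with $T^{\rhd\nu}(x)=B$ is looking for a unique marked item among $\Theta(m)$ positions. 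The standard search lower bound then forces $\Omega(\sqrt m)$ evaluations. The overall bound degrades to $\sqrt n\cdot 2^{O(\sqrt{\log m})}(\nu+1/\sqrt\gamma)$, which would destroy the $n^{o(1)}/\sigma$ bound the paper derives for $m=\Theta(n)$.

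\textbf{The missing idea} is to recover the offset as a hidden shift, using all positions of the window coherently rather than one anchor. This is the actual role of the injectivity length. The maps $x\mapsto P^{\rhd\nu}(x)$ and $x\mapsto T^{\rhd\nu}(x)$ are injective, and in a window containing an occurrence at offset $s$ they satisfy $T^{\rhd\nu}(x+s)=P^{\rhd\nu}(x)$ on a large part of the domain. Finding $s$ is then a hidden shift problem over a cyclic group $\mathbb{Z}_N$ with $N=\Theta(m)$, i.e., a dihedral hidden subgroup problem. Kuperberg's sieve solves it with $2^{O(\sqrt{\log m})}$ evaluations, by Fourier sampling coset states and combining them. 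Each evaluation reads a $\nu$-gram, which is why $\nu$ multiplies $2^{O(\sqrt{\log m})}$ rather than entering as an additive read cost. The $2^{O(\sqrt{\log m})}$ is the signature of this subexponential sieve, not of a recursive search. What needs care in Montanaro's proof is adapting Kuperberg's algorithm to a shift relation that holds only partially, because of window boundaries and the embedding into $\mathbb{Z}_N$. Uniqueness of sub-block matches across scales is not where the difficulty lies.
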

At this point it suffices to bound $\nu$ in order to understand the complexity of the algorithm above. Montanaro shows that, for a \emph{random} $P,T$, we have that $\nu\leq O(\log_{|\Omega|}n)$ with  probability $\geq 1-1/n$. Below we show this is true even for smoothed inputs.
\begin{lemma}
\label{lem:injectiveonsmoothing}
    Let $\sigma>0$ and $|\Omega|\geq 2$. Let $S:[n]\rightarrow \Omega$. Let $\widetilde{S}=N_\sigma(S)$. Then
    $$
    \Pr_{\widetilde{S}\sim N_\sigma(S)}[\nu(\widetilde{S})\geq  \lceil\frac{12\log n)}{\sigma}\rceil]\leq \frac1n.
    $$
\end{lemma}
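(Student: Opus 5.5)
We bound the probability that two distinct windows of length $k:=\lceil 12\log n/\sigma\rceil$ in $\widetilde S$ coincide, and then take a union bound over all pairs of start positions. The event $\nu(\widetilde S)\ge k$ means that $\widetilde S^{\rhd (k-1)}$ is not injective. It therefore implies that some pair of distinct positions $s\ne s'$ has equal windows of length $k-1$. Reducing $k$ by one only affects constants, so we work with windows of length $k$ (or $k-1$) throughout. There are at most $n^2$ pairs $(s,s')$. It therefore suffices to show that for each fixed pair,
\[
\Pr\bigl[\widetilde S^{\rhd k}(s)=\widetilde S^{\rhd k}(s')\bigr]\le n^{-3}.
\]

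Fix $s<s'$ and write $d=s'-s\ge1$. We need to handle overlapping windows, since $d<k$ is allowed. To do this, we select a set $J\subseteq\{1,\dots,k\}$ of at least $k/2$ indices such that the positions $\{s+i\}_{i\in J}$ and $\{s'+i\}_{i\in J}$ are all distinct. We then reveal the characters in a suitable order. Concretely, partition $\{1,\dots,k\}$ into residue classes modulo $d$ (or into blocks of length $d$). Within consecutive blocks, take every other block of $d$ indices. Then position $s'+i=s+i+d$ falls in a skipped block, so the $2|J|$ positions involved are distinct, and $|J|\ge \lfloor k/2\rfloor$ up to rounding. If $d\ge k$, the windows are disjoint and we simply take $J=[k]$.

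The comparisons $\widetilde S(s+i)=\widetilde S(s'+i)$ for $i\in J$ involve disjoint pairs of independent coordinates. Condition on all of the $\widetilde S(s'+i)$. Each event $\widetilde S(s+i)=a'$ for a fixed $a'$ has probability at most $1-\rho$ by Lemma~\ref{lem:coordinate}, and these events are independent across $i\in J$. Hence the probability is at most $(1-\rho)^{k/2}\le e^{-\rho k/2}$. Since $|\Omega|\ge2$, we have $\rho\ge\sigma/2$, so this is at most $e^{-\sigma k/4}\le e^{-3\log n}\le n^{-3}$, using $k\ge 12\log n/\sigma$ with natural log. If the log is base 2, the bound only improves. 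A union bound over fewer than $n^2$ pairs then gives failure probability at most $1/n$.

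The main obstacle is the overlapping case $d<k$, where the equalities are correlated: the string is forced to be close to $d$-periodic, and the naive product bound fails. The alternating-block choice of $J$ resolves this, and we must verify that the positions are distinct and that $|J|\ge k/2-O(1)$. If rounding leaves a loss when $d$ is comparable to $k$, we absorb it using the slack in the constant $12$.
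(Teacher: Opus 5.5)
Your proposal is correct and follows essentially the same route as the paper. Both arguments take a union bound over pairs of start positions, choose at least $k/2$ comparison pairs sharing no index to handle overlapping windows, and bound each per-pair collision probability by $1-\sigma/2$. Two details differ slightly: your alternating blocks of length $d$ construct explicitly the disjoint pairs that the paper only asserts exist, and you get the per-pair bound by conditioning on one side and applying Lemma~\ref{lem:coordinate}, rather than by the paper's direct computation for two independently perturbed characters.

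Your off-by-one observation is right: the paper's argument literally bounds $\Pr[\nu(\widetilde S)>k]$, not $\Pr[\nu(\widetilde S)\ge k]$. To get exactly $1/n$ with windows of length $k-1$, count unordered pairs: there are fewer than $n^2/2$ of them, which absorbs the extra per-pair factor $e^{\sigma/4}\le e^{1/4}$.
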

    
\begin{proof}
    For notational simplicity, let $|\Omega|=q$. Fix $k\in[n]$ and consider $S^{\rhd k}$.     Also fix $s,t\in[n]$ such that $1\leq s<s+t\leq n-k+1$.

    Among the $k$ pairs $\{i,i+t\}$, for $i\in s+[k]$, we can select at least $k/2$ pairs such that no two selected pairs share an index. Let $I\subseteq s+[k]$ denote the corresponding set of indices. Therefore, if the two length-$k$ substrings are equal, then $\widetilde S(i)=\widetilde S(i+t)$ for every $i\in I$.
    Hence
    \begin{align*}
        &\Pr_{\widetilde{S}\sim N_\sigma(S)}
        [\widetilde{S}^{\rhd k}(s)=\widetilde{S}^{\rhd k}(s+t)]\leq
        \Pr_{\widetilde{S}\sim N_\sigma(S)}
        \left[
        \bigcap_{i\in I}
        \{\widetilde{S}(i)=\widetilde{S}(i+t)\}
        \right]=
        \prod_{i\in I}
        \Pr_{\widetilde{S}\sim N_\sigma(S)}
        [\widetilde{S}(i)=\widetilde{S}(i+t)].
    \end{align*}
   The last equality follows because no two selected pairs share an index, so the corresponding events are independent under the perturbation $N_\sigma$.

    For any two characters $a,b\in\Omega$, if    $\widetilde a\sim N_\sigma(a)$ and $\widetilde b\sim N_\sigma(b)$ independently, a direct calculation gives
    \[
        \Pr[\widetilde a=\widetilde b]
        \leq
        1-\sigma(1-1/q)(2-\sigma)
        \leq
        1-\frac{\sigma}{2},
    \]
    where the first bound is attained when $a=b$, and the last inequality uses $q\geq2$.
    Therefore,
    \[
        \Pr_{\widetilde{S}\sim N_\sigma(S)}
        [\widetilde{S}^{\rhd k}(s)=\widetilde{S}^{\rhd k}(s+t)]
        \leq
        \left(1-\frac{\sigma}{2}\right)^{k/2}
        \leq
        e^{-\sigma k/4}.
    \]

    Now, taking a union bound over all possible choices of the two starting positions, we obtain an upper bound
    $     n^2 e^{-\sigma k/4}$ on the probability that $\widetilde{S}^{\rhd k}$ fails to be non-repetitive.
    By choosing
    $k=\left\lceil\frac{12\log n}{\sigma}\right\rceil$,
    this probability is at most $1/n$, proving the claim.
\end{proof}

\begin{corollary}
\label{cor:bothinstancessmooth}
In the context of Definition~\ref{defn:smoothedpatmatch}, in both instances of the problem we have $\nu(\widetilde{P})\leq O((\log n)/\sigma)$ and $ \nu(\widetilde{T})\leq O((\log n)/\sigma)$ with high probability.  
\end{corollary}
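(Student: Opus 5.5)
The text is handled directly by Lemma~\ref{lem:injectiveonsmoothing}. Since $\widetilde T\sim N_\sigma(T)$ in both cases of Definition~\ref{defn:smoothedpatmatch}, the lemma applied with $S=T$ gives $\nu(\widetilde T)<\lceil 12\log n/\sigma\rceil$ except with probability at most $1/n$.

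For the pattern we split into the two cases of the definition.

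In case~(2), $\widetilde P\sim N_\sigma(P)$ with $P$ of length $m\le n$. We rerun the proof of Lemma~\ref{lem:injectiveonsmoothing} with the union bound taken over at most $m^2\le n^2$ pairs of starting positions. This gives $\nu(\widetilde P)\le\lceil 12\log n/\sigma\rceil$ except with probability at most $n^2e^{-3\log n}\le 1/n$. Note that we use $\log n$ rather than $\log m$ here, so the stated bound holds even when $m$ is small.

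In case~(1), $\widetilde P$ is the window of $\widetilde T$ of length $m$ starting at $s^*$. Any two equal length-$k$ substrings of $\widetilde P$ at distinct offsets would be two equal length-$k$ substrings of $\widetilde T$ at distinct offsets. Hence $\widetilde T^{\rhd k}$ being non-repetitive implies $\widetilde P^{\rhd k}$ is non-repetitive, and so $\nu(\widetilde P)\le\nu(\widetilde T)$. The event from the text's bound therefore covers $\widetilde P$ as well. (Here $s^*$ is fixed by the unperturbed $T,P$, so there is no dependence issue.)

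A union bound over the two failure events gives both bounds simultaneously with probability at least $1-2/n$.

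The only step needing care is case~(2) when $m$ is much smaller than $n$. There the lemma applied verbatim would give a $\log m$ bound with failure probability $1/m$, which need not be small. The fix is to choose $k$ in terms of $\log n$ as above, which costs only the stated $O((\log n)/\sigma)$. The rest is immediate.
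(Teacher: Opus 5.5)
Your proof is correct and follows the paper's argument: Lemma~\ref{lem:injectiveonsmoothing} for $\widetilde T$, the substring monotonicity $\nu(\widetilde P)\le\nu(\widetilde T)$ in case~(1), and the same lemma applied to $\widetilde P\sim N_\sigma(P)$ in case~(2). The one difference is in case~(2), where you choose the window length in terms of $\log n$ and get failure probability $1/n$ instead of the paper's $1/m$. This is a small but real tightening: $1-1/m$ need not be high probability when $m$ is small, while the later proof of Theorem~\ref{thm:smoothedpattern} relies on a $1-1/n$ guarantee.
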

\begin{proof}
    Note that in Definition~\ref{defn:smoothedpatmatch}, $\widetilde{T}=N_\sigma(T)$. By Lemma~\ref{lem:injectiveonsmoothing},  we have that $\nu(\widetilde{T})\leq O((\log n)/\sigma)$ with probability $\geq 1-1/n$.    In case $(1)$, we have that $\widetilde{P}$ is a substring of $\widetilde{T}$, so we have that $\nu(\widetilde{P})\leq \nu(\widetilde{T})\leq O((\log n)/\sigma):=k$. Indeed, this follows because if $\widetilde{T}^{\rhd k}(s)\neq \widetilde{T}^{\rhd k}(s')$ (for $s\neq s'$), i.e., it is non-repetitive, then for \emph{every} substring of $\widetilde{T}$, we have that the similar inequality is true (i.e., substrings $P$ of $\widetilde{T}$ will also be non-repetitive or  $P^{\rhd k}(s)\neq P^{\rhd k}(s')$).   
    In case $(2)$, observe that $\widetilde{P}=N_\sigma(P)$, so by Lemma~\ref{lem:injectiveonsmoothing} we have that $\nu(\widetilde{T})\leq O((\log m)/\sigma)$ with probability $\geq 1-1/m$.
\end{proof}

We also will use the following fact.

\begin{fact}
Let $T:[n]\to\Sigma$ and $P':[m]\to\Sigma$ be arbitrary, and let
$\widetilde T\sim N_\sigma(T)$. Then, with probability at least
\[
1-n\exp(-\rho m/8),
\]
for every offset $s\in\{0,\ldots,n-m\}$,
\[
\left|\{j\in[m]:P'(j)\neq \widetilde T(s+j)\}\right|
\geq \frac{\rho m}{2}.
\]
\end{fact}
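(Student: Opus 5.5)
The plan is to fix an offset $s\in\{0,\ldots,n-m\}$, bound the mismatch count from below by a Chernoff bound, and finish with a union bound over the at most $n$ offsets.

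\textbf{Fixed offset.} Define indicators $X_j=\one\{P'(j)\neq \widetilde T(s+j)\}$ for $j\in[m]$. Since $P'$ is fixed and arbitrary, each $X_j$ depends only on the single perturbed coordinate $\widetilde T(s+j)$. For a fixed $s$ these coordinates are distinct, so $X_1,\ldots,X_m$ are independent Bernoulli variables. Apply Lemma~\ref{lem:coordinate} with $a=T(s+j)$ and $a'=P'(j)$. It gives $\Prb[X_j=1]\ge\rho$ for every $j$, so $X=\sum_j X_j$ has mean $\mu\ge\rho m$.

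\textbf{Chernoff bound.} The Chernoff bound of Lemma~\ref{lem:chernoff2} is stated in terms of the true mean $\mu$, which may exceed $\rho m$. To avoid comparing thresholds, I would couple each $X_j$ with an independent $X'_j\le X_j$ satisfying $\Prb[X'_j=1]=\rho$ exactly. This can be done by thinning: conditioned on $X_j=1$, keep it with probability $\rho/\Prb[X_j=1]$. Then $X'=\sum_j X'_j$ has mean exactly $\rho m$. Applying Lemma~\ref{lem:chernoff2} with $\delta=1/2$ gives
\[
\Prb[X<\rho m/2]\le\Prb[X'\le\rho m/2]\le\exp(-\rho m/8).
\]
Alternatively, one can observe directly that the lower tail at a fixed threshold is monotone in the success probabilities.

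\textbf{Union bound.} There are $n-m+1\le n$ offsets. A union bound over them gives failure probability at most $n\exp(-\rho m/8)$, which is the claimed bound.

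\textbf{Main subtlety.} The only delicate point is independence. It holds per offset, because distinct $j$ involve distinct text positions and $P'$ is deterministic. No independence across offsets is needed, since the union bound handles them. The reduction to mean exactly $\rho m$ is the one routine technical step.
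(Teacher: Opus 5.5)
Your proof is correct and matches the paper's argument: per-offset independence, Lemma~\ref{lem:coordinate} to get mean at least $\rho m$, Chernoff with $\delta=1/2$, and a union bound over at most $n$ offsets. The thinning coupling is valid but unnecessary: $\{X<\rho m/2\}\subseteq\{X\le\mu/2\}$, and $e^{-\mu/8}\le e^{-\rho m/8}$ whenever $\mu\ge\rho m$, so Chernoff applies directly.
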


\begin{proof}
Fix an offset $s\in\{0,\ldots,n-m\}$, and let
\[
X_s
:=
\left|\{j\in[m]:P'(j)\neq \widetilde T(s+j)\}\right|.
\]
By Lemma~\ref{lem:coordinate}, for every $j\in[m]$,
\[
\Pr[P'(j)\neq \widetilde T(s+j)]\geq \rho.
\]
Since the perturbations are independent across positions,
$\mathbb E[X_s]\geq \rho m$. By the Chernoff bound (Lemma~\ref{lem:chernoff2}),
\[
\Pr\left[X_s<\frac{\rho m}{2}\right]
\leq
\exp(-\rho m/8).
\]
Taking a union bound over all at most $n$ possible offsets proves the claim.
\end{proof}

\begin{proofof}{Theorem~\ref{thm:smoothedpattern}}
The proof follows immediately from the lemma and fact above. 
Corollary~\ref{cor:bothinstancessmooth} implies that in both instances of the problem,
$\widetilde{P},\widetilde{T}$ have injectivity length at most $\nu=O((\log n)/\sigma)$,
with probability $\geq 1-1/n$ over the smoothing randomness, so let us assume this is the case.

In case one, by definition $\widetilde{P}$ is a substring of $\widetilde{T}$. Moreover, since
$\nu(\widetilde{T})\leq \nu$, for every offset at which $\widetilde{P}$ does not match
$\widetilde{T}$, every length-$\nu$ block contains a disagreement. Hence,
$\widetilde{P}$ and the corresponding substring of $\widetilde{T}$ disagree on at least an
$\Omega(1/\nu)$-fraction of positions.

In case two, conditioning on any fixed $\widetilde{P}$, by the fact above,
$\widetilde{P}$ and the corresponding substring of $\widetilde{T}$ disagree on at least a $\rho/2$-fraction of positions for every offset, with probability at least $1-ne^{-\rho m/8}.$
If $\rho m=\Omega(\log n)$, this probability is also at least $1-1/n$.
By a union bound, we assume both these events hold. Since $q\geq 2$, we have
$\rho=\Omega(\sigma)$, and hence we have $\gamma=\Omega\left(\min\{1/\nu,\rho\}\right)
=\widetilde{\Omega}(\sigma)$.
Plugging this and $\nu=O((\log n)/\sigma)$ into
Theorem~\ref{thm:montanarothm2}, we get that using
\[
\widetilde{O}\Big(
\sqrt{n/m}\cdot 1/\sigma\cdot
2^{O(\sqrt{\log m})}
\Big)
\]
queries, one can distinguish whether $\widetilde{P},\widetilde{T}$ belong to case one or case two,
and in case one output a matching offset $s^*$.
\end{proofof}

\begin{remark}
Note that Montanaro~\cite{montanaro2017quantum} showed an $\widetilde{\Omega}(n/m+\sqrt n)$ classical query lower bound for average-case pattern matching. On the other hand, in the regime $m=\Theta(n)$, our smoothed quantum algorithm uses
$n^{o(1)}/\sigma$ 
queries. Hence, for any constant $\varepsilon>0$, if $\sigma\geq n^{-1/2+\varepsilon}$, our query complexity is $n^{1/2-\varepsilon+o(1)}$, giving a polynomial quantum advantage over the $\widetilde{\Omega}(\sqrt n)$ classical average-case bound.
Moreover, if $\sigma=n^{-o(1)}$, our quantum query complexity is $n^{o(1)}$, yielding a superpolynomial separation from this classical average-case complexity.
\end{remark}

\subsection{Smoothed pattern matching when pattern is known}

\Cref{thm:smoothedpattern} shows a trade-off for smoothed complexity for all $\sigma$. However, when $\sigma$ is small (i.e. $\Theta(\log n/m)$ and $m$ is large (i.e. $\Theta(n)$), the complexity of this algorithm is $n^{1\pm o(1)}$. However, when $\sigma = 0$ (i.e. the worst-case), we know that the complexity of pattern matching is $\tilde{O}(\sqrt{n})$ \cite{ramesh2003string}. This suggests that the dependency of $\sigma$ in this algorithm may not be tight. In this subsection, we consider a variant of pattern matching where the pattern is explicitly given. This model is motivated to be a generalization of Grover's search problem where the searched-for object (i.e. a length $1$ pattern) is known explicitly. Interestingly, in this model, we can tightly characterize the smoothed complexity of pattern matching upto polylogarithmic factors for a large range of $\sigma$. Recall that $\rho = \sigma(1-1/q)$. We will consider the range of $\sigma$ for which $\rho m \geq c \log n$ for some large enough constant $c$. When $\rho m << \log n$, the perturbed text will likely contain length-$m$ consecutive blocks which we do not go through any rerandomization so the smoothed pattern matching in this regime will be quite similar to worst-case pattern matching, which is very well-understood \cite{yao1979complexity, ramesh2003string}. In fact, when $\rho m << \log n$, the \textsc{Yes} and \textsc{No} instances in  \Cref{defn:smoothedpatmatchwithpatternknown} will likely have a non-trivial intersection, which will make the distinguishing task unclear.   
    
\begin{definition}[Smoothed pattern matching for an explicit pattern]
\label{defn:smoothedpatmatchwithpatternknown}
    Let $T:[n]\rightarrow \Sigma$, $P:[m]\rightarrow \Sigma$. Define $\widetilde{T}$:
    \begin{enumerate}
        \item If $\operatorname{occ}(T,P)\neq\emptyset$, fix any $s^*\in\operatorname{occ}(T,P)$ and let $\widetilde{T}[1,s^{*}] \sim N_\sigma(T[1,s^{*}])$, $\widetilde{T}[s^{*}+m+1,n] \sim N_\sigma(T[s^{*}+m+1,n])$ and  $\widetilde{T}(x+s^*)={P}(x)$ for all $x\in [m]$.
        \item If $\operatorname{occ}(T,P)=\emptyset$, let $\widetilde{T}\sim N_\sigma(T)$. 
    \end{enumerate}
    The goal is to distinguish which is the case with high probability (where the probability is taken over $\widetilde{T}$) given $P$ for free and query access to $\widetilde{T}$. In case $(1)$, one needs to output $s^*$.
\end{definition}

Furthermore, for any $P \in \Sigma^m$ and $T \in \Sigma^n$, let
    $\cR_\sigma(\mathsf{match}_P,T)$ and $\cQ_\sigma(\mathsf{match}_P,T)$
    denote the randomized and quantum query complexities, respectively, of deciding whether
    $\mathsf{match}(\widetilde T,P)=1$, where $\widetilde T \sim N_\sigma(T)$.

    For any $P \in \Sigma^m$, let
    \[
    \cR_\sigma(\mathsf{match}_P)
    =
    \max_{T \in \Sigma^n} \cR_\sigma(\mathsf{match}_P,T),
    \qquad
    \cQ_\sigma(\mathsf{match}_P)
    =
    \max_{T \in \Sigma^n} \cQ_\sigma(\mathsf{match}_P,T).
    \]
    Finally, let
    \[
    \cR_\sigma(\mathsf{match})
    =
    \max_{P \in \Sigma^m} \cR_\sigma(\mathsf{match}_P),
    \qquad
    \cQ_\sigma(\mathsf{match})
    =
    \max_{P \in \Sigma^m} \cQ_\sigma(\mathsf{match}_P).
    \]

   The three quantities defined above form a natural hierarchy: we first fix both $P$ and $T$, then maximize over the perturbation center $T$ for a fixed pattern $P$, and finally maximize over the pattern $P$. The resulting quantities $\cR_\sigma(\mathsf{match})$ and $\cQ_\sigma(\mathsf{match})$ are the smoothed query complexities of interest in this subsection.

\subsubsection{Randomized query complexity}

\begin{lemma}[Upper bound]
If $\rho m \geq \Omega(\log n)$, then, $\cR_\sigma(\mathsf{match}) = \tilde{O}\left({n}/{\rho m}\right)$.
\end{lemma}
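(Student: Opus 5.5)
The plan is a randomized algorithm that samples positions of $\widetilde T$ and uses the known pattern $P$ to rule out every offset except the few that could still match; those survivors are then checked exhaustively. The key consequence of smoothing is that, at every offset other than the planted one, about $\rho m$ positions of the window mismatch $P$. By the Fact above, applied with $P'=P$ to the perturbed parts of $\widetilde T$, this holds except with probability $n\exp(-\Omega(\rho m))\le 1/\poly(n)$.

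Care is needed in case (1). There a window at offset $s\ne s^*$ may overlap the planted copy of $P$, and those overlapping positions are not perturbed. I would handle this in two ways.
\begin{itemize}
\item If the overlap is at most $m/2$, then at least $m/2$ positions of the window are perturbed. Applying Lemma~\ref{lem:coordinate} and Chernoff to those positions alone gives $\Omega(\rho m)$ mismatches.
\item If the overlap exceeds $m/2$, I would not rely on a counting bound for that window. Instead, the algorithm verifies candidates directly.
\end{itemize}

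The algorithm is as follows. Sample $k=C(n/(\rho m))\log n$ uniformly random positions $i$ of $\widetilde T$. An offset $s$ is \emph{eliminated} if some sampled $i$ with $i-s\in[m]$ has $\widetilde T(i)\ne P(i-s)$. Each window of length $m$ contains about $km/n=C\log n/\rho$ samples in expectation. So a window with at least $\rho m/4$ mismatches keeps all its samples consistent with probability at most $(1-\rho/4)^{\Omega(\log n/\rho)}$, which is $\le n^{-3}$. A union bound over the $n$ offsets shows that, with high probability, every surviving offset has fewer than $\rho m/4$ mismatches.

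Next, bound the number of survivors. In case (2) there are none. In case (1) every survivor $s$ must have overlap greater than $m/2$ with $s^*$, so $|s-s^*|<m/2$. Such offsets would have to be verified in time $\widetilde O(n/\rho m)$, which needs more than a direct check of each one.

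The main obstacle is exactly these near-planted offsets. The fix I would try first is to avoid verifying offsets one by one and instead use the first sampled positions to locate $s^*$. A shift by $t=s-s^*$ compares $P(j)$ with $P(j+t)$ on the overlap and $P(j)$ with perturbed text off the overlap. If the overlap is at least $m-\rho m/8$, the off-overlap part gives few mismatches, so survival would force $P$ to be nearly $t$-periodic. In that case any survivor still lies within distance $O(\rho m)$ of $s^*$ in a periodic structure, which cannot be refuted cheaply. I would therefore relax the output requirement to returning any true occurrence, which these shifts may approximately be, and verify that candidate.

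For verification, output the smallest surviving offset $s$ and test it. Sample $O(\log n/\rho)$ random positions of the window at $s$ and accept if none mismatch. A survivor that is a genuine occurrence passes. Otherwise the window has at least $\rho m/4$ mismatches, because either its perturbed portion has length $\Omega(m)$ or the required near-periodicity fails, and the test rejects it. Iterating over survivors in order of $s$ costs $\widetilde O(1/\rho)$ queries each. The periodicity argument bounds the expected number of false survivors that must be examined before a true occurrence by $O(1)$ amortized, giving a total of $\widetilde O(n/(\rho m)+1/\rho)=\widetilde O(n/\rho m)$ queries since $m\le n$. The correctness failure probability is $1/\poly(n)$; for the expected-cost measure, after a failed check the algorithm falls back to reading all of $\widetilde T$ at cost $n$, which adds only $O(1)$ in expectation.
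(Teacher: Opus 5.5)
Your elimination stage is sound, but the way you handle survivors has a genuine gap. The verification test (sample $O(\log n/\rho)$ positions of a surviving window, accept if none mismatch) cannot tell a genuine occurrence from a window with a single mismatch. Your justification, that every non-occurrence survivor has at least $\rho m/4$ mismatches, is false; it even contradicts your own earlier conclusion that every survivor has fewer than $\rho m/4$ mismatches.

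Concretely, take $P=0^m$ and a centre $T$ equal to $1$ outside the planted copy. The window at $s^*-1$ differs from $P$ only at position $s^*$, which is nonzero with probability at least $1/2$. That window survives the sampling stage and passes your test with probability $1-O(\log n/(\rho m))$, i.e.\ $1-o(1)$ once $\rho m=\omega(\log n)$. Since you scan survivors in increasing order, you then report an offset below $s^*$, which with constant probability is not an occurrence. Likewise, on a fixed text containing a window with exactly one mismatch and no occurrence, your algorithm answers \textsc{Yes} with probability $1-o(1)$. So it does not compute $\mathsf{match}$ with bounded error on every input. The ``$O(1)$ amortized'' periodicity bound is asserted rather than proved, and relaxing the output to approximate occurrences changes the problem.

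The difficulty comes from analysing the planted model. The statement concerns $\cR_\sigma(\mathsf{match}_P,T)$, which is defined with $\widetilde T\sim N_\sigma(T)$. There every position is perturbed, so by Lemma~\ref{lem:coordinate} each position mismatches $P$ independently with probability at least $\rho$, whatever $T$ is. Sample each position independently with probability $\alpha=\min\{2\ln n/(\rho m),1\}$. A fixed offset then remains compatible with probability at most $(1-\alpha\rho)^m\le n^{-2}$, so some offset survives with probability at most $1/n$. You can therefore afford to read all of $\widetilde T$ whenever \emph{any} offset survives. This is zero-error on every input, because a true occurrence is never eliminated, and it adds only $n\cdot(1/n)=O(1)$ to the expected cost $\alpha n=\widetilde O(n/(\rho m))$. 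This is the paper's argument. Your closing fallback sentence becomes exactly this if you trigger the fallback on any survivor instead of running the sampling verification.

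In the planted model, by contrast, the bound is false for algorithms that are correct on every input. A case-(1) text must be distinguished from each of the $\Omega(m)$ texts obtained by changing one symbol in the middle of the planted copy; for typical perturbations these contain no occurrence. So each such position must be queried with probability at least $1/3$. That gives $\Omega(m)$ expected queries, which exceeds $\widetilde O(n/(\rho m))$ when $\rho$ is constant and $m=\Theta(n)$.
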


\begin{proof}
    Let $\alpha$ be $\min\left\{\frac{2\ln n}{m\rho},1\right\}$. If $\alpha=1$, then we query the entire text, and the claimed
$\widetilde O(n/(\rho m))$ bound follows immediately since
$\rho m=O(\log n)$. Hence, in the following we assume $\alpha<1$, so that
$\alpha\rho m=2\ln n.$ Independently include every text position $i\in[n]$ in a sample set $S$ with probability $\alpha$, and query all $\widetilde T_i$ for $i\in S$. Call $s \in [n-m+1]$ \emph{compatible} if $\widetilde T_i=P_{i-s}$ for all $i\in S\cap\{s,\ldots,s+m-1\}$. Since we know the pattern, we can determine the set $\mathcal C$ of all compatible $s \in [n-m+1]$. 
    If $\mathcal C$ is empty, output $\textsc{No}$. Otherwise, query all the positions $i \in [n] \setminus S$. This is sufficient to determine if there is an $s \in \mathcal C$ such that $\widetilde T_{s+j}=P_j$ for every $j\in[m]$. Output $\textsc{Yes}$ and this $s$ if there is such an $s$, and $\textsc{No}$ otherwise.
    It remains to bound the query cost of this algorithm. Fix an $s \in [n-m+1]$. Any $j\in[m]$ witnesses that $s$ is not compatible if $s+j\in S$ and $P_j\neq \widetilde T_{s+j}$. Sampling is independent of perturbation so Lemma~\ref{lem:coordinate} gives
    \begin{align*}
        \Prb[\text{$j$ witnesses that $s$ is not compatible}]\ge \alpha\rho.
    \end{align*}
    These events are independent for different $j \in [m]$ so
    \begin{align*}
        \Prb[s\text{ is compatible}] \leq(1-\alpha\rho)^m \leq e^{-\alpha\rho m} = 1/n^2.
    \end{align*}
    Union bounding over all $s \in [n-m+1]$ gives
    \begin{align*}
        \Prb[\mathcal C \neq \emptyset] \leq (n-m) e^{-\alpha\rho m} \leq 1/n.
    \end{align*}
    The expected number of initial sample queries is $\alpha n$, and if $\mathcal C \neq \emptyset$, we make at most $n$ queries. Therefore, 
    \begin{align*}
        R_\sigma(\mathsf{match}) \leq \alpha n + 1/n \cdot n = O(\alpha n) = \tilde{O}\left(\frac{n}{\rho m}\right). 
    \end{align*}
\end{proof}

\begin{lemma}[Lower bound]
If $\rho m \geq  \Omega(\log n)$, then, $\cR_\sigma(\mathsf{match}) = \Omega\left({n}/{\rho m}\right)$.
\end{lemma}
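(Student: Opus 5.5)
We will exhibit a fixed pattern $P$ and a center $T$ for which any randomized algorithm needs $\Omega(n/(\rho m))$ expected queries on $\widetilde T\sim N_\sigma(T)$. Take the binary case for concreteness and $P=0^m$. Partition $[n]$ into $L=\lfloor n/(2m)\rfloor$ disjoint blocks of length $2m$. The YES center $T^{(b)}$ is all zeros on block $b$ (so $P$ occurs at the start of block $b$, and by Definition~\ref{defn:smoothedpatmatchwithpatternknown} that window is kept equal to $P$). Elsewhere the center uses a periodic pattern $1^{1}0^{m-1}$, or simply places a single $1$ in every window of length $m$, so that $\mathsf{match}(T,P)=0$ off block $b$. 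The NO center $T^{\emptyset}$ uses this pattern everywhere. After perturbation, a NO text has, in every length-$m$ window, many positions (about $\rho m$ in expectation) that are nonzero or rerandomized. A YES text differs from a NO text only on one hidden window of $m$ positions, which is forced to zero.

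The core of the argument is a distinguishing bound. For a window $B$ of length $m$ and $\widetilde T\sim N_\sigma(T^{\emptyset})$, a single query to a position of $B$ returns a symbol whose distribution differs from its YES counterpart (deterministically $0$) in total variation by about $\rho$. More precisely, the NO distribution puts mass $\geq\rho$ on symbols $\neq 0$ by Lemma~\ref{lem:coordinate}. Hence a query inside the hidden window reveals a ``witness'' (a nonzero symbol) with probability at most $O(\rho)$ under NO. Under YES it never reveals one. An algorithm making $t$ queries spread across $L$ possible hidden locations, with $b$ uniform, hits the hidden window $t/L$ times on average. Each hit yields distinguishing information of order $\rho$. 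Summing, the distinguishing advantage is at most $O(t\rho/L)=O(t\rho m/n)$. Forcing this to be $\Omega(1)$ gives $t=\Omega(n/(\rho m))$.

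Formally, I would run a Yao-style argument: put a uniform prior on $b\in[L]$ together with the NO instance, and bound the statistical distance between the transcript distributions via a hybrid argument over queries. I would first convert expected to worst-case query count by Lemma~\ref{lem:expected-query-truncation}, applied to NO inputs. The algorithm must also output $s^*$, which only makes the task harder.

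The main obstacle is making the per-query information bound rigorous. The YES and NO texts should be coupled so they agree outside the hidden block, and then every query outside it is identically distributed. This requires choosing the centers so that the perturbed YES text restricted to other blocks has exactly the NO distribution. That holds by the block construction, since perturbations are independent per coordinate. Within the hidden window, the transcript can differ only once a nonzero symbol appears, which under NO happens with probability at most about $\sigma$ per query (rerandomized to nonzero). I should double-check that NO centers keep $\mathsf{match}=0$. If the planted $1$s get rerandomized, the perturbed NO text might accidentally contain $0^m$. Since $\rho m=\Omega(\log n)$, this has probability $\le n e^{-\Omega(\rho m)}$ and is negligible, so the NO distribution is essentially supported on non-matching texts.
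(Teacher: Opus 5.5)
\textbf{The argument fails because of the NO center you chose.} You plant a nonzero symbol in every length-$m$ window of $T^{\emptyset}$. At such a planted position $i$ the perturbed text satisfies $\Pr[\widetilde T_i\neq 0]=1-\sigma/q\geq 1-\rho$, not $O(\rho)$. Lemma~\ref{lem:coordinate} only gives the lower bound $\Pr[\widetilde T_i\neq0]\geq\rho$, so your step ``hence at most $O(\rho)$'' does not follow.

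This is fatal, because a smoothed lower bound must hold against algorithms that know the center. Here is an algorithm that beats your bound:
\begin{enumerate}
\item Query every position $p\equiv 1\pmod m$. Each window contains exactly one such $p$.
\item If $\widetilde T_p\neq0$, every window through $p$ is ruled out.
\item Otherwise (probability $\sigma/q\leq\rho$), scan outward from $p$ to the nearest nonzero on each side. This decides all windows through $p$ exactly, at expected cost $O(1/\rho)$.
\end{enumerate}
The algorithm is correct on every input and uses $O(n/m)$ expected queries on $N_\sigma(T^{\emptyset})$. So no argument can give $\Omega(n/(\rho m))$ for your instance. The same holds for any center avoiding $0^m$, since such a center has a known nonzero in every window. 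Your YES and NO texts also differ on all of block $b$, including the second planted symbol, not on a single window.

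\textbf{The fix is the paper's choice: take the center $T=0^n$ itself.} The measure $\cR_\sigma(\mathsf{match}_P,T)$ asks you to decide $\mathsf{match}(\widetilde T,P)$ for $\widetilde T\sim N_\sigma(T)$, with correctness on every input. So it does not matter that the center contains $P$. What matters is:
\begin{itemize}
\item every position of $\widetilde T$ is independently nonzero with probability exactly $\rho$;
\item $\widetilde T$ avoids $0^m$ with probability at least $3/4$, by $\rho m=\Omega(\log n)$ and a union bound.
\end{itemize}
Partition $[n]$ into $n/m$ blocks $B_j$ of length $m$. Zeroing $B_j$ in such a no-input gives a genuine yes-input, which agrees with the original until a nonzero in $B_j$ is queried. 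Hence the algorithm must find a nonzero in $B_j$ with probability at least $1/4$.

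Given the history, each newly queried position is a fresh $\rho$-biased nonzero indicator. So the expected number of nonzeros found in $B_j$ is exactly $\rho\,\E[q_j]$, where $q_j$ is the number of queries in $B_j$. This Wald-type identity handles adaptivity and replaces your hybrid argument. It gives $\E[q_j]\geq 1/(4\rho)$, and summing over blocks gives $\Omega(n/(\rho m))$. Averaged over a uniform hidden block, this is exactly your intended advantage bound $O(t\rho m/n)$. It becomes valid once every position in the hidden window has nonzero probability $\rho$ rather than $1-\sigma/q$.
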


\begin{proof}
Fix the pattern to be \(P=0^m\) and the text to be \(T=0^n\). Suppose that $\widetilde T \sim N_\sigma(T)$ (i.e. the \textsc{No} instance). For each \(i\in[n]\), define
\[
Y_i=\mathbf 1\{\widetilde T_i\neq 0\}.
\]
Then \(Y_1,\ldots,Y_n\) are independent \(\operatorname{Bernoulli}(\rho)\) random variables.
Suppose, without loss of generality, that $m$ divides $n$. The whole argument can be easily extend to the case when $m$ does not divide $n$. Partition the text positions $[n]$ into disjoint consecutive blocks \(B_1,\ldots,B_{n/m}\), each of length \(m\). Let \(\mathcal N\) denote the event that \(\widetilde T\) contains no occurrence of \(0^m\). For every $s \in \{0,\ldots,n-m\}$, the length-\(m\) substring
$\widetilde T_{s+1},\dots,\widetilde T_{s+m}$
equals \(P\) with probability \((1-\rho)^m\). Therefore, by the union bound,
\[
    \Pr[\mathcal N]
    \ge 1 - (n-m+1)(1-\rho)^m
    \ge 1-n e^{-\rho m}
    \ge \frac34.
\]

Let \(\mathcal A\) be an arbitrary randomized algorithm that is correct with probability at least \(2/3\) on every input. Fix $Z$ satisfying the event $\mathcal N$ and fix a block \(B_j\). Let \(Z^{[j]}\) be the text obtained from \(Z\) by replacing every symbol in \(B_j\) by \(0\). Note that \(Z\) is a no instance. 
Consider the executions of \(\mathcal A\) on \(Z\) and \(Z^{[j]}\). Let \(\mathcal H_j\) be the event that the execution on \(Z\) queries a position \(i\in B_j\) satisfying \(Z_i\neq 0\). Until \(\mathcal H_j\) occurs, every answer received by the two executions is identical. Consequently,
\[
\begin{aligned}
    \Pr_Z[\mathcal H_j] \ge \Pr_Z[\mathcal A\text{ outputs \textsc{No}}] -\Pr_{Z^{[j]}}[\mathcal A\text{ outputs \textsc{No}}] \ge \frac23-\frac13=\frac13.
\end{aligned}
\]
Therefore, $\Pr_{\mathcal A}[\mathcal H_j] \ge \frac13\Pr[\mathcal N]\ge \frac14$.

We assume without loss of generality that no position is queried more than once. Let \(q_j\) be the number of positions queried in \(B_j\), and let \(k_j\) be the number of those positions whose symbols are different from \(0\). 
For every \(q\), conditional on the transcript before the \(q\)th query to \(B_j\), the value at that query is an independent \(\operatorname{Bernoulli}(\rho)\) random variable. Therefore
$    \Pr[\text{\(q\)th query to \(B_j\) is not $0$}]
    =\rho\,\Pr[q_j\ge q]$ and hence it follows that
\[
\begin{aligned}
    \mathbb E[k_j]
    = \sum_{q\ge1}
    \Pr[\text{$q$th query to \(B_j\) is not $0$}]
    =\rho\sum_{q\ge1}\Pr[q_j\ge q]
    =\rho\,\mathbb E[q_j].
\end{aligned}
\]
From the definition $\mathcal{H}_j$, we get \(\mathbf 1_{\mathcal H_j}\le k_j\). Hence $\rho\,\mathbb E[q_j] = \mathbb E[k_j] \ge \Pr[\mathcal H_j] \ge \frac14$, and consequently $\mathbb E[q_j]\ge\frac{1}{4\rho}$.
Since the blocks \(B_1,\ldots,B_{n/m}\) are disjoint, the total number \(Q\) of queries satisfies
\[
    \mathbb E[Q]
    \ge\sum_{j=1}^{n/m}\mathbb E[q_j]
    =\Omega\!\left(\frac{n}{\rho m}\right).
\]
Our desired result follows.
\end{proof}

\subsubsection{Quantum query complexity}

\begin{lemma} \label{lem:pattern_matching_subroutine}
Let $P \in \Sigma^m$ and $T \in \Sigma^n$ be any pattern and text, respectively. Suppose that $\rho m \geq c \log n$ for some large enough constant $c$. Also, suppose that $m$ is even and $m/2$ divides $n-m+1$. The positions in $\{0,\ldots,n-m\}$ can be partitioned into $k=\frac{2(n-m+1)}{m}$ consecutive blocks \(B_1,\ldots,B_k\) such that, for every \(j \in [k]\) and $\widetilde T \sim N_\sigma(T)$, there is a quantum algorithm $\mathcal{A}$ with acceptance probability \(q_j(\widetilde T)\) satisfying the following.
\begin{enumerate}
    \item If \(B_j\) contains the starting position of an occurrence of \(P\), then \(q_j(\widetilde T)=1\),
    \item $\mathbb E_{\widetilde T\sim N_\sigma(T)}[q_j(\widetilde T)]
        \le n^{-3}$, and
    \item $\mathcal A$ makes $O\!\left(\frac{\log n}{\sqrt\rho}\right)$ queries.
\end{enumerate}
\end{lemma}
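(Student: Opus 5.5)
The plan is to test, for each block, a set of shifts that all share one fixed window of the text. Each block $B_j$ consists of $m/2$ consecutive starting positions $\{b_j,\ldots,b_j+m/2-1\}$. Every occurrence starting at some $s\in B_j$ covers the window $W_j=[b_j+m/2,\,b_j+m-1]$, which has length $m/2$ (taking $s$ up to $b_j+m/2-1$, we get $s+m-1\ge b_j+m-1$).

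The algorithm $\mathcal A$ for block $j$ works as follows.
\begin{enumerate}
\item Pick $L=\lceil C\log n/\rho\rceil\le m/2$ positions of the text. This is possible since $\rho m\ge c\log n$.
\item For each candidate $s\in B_j$, declare $s$ \emph{rejected} if $\widetilde T$ disagrees with $P$ at some checked position.
\item Accept iff some $s\in B_j$ survives.
\end{enumerate}
Checking a single shift against all of its $m$ positions via Grover search for a mismatch, on a string that mismatches at a $\ge\rho/2$ fraction of positions, costs $O(\log n/\sqrt\rho)$ queries with one-sided error. A true occurrence has no mismatch, so it is always accepted.

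The difficulty is combining this over the $m/2$ shifts without paying $\sqrt{m}$ extra. I would exploit the known pattern: for a fixed shift, the only things that matter are the text values. So I would instead run a \emph{single} randomized test. Choose uniformly random $i\in W_j$, query $\widetilde T_i$, and keep only the shifts $s\in B_j$ with $P_{i-s}=\widetilde T_i$. Repeat with amplitude amplification.

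Concretely, define an acceptance predicate on the whole block. Accept iff there is $s\in B_j$ consistent with all queried positions. To keep this cheap and one-sided, I would use a Grover search over positions $i\in W_j$ for a witness "no $s\in B_j$ survives", run on about $\log n$ sampled positions. This needs an argument that for a non-occurring block, random positions kill all $m/2$ shifts. That is a union bound of $m/2$ shifts times $(1-\rho)^{\#\text{positions}}$, requiring $\Theta(\log n/\rho)$ sampled positions. Searching these by amplitude amplification only gives a $\sqrt{\cdot}$ speedup if the "kill" events are located by Grover, which is nontrivial.

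Simpler alternative, which I would commit to:
\begin{enumerate}
\item Fix a deterministic set $S\subseteq W_j$ of $L=\Theta(\log n/\rho)$ positions.
\item Run Grover (the Theorem~\ref{thm:Fixed_AA} style, exact-search variant) over $S$. A position $i\in S$ is marked if $\widetilde T_i\ne P_{i-s}$ for every $s\in B_j$, or more loosely for all surviving shifts.
\end{enumerate}
This fails for general shifts, so instead I would handle the bound on expectation directly.

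Item 2 must be proved in expectation over $N_\sigma(T)$, not for a fixed text. For a fixed shift $s$, the number of mismatches on $S$ is a sum of independent indicators, each with probability $\ge\rho$ by Lemma~\ref{lem:coordinate}. The Chernoff bound (Lemma~\ref{lem:chernoff2}) therefore gives $\ge\rho L/2=\Omega(\log n)$ mismatches except with probability $n^{-5}$. A union bound over $s\in B_j$ then gives the no-match failure event with probability $\le n^{-4}$. On the good event, Grover search with $O(\sqrt{L/(\rho L)})\cdot\log n$ iterations per shift finds a mismatch with error $n^{-4}$ when each shift has $\ge\rho L/2$ mismatches. Total expected acceptance is then $\le n^{-3}$.

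The main obstacle is testing all $m/2$ shifts in a block simultaneously with only $O(\log n/\sqrt\rho)$ queries. Naive per-shift testing costs a $\sqrt m$ factor. My fix is to order the check: search the positions of $S$ for a mismatch of the \emph{leftmost surviving} shift. Since knowing $P$ lets us compute, from the set of queried values, which shifts remain alive, I would argue that each amplified search eliminates a constant fraction of the survivors. The point that needs care is that a single position's value eliminates many shifts at once; this holds because $\widetilde T$ is rerandomized there. With that, $O(\log m)$ rounds suffice, each costing $O(\sqrt{1/\rho}\,)$ times a polylog factor.
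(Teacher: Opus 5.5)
Your proposal has a genuine gap in the step you commit to at the end. You claim that a Grover-found mismatch with the \emph{leftmost} surviving shift removes a constant fraction of the survivors. This is false, and your stated reason (that $\widetilde T$ is rerandomized there) is backwards. A mismatch is typically found at a position where $\widetilde T_i=T_i$ is the adversary's symbol, and that symbol can agree with nearly every other candidate. Take $\Sigma=\{0,1\}$, $P=1^{m/2}0^{m/2}$ and $T=1^n$, which contains no occurrence of $P$. In the block starting at $a_j$, shift $a_j+t$ places $1^{t+1}0^{m/2-t-1}$ on the common window. Against $\widetilde T\approx 1^{m/2}$, a mismatch with the leftmost survivor $t_0$ is typically some $x>t_0$ with $\widetilde T_x=1$, after which exactly the shifts $t\ge x$ survive. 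If the searched positions are the first $L=\Theta(\log n/\rho)$ window positions, each round removes at most $L\ll m/2$ shifts, so your $O(\log m)$ bound on the number of rounds is unsupported. Worse, suppose these positions are searched as disjoint chunks from left to right. Then no position $\le t_0$ is ever examined again, and a rerandomized $0$ is never a mismatch. All shifts $t\ge L$ survive to the end, so the block is accepted with constant probability, which violates Item~2.

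The missing idea is the choice of reference string. The paper searches each fresh chunk of $\lceil 1/\rho\rceil$ window positions for a mismatch with the coordinatewise \emph{plurality} string $Z$ of the current candidate set $\mathcal D_j'$ of aligned pattern substrings. If $\widetilde T_i\neq Z_i$, then $\widetilde T_i$ appears at coordinate $i$ in at most half of the candidates. So every successful update halves $|\mathcal D_j'|$ whatever the value of $\widetilde T_i$, and $\lceil\log_2|\mathcal D_j|\rceil+1$ successes empty the set. Since $Z$ is determined by the history and the chunk is freshly perturbed, Lemma~\ref{lem:coordinate} gives a mismatch in the chunk with probability at least $1-e^{-1}$. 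One unamplified $O(1/\sqrt\rho)$-query Grover search then finds it with constant probability. A Chernoff bound over $\Theta(\log n)$ chunks bounds the acceptance probability by $n^{-3}$, using exactly $O(\log n/\sqrt\rho)$ queries. Even granting halving, your accounting ($O(\log m)$ rounds, each amplified by a polylogarithmic factor) would lose a polylogarithmic factor relative to this bound; letting individual chunk searches fail and counting successes avoids that loss. Your argument for Item~1 (a true occurrence is never eliminated) is fine.
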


\begin{proof}
Let $\widetilde T \sim N_\sigma(T)$. Let $a_j$ be the starting position of block $B_j$. Then, any length-$m$ substring of $\widetilde T$ beginning at some position in $B_j$ will contain the positions in
$C_j=\{a_j+m/2,\ldots,a_j+m-1\}.$

For \(i\in B_j\), let \(D_{i,j}\) be the substring of \(P\) aligned with \(C_j\) when \(P\) starts at position \(i\) in $\widetilde T$, and define
$\mathcal D_j=\{D_{i,j}:i\in B_j\}.$
Thus, \(|\mathcal D_j|\le m/2\), and an occurrence of $P$ beginning in \(B_j\) implies $\widetilde T_{C_j}\in\mathcal D_j.$

Now, fix a $j \in [k]$ and some large enough constant $c'$. Note that
$|B_j| = m/2 \geq \lceil c'\log n\rceil \cdot \lceil 1/\rho\rceil = \ell.$
Partition the first \(\ell\) positions of \(C_j\) into \(\lceil\log n\rceil\) disjoint chunks of length \(\lceil 1/\rho\rceil\).
The algorithm $\mathcal{A}$ will maintain a set $\mathcal D_j'\subseteq\mathcal D_j$, initially \(\mathcal D_j'=\mathcal D_j\). Let $Z$ be the majority string for the set of strings in $\mathcal D_j'$. That is, for each position \(i\), let \(Z_i\) be the symbol having maximum frequency among
$\{S_i:S\in\mathcal D_j'\}.$

Use Grover's search algorithm~\cite{Gro96} to find a position \(i\) for which $\widetilde T_i\neq Z_i$
using \(O(\sqrt{1/\rho})\) queries. If such an $i$ is found, update $\mathcal D_j'\leftarrow
\{S\in\mathcal D_j':S_i=\widetilde T_i\}.$
Reject if \(\mathcal D_j'\) becomes empty, and accept if it remains nonempty after all \(\lceil c'\log n\rceil\) chunks.
Every successful update reduces \(|\mathcal D_j'|\) by at least a factor of two. Indeed, if the frequency of \(Z_i\) is greater than \(|\mathcal D_j'|/2\), then every other symbol (including $\widetilde T_i$) has frequency less than \(|\mathcal D_j'|/2\). Therefore
$r =\left\lceil\log_2|\mathcal D_j|\right\rceil+1
\le \left\lceil\log_2 n\right\rceil+1$
successful updates force \(\mathcal D_j'\) to become empty.
If \(\widetilde T_{C_j}=S^\star\) for some \(S^\star\in\mathcal D_j\), every update retains \(S^\star\). Hence, $\mathcal{A}$ accepts with certainty whenever \(B_j\) contains the starting position of an occurrence.

Next, we bound the acceptance probability of $\mathcal{A}$ under smoothing.
Condition on the history before a chunk while
\(\mathcal D_j'\neq\textsf{Var}nothing\). For every position \(i\) and fixed symbol \(Z_i\), we have $\Pr[\widetilde T_i\neq Z_i]\ge\rho$ by Lemma~\ref{lem:coordinate}.
The perturbations in each chunk are independent, and therefore the probability that a fresh chunk contains no disagreement with $Z$ is $\leq (1-\rho)^{\lceil 1/\rho\rceil} \le e^{-1}.$
Consequently, the probability $\alpha$ of a successful update, which is distributed according to \(\operatorname{Bernoulli}(\alpha)\), is at least $\frac23(1-e^{-1})>0.$
Hence, for \(c\) sufficiently large, the Chernoff bound (Lemma~\ref{lem:chernoff2}) gives
\[
    \Pr[\mathcal A \text{ accepts}]
    \le
    \Pr[\operatorname{Bin}(\lceil c' \log n\rceil,\alpha)<r]
    \le n^{-3}.
\]
Finally, the cost of the algorithm $\mathcal A$ is $O\!\left(\frac{\log n}{\sqrt\rho}\right)$ queries.
\end{proof}

\begin{lemma}[Upper bound]
Suppose $\rho m \geq \Omega(\log n)$. Then, $\cQ_\sigma(\mathsf{match}) = \tilde{O}\left(\sqrt{{n}/{\rho m}}\right)$.
\end{lemma}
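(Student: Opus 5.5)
We build the quantum algorithm on top of Lemma~\ref{lem:pattern_matching_subroutine}, using a search over the $k=2(n-m+1)/m=O(n/m)$ blocks. First we reduce to the case where $m$ is even and $m/2$ divides $n-m+1$. To do this, we pad the text with symbols known to the algorithm, or shorten the pattern by one symbol and verify the last symbol separately; either changes $m$ and $n$ only by constant factors.

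For each block $B_j$ the subroutine $\mathcal A_j$ uses $O(\log n/\sqrt\rho)$ queries and always accepts when $B_j$ contains an occurrence. On a perturbed text its expected acceptance probability is at most $n^{-3}$. We run amplitude amplification (Grover search) over $j\in[k]$, with the coherent version of $\mathcal A_j$ as the checking procedure, using $O(\sqrt{k})$ iterations. For a \textsc{Yes} instance, some block contains $s^*$ and accepts with certainty, so search finds a candidate block. For correctness on every input, we then verify the candidate classically. We query the $O(m)$ positions of the block and its window, and test every start position in $B_j$ exactly. If a match is found we output it; otherwise we continue or fall back to querying the full text. Thus the algorithm never errs beyond bounded error, and only its cost is random.

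For the cost analysis, take a \textsc{No} instance ($\widetilde T\sim N_\sigma(T)$). By linearity and Markov's inequality, with probability at least $1-k n^{-3}\geq 1-n^{-2}$ over the smoothing, every $q_j(\widetilde T)\le n^{-1}$ or so. On this event the marked fraction is negligible, the search returns a false candidate only with probability $O(k\cdot n^{-1})$ after amplification, and the fallback costs $O(n)$ only with probability $O(1/n)$. For the \textsc{Yes} instance of Definition~\ref{defn:smoothedpatmatchwithpatternknown}, the blocks not containing $s^*$ may overlap the planted copy of $P$. Lemma~\ref{lem:pattern_matching_subroutine}'s bound applies to each other block only partially, but the true block is marked with certainty, so the search succeeds within $O(\sqrt k)$ iterations regardless; the expected number of spurious verifications is bounded by the same argument.

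The total expected cost is
\[
O\!\left(\sqrt{n/m}\cdot \frac{\log n}{\sqrt\rho}\right)+O(m)\cdot(\text{verifications})+O(n)\cdot O(1/n)=\widetilde O\!\left(\sqrt{n/(\rho m)}\right)+O(m).
\]

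The main obstacle is the $O(m)$ classical verification, which can exceed $\sqrt{n/(\rho m)}$. We intend to avoid it by verifying quantumly. Grover search for a mismatch inside the fixed window costs $O(\sqrt m)$, which is still too much. Instead, we verify a candidate start $s$ by Grover search over its $m$ positions, but only after the subroutine has already pinned down a unique surviving string in $\mathcal D_j'$, which fixes $s$. We then argue that in the \textsc{Yes} case the output $s^*$ only needs to be correct with bounded probability. That is ensured because $\mathcal A_j$ accepting with certainty on the true block and with probability at most $n^{-3}$ elsewhere already identifies $s^*$ with high probability. The plan is therefore to drop full verification and rely on the smoothed soundness of Lemma~\ref{lem:pattern_matching_subroutine}. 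The delicate step is justifying bounded error on every input rather than in expectation, and a cheap verification of the identified offset is the natural fix to try first.
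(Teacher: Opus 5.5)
Your block decomposition and the use of the coherent $\mathcal A_j$ inside amplitude amplification over $j\in[k]$ are the paper's. However, the proposal never reaches a complete argument, and the resolution you sketch at the end does not work. Dropping verification breaks bounded-error correctness on \emph{every} text. Lemma~\ref{lem:pattern_matching_subroutine} only gives $\mathbb E_{\widetilde T\sim N_\sigma(T)}[q_j(\widetilde T)]\le n^{-3}$, not a pointwise bound. Moreover, any \textsc{No} text that agrees with some $D_{i,j}\in\mathcal D_j$ on the examined prefix of $C_j$ makes $\mathcal A_j$ accept with certainty, because it never empties $\mathcal D_j'$. So acceptance neither certifies a match nor pins down $s^*$. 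The target you are chasing, cheap cost on the planted \textsc{Yes} texts of Definition~\ref{defn:smoothedpatmatchwithpatternknown}, is also unattainable in general. For suitable centers, an algorithm correct on all inputs must distinguish such a text from the \textsc{No} texts obtained by corrupting one symbol in the middle of the occurrence. The standard hybrid argument then forces $\Omega(\sqrt m)$ queries, which exceeds $\sqrt{n/(\rho m)}$ when, say, $m=\Theta(n)$ and $\rho$ is constant.

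The missing idea is that $\cQ_\sigma(\mathsf{match}_P,T)$ is an expected cost over $\widetilde T\sim N_\sigma(T)$ only. The planted distribution never enters it, and the \textsc{Yes} texts that do occur under $N_\sigma(T)$ are already covered by $\mathbb E[q_j]\le n^{-3}$, which holds for every center $T$. Use fixed-point amplification (Theorem~\ref{thm:Fixed_AA}) with $a^2=1/k$. Its acceptance probability is $O(\min\{1,k\,p(\widetilde T)\})$, where $p(\widetilde T)=\frac1k\sum_j q_j(\widetilde T)$, so by linearity
\[
\Pr[\text{accept}]=O\Bigl(\sum_{j=1}^k\mathbb E_{\widetilde T\sim N_\sigma(T)}\bigl[q_j(\widetilde T)\bigr]\Bigr)=O(kn^{-3})=O(n^{-2}).
\]
Hence you may verify in the crudest way, by reading all of $\widetilde T$ upon acceptance, at an expected extra cost of $O(n\cdot n^{-2})$. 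Your $O(m)$ verification term is likewise multiplied by $O(n^{-2})$ rather than by a constant, so it was never an obstacle. Your Markov step is also too lossy. It yields a false-candidate probability of $O(k/n)=O(1/m)$, and if a false candidate triggers the $O(n)$ fallback, the expected cost is $O(n/m)$. That can exceed $\sqrt{n/(\rho m)}$, for example when $m=\sqrt n$ and $\rho$ is constant. Averaging the amplified acceptance probability directly, as above, avoids this.
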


\begin{proof}
We will suppose without loss of generality that $m$ is even and $m/2$ divides $n-m+1$. Let \(B_1,\ldots,B_k\) and \(q_1(\widetilde T),\ldots,q_k(\widetilde T)\) be given by \Cref{lem:pattern_matching_subroutine}. Prepare a uniform superposition over \(j\in[k]\) and run the algorithm $\mathcal A$ in \Cref{lem:pattern_matching_subroutine}. For a perturbed text \(\widetilde T\), its acceptance probability~is
\[
    p(\widetilde T)=\frac1k\sum_{j=1}^{k}q_j(\widetilde T).
\]
If \(\widetilde T\) contains an occurrence whose starting position is in block $B_j$, then $\mathcal{A}$ in \Cref{lem:pattern_matching_subroutine} accepts with certainty, and hence $ p(\widetilde T)\ge\frac1k$. Apply amplitude amplification (Theorem~\ref{thm:Fixed_AA}) using \(O(\sqrt k)\) calls to $\mathcal{A}$ and its inverse. When $p(\widetilde T)\ge\frac1k$, this amplification accepts with probability at least $2/3$ and for every perturbed text $\widetilde T$, it only accepts with probability $O(k \cdot p(\widetilde T))$.  
If it accepts, query the entire text $\widetilde T$ and decide pattern matching
exactly.  This is also sufficient to ouput the starting position of a match. Otherwise, output \textsc{No}. If $P$ occurs in $\widetilde T$, we will succeed with probability at least \(2/3\); otherwise, we will return \textsc{No} certainly. We can now calculate the query complexity of our algorithm. Using \Cref{lem:pattern_matching_subroutine}, the probability of our procedure accepting and thus needing to query the entire $\widetilde T$ is
\begin{align*}
    \Pr[\text{accept}] = O\left(k \cdot\,\mathbb E_{\widetilde T \sim N_\sigma(T)} [p(\widetilde T)]\right) = O\left(\sum_{j=1}^{k} E_{\widetilde T \sim N_\sigma(T)}[q_j(\widetilde T)]\right) = O\left(k n^{-3}\right)  = O\left(n^{-2}\right)
\end{align*}
It follows that 
\begin{align*}
    \cQ_\sigma(\mathsf{match}) = O\left(\frac{\sqrt k \log n}{\sqrt \rho} + n \cdot n^{-2}\right) = \tilde{O}\left(\sqrt{\frac{n}{\rho m}}\right).
\end{align*}
\end{proof}

To prove our quantum lower bound, we will use the following adversary lower bound.

\begin{theorem}[Distributional adversary bound~\cite{BBHKLS17}]\label{thm:distributionAdvMatrix}
Let $\mathcal A$ be a quantum algorithm making $T$ queries to an input $x=(x_1,\ldots,x_n)$, and let $\mathcal P,\mathcal Q$ be two probability distributions on the inputs. Suppose that the acceptance probabilities of $\mathcal A$ on inputs drawn from $\mathcal P$ and $\mathcal Q$ are $s_{\mathcal P}$ and $s_{\mathcal Q}$, respectively. Let $\delta_{\mathcal P}[x]=\sqrt{\mathcal P(x)}$ and $\delta_{\mathcal Q}[y]=\sqrt{\mathcal Q(y)}$, define $\tau(s_{\mathcal P},s_{\mathcal Q})=\sqrt{s_{\mathcal P}s_{\mathcal Q}}+\sqrt{(1-s_{\mathcal P})(1-s_{\mathcal Q})}$, and let $\Delta_j[x,y]=\one\{x_j\neq y_j\}$. Then, for any adversary matrix $\Gamma$ whose rows and columns are indexed by the supports of $\mathcal P$ and $\mathcal Q$, respectively,
\[
T=\Omega\left(\min_{j\in[n]}\frac{\delta_{\mathcal P}^{T}\Gamma\delta_{\mathcal Q}-\tau(s_{\mathcal P},s_{\mathcal Q})\|\Gamma\|}{\|\Gamma\circ\Delta_j\|}\right).
\]
\end{theorem}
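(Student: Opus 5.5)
The plan is the standard progress-measure (adversary) argument, run on the distribution-weighted Gram matrix of the algorithm's states. Let $\ket{\psi_x^t}$ denote the state of $\mathcal A$ on input $x$ after $t$ queries. Define the progress measure
\[
W^{t}=\sum_{x\in\operatorname{supp}\mathcal P}\ \sum_{y\in\operatorname{supp}\mathcal Q}\Gamma[x,y]\,\delta_{\mathcal P}[x]\,\delta_{\mathcal Q}[y]\,\langle\psi_x^t|\psi_y^t\rangle .
\]
The proof has three steps: compute $W^0$, upper-bound $|W^T|$, and bound the change $|W^{t+1}-W^t|$ caused by one query. Combining these gives
\[
T\cdot 2\max_j\|\Gamma\circ\Delta_j\|\ \ge\ \delta_{\mathcal P}^{T}\Gamma\delta_{\mathcal Q}-\tau\|\Gamma\|,
\]
which is the claim up to the constant hidden in $\Omega$. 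Taking the maximum over $j$ in the per-query bound is what yields the $\min_j$ in the statement.

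\textbf{Initial value.} Before any query, $\ket{\psi_x^0}$ is the same for all $x$, so $W^0=\delta_{\mathcal P}^T\Gamma\delta_{\mathcal Q}$.

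\textbf{Final value.} Write $\ket{\psi_x^T}=\ket{u_x}+\ket{v_x}$, where $\ket{u_x}=\Pi_{\rm acc}\ket{\psi_x^T}$ is the accepting component and $\ket{v_x}$ the rejecting component. These are orthogonal, so $\langle\psi_x|\psi_y\rangle=\langle u_x|u_y\rangle+\langle v_x|v_y\rangle$.

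For the accepting part, consider the block matrix $\Gamma\otimes I$, the vector $\bigoplus_x\delta_{\mathcal P}[x]\ket{u_x}$, and the analogous vector indexed by $y$. Cauchy--Schwarz against this operator gives
\[
\Bigl|\sum_{x,y}\Gamma[x,y]\delta_{\mathcal P}[x]\delta_{\mathcal Q}[y]\langle u_x|u_y\rangle\Bigr|\le\|\Gamma\|\sqrt{\textstyle\sum_x\mathcal P(x)\|u_x\|^2}\sqrt{\textstyle\sum_y\mathcal Q(y)\|u_y\|^2}=\|\Gamma\|\sqrt{s_{\mathcal P}s_{\mathcal Q}} .
\]
The rejecting part is handled identically and gives $\|\Gamma\|\sqrt{(1-s_{\mathcal P})(1-s_{\mathcal Q})}$. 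Hence $|W^T|\le\tau(s_{\mathcal P},s_{\mathcal Q})\|\Gamma\|$.

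\textbf{One query.} Unitaries that do not query leave every inner product, and hence $W$, unchanged. For a query $O_x$, decompose the pre-query state by the queried index, $\ket{\psi_x}=\sum_j\Pi_j\ket{\psi_x}$. The inner product can change only on terms where $x_j\neq y_j$, and on those terms the change is
\[
\langle\psi_x|\Pi_j(O_x^\dagger O_y-I)\Pi_j|\psi_y\rangle,
\]
which has modulus at most $2\|\Pi_j\psi_x\|\,\|\Pi_j\psi_y\|$. This gives
\[
|W^{t+1}-W^t|\le\sum_j\Bigl|\sum_{x,y}(\Gamma\circ\Delta_j)[x,y]\,\delta_{\mathcal P}[x]\delta_{\mathcal Q}[y]\,\langle\phi_{x,j}|\phi'_{y,j}\rangle\Bigr|,
\]
where $\ket{\phi_{x,j}}=\Pi_j\ket{\psi_x}$ and $\ket{\phi'_{y,j}}=\Pi_j(O_x^\dagger O_y-I)\ket{\psi_y}$ has norm at most $2\|\Pi_j\psi_y\|$. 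For each fixed $j$, apply the same operator-norm Cauchy--Schwarz as in the final-value step, now with $\Gamma\circ\Delta_j$ in place of $\Gamma$. Then sum over $j$ using Cauchy--Schwarz once more together with $\sum_j\|\Pi_j\psi_x\|^2=1$. The result is $|W^{t+1}-W^t|\le 2\max_j\|\Gamma\circ\Delta_j\|$.

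\textbf{Main obstacle.} The delicate point is the per-query step. The operator $O_x^\dagger O_y$ depends on both $x$ and $y$, so the changed inner products do not factor immediately into a product of a $\Gamma\circ\Delta_j$-weighted Gram form. I would handle this as in the standard Høyer--Lee--Špalek proof of the general adversary bound: route each query through an ancilla holding the answer $x_j$, and note that $\langle\psi_x|(I-O_x^\dagger O_y)|\psi_y\rangle$ is supported only on indices $j$ with $x_j\ne y_j$, i.e.\ on the entries $\Delta_j[x,y]=1$. The weights $\delta_{\mathcal P},\delta_{\mathcal Q}$ simply ride along as fixed scalars, since the normalizations $\sum_x\delta_{\mathcal P}[x]^2=\sum_y\delta_{\mathcal Q}[y]^2=1$ are exactly what make the Cauchy--Schwarz bounds close.
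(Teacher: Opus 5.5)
Your initial-value and final-value computations are correct, and the progress-measure framework is the standard one behind the cited result; the paper itself only cites the theorem and gives no proof. The gap is the per-query step you flag, and as written that step fails. The vector $\ket{\phi'_{y,j}}=\Pi_j(O_x^\dagger O_y-I)\ket{\psi_y}$ depends on $x$ through $O_x^\dagger$. So $\sum_{x,y}(\Gamma\circ\Delta_j)[x,y]\delta_{\mathcal P}[x]\delta_{\mathcal Q}[y]\langle\phi_{x,j}|\phi'_{y,j}\rangle$ is not of the form $\bra{U}\bigl((\Gamma\circ\Delta_j)\otimes I\bigr)\ket{V}$ with $U$ built from $x$ alone and $V$ from $y$ alone, and the operator-norm Cauchy--Schwarz does not apply. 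Falling back on your entrywise estimate $2\|\Pi_j\psi_x\|\,\|\Pi_j\psi_y\|$ only gives the bound with $\bigl\|\,|\Gamma|\circ\Delta_j\bigr\|$ (entrywise absolute value). That equals $\|\Gamma\circ\Delta_j\|$ for nonnegative $\Gamma$, which covers the paper's application, but not for the signed adversary matrices the statement allows. Your remark about routing the answer through an ancilla names the difficulty without removing it.

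The missing idea is to compare the pre-query and post-query Gram forms separately rather than forming $O_x^\dagger O_y$. The projector $\Pi_j$ commutes with the oracle, and $O_x\Pi_j=O_y\Pi_j$ whenever $x_j=y_j$. Hence $\bra{\psi_x}O_x^\dagger\Pi_jO_y\ket{\psi_y}=\bra{\psi_x}\Pi_j\ket{\psi_y}$ in that case, and so
\[
W^{t+1}-W^t=\sum_j\Bigl(\bra{\Phi_j}\bigl((\Gamma\circ\Delta_j)\otimes I\bigr)\ket{\Phi'_j}-\bra{\Psi_j}\bigl((\Gamma\circ\Delta_j)\otimes I\bigr)\ket{\Psi'_j}\Bigr).
\]
Here $\ket{\Psi_j}=\sum_x\delta_{\mathcal P}[x]\ket{x}\Pi_j\ket{\psi_x}$ and $\ket{\Phi_j}=\sum_x\delta_{\mathcal P}[x]\ket{x}\Pi_jO_x\ket{\psi_x}$, and the primed vectors are defined analogously from $\delta_{\mathcal Q}$ and $y$. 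Now every vector depends on a single input. Since $O_x$ is unitary and commutes with $\Pi_j$, we get $\|\Phi_j\|=\|\Psi_j\|=a_j$ with $a_j^2=\sum_x\mathcal P(x)\|\Pi_j\psi_x\|^2$, and similarly $\|\Phi'_j\|=\|\Psi'_j\|=b_j$. Cauchy--Schwarz with $\sum_ja_j^2=\sum_jb_j^2=1$ gives $\sum_ja_jb_j\le1$, hence $|W^{t+1}-W^t|\le2\max_j\|\Gamma\circ\Delta_j\|$ for arbitrary real $\Gamma$ and arbitrary alphabets. With a Boolean phase oracle, $\Pi_j(O_x^\dagger O_y-I)\Pi_j=-2\Delta_j[x,y]\Pi_j$ is a scalar and your version goes through directly. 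However, the paper applies the theorem to texts over a general alphabet $\Sigma$. With this replacement, your argument proves the statement with constant $1/2$.
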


\begin{lemma}[Lower bound]
Suppose $\rho m \geq \Omega(\log n)$. Then, $\cQ_\sigma(\mathsf{match}) = \Omega\left(\sqrt{{n}/{\rho m}}\right)$.
\end{lemma}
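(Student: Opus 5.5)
We mirror the randomized lower bound, replacing the block-hitting argument with the distributional adversary bound of Theorem~\ref{thm:distributionAdvMatrix}. Fix the pattern $P=0^m$ and the center $T=0^n$, and partition $[n]$ into $k=n/m$ disjoint blocks $B_1,\ldots,B_k$ of length $m$.

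\textbf{Choice of distributions.} Let $\mathcal P$ be $N_\sigma(0^n)$ conditioned on the event $\mathcal N$ that no occurrence of $0^m$ appears. Since $\rho m\ge c\log n$, we have $\Pr[\mathcal N]\ge 1-ne^{-\rho m}\ge 3/4$, so conditioning changes the law only slightly. Let $\mathcal Q$ be obtained by drawing $Z\sim\mathcal P$, choosing $j\in[k]$ uniformly, and zeroing out block $B_j$. This is exactly the yes-instance of Definition~\ref{defn:smoothedpatmatchwithpatternknown} with the occurrence placed at $B_j$ (the rest is perturbed). Any algorithm with smoothed cost $t$ has, by Lemma~\ref{lem:expected-query-truncation}, a worst-case $O(t)$-query version that separates $\mathcal P$ from $\mathcal Q$ with constant bias. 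Hence $\tau(s_{\mathcal P},s_{\mathcal Q})\le 1-\Omega(1)$.

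\textbf{Adversary matrix.} Set $\Gamma[x,y]=\sqrt{\mathcal P(x)\mathcal Q(y)}\cdot\one\{y=x^{[j]}\text{ for some }j\}$, renormalized. More cleanly, work with the unconditioned product measure and the map $x\mapsto x^{[j]}$, so that $\Gamma=\sum_j \Gamma_j$, where $\Gamma_j$ pairs $x$ with $x^{[j]}$. Then $\delta_{\mathcal P}^T\Gamma\delta_{\mathcal Q}$ should be $\Theta(\|\Gamma\|)$. Indeed, $\Gamma$ is essentially a sum of $k$ ``collapse block $j$'' operators, each of norm comparable to its weight, and the rank-one direction $\delta_{\mathcal P}\delta_{\mathcal Q}^T$ captures most of the norm. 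I will verify this with a Schur-test (row/column sum) bound on $\|\Gamma\|$ together with the explicit value of $\delta_{\mathcal P}^T\Gamma\delta_{\mathcal Q}$.

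\textbf{Key estimate.} For a position $i\in B_j$, $\Gamma\circ\Delta_i$ keeps only the pairs $(x,x^{[j]})$ with $x_i\ne 0$, which happens with probability $\rho$ under $\mathcal P$. The Schur test then gives $\|\Gamma\circ\Delta_i\|\lesssim \sqrt{\rho/k}\,\|\Gamma\|$: row sums pick up a factor $\rho$ and column sums a factor $1/k$, reflecting that a given yes-input arises from one of $k$ blocks. Plugging these into Theorem~\ref{thm:distributionAdvMatrix} gives $T=\Omega(\sqrt{k/\rho})=\Omega(\sqrt{n/(\rho m)})$.

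\textbf{Main obstacle.} The main obstacle is making the Schur-test weights work despite the conditioning on $\mathcal N$ and the fact that the map $x\mapsto x^{[j]}$ is many-to-one. Many $x$ collapse to the same $y$, so the $\mathcal Q$-weight of $y$ includes the factor $(1-\rho)^{-m}$-type ratios. I would handle this by choosing $\Gamma[x,y]=\sqrt{\mathcal P(x)\,\mathcal P(x\mid x^{[j]}=y)}$-style weights, so that column sums equal $1$ exactly. The bound $\Pr[\mathcal N^c]\le 1/4$ is then used to absorb the conditioning error into constants.
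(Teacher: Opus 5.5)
Your overall route is the paper's: center $P=0^m$, $T=0^n$, $k=n/m$ blocks, $\mathcal Q$ obtained by zeroing a uniformly random block, and Theorem~\ref{thm:distributionAdvMatrix} with target ratio $\sqrt{k/\rho}$. The gap is in the step that carries the proof, the adversary matrix. Neither explicit matrix you propose works.

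In Theorem~\ref{thm:distributionAdvMatrix} the distributions already enter through $\delta_{\mathcal P},\delta_{\mathcal Q}$, so $\Gamma$ must not carry $\sqrt{\mathcal P(x)}$ or $\sqrt{\mathcal Q(y)}$ a second time. Let $\nu$ be the law of one block and $w$ the other $k-1$ blocks, and work in the product setting you switch to.
\begin{itemize}
\item Your choice $\Gamma[x,x^{[j]}]=\sqrt{\mathcal P(x)\mathcal Q(x^{[j]})}$ equals $\Gamma^\star D$. Here $\Gamma^\star$ is the correct matrix below, and $D$ is diagonal with entry $\nu^{\otimes(k-1)}(w)/\sqrt k$ on column $(j,w)$.
\item Hence $\delta_{\mathcal P}^T\Gamma\delta_{\mathcal Q}=\sum_w\nu^{\otimes(k-1)}(w)^2$, a collision probability. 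The single column of the likeliest $w$ already gives $\|\Gamma\|\ge\max_w\nu^{\otimes(k-1)}(w)/\sqrt k$.
\item Whenever $\sigma\le 1-\Omega(1)$, the ratio of these two quantities factorizes over coordinates and is $e^{-\Omega(\rho n)}$. So $\delta_{\mathcal P}^T\Gamma\delta_{\mathcal Q}/\|\Gamma\|\le\sqrt k\,e^{-\Omega(\rho n)}=o(1)$, using $\rho n\ge c\log n$, and the numerator $\delta_{\mathcal P}^T\Gamma\delta_{\mathcal Q}-\tau\|\Gamma\|$ is negative.
\item Renormalizing cannot help, since the bound is scale-invariant.
\item Your fallback $\sqrt{\mathcal P(x)\,\mathcal P(x\mid x^{[j]}=y)}$ fails the same way: the norm is again dominated by nearly all-zero configurations.
\end{itemize}

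The right weight is the square root of the conditional law of block $j$ given the rest, and nothing else. With rows $z$ and columns $(j,w)$,
\[
\Gamma^\star_{z,(j,w)}=\sqrt{\nu(z_j)}\,\one\{z_{-j}=w\}.
\]
Then $\Gamma^\star\delta_{\mathcal Q}=\sqrt k\,\delta_{\mathcal P}$ and $(\Gamma^\star)^T\delta_{\mathcal P}=\sqrt k\,\delta_{\mathcal Q}$. So your Schur test with weights $\delta_{\mathcal P},\delta_{\mathcal Q}$ does give $\|\Gamma^\star\|\le\sqrt k=\delta_{\mathcal P}^T\Gamma^\star\delta_{\mathcal Q}$. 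The paper gets the same thing by writing $\Gamma^\star(\Gamma^\star)^T$ as a sum of $k$ commuting projections. For $i\in B_j$, the same test gives $\|\Gamma^\star\circ\Delta_i\|\le\sqrt{\Pr_{z\sim\nu}[z_i\neq0]}$.

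Three smaller repairs are also needed.

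\emph{(i) Conditioning.} Conditioning on ``no occurrence of $0^m$'' destroys the product structure over blocks, because zero runs can straddle block boundaries, and that structure is what makes the computation above exact. The paper instead conditions on each half-block cell containing a nonzero symbol. This still forces a \textsc{No} instance and has probability at least $3/4$. It gives $\mathcal P=\nu^{\otimes n/m}$, lets a \textsc{Yes} input determine $j$, and costs only $\Pr_{z\sim\nu}[z_i\ne0]\le2\rho$ by $\rho m\ge c\log n$. The unconditioned product law also works, at a constant loss in $s_{\mathcal P}$.

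\emph{(ii) Truncation.} Lemma~\ref{lem:expected-query-truncation} does not apply: it needs the expected-cost bound on every \textsc{No} input, while the smoothed cost only bounds the average over $N_\sigma(0^n)$. Instead truncate after $12L$ queries and answer \textsc{Yes}. This cannot increase $s_{\mathcal Q}\le1/3$, and Markov's inequality under $\mathcal P$ gives $s_{\mathcal P}\ge5/9$.

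\emph{(iii) Exactness.} With these values $\tau(5/9,1/3)\approx0.97$, so an unspecified $\delta_{\mathcal P}^T\Gamma\delta_{\mathcal Q}=\Theta(\|\Gamma\|)$ is not enough. You need the exact equality that $\Gamma^\star$ provides, or you must amplify first so that $\tau$ is small.
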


\begin{proof}
Fix the pattern to be \(P=0^m\) and the text to be \(T=0^n\). Suppose, without loss of generality, that $m$ divides $n$. The whole argument can be easily extend to the case when $m$ does not divide $n$. Partition the text positions $[n]$ into disjoint consecutive blocks \(B_1,\ldots,B_{n/m}\), each of length \(m\). Divide each block $B_j$ into two consecutive cells $C_{j,1}$ and $C_{j,2}$ of lengths \(\lfloor m/2\rfloor\) and \(\lceil m/2\rceil\) respectively. Suppose that $\widetilde T \sim N_\sigma(T)$. Let \(\mathcal N\) denote the event that each of these $2n/m$ cells contains a non-zero symbol. A fixed cell is equal to \(0^m\) with probability at most $(1-\rho)^{\lfloor m/2\rfloor} \le e^{-\rho\lfloor m/2\rfloor}$.
Hence, by the union bound,
\[
    \Pr[\mathcal N]
    \ge
    1 - 2(n/m) e^{-\rho\lfloor m/2\rfloor}
    \ge 1-\frac14 = \frac34.
\]
Notice that any $y$ satisfying \(\mathcal N\) is a no instance. Indeed, any run of $0$s must have length at most $m-2$ since each cell will have at least 1 non-zero symbol. 

Fix a $j \in [n/m]$. Let \(\nu\) be the distribution on $B_j$ conditioned on each of $C_{j,1}$ and $C_{j,2}$ containing a non-zero symbol, and let \(\mathcal Z\) be its support. Conditioned on \(\mathcal N\), the \(n/m\) blocks are independent with each being drawn from the distribution \(\nu^{\otimes n/m}\). 
Let $\mathcal P=\nu^{\otimes n/m}$ be the distribution of $\widetilde T$ conditioned on $\mathcal N$. Let $\mathcal Q$ be the distribution obtained by choosing $j\in[n/m]$ uniformly, replacing block $j$ by $0^m$, and drawing all remaining blocks independently from $\nu$.
Let $C$ be any cell. If an index \(i \in [n]\) belongs to \(C\), then,
\[
    \Pr_{z\sim\nu}[z_i\neq 0]=\frac{\rho}{1-(1-\rho)^{|C|}}\le 2\rho
\]
where the inequality follows since $\rho m \geq c \log n$.

We apply Theorem~\ref{thm:distributionAdvMatrix} with the following adversary matrix:
We design an adversary matrix whose rows are indexed by $z=(z_1,\ldots,z_{n/m})\in\mathcal Z^{n/m}$, which are no instances, and whose columns are indexed by \((j,w)\) with \(j\in[n/m]\) and \(w\in\mathcal Z^{n/m-1}\), which corresponds to the yes instance whose block \(j\) is \(0^m\) and the remaining blocks are given by \(w\). Define
\[
    \Gamma_{z,(j,w)}
    =
    \sqrt{\nu(z_j)}\,\mathbf 1\{z_{-j}=w\}.
\]
Let \(u_z=\sqrt{\nu(z)}\) for \(z\in\mathcal Z\). Then \(\|u\|=1\), and
\[
    \Gamma\Gamma^{\mathsf T}
    =
    \sum_{j=1}^{n/m}
    I^{\otimes(j-1)}
    \otimes |u\rangle\langle u|
    \otimes I^{\otimes(n/m-j)}.
\]
The terms in this summation are commuting projections, so this sum has norm \({n/m}\). Thus, $\|\Gamma\|=\sqrt{n/m}$.
Let $\delta_{\mathcal P}[z]=\sqrt{\mathcal P(z)}$ and
$\delta_{\mathcal Q}[(j,w)]=\sqrt{\mathcal Q(j,w)}$. By the definitions of
$\mathcal P,\mathcal Q$ and $\Gamma$, we have $\delta_{\mathcal P}^{T}\Gamma\delta_{\mathcal Q} =\sqrt{n/m} =\|\Gamma\|.$

Fix a position $i \in [n]$ and let $B_j$ be the block which contains $i$. Only the columns indexed by $(j, w)$ can differ from their corresponding rows at position \(i\). In these columns, entrywise multiplication by \(\Delta_i\) replaces \(u\) by the vector $(u_i)_z =\sqrt{\nu(z)}\,\mathbf 1\{z_i\neq 0\}$, whose squared norm is at most $\Pr_{z\sim\nu}[z_i\neq 0]$. Therefore,$\|\Gamma\circ\Delta_i\|\le\sqrt{\Pr_{z\sim\nu}[z_i\neq 0]} \le \sqrt{2 \rho}$ for every \(i\). 
Let $L$ denote the expected number of queries of an arbitrary bounded-error
algorithm under $\widetilde T\sim N_\sigma(T)$. Since
$\Pr[\mathcal N]\geq 3/4$, its expected number of queries under
$\mathcal P$ is at most $4L/3$. Truncate the algorithm after $12L$ queries,
and output \textsc{Yes} upon truncation. By Markov's inequality, under
$\mathcal P$ the truncation probability is at most $1/9$. Hence, viewing
\textsc{No} as acceptance, its acceptance probabilities satisfy
$s_{\mathcal P}\geq \frac23-\frac19=\frac59$ and $s_{\mathcal Q}\leq\frac13.$
Therefore $\tau(s_{\mathcal P},s_{\mathcal Q})\leq
\tau(5/9,1/3)<1$. Applying Theorem~\ref{thm:distributionAdvMatrix},
\[
12L
=\Omega\left(
\frac{(1-\tau(5/9,1/3))\|\Gamma\|}
{\max_i\|\Gamma\circ\Delta_i\|}
\right)
=\Omega\left(\sqrt{\frac{n}{\rho m}}\right).
\]
Thus,
$\cQ_\sigma(\mathsf{match})
=\Omega\left(\sqrt{\frac{n}{\rho m}}\right).$
\end{proof}

\section{Edit distance in the smoothed model}\label{sec:smoothedEditD}

For two strings $X,Y\in\Sigma^*$ over an arbitrary alphabet $\Sigma$, their edit distance $\textsf{ed}(X,Y)$ is the minimum number of elementary edits needed to transform $X$ into $Y$, where an edit consists of inserting a symbol, deleting a symbol, or replacing one symbol by another. Computing edit distance is a fundamental primitive in applications ranging from computational biology to text processing. Exact computation, however, appears inherently expensive: the standard algorithm runs in $O(n^2)$ time, and a truly subquadratic algorithm would refute the Strong Exponential Time Hypothesis~\cite{IP01,IPZ01}. Since the exact value is often unnecessary in applications, a large body of work has instead studied approximation.

\begin{definition}[Worst-case approximating edit distance]
Let $X,Y\in\Sigma^*$ and $\alpha\geq 1$. Given (quantum) query access to $X$ and $Y$, the goal is to output $\widetilde e$ such that
\(        \frac{1}{\alpha}\,\textsf{ed}(X,Y)
        \leq \widetilde e
        \leq \alpha\,\textsf{ed}(X,Y).
   \)
\end{definition}

There has been substantial progress on approximating edit distance in the worst case. Classically, a constant-factor approximation can be obtained in $n^{1+\varepsilon}$ time for every constant $\varepsilon>0$~\cite{AN20EditApproxLinear}. At the other extreme, obtaining an approximation arbitrarily close to $1$ appears considerably harder: the recent work of Mao and Rubinstein~\cite{MaoRubinstein26} gives a randomized $(1+\varepsilon)$-approximation in time~$
    n^2/2^{\log^{\Omega(1)}n}.
$

For very long strings, even near-linear running times can be prohibitive, motivating the search for faster algorithms under additional assumptions on the input. In this direction, Andoni and Krauthgamer~\cite{andoni2012smoothed} studied edit distance beyond the worst case through smoothed analysis. Analogously to the smoothed pattern-matching model considered in Section~\ref{sec:smoothedpattern}, their model begins with arbitrary strings $X^*,Y^*$ together with a fixed longest common subsequence. Each character is then independently perturbed with probability $\sigma$ by replacing it with a uniformly random character from the alphabet, except that perturbations of characters matched by the fixed longest common subsequence are coupled across the two strings. We call the resulting pair $(X,Y)$ a \emph{$\sigma$-smoothed~instance}.

\begin{definition}[Smoothed instance approximating edit distance]
Let $\sigma>0$ and $\alpha\geq 1$. Fix arbitrary strings $X^*,Y^*\in\Sigma^*$ together with a longest common subsequence between them, and let $(X,Y)$ be the resulting $\sigma$-smoothed instance. Given (quantum) query access to $X$ and $Y$, the goal is to output $\widetilde e$ such that
\(        \frac{1}{\alpha}\,\textsf{ed}(X,Y)
        \leq \widetilde e
        \leq \alpha\,\textsf{ed}(X,Y).
   \)
\end{definition}

The key structural insight of Andoni and Krauthgamer~\cite{andoni2012smoothed} is that smoothed edit distance can be reduced to \emph{Ulam distance}, namely edit distance between non-repetitive strings. Roughly speaking, after perturbation, sufficiently long substrings are unlikely to repeat, and can therefore be treated as distinct symbols. This allows a smoothed instance of edit distance to be represented, up to a controlled approximation loss, by an instance of Ulam distance. 
For the classical algorithm in~\cite{andoni2012smoothed}, queries to the resulting non-repetitive strings are answered using $O((\log n)/\sigma)$ queries to the original strings.  The answer returned at a newly queried position is determined using tables containing the positions queried earlier.  A separate reversible implementation is therefore needed for a quantum algorithm; we give such an implementation for all the subset databases used in our algorithm in Appendix~\ref{app:ak-quantum-access}.

Motivated by this connection, Andoni and Nguyen~\cite{AN10UlamTester} developed sublinear algorithms for approximating Ulam distance. If $R=\textsf{ed}(A,B)$ for two non-repetitive strings of length $n$, their algorithm obtains a constant-factor approximation in
time $$
    \widetilde O\!\left(\frac{n}{R}+\sqrt n\right)
$$
 Their approach has two main ingredients. First, the strings are decomposed into shorter pairs of substrings whose edit distances add up to the original distance. Second, rather than estimating the distance of every pair exactly, they use gap testers at multiple distance scales and aggregate the resulting contributions.

Since our focus is on query complexity, we note that the $\widetilde O(n/R+\sqrt n)$-time algorithm of Andoni and Nguyen also uses at most $\widetilde O(n/R+\sqrt n)$ queries to the input strings. Our goal is to quantumize this framework and further reduce the number of queries.
This is not obtained by simply replacing individual classical routines by Grover search: the reduction involves many substring pairs at different distance scales, with highly nonuniform evaluation costs. We develop new quantum collision-estimation primitives, use them to obtain a faster quantum gap tester for Ulam distance, and combine the resulting tests using a variable-time quantum product estimator. Together with a quantum implementation of the partial-alignment step, these ingredients give the following quantum analogue of the Andoni--Nguyen Ulam estimator.

\begin{restatable}{theorem}{thmrestateUlam}\label{thm:ulam}
Let $A,B\in\Sigma^n$ be non-repetitive strings, and let
$
    R=\textsf{ed}(A,B).
$
Given quantum query access to $A$ and $B$, there is a bounded-error quantum algorithm that outputs a constant-factor approximation to $R$ in query complexity
$$
    \widetilde O\!\left(
        \frac{n}{R}
        +
        \frac{n^{2/3}}{R^{1/3}}
    \right)
$$
\end{restatable}

In particular, when the edit distance between the two strings is sufficiently large, they can even obtain an algorithm that runs in sublinear time in the string length $n$ under the smoothed model. By considering the smoothed model for edit distance, Andoni and Krauthgamer explained the empirical success of practical heuristic techniques based on finding approximately matching local substrings. They observed that, after perturbation, the strings behave more like non-repetitive strings when sufficiently long substrings are treated as new characters.  This phenomenon suggests that one can use an algorithm for computing Ulam distance, namely, the edit distance between two non-repetitive strings in the worst case, to solve edit distance problems on general strings in the smoothed model.

The two terms have different origins. The first comes from constructing the partial alignment of the two strings, while the second is the cost of estimating the total distance of the resulting substring pairs using our quantum gap and product testers. In particular, when $R=\Omega(n)$, the query complexity becomes $\widetilde O(n^{1/3})$, improving on the $\widetilde O(\sqrt n)$ scale of the classical Ulam estimator.

Finally, we combine this theorem with the reduction of Andoni and Krauthgamer from smoothed edit distance to Ulam distance. Their construction loses an $\widetilde{O}(1/\sigma)$ factor in the approximation and allows each query to the resulting Ulam instance to be simulated using $\widetilde O(1/\sigma)$ queries to the original strings. This immediately yields our smoothed edit distance algorithm.

\begin{restatable}{corollary}{SmoothedEditDistanceCor}
\label{cor:smoothed-edit}
Let $\sigma>0$, and let $X,Y\in\Sigma^n$ be a $\sigma$-smoothed instance with
$
    R=\textsf{ed}(X,Y).
$
With overwhelming probability over the smoothing, there is a bounded-error quantum algorithm that outputs an $O(1/\sigma)$-approximation to $R$ using
$$
    \widetilde O\!\left(
        \frac{1}{\sigma}
        \left(
            \frac{n}{R}
            +
            \frac{n^{2/3}}{R^{1/3}}
        \right)
    \right)
$$
queries to $X,Y$.
\end{restatable}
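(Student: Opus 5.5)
The corollary follows by composing Theorem~\ref{thm:ulam} with the Andoni--Krauthgamer reduction from smoothed edit distance to Ulam distance~\cite{andoni2012smoothed}.

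First, I would recall the structure of that reduction. Fix a block length $L=\Theta((\log n)/\sigma)$. With overwhelming probability over the smoothing, every length-$L$ substring of $X$ occurs at most once in $X$, and likewise for $Y$. A length-$L$ substring also reappears in the other string only where the coupled longest-common-subsequence alignment forces it. This is the same non-repetition phenomenon as in Lemma~\ref{lem:injectiveonsmoothing}, now applied jointly to the pair with coupled perturbations.

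Mapping each position $i$ to the block $X[i..i+L-1]$ (and similarly for $Y$) turns the inputs into strings $A,B$ that are non-repetitive, after the reduction's handling of the coupled matches. Andoni and Krauthgamer show that $\ed(A,B)$ approximates $\ed(X,Y)$ within an $\widetilde O(1/\sigma)$ factor on this good event. I would take that approximation guarantee as given, and condition on the good event from here on.

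Second, I would bound the query cost of simulation. Each query to $A$ or $B$ is answered with $O(L)=\widetilde O(1/\sigma)$ queries to $X$ or $Y$. This is where the quantum setting needs care: the classical reduction resolves symbols using tables of previously queried positions. The query map must therefore be implemented reversibly, and it must be usable inside the quantum walks and variable-time routines of Theorem~\ref{thm:ulam}. For this I would invoke the reversible implementation of the subset databases in Appendix~\ref{app:ak-quantum-access}, which supplies coherent oracles for $A$ and $B$ at $\widetilde O(1/\sigma)$ overhead per query. I expect this step to be the main obstacle. In particular, one must check that the database-dependent symbol assignment stays consistent across the superposition of subsets, so that the algorithm genuinely sees fixed non-repetitive strings $A,B$.

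Third, I would run the Ulam algorithm of Theorem~\ref{thm:ulam} on $(A,B)$. Write $R'=\ed(A,B)$; the algorithm outputs a constant-factor approximation of $R'$ using $\widetilde O(n/R'+n^{2/3}/R'^{1/3})$ queries to $A,B$. Since $R'=\widetilde\Theta(\sigma^{\pm1}R)$ only up to the reduction's loss, I need the reduction's one-sided direction: it guarantees $R'\gtrsim R$ up to polylogarithmic factors. Because both terms in the complexity decrease in $R'$, this yields the bound $\widetilde O(n/R+n^{2/3}/R^{1/3})$ in terms of $R$. If only $R'\ge \sigma R$ holds, the stated bound should be rechecked against the precise form of the reduction.

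Finally, I would combine the pieces. Multiplying by the $\widetilde O(1/\sigma)$ simulation overhead gives the claimed query complexity. The approximation factor is $O(1)\cdot O(1/\sigma)=O(1/\sigma)$. Bounded error follows by a union bound over the failure of the smoothing event and the algorithm's own error.
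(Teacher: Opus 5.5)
Your proposal follows the paper's proof. The paper conditions on the high-probability event of Lemma~4.6 of~\cite{andoni2012smoothed} and runs the Ulam algorithm of Theorem~\ref{thm:ulam} on the implicit non-repetitive strings with the $O(L)=O((\log n)/\sigma)$ overhead of Corollary~\ref{cor:ak-quantum-access}; it then uses that the bound decreases in the distance together with $\ed(P,Q)\geq\Omega(1)\,\ed(X,Y)$, stated there with a constant factor, which settles your hedge about the needed direction. One small correction: the appendix gives no coherent value oracle for fixed strings $A,B$, so Theorem~\ref{thm:ulam} is executed by inspecting its proof rather than as a black box, and what it supplies instead are subset databases uniquely determined by $(S,T,X,Y)$ whose $(S,T)$-dependent labels preserve exactly the collision structure, which suffices because every subroutine accesses the strings only through collisions.
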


The rest of the section develops the three quantum ingredients underlying the theorem: collision estimation, a quantum \textsc{GapUlamTest}, and a variable-time \textsc{UlamProductTest}. We then show how these primitives combine with the (quantum version) partial-alignment procedure to obtain the stated Ulam estimator and, through the reduction above, the smoothed edit distance result.

\subsection{New quantum collision subroutines}\label{sec:Qcollision}
In this subsection we will introduce some collision subroutines that we will use for Ulam distance estimation (and hence smoothed edit distance). The first subroutine we would like to have is to estimate the approximate collisions between two substrings, assuming we have a lower bound on the number of collisions.  To prove the theorem, we would first like to prove the following gap tester for collisions between non-repetitive strings. We first prove a gap-sensitive collision tester, which distinguishes between at most $\ell_1$ and at least $\ell_2$ collisions. This is the main counting primitive of the subsection. We then bootstrap it to obtain a multiplicative estimator for the total number of collisions, and subsequently an additive estimator with a prescribed additive error. These estimators will be used repeatedly when we only need approximate overlap information between substrings. Finally, when the locations of the collisions themselves are required, we adapt the multiple-collision algorithm of~\cite{Bonnetain2025multiplecollision} to enumerate all $m$ collisions in $\widetilde O(n^{2/3}(m+1)^{1/3})$ queries. A self-contained proof of this adaptation is given in Appendix~\ref{app:fixed-matching-collisions}. Thus, the first three results form a progression from gap testing to increasingly flexible forms of collision estimation, while the last provides the complementary enumeration primitive. Together they supply the collision subroutines used in the algorithms for smoothed edit distance.
For convenience, we also include two additional quantum primitives used later: a known quantum pair-finding procedure and our variable-time approximate-counting procedure for combining quantum tests with nonuniform query costs.

\begin{theorem} \label{thm:gap_dist_collisions}
    Let $\tau>0$, $\ell_1, \ell_2 \in [n]$ such that $\ell_1 < \ell_2$. Let $A,B \in \Sigma^n$ be non-repetitive strings and let $m$ denote the number of collisions between them. Suppose that either $m \leq \ell_1$ or $m \geq \ell_2$ and suppose we can make quantum queries to entries of $A$ and $B$. Then, there is an algorithm that decides if $m \leq \ell_1$ or $m \geq \ell_2$ with success probability $\geq 1-\tau$, using $\widetilde{O}\left(\left(\frac{n \sqrt{\ell_2}}{\ell_2-\ell_1}\right)^{2/3}\right)$ queries.
\end{theorem}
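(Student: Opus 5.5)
We plan to run a quantum walk on a product of Johnson graphs and estimate the fraction of marked vertices with Theorem~\ref{thm:approx_marked_fraction}. Since $A$ and $B$ are non-repetitive, the collisions form a partial matching $\pi$ of size $m$ between positions of $A$ and positions of $B$, with each position in at most one pair. Fix a subset size $r\le n/2$, chosen later. The walk moves on pairs $(S_A,S_B)$, where $S_A,S_B$ are $r$-subsets of $[n]$. The database stores the queried values $A_{S_A}$ and $B_{S_B}$ together with the number of collisions inside $S_A\times S_B$. A vertex is \emph{marked} if it contains at least one collision pair. The costs are setup $r$ queries, update $O(1)$, and checking $0$, and the spectral gap is $\delta=\Theta(1/r)$.

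\textbf{Relating the marked fraction to $m$.} A fixed collision pair lands inside $S_A\times S_B$ with probability $p=(r/n)^2$. For distinct pairs, the events are negatively correlated or nearly independent because they involve disjoint positions. By inclusion--exclusion, the marked fraction is therefore
\[
p_M(m)=mp-\Theta(m^2p^2).
\]
We choose $r$ so that $\ell_2 p\le c$ for a small constant $c$, i.e.\ $r\le \sqrt c\,n/\sqrt{\ell_2}$. Then $p_M$ is within a $(1\pm O(c))$ factor of $mp$, uniformly for every $m\le\ell_2$; for $m\ge\ell_2$, $p_M$ is monotone in $m$. This monotonicity suffices for the threshold test.

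\textbf{Tuning the parameters.} To separate $p_M(\ell_1)$ from $p_M(\ell_2)$, we need relative precision $\eta=\Theta((\ell_2-\ell_1)/\ell_2)$ around the value $p_M\approx\ell_2 p$. We take the lower bound $\lambda=\Theta(\max\{\ell_1,1\}p)$; when $\ell_1$ is tiny we instead use an additive-precision variant, estimating at $\lambda=\ell_2p/2$. The threshold is the midpoint. Theorem~\ref{thm:approx_marked_fraction} then costs
\[
\widetilde O\!\left(r+\frac{\ell_2}{\ell_2-\ell_1}\cdot\frac{n}{r\sqrt{\ell_2}}\cdot\sqrt r\right)
=\widetilde O\!\left(r+\frac{n\sqrt{\ell_2}}{(\ell_2-\ell_1)\sqrt r}\right).
\]
Balancing the two terms gives $r=\bigl(n\sqrt{\ell_2}/(\ell_2-\ell_1)\bigr)^{2/3}$ and the claimed complexity. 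We also check that this $r$ is at most $n/\sqrt{\ell_2}$, which holds since $\ell_2-\ell_1\ge1$ up to a constant factor. If it fails, we shrink $r$ to $n/\sqrt{\ell_2}$, and the resulting bound is dominated by the stated one. Repeating $O(\log(1/\tau))$ times and taking the median gives success probability $1-\tau$.

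\textbf{Main obstacle.} The hardest step is that Theorem~\ref{thm:approx_marked_fraction} gives relative error $\eta p_M$ rather than resolving the gap directly. We also need the quadratic correction term to stay below the gap, that is $\ell_2^2p^2\ll(\ell_2-\ell_1)p$. For small gaps this may force $r$ smaller than the balancing value. If so, our first attempt will be to replace ``marked'' by a collision-count statistic: run the counting on the walk with each vertex weighted by its stored number of collisions, via amplitude estimation on the database register. This estimator has expectation exactly $mp$, which removes the nonlinearity. Its variance is controlled by the same $\ell_2p\le c$ condition, which keeps the count per vertex $O(1)$ with high probability.
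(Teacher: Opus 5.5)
Your argument only works in the sparse regime $\ell_2\rho=O(1)$, and the claimed bound is not achieved outside it. Here $\rho$ is the probability that a fixed collision pair lands in the sampled vertex, i.e.\ your $p$. There are two specific problems.

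\textbf{The balancing $r$ need not be in your regime.} The balancing value $r^\ast=(n\sqrt{\ell_2}/(\ell_2-\ell_1))^{2/3}$ satisfies $r^\ast\le n/\sqrt{\ell_2}$ if and only if $\ell_2^{5/2}\le n(\ell_2-\ell_1)^2$. This does not follow from $\ell_2-\ell_1\ge1$. It fails, for example, for $\ell_2=\sqrt n$ with $\ell_2-\ell_1=1$, and for $\ell_2=n$ with $\ell_2-\ell_1=\sqrt n$.

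\textbf{Shrinking $r$ loses a polynomial factor.} Below $r^\ast$ the term $n\sqrt{\ell_2}/((\ell_2-\ell_1)\sqrt r)$ dominates the target. At $r=n/\sqrt{\ell_2}$ your cost is $\sqrt n\,\ell_2^{3/4}/(\ell_2-\ell_1)$, which exceeds $(n\sqrt{\ell_2}/(\ell_2-\ell_1))^{2/3}$ by the factor $\bigl(\ell_2^{5/2}/(n(\ell_2-\ell_1)^2)\bigr)^{1/6}$. In the two examples you get $n^{7/8}$ instead of $n^{5/6}$, and $n^{3/4}$ instead of $n^{2/3}$. Both are below $n$, so capping at the trivial bound does not rescue them. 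This dense regime (many collisions, small relative gap) is exactly what the $\varepsilon$-approximate and additive estimators of Theorem~\ref{cor:legall} rely on. Your count-weighted fallback keeps the restriction $\ell_2p\le c$, so it has the same limitation.

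\textbf{The missing idea.} Mark a subset when it contains at least $t=\lceil\ell_2\rho\rceil$ collision pairs, thresholding at the expected count rather than at $1$. Your marking is the special case $t=1$.
\begin{itemize}
\item Adding the pairs one at a time gives $p(k+1)-p(k)=\rho\Pr_S[N_k(S)=t-1\mid i,j\in S]$.
\item The local-limit anticoncentration bound of Claim~\ref{lem:concentration_collision} makes this $\Omega(\rho/\sqrt t)$ for the $\Omega(\min\{\ell_2-\ell_1,\sqrt t/\rho\})$ values of $k$ just below $\ell_2$.
\item Hence $p_2-p_1=\Omega(\min\{1,(\ell_2-\ell_1)\rho/\sqrt t\})$, while $p_2\le\min\{1,\ell_2\rho\}$.
\item Pad the graph with dummy marked vertices so that $\Omega(p_2)$ is a valid lower bound on the marked fraction.
\end{itemize}
Then the factor $\sqrt{p_2}/(p_2-p_1)$ in Theorem~\ref{thm:approx_marked_fraction} is $O\bigl(1+\sqrt{\ell_2}/((\ell_2-\ell_1)\sqrt\rho)\bigr)$ in every regime. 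So the balancing choice of $r$ is always admissible, with no cap at $n/\sqrt{\ell_2}$.

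\textbf{The nonlinearity you flagged is not an obstacle.} The marked fraction depends only on $m$, so both thresholds can be computed without queries. The same one-pair-at-a-time coupling gives $p_M(k+1)-p_M(k)=\rho\Pr[N_k(S)=0\mid i,j\in S]\ge(1-O(c))\rho$. In the sparse regime the gap is therefore $\Omega((\ell_2-\ell_1)\rho)$ directly, and you never need to compare $\ell_2^2\rho^2$ with $(\ell_2-\ell_1)\rho$.
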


\begin{proof}
    Let $r \leq n/2$ be a parameter to be specified later. Let $Z$ be the concatenation of strings $A$ and $B$ of length $2n$. Since $A$ and $B$ are non-repetitive, any collision in $Z$ will have a part in $A$ and a part in $B$. For a size-$r$ subset $S \subseteq [2n]$, define $N_m(S)$ to be the number of collision pairs fully contained in $S$ given that there are $m$ collisions in the input, and $N_m = \mathbb{E}_S[N_m(S)]$. Let $\rho = \frac{r(r-1)}{2n(2n-1)}$ be the probability that a fixed collision pair is fully contained in a uniformly random size-$r$ subset of $[2n]$. Let $\mu_2 = \ell_2 \rho$ be the lower bound on the average number of collisions in a uniformly random size-$r$ subset of $[2n]$ if $m \geq \ell_2$. Let $t = \ceil{\mu_2}$.

    Consider the Johnson graph $\mathcal{J}([2n], r)$ on size-$r$ subsets of $[2n]$, which by the discussion in Section~\ref{sec:q_subroutines} its spectral gap is
$\delta=\Theta(1/r)$. 
   Let $p(m) = \Pr_S [N_m(S) \geq t]$.    Also, define $p_1 = p(\ell_1) = \Pr_S [N_{\ell_1}(S) \geq t]$ and $p_2 = p(\ell_2) = \Pr_S [N_{\ell_2}(S) \geq t]$. Note that $p_1$ and $p_2$ can be calculated without any queries since $\ell_1$, $\ell_2$ and $t$ are known, and also note that the function $p(m)$ is monotonically increasing in $m$ so $p(m) \leq p_1$ if $m \leq \ell_1$ and $p(m) \geq p_2$ if $m \geq \ell_2$. 

    We will do our quantum walk on a modification of $\mathcal{J}([2n], r)$ to ensure that the probability of a vertex being marked is $\Omega(p_2)$. In particular, our graph will have vertices from $\binom{[2n]}{r} \times \{0,1\} \times [\alpha]$ where $\alpha = \floor{1/p_2}$ and a vertex $(S, b, j)$ is connected to a vertex $(S', b', j')$ iff $S$ is connected to $S'$ in $\mathcal{J}([2n], r)$ (i.e. $|S \cap S'| = r-1$). In the database of our walk, along with maintaining a vertex label $(S, b, j)$, we will also maintain the input queries associated with the indices in $S$. We say that a vertex $(S, 1, j)$ is marked if $j=1$, and a vertex $(S, 0, j)$ is marked if $N_m(S) \geq t$. Note that a random vertex $(S, b, j)$ in this graph is marked with probability $q(m) = \frac{1}{2\alpha} + \frac{p(m)}{2}$. We will estimate the marked fraction $q(m)$ in our graph using Theorem \ref{thm:approx_marked_fraction}. Note that for any $m$, $q(m) \geq \frac{1}{2\alpha}$ so $\lambda = \Omega(p_2)$. Since the spectral gap of $\mathcal{J}([2n], r)$ is $\Theta(1/r)$, the spectral gap $\delta$ of our graph will also be $\Theta(1/r)$. The setup cost $S$ of preparing a uniform superposition of vertices in our graph (and the querying the associated input indices) is $O(R)$ while the update cost $U$ of moving between neighboring vertices is $O(1)$. For any vertex $(S, b, j)$, it is sufficient to check markedness using the information contained in the walk database so $C = 0$. It remains to compute $\eta$. Let $q_1 = q(\ell_1)$ and $q_2 = q(\ell_2)$. The gap between the probability of a random vertex being marked in the case when $m \leq \ell_1$ and when $m \geq \ell_2$ is at least $q_2 - q_1 = \frac{p_2-p_1}{2}$. We must choose $\eta$ to be more than $\frac{q_2-q_1}{2 \max\{q_1,q_2\}}$ so that $(1+\eta)q_1 < \frac{q_1+q_2}{2}$ and $(1+\eta)q_2 > \frac{q_1+q_2}{2}$. Thus, it suffices to choose $\eta$ to be $\frac{q_2-q_1}{4 \max\{q_1,q_2\}} = \frac{p_2-p_1}{8p_2}$.     We will proceed by lower bounding the probability gap $p_2-p_1$ in terms of $\ell_2-\ell_1$. Notice that
    \begin{align*}
        p_2-p_1 =  p(\ell_2)-p(\ell_1) = \sum_{\lambda = \ell_1}^{\ell_2-1} p(\lambda+1) - p(\lambda)
    \end{align*}

    For any $\lambda \in [\ell_1,\ell_2-1]$, consider an instance with $\lambda$ collisions and an instance with $\lambda+1$ collisions, where the latter consists of the same $\lambda$ collisions together with one additional collision at indices $i,j\in[2n]$. In particular, $i$ and $j$ do not belong to any of the first $\lambda$ collision pairs. Then, for every size-$r$ subset $S\subseteq[2n]$,
    \[
        N_{\lambda+1}(S)
        =
        N_\lambda(S)+\mathbf{1}[i,j\in S].
    \]
    Therefore,
    \[
        p(\lambda+1)-p(\lambda)
        =
        \Pr_S[N_\lambda(S)=t-1,\ i,j\in S]
        =
        \rho\Pr_S[N_\lambda(S)=t-1\mid i,j\in S].
    \]

\begin{claim} \label{lem:concentration_collision}
Let $n/2\geq s\geq 2$, let $\ell\in[n]$ be a positive integer, and let $A,B\in\Sigma^n$ be non-repetitive strings. Let
\[
    \rho=\frac{s(s-1)}{2n(2n-1)},
    \qquad
    \mu=\ell\rho,
    \qquad
    t=\ceil{\mu}\geq1.
\]
Let $\lambda$ denote the number of collisions between $A$ and $B$, let $\mu_\lambda=\lambda\rho$, and, for every size-$s$ subset $S\subseteq[2n]$, let $N_\lambda(S)$ denote the number of collision pairs fully contained in $S$. Let $I\subseteq[2n]$ be either empty or consist of two indices, one corresponding to an entry of $A$ and one corresponding to an entry of $B$, neither of which belongs to any collision pair. If
\[
    |\mu_\lambda-\mu|
    \leq
    \frac{\sqrt{t}}{10},
\]
then
\[
    \Pr_S[N_\lambda(S)=t-1\mid I\subseteq S]
    =
    \Omega\left(\frac{1}{\sqrt{t}}\right).
\]
\end{claim}

The proof of this claim is a standard anti-concentration estimate for the number
of sampled collision pairs and defer the proof to Appendix~\ref{app:claim5.6}.    It follows from Claim~\ref{lem:concentration_collision}, applied with $I=\{i,j\}$, that for every $\lambda\in[\ell_1,\ell_2-1]$ satisfying
    \[
        |\lambda-\ell_2|
        \leq
        \frac{\sqrt{t}}{10\rho},
    \]
    we have
    \[
        \Pr_S[N_\lambda(S)=t-1\mid i,j\in S]
        =
        \Omega\left(\frac{1}{\sqrt{t}}\right).
    \]
    Therefore,
    \[
        p(\lambda+1)-p(\lambda)
        =
        \Omega\left(\frac{\rho}{\sqrt{t}}\right).
    \]
    Since $\rho<1/16$, we have $\sqrt{t}/(10\rho)>1$. Consequently, the number of such $\lambda\in[\ell_1,\ell_2-1]$ is
    \[
        \min\left\{
            \ell_2-\ell_1,
            \floor{\frac{\sqrt{t}}{10\rho}}
        \right\}
        =
        \Omega\left(
            \min\left\{
                \ell_2-\ell_1,
                \frac{\sqrt{t}}{\rho}
            \right\}
        \right).
    \]
    Therefore,
    \begin{align*}
        \sum_{\lambda = \ell_1}^{\ell_2-1} p(\lambda+1) - p(\lambda) = \Omega(\min\{\ell_2-\ell_1, \sqrt{t}/\rho\} \cdot \rho/\sqrt{t}) = \Omega\left(\min\left\{1, \frac{(\ell_2-\ell_1) \rho}{\sqrt{t}}\right\}\right).
    \end{align*}

    Notice that $p_2 \leq \min\{1, \mu_2\}$. Indeed, if $\mu_2 \geq 1$, this is immediate; if $\mu_2 \leq 1$, then using Markov's inequality we get 
    \begin{align*}
        p_2 = \Pr[N_{\ell_2} \geq 1] \leq \mathbb{E}[N_{\ell_2}] = \mu_2. 
    \end{align*}

    We are ready to bound $\frac{1}{\eta \sqrt{\lambda}}$. We have
    \begin{align*}
        \frac{1}{\eta \sqrt{\lambda}} = \frac{\sqrt{p_2}}{p_2-p_1} = O\left(\frac{\sqrt{\min\{1, \mu_2\}}}{\min\left\{1, \frac{(\ell_2-\ell_1) \rho}{\sqrt{t}}\right\}}\right) 
        = O\left(\max\left\{1, \frac{\sqrt{\mu_2}}{(\ell_2-\ell_1) \rho}\right\}\right) = O\left(1 + \frac{\sqrt{\ell_2}}{(\ell_2-\ell_1) \sqrt{\rho}}\right)
    \end{align*}

    Applying Theorem \ref{thm:approx_marked_fraction}, we get that the complexity of our quantum walk is 
    \begin{align*}
        \widetilde{O}\left(S + \frac{1}{\eta} \frac{1}{\sqrt{\lambda}}\left(\frac{1}{\sqrt{\delta}} U + C\right) \right) = \widetilde{O}\left(r + \left(1 + \frac{\sqrt{\ell_2}}{(\ell_2-\ell_1) \sqrt{\rho}}\right) \cdot \sqrt{r}\right) =  \widetilde{O}\left(r+\frac{\sqrt{\ell_2}}{\ell_2-\ell_1}\cdot {\frac{n}{\sqrt{r}}}\right),
    \end{align*}
    where the last equation follows because $\rho=\frac{r(r-1)}{2n(2n-1)}$. By choosing $r=\Theta\left(\left(\frac{n\sqrt{\ell_2}}{\ell_2-\ell_1}\right)^{2/3}\right)$ to balance the above two terms, we therefore get the desired $\widetilde{O}\left(\left(\frac{n \sqrt{\ell_2}}{\ell_2-\ell_1}\right)^{2/3}\right)$ bound.
\end{proof}

Once we have the gap tester, one can easily convert it into an $\eps$-approximation collision estimator by binary search, given a known lower bound on the number of collisions.

\begin{theorem}[Approximating number of collisions]
\label{cor:legall}
Let $A,B\in\Sigma^n$ be non-repetitive strings, and let $m$ be the number of
collisions between $A,B$. Suppose that a known lower bound
$m_0\in[n]$ satisfying $m_0\leq m$ is given. For every
$\varepsilon,\tau\in(0,1)$, there is a quantum algorithm, given quantum
query access to $A,B$, that outputs, with success probability at least
$1-\tau$, 
\begin{enumerate}
    \item an estimate $\widetilde m$ such that
$
|\widetilde m-m|<\varepsilon m
$
using
$
\widetilde O\left(
\left(\frac{n}{\varepsilon\sqrt m}\right)^{2/3}
+
\left(\frac{n}{\sqrt{m_0}}\right)^{2/3}
\right)
$ queries.
\item an estimate $\widetilde{m}$ such that 
$
|\widetilde{m}-m | < t
$
using
$
\widetilde{O}\left(\left( \frac{n\sqrt{m+t}}{t} \right)^{2/3}\right)
$
queries.
\end{enumerate}
\end{theorem}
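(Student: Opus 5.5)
We reduce both parts to the gap tester of Theorem~\ref{thm:gap_dist_collisions}, used as a comparison oracle inside a search over candidate collision counts. Each call is run with failure probability $\tau' = \tau/\polylog(n)$, so that a union bound over all $O(\log n)$ or $O(\log n/\varepsilon)$ calls still leaves success probability at least $1-\tau$.

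\emph{Part 1.} We first find a coarse estimate: a value $\widehat m$ with $\widehat m/4\le m\le 4\widehat m$. To do this we scan the geometric grid $\ell=2^i$ for $\ell\ge m_0$. At each scale we call the tester with $\ell_1=\ell$ and $\ell_2=2\ell$. For these parameters $n\sqrt{\ell_2}/(\ell_2-\ell_1)=O(n/\sqrt{\ell})$, so the cost is $\widetilde O((n/\sqrt\ell)^{2/3})$.

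When the promise fails, i.e.\ $\ell<m<2\ell$, the tester may answer either way. We handle this by reading the answers as monotone evidence. Every answer ``$\ge 2\ell$'' certifies $m>\ell$, and every answer ``$\le \ell$'' certifies $m<2\ell$. We take $\widehat m$ at the first scale where the answer flips. It is then within a factor $4$ of $m$.

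To keep the cost dominated by the smallest relevant scale, we scan downward from $\ell\approx n$. The costs grow as $\ell$ decreases, and the scan stops once $\ell$ reaches $\Theta(m)$. By the geometric sum, the total is $\widetilde O((n/\sqrt m)^{2/3})\le\widetilde O((n/\sqrt{m_0})^{2/3})$. The lower bound $m_0$ only caps the depth of the scan, guaranteeing it terminates by $\ell=m_0$.

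Next we refine. We binary-search over the interval $[\widehat m/4,4\widehat m]$ using thresholds spaced $\Theta(\varepsilon\widehat m)$ apart. At each probe we call the tester with $\ell_1=x$ and $\ell_2=x+\varepsilon\widehat m/4$. The cost of one probe is
\[
\widetilde O\bigl((n\sqrt{\widehat m}/(\varepsilon\widehat m))^{2/3}\bigr)=\widetilde O\bigl((n/(\varepsilon\sqrt m))^{2/3}\bigr).
\]
There are $O(\log(1/\varepsilon))$ probes, which is absorbed into the $\widetilde O$. Again, a probe whose promise fails has $m$ within $\varepsilon\widehat m/4$ of the threshold, so either answer is consistent up to that error. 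The final interval therefore has width $O(\varepsilon m)$, and rescaling $\varepsilon$ by a constant gives $|\widetilde m-m|<\varepsilon m$.

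\emph{Part 2.} This is the additive version with error $t$. If $t$ is small relative to $m$, we run Part~1 with $\varepsilon=t/m$. Otherwise we directly binary-search over $[0,n]$ with thresholds spaced $t/2$ apart. At a threshold $x$, the tester with $\ell_1=x$ and $\ell_2=x+t/2$ costs $\widetilde O((n\sqrt{x+t}/t)^{2/3})$.

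The difficulty is that the relevant $x$ must be $O(m+t)$, while $m$ is unknown. We handle this in two steps. First, the coarse scan of Part~1 is run only down to the scale $\ell\approx t$. At that scale it costs $\widetilde O((n/\sqrt t)^{2/3})\le\widetilde O((n\sqrt{m+t}/t)^{2/3})$, and it locates $m$ within a constant factor or certifies $m=O(t)$. Second, we binary-search with additive spacing $t$ inside $[0,O(m+t)]$. Every probe there has $x+t=O(m+t)$, so each costs the claimed bound, and there are $O(\log n)$ probes.

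\emph{Main obstacle.} The delicate step is the coarse localization: the tester gives arbitrary answers in the non-promise gap, and we must still guarantee that the scan stops near $m$ rather than probing much smaller $\ell$, where each call is expensive. Our plan is to stop at the first flip of a downward geometric scan, where each call is costlier than the one before. We then verify that both possible answers in a failed-promise window still certify a constant-factor bracket on $m$.
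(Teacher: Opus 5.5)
Your proposal is correct. It follows the paper's overall shape: a geometric scan with the gap tester of Theorem~\ref{thm:gap_dist_collisions} to get a constant-factor bracket, then a binary search whose probes have gap $\Theta(\varepsilon m)$. It differs in three ways.

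\textbf{Coarse step.} The paper scans \emph{upward} from $m_0$ with $L_h=(3/2)^h m_0$ and stops at the first answer ``$m\le L_h$''. Its refinement uses multiplicative thresholds $\mu_i/\gamma,\mu_i\gamma$ with $\gamma=1+\varepsilon/8$, which is equivalent to your additive spacing once $\widehat m=\Theta(m)$. Along an upward scan the per-call cost $(n/\sqrt{L_h})^{2/3}$ decreases, so the first call at $m_0$ dominates. That is exactly the source of the $(n/\sqrt{m_0})^{2/3}$ term in Item~1. Your downward scan instead pays for its last call, near $\ell=\Theta(m)$. So when all tester calls succeed, the coarse step costs only $\widetilde O((n/\sqrt m)^{2/3})$, and $m_0$ (or even just $m\ge1$) serves only to bound the scan depth. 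This is slightly stronger than what the paper proves.

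\textbf{Item 2.} The paper does not redo any search. It appends $t$ fresh common symbols so that $m^+=m+t\ge t$, then invokes Item~1 twice with lower bound $t$: once with $\varepsilon=1/10$, and once with $\varepsilon_2=t/(2m_1^+)$. Item~2 is thus a black-box corollary, and $m=0$ needs no separate handling. You get the same effect directly: truncating the downward scan at scale $t$ plays the role of the padding lower bound, and you then search with additive spacing $\Theta(t)$ inside $[0,O(m+t)]$. This avoids padding but requires you to redo the search analysis.

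\textbf{Minor points.} Your case split in Part~2 is unnecessary. The truncated scan plus additive search already covers $t\ll m$: probes then sit at $x=\Theta(m)$ and cost $\widetilde O((n\sqrt m/t)^{2/3})$. Also, as written, the first branch presupposes an estimate of $m$ that only the scan supplies. Finally, start the scan at $\ell\le n/2$ or cap $\ell_2$ at $n$, since the tester requires $\ell_1,\ell_2\in[n]$.
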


\begin{proof}
     We first prove item 1 and then use it to prove item 2.

        \begin{algorithm}[!ht]
        \caption{$\varepsilon$-approximate collision counting between non-repetitive strings $x, y \in \Sigma^n$}
        \label{alg:eps_approx_collisions}
        \begin{algorithmic}[1] 
            \State Let $\gamma = 1 + \frac{\varepsilon}{8}$ and $i = 0$.
            \State For $h=0,1,\ldots$, let $L_h=(3/2)^h m_0$ and run Theorem~\ref{thm:gap_dist_collisions} with $(\ell_1,\ell_2)=(L_h,L_{h+1})$ until it first outputs $m\leq L_h$. Let $\widetilde m:=L_h$.                         \State Let $a_0 = 2\widetilde{m}/3$ and $b_0 = 2\widetilde{m}$.
            \While{$b_i/a_i > \gamma^3$}
                \State Let $\mu_i = \sqrt{a_ib_i}$.
                \State Let $\alpha_i =\mu_i/\gamma$ and $\beta_i = \mu_i \gamma$.
                \State Run Theorem \ref{thm:gap_dist_collisions} to decide if $m \leq \alpha_i$ or $m \geq \beta_i$.
                \If{the output is $m \leq \alpha_i$}
                    \State Let $a_{i+1} = a_i$ and $b_{i+1} = \beta_i$.
                \Else
                    \State Let $a_{i+1} = \alpha_i$ and $b_{i+1} = b_i$.
                \EndIf
            \State Let $i = i+1$.
            \EndWhile
            \State Output $\widehat{m} = \sqrt{a_i b_i}$.
        \end{algorithmic}
    \end{algorithm}

   \textbf{Item 1}: We use the algorithm in Theorem \ref{thm:gap_dist_collisions} to give the desired algorithm. Let us argue correctness for Algorithm \ref{alg:eps_approx_collisions}. 
For Line~2 of Algorithm~1, condition on every invocation of Theorem~\ref{thm:gap_dist_collisions} being correct whenever its input satisfies one of the two promised cases. Let $h$ be the first iteration in which the tester outputs $m\leq L_h$, and let $m':=L_h$. This output implies $m<L_{h+1}=3m'/2$. If $h>0$, the output in the preceding iteration was $m\geq L_h$, which implies $m>L_{h-1}=2m'/3$. If $h=0$, we instead use $m\geq m_0=m'$. Therefore, $\frac{2m}{3}<m'<\frac{3m}{2}.$
Notice that this argument remains valid when $m$ lies in the promise gap of an invocation, since the output may then be arbitrary.
   After that, we have an estimator $\widetilde m$ s.t. $2m/3 \leq \widetilde m \leq 3m/2 $ to start the while loop from line~$4$ to $12$. Then we argue that $m \in [a_i,b_i]$ for all iterations $i$ of the while loop, and when the algorithm terminates, the output $\hat{m}$ is an $\eps$-approximation of $m$. 
    
    It is easy to see that $m \in [a_0, b_0]$. Suppose that $m \in [a_i, b_i]$, and we want to argue that $m \in [a_{i+1}, b_{i+1}]$. 
    Since $b_i/a_i>\gamma^3$, we have $a_i<\alpha_i<\beta_i<b_i.$ If the tester outputs $m\leq\alpha_i$, correctness on the upper promised case rules out $m\geq\beta_i$, and hence $m\in[a_i,\beta_i]=[a_{i+1},b_{i+1}]$. Similarly, if it outputs $m\geq\beta_i$, correctness on the lower promised case rules out $m\leq\alpha_i$, and hence $m\in[\alpha_i,b_i]=[a_{i+1},b_{i+1}]$. This also covers the case $\alpha_i<m<\beta_i$, since either updated interval contains $m$. Therefore, $m\in[a_i,b_i]$ throughout the algorithm. 
    
    Now, we argue that $\hat{m}$ is an $\varepsilon$-approximation of $m$. Note that when the algorithm terminates with $i = i^\ast$, $b_{i^\ast}/a_{i^\ast} \leq \gamma^3$. Since $m \in [a_{i^\ast},b_{i^\ast}]$, we have $\widehat{m}/m \leq \sqrt{b_{i^\ast}/a_{i^\ast}} \leq \gamma^{3/2}$ and $m/\widehat{m} \leq \sqrt{b_{i^\ast}/a_{i^\ast}} \leq \gamma^{3/2}$. Hence the multiplicative error of $\hat{m}$ relative to $m$ is at most $\gamma^{3/2}$, so 
    \begin{align*}
        \frac{|\widehat{m}-m|}{m} \leq \gamma^{3/2}-1 = \left(1+\frac{\varepsilon}{8}\right)^{3/2}-1 < \varepsilon.
    \end{align*}

    Now, we compute the query complexity of Algorithm \ref{alg:eps_approx_collisions}. We do that by bounding the number of iterations of the while loop and the query complexity of each iteration. 
    
    Let $\nu_i:=b_i/a_i$. Regardless of the outcome in the $i$th iteration,  $\nu_{i+1} = \gamma\sqrt{\nu_i}.$ Since $\nu_0=3$, solving this recurrence gives $\nu_i = \gamma^{2(1-2^{-i})}3^{2^{-i}}.$ Since $\log\gamma=\Theta(\varepsilon)$, after $O(\log(1/\varepsilon))$ iterations we have $\nu_i\leq\gamma^3$, and the algorithm terminates. 
    
    We use Theorem \ref{thm:gap_dist_collisions} to bound the query complexity of each iteration and the query complexity for the line~$2$ of the algorithm. Note that $a_0, b_0 = \Theta(m)$. Since, from an earlier argument, $m \in [a_{i+1}, b_{i+1}] \subseteq [a_i, b_i]$, we have $a_i, b_i = \Theta(m)$ for all iterations $i$. It follows that $\mu_i, \alpha_i, \beta_i = \Theta(m)$. Now, $\beta_i - \alpha_i = \mu_i(\gamma - 1/\gamma) = \Theta(\varepsilon m)$. Therefore, the cost of evoking Theorem \ref{thm:gap_dist_collisions} in the $i$th iteration (and the total cost of the $i$th iteration) is $\widetilde{O}\left(\left(\frac{n \sqrt{\beta_i}}{\beta_i - \alpha_i}\right)^{2/3}\right) = \widetilde{O}\left(\left(\frac{n}{\varepsilon \sqrt{m}}\right)^{2/3}\right)$. Also, for the line~$2$, we know the total query complexity is $\tilde O((\frac{n}{\sqrt{m_0}})^{2/3}\cdot \log (m/m_0))$. Therefore combining the earlier observation about the number of iterations, the total query complexity of our algorithm is
    \begin{align*}
       \widetilde{O}\left(\left(\frac{n}{\varepsilon \sqrt{m}}\right)^{2/3}+\left(\frac{n}{ \sqrt{m_0}}\right)^{2/3}\right).
    \end{align*}
    As for the success probability of Algorithm~\ref{alg:eps_approx_collisions}, we let $M$ be the number of total invocation of Theorem~\ref{thm:gap_dist_collisions} we used in Algorithm~\ref{alg:eps_approx_collisions}, and we can simply set the failure probability $\tau$ in Theorem~\ref{thm:gap_dist_collisions} to be $1/(10M)$, which only incurs $\polylog(m,m_0,1/\eps,n)$ extra overhead here, and hence the success probability of the algorithm will be $\geq 0.9$ by the union bound.
    To amplify the success probability from constant to $1-\tau$ only incurs extra $\polylog (1/\tau)$ overhead, and therefore we finish the proof for the first part.

    \textbf{Item 2}:
    To prove item two of the theorem,  we reduce it to item one with a known collision lower bound by padding dummy symbols to both $A$ and $B$. Precisely, let $ A^+ = A\circ(\bot_1,\ldots,\bot_t),
    \text{ and }
    B^+ = B\circ(\bot_1,\ldots,\bot_t),$ where the symbols $\bot_1,\ldots,\bot_t$ are distinct and do not appear in either $A$ or $B$. One can easily see now new non-repetitive strings $A^+$ and $B^+$ have length $n+t$ and the number $m^+$ of collisions between $A^+$ and $B^+$ is exactly $m^+=m+t\geq t$.

    By applying item one with $\eps_1=1/10$ on $A^+$ and $B^+$, with success probability $\geq 1-\tau/2$, we can obtain $m_1^+$ s.t. $|m_1^+ -m^+|\leq m^+/10$. Now we apply item one again with $\eps_2 := \frac{t}{2m_1^+},$ and then with success probability $\geq 1-\tau/2$, we can obtain $m_2^+$ s.t. $|m_2^+-m^+|\leq \eps_2 \cdot m^+$, which implies $|(m_2^+-t)-m|\leq \frac{t}{2m_1^+}m^+\leq t$ as desired.      
    
    As for the total number of queries, for the part where we apply item one with $\varepsilon_1=1/10$ on $A^+$ and $B^+$, it uses 
\[
    \widetilde O\left(
        \left(\frac{n+t}{\sqrt{m+t}}\right)^{2/3}
        +
        \left(\frac{n+t}{\sqrt t}\right)^{2/3}
    \right)=
    \widetilde O\left(
        \left(\frac{n+t}{\sqrt t}\right)^{2/3}
    \right)
\]
many queries, and for the part applying item one with $\eps_2= \frac{t}{2m_1^+}=\Theta(\frac{t}{m+t})$ it uses
\[
    \widetilde O\left(
        \left(\frac{n+t}{\eps_2\sqrt{m+t}}\right)^{2/3}
        +
        \left(\frac{n+t}{\sqrt t}\right)^{2/3}
    \right)=
    \widetilde O\left(
        \left(\frac{(n+t)\sqrt{m+t}}{t}\right)^{2/3}
        +
        \left(\frac{n+t}{\sqrt t}\right)^{2/3}
    \right)
\]
many queries. Moreover, observe that
$
    \left(\frac{n+t}{\sqrt t}\right)^{2/3}
    \le
    \left(\frac{(n+t)\sqrt{m+t}}{t}\right)^{2/3}. 
$
Hence the total number of queries is $$\tilde{O}(\left(\frac{(n+t)\sqrt{m+t}}{t}\right)^{2/3})=\tilde{O}(\left(\frac{n\sqrt{m+t}}{t}\right)^{2/3}).$$
\end{proof}

The quantum procedures developed from the above collision subroutines will later be invoked with different parameters and hence can have highly nonuniform query costs. To combine such procedures efficiently, we use the following variable-time approximate-counting primitive. We refer to~\cite{LeGall14Triangle,Jeffery22Composition,Ambainis10VariableTime} for a more detailed discussion of the variable-time model.

\begin{theorem}[Variable-time approximate counting tester]
\label{thm:VTcounting}
Let $\eps,\tau\in(0,1)$, $p_1,\ldots,p_k\in[0,1]$, $B=\sum_{s=1}^k p_s$, $t_1,\ldots,t_k>0$, and $T=\sum_{s=1}^k t_s^2$. Let threshold $W>0$. Suppose that, for each $s\in[k]$, there exists a quantum algorithm $\mathcal A_s$ using at most $t_s$ queries to a common input oracle, and let
\[
p_s:=\left\|\Pi_{\mathrm{acc}}\mathcal A_s\ket{0}\right\|^2
\]
denote its acceptance probability, where $\Pi_{\mathrm{acc}}$ projects onto its designated accepting subspace. There is a quantum algorithm that, with success probability $\geq1-\tau$, decides whether $B\geq(1+\eps)W$ or $B\leq(1-\eps)W$ in query complexity
\[
\widetilde O\left(\frac{\sqrt T}{\eps\sqrt W}\right).
\]
\end{theorem}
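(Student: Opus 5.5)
We prove Theorem~\ref{thm:VTcounting} by combining Ambainis-style variable-time amplitude amplification~\cite{Ambainis10VariableTime} with amplitude estimation (Theorem~\ref{thm:amplitude_estimation}). The target is an amplitude that equals a rescaled version of $B$ and can be prepared at cost roughly $\sqrt{T}$.

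\textbf{Step 1 (the superposition).} We choose the initial distribution over indices with weights $w_s=t_s^2/T$ rather than uniformly. The state is prepared with $\mathcal A_s$ controlled on $\ket{s}$, and we then rotate a flag qubit by an amount depending on $s$. On branch $s$ the flag is marked with amplitude $\sqrt{p_s}\cdot c/t_s$, where $c:=\min_s t_s$ or any common normalizer with $c\le t_s$.

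The resulting marked probability is
\[
\sum_s \frac{t_s^2}{T}\cdot p_s\cdot \frac{c^2}{t_s^2}=\frac{c^2B}{T}.
\]
Thus the marked mass is proportional to $B$, with a known factor. A naive controlled implementation would cost $\max_s t_s$ per round. To avoid this, we use the variable-time framework: group the indices into $O(\log\max_s t_s)$ dyadic cost classes $t_s\in[2^i,2^{i+1})$, and run the algorithms in stages that stop once their time budget is exhausted. This follows the standard analysis of variable-time search~\cite{Ambainis10VariableTime}, with success replaced by acceptance. It yields an algorithm $\mathcal V$ whose good-subspace weight is $a=\Theta(B/\widetilde T)$ for a known normalization $\widetilde T=\widetilde\Theta(T)$. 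The effective cost of $\mathcal V$ inside amplitude amplification or estimation is $\widetilde O(\sqrt T)$ per $1/\sqrt{a}$ factor, as opposed to $\max_s t_s$ per round.

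\textbf{Step 2 (estimation).} Because the normalization is known exactly, as a product of stage-wise rotation angles that we choose, $a\,\widetilde T$ equals $B$ up to a known constant. Deciding whether $B\ge(1+\eps)W$ or $B\le(1-\eps)W$ is therefore the same as deciding whether $a\ge(1+\eps)a_W$ or $a\le(1-\eps)a_W$, where $a_W=W/\widetilde T$. Amplitude estimation to relative error $\eps/3$ at scale $a_W$ needs $M=O(1/(\eps\sqrt{a_W}))$ rounds, by the error bound $\sqrt{a}/M+1/M^2$ in Theorem~\ref{thm:amplitude_estimation}. In variable-time form this costs
\[
\widetilde O\!\left(\frac{1}{\eps}\sqrt{\frac{\widetilde T}{W}}\right)=\widetilde O\!\left(\frac{\sqrt T}{\eps\sqrt W}\right).
\]
When $a\gg a_W$, the estimate still certifies the upper case, since its error is at most $\eps a$ to first order. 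We repeat $O(\log(1/\tau))$ times and take the median to reach success probability $1-\tau$.

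\textbf{Main obstacle.} The hard step is the variable-time construction itself. Ambainis's analysis is tailored to search, that is, to detecting whether the marked weight is nonzero. Here we must preserve the exact value of the marked weight, since estimation fails if intermediate stages distort it. The plan is to make every stage's "continue" branch carry an exactly known amplitude. At stage $i$, we apply to the still-running branches a fixed-angle rotation, chosen obliviously of the input, that scales them down by a known factor. We avoid amplifying within stages, which would introduce input-dependent nonlinear distortion. Instead, we recover the $\sqrt T$ scaling by nesting amplitude estimation using the $\sum_i 2^{2i}\,(\text{mass of class } i)\le T$ bookkeeping.

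If exact normalization proves too delicate, the fallback is a cruder route. We estimate each dyadic class's contribution $B_i=\sum_{s\in\text{class } i}p_s$ separately, with a known uniform cost $2^{i+1}$ per class, and compare $\sum_i B_i$ to the threshold with error budget $\eps W/\log$ split across classes. This loses only polylogarithmic factors, provided each class is estimated in $\widetilde O(2^i\sqrt{k_i}/(\eps\sqrt W))\le\widetilde O(\sqrt T/(\eps\sqrt W))$ queries via Theorem~\ref{thm:amplitude_estimation} applied to the uniform superposition over the $k_i$ indices of class $i$.
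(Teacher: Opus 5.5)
Your main variable-time route does not prove the bound, but your fallback is essentially the paper's proof and works once two steps are added.

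\textbf{The main route.} In Step~1, weighting by $w_s=t_s^2/T$ and then damping by $c/t_s$ gives marked mass $c^2B/T\le B/k$, since $kc^2\le T$. That is no more than the plain uniform superposition over $s$ already gives. An oblivious rotation that shrinks an amplitude only increases the number of estimation rounds. Each round still runs $\mathcal A_s$ coherently, for up to $\max_s t_s$ queries.

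You then switch, without justification, to $a=\Theta(B/\widetilde T)$. In Step~2 you charge a total equal to the number of rounds, $M=O(\sqrt{\widetilde T/W}/\eps)$, i.e.\ $\widetilde O(1)$ queries per use of $\mathcal V$. That amortization is exactly what a variable-time amplitude-estimation theorem would have to supply. The construction in your ``main obstacle'' paragraph cannot supply it: in Ambainis's framework the saving comes precisely from amplifying inside the stages, which you propose to avoid.

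Even a correct variable-time estimator with uniform weights costs $\widetilde O(\max_s t_s+\sqrt{T/W}/\eps)$, and the additive term can dominate. For $t_1=\sqrt k$, $t_s=1$ otherwise, and $W=k/2$, the target is $O(1/\eps)$ while $\max_s t_s=\sqrt k$.

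\textbf{The fallback.} This is the right argument. The paper groups $G_j=\{s:2^{j-1}<t_s\le 2^j\}$ and runs amplitude estimation on the uniform superposition over $G_j$, which has acceptance probability $B_j/|G_j|$ and costs $2^j$ per run. It then thresholds $\sum_j\widehat B_j$ at $W$. The paper splits the threshold proportionally, $W_j=WT_j/T$ with $T_j=\sum_{s\in G_j}t_s^2$, and demands error $\frac{\eps}{10}(B_j+W_j)$. This makes every class cost $\widetilde O(\sqrt{T/W}/\eps)$. Your uniform $\eps W/\log T$ split also works, at one extra logarithmic factor. Two steps are missing, though.

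First, your per-class bound $\widetilde O(2^i\sqrt{k_i}/(\eps\sqrt W))$ fails when it is smaller than $2^i$, because amplitude estimation needs at least one run of the class circuit. The target has no additive $\max_s t_s$ term, so such classes must be skipped: set $\widehat B_i=0$ when $k_i\lesssim\eps^2W/\log^2T$. This is harmless because $B_i\le k_i$ is then within budget; the paper does exactly this.

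Second, with $M_i\approx\log T\cdot\sqrt{k_i/W}/\eps$, the error on $B_i=k_ip_i$ is $O\bigl(\frac{\eps}{\log T}\sqrt{B_iW}\bigr)+O\bigl(\frac{\eps^2W}{\log^2T}\bigr)$. This exceeds your budget $\eps W/\log T$ when $B_i\gg W$, which can happen in the \textsc{Yes} case. Use $\sqrt{B_iW}\le(B_i+W)/2$ to bound the total error by $O(\eps(B+W))$. With suitable constants, that still separates $B\ge(1+\eps)W$ from $B\le(1-\eps)W$ when you threshold at $W$.
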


\begin{proof}
Without loss of generality, we assume all $t_s$ are at least $1$. 
The idea is to divide the $p_s$ into groups based on their costs $t_s$.
Define $G_j\equiv\{s\in [k]: 2^{j-1}< t_s \leq 2^j\}$, 
$B_j:=\sum_{s\in G_j}p_s,$
$T_j=\sum\limits_{s\in G_j} t_s^2$, and $W_j=W\cdot T_j/T$. Note that there are at most $1+\log_2 T$ many nonempty $G_j$ because $\max_s t_s\leq \sqrt{T}$. Now we would like to estimate each $B_j$ with additive error $\eps(B_j+W_j)/10$ because that would imply an estimator $\hat B=\sum_j\hat B_j$ for $B$ with total additive error $|\hat B-B|\leq\sum_j\eps(B_j+W_j)/10=\eps(B+W)/10$. We output the \textsf{YES} case if $\hat B>W$ and the \textsf{NO} case otherwise. If $B\geq(1+\eps)W$, then $\hat B\geq B-\eps(B+W)/10>W$, while if $B\leq(1-\eps)W$, then $\hat B\leq B+\eps(B+W)/10<W$. In particular, if $B=0$, then $\hat B\leq\eps W/10<W$, and hence the algorithm outputs the \textsf{NO} case.

For every $j$ such that $\frac{100}{\eps}\sqrt{|G_j|/W_j}<1$, we simply set $\hat B_j=0$. In this case $|G_j|<\eps^2W_j/10000$, and hence $|\hat B_j-B_j|=B_j\leq |G_j|\leq\eps(B_j+W_j)/10$.  
For every other $j$, prepare a uniform superposition $\frac{1}{\sqrt{|G_j|}}\sum_{s\in G_j}\ket{s}\ket{0}$ over $s\in G_j$ and apply the corresponding algorithm $\mathcal A_s$ controlled on the first register. The acceptance probability of the resulting procedure is
\[
p_j=\frac{1}{|G_j|}\sum_{s\in G_j}p_s=\frac{B_j}{|G_j|}.
\]
The procedure can be implemented using
$\widetilde O(\max_{s\in G_j}t_s)=\widetilde O(2^j)$ queries.

Now if we apply amplitude estimation (Theorem~\ref{thm:amplitude_estimation}) with failure probability at most $\tau/(1+\lceil\log_2T\rceil)$ and with $M_j=\left\lceil\frac{100}{\eps}\sqrt{\frac{|G_j|}{W_j}}\right\rceil$ for each $j$, then we can get $\hat p_j$ such that $|\hat p_j-p_j|\leq 2\pi\sqrt{p_j}/M_j+\pi^2/M_j^2$. Multiplying both sides by $|G_j|$ and substituting $M_j$, we get
$$
|\hat B_j-B_j|\leq \frac{2\pi\eps}{100}\sqrt{B_jW_j}+\frac{\pi^2\eps^2}{10000}W_j\leq \frac{\eps}{10}(B_j+W_j),
$$
where we let $\hat B_j\equiv |G_j|\hat p_j$ and the second inequality uses the fact $2\sqrt{B_jW_j}\leq B_j+W_j$. By a union bound over all $G_j$, all these estimates succeed simultaneously with probability at least $1-\tau$.

Now we calculate our total cost. For every $j$, we use $2^j$ cost to implement the superposition query and Theorem~\ref{thm:amplitude_estimation} uses $\widetilde O(M_j)$ queries, so the total cost for estimating each nontrivial $B_j$ is
$$
2^jM_j=\widetilde O\left(\frac{2^j}{\eps}\sqrt{\frac{|G_j|}{W_j}}\right)\leq \widetilde O\left(\frac{2}{\eps}\sqrt{\frac{T_j}{W_j}}\right)=\widetilde O\left(\frac{1}{\eps}\sqrt{\frac{T}{W}}\right),
$$
where the inequality follows because $T_j=\sum\limits_{s\in G_j}t_s^2\geq |G_j|(2^{j-1})^2$, and the last equality follows from $W_j=WT_j/T$. Summing over the at most $1+\log_2T$ groups and absorbing the logarithmic factors into the $\widetilde O$ notation, the total number of queries is $\widetilde O(\sqrt{T}/(\eps\sqrt W))$.
\end{proof}

Finally, besides estimating the number of collisions, we will sometimes need to locate the collisions themselves. We therefore include the following two additional primitives for collision problems.
The first one finds all collisions between two non-repetitive strings. A self-contained proof, adapting the algorithm of~\cite{Bonnetain2025multiplecollision} to a fixed matching, is given in Appendix~\ref{app:fixed-matching-collisions}.

\begin{theorem}
\label{thm:findingcollisions}
Let $A,B\in\Sigma^n$ be non-repetitive strings, and let $m$ be the
number of collisions between $A$ and $B$. Given quantum query access
to $A$ and $B$, there is a quantum algorithm that, with probability
at least $1-\tau$, finds all $m$ collisions using
\[
  \widetilde O\!\left(
    n^{2/3}m^{1/3}\log(1/\tau)
  \right)
\]
expected queries.
\end{theorem}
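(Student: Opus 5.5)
We will follow the iterative ``find one batch, then sparsify'' strategy of the multiple-collision algorithm of~\cite{Bonnetain2025multiplecollision}, specialized to the fixed matching structure. Since $A$ and $B$ are non-repetitive, every value occurs at most once in each string. The collision structure is therefore a partial matching $\pi$ between positions of $A$ and positions of $B$ with exactly $m$ edges, and it has no larger collision tuples. First we handle $m$ unknown: we estimate $m$ to within a constant factor using Theorem~\ref{cor:legall}. We use item~2 with additive error $t$ at doubling scales (equivalently, the padding trick), and first test whether $m=0$ by a single collision-finding walk in $\widetilde O(n^{2/3})$ queries. This costs $\widetilde O(n^{2/3}(m+1)^{1/3})$. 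We then work with $\bar m=\Theta(m+1)$.

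The main procedure is a single Johnson-graph walk (as in Theorem~\ref{thm:gap_dist_collisions}) on $r$-subsets of $[2n]$, with $r=\Theta(n^{2/3}\bar m^{1/3})$. We keep the queried values in the database and call a vertex marked if it contains a collision pair. A uniform $r$-subset contains a given pair with probability $\rho=\Theta(r^2/n^2)$. The expected number of contained pairs is therefore $\bar m\rho=\Theta(\bar m^{5/3}/n^{2/3})$, which is why we choose the stronger tuning below.

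Rather than finding one collision per walk, we instead do the following, mirroring~\cite{Bonnetain2025multiplecollision}. We maintain the set $F$ of collisions found so far and set up a stored random subset $S_0$ of size $r$ once, at cost $r$. The stored set already contains $\approx \bar m r^2/n^2$ pairs, and we read these off for free. We then repeatedly find a single new collision by a MNRS walk on subsets of size $k$ that searches for an element colliding either with $S_0$ or internally. Concretely, we search over $x\in[2n]\setminus S_0$ whose partner lies in $S_0$. This is a Grover search with marked fraction $\approx |\text{partners in }S_0|/n\approx \bar m r/n^2$ per unfound collision. Combined with an element-distinctness-style walk for collisions disjoint from $S_0$, the $i$-th new collision costs roughly $\sqrt{n^2/(r(\bar m-i))}$. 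Summing over $i$ gives about $n\sqrt{\bar m/r}$, and balancing against $r$ yields $r=n^{2/3}\bar m^{1/3}$ and total $\widetilde O(n^{2/3}\bar m^{1/3})$. Found collisions are removed from the marked set by checking membership in $F$, which requires no queries because $F$ is classical. The matching structure makes the count of remaining marked elements exact: each unfound collision contributes exactly one partner relation. This replaces the delicate random-function statistics of~\cite{Bonnetain2025multiplecollision} and is what makes the fixed-matching case simpler.

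We finish when the number of found collisions reaches the estimate. For safety against estimation error, we afterwards run one more detection walk certifying that no unfound collision remains, at cost $\widetilde O(n^{2/3})$ per scale. Every subroutine is run with failure probability $\tau/\mathrm{poly}(n)$, and a union bound gives success probability $1-\tau$. The $\log(1/\tau)$ factor comes from repetition of the constant-error walks. Queries are counted in expectation because Grover searches with unknown marked count use the standard exponential-guessing scheme.

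The main obstacle is the cost accounting for collisions whose two endpoints both lie outside $S_0$. For these, a single Grover search against $S_0$ does not suffice. We would first try to show, using that $S_0$ is uniform and $\pi$ is a matching, that with high probability a constant fraction of the collisions have an endpoint in a suitably enlarged $S_0$. We would then recurse on the remaining collisions with fresh $S_0$ and geometrically decreasing $\bar m$, so that the costs form a geometric series dominated by the first level.
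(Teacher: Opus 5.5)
\textbf{There is a genuine gap: the stored-set mechanism cannot find most of the collisions within the claimed budget.} A classically stored subset $S_0$ of size $r=n^{2/3}\bar m^{1/3}$, plus Grover search for outside elements whose partner lies in $S_0$, is the BHT-style approach. It works only when collisions are dense, as for random functions. For a fixed matching, a uniform $r$-subset touches a given pair with probability $\Theta(r/n)$. So $S_0$ meets only about $\bar m r/n=\bar m(\bar m/n)^{1/3}$ pairs. These are the only marked elements your Grover phase will ever have, and they are not replenished as you find them.

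Your per-collision cost $\sqrt{n^2/(r(\bar m-i))}$ implicitly assumes every unfound pair stays available with weight $r/n$. That holds on average over random subsets, but not for the single $S_0$ you stored. The bulk of the pairs therefore falls to the ``element-distinctness-style walk''. Finding them one at a time costs $\sum_{j\le\bar m}(n^2/j)^{1/3}=\Theta(n^{2/3}\bar m^{2/3})$, a factor $\bar m^{1/3}$ too much.

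The repair in your last paragraph does not help. For a constant fraction of pairs to have an endpoint in $S_0$, you need $|S_0|=\Omega(n)$. Resampling fresh size-$r$ sets costs $r$ queries per $O(\bar m r/n)$ newly exposed pairs. Either way the total is $\Omega(n)$.

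\textbf{The missing idea is to keep the subset in superposition, so that every pair stays accessible and the walk state is reused.} The paper does this as follows.
\begin{itemize}
\item It sets $r=(N^2q)^{1/3}$ and prepares $\ket{\mathcal V_{[E,E+T]}}$: the uniform superposition, with query databases, over all $r$-subsets containing between $E\approx\mu$ and $E+T$ pairs, where $T=\Theta(\sqrt\mu)$.
\item It measures one contained pair via a query-free extraction unitary. Deleting that pair's two positions leaves exactly the uniform state on $[E-1,E+T-1]$ for the instance $(N-2,q-1,r-2)$.
\item The other measurement outcomes are restored by fixed-point amplitude amplification built from MNRS reflections about the uniform state, each costing $\widetilde O(\sqrt r)$.
\end{itemize}
This yields amortized cost $\widetilde O(\sqrt{r/\mu})=\widetilde O(N/\sqrt{qr})$ per pair, which is the bound you wanted. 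It holds for $\lceil q/32\rceil$ pairs per round, and the geometric decrease of $q$ across rounds sums to $\widetilde O(n^{2/3}m^{1/3})$.

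The fixed-matching-specific ingredient is the statistics lemma: the variance of $Z_{\mathcal M}(S)$ is at most $\mu$, and a local-limit estimate holds. Together these guarantee that the shifted intervals keep constant weight after every deletion. The small-$\mu$ regime uses the interval $[1,L]$ instead.

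A smaller issue: the paper determines $m$ exactly, by padding one fresh common symbol and calling item~1 of Theorem~\ref{cor:legall} twice. Your constant-factor estimate leaves the stopping rule ill-defined unless you alternate detection and search in a loop.
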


\begin{proof}
See Appendix~\ref{app:fixed-matching-collisions}.
\end{proof}

We will also need to find a single collision, sampled uniformly among the existing collisions. For this, we use the following quantum pair-finding result.

\begin{theorem}[Quantum pair finding]
\label{thm:quantumpairfinding}
Let $X,Y$ be two sets of sizes $N\leq M$, and let
\[
\mathcal C:=\{(x,y)\in X\times Y:f(x)=g(y)\}
\]
consist of $L$ pairwise disjoint pairs. Suppose that a lower bound $K\leq L$ is known and that $N\leq M\leq KN^2$. There is a quantum algorithm that, with success probability at least $2/3$, returns a pair in $\mathcal C$ using
\[
\widetilde O\left(\left(\frac{NM}{K}\right)^{1/3}\right)
\]
queries. Moreover, conditioned on returning a pair, every pair in $\mathcal C$ is returned with probability $1/L$.
\end{theorem}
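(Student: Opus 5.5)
We prove Theorem~\ref{thm:quantumpairfinding} with a quantum walk on the product of two Johnson graphs, followed by a symmetrization step for uniformity.

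\textbf{Walk and parameters.} Choose sizes $r_1\le N$ and $r_2\le M$, and walk on $J(X,r_1)\times J(Y,r_2)$. The database stores $f$ on the chosen subset of $X$ and $g$ on the chosen subset of $Y$. A vertex $(S_1,S_2)$ is marked if it contains a pair of $\mathcal C$, that is, some $x\in S_1$ and $y\in S_2$ with $f(x)=g(y)$.

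\textbf{Marked fraction and cost.} Since the pairs are disjoint and $L\ge K$, inclusion--exclusion (or a second-moment bound) gives a marked fraction
\[
\varepsilon=\Omega\!\left(\min\left\{1,\frac{K r_1 r_2}{NM}\right\}\right).
\]
The spectral gap is $\delta=\Theta(1/\max\{r_1,r_2\})$, the update cost is $O(1)$, and the checking cost is $0$. By the MNRS framework the cost is
\[
\widetilde O\!\left(r_1+r_2+\frac{1}{\sqrt\varepsilon}\sqrt{\max\{r_1,r_2\}}\right).
\]

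\textbf{Balancing.} Take $r_1=r_2=r=(NM/K)^{1/3}$. Then the search term is $\sqrt{NM/(Kr^2)}\cdot\sqrt r=\sqrt{NM/K}\,r^{-1/2}=r$, so the total is $\widetilde O(r)$. The hypothesis $M\le KN^2$ gives $r\le N$, so this choice is feasible (and $r\le M$). The hypothesis $N\le M$ together with $K\le L\le N$ gives $r\ge 1$. Once a marked vertex is found, we read the pair off the database without further queries. To reach success probability $2/3$, we handle an unknown $L\ge K$ by noting that a larger $L$ only increases $\varepsilon$, so the same parameters suffice.

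\textbf{Uniformity.} This is the main obstacle: a marked vertex may contain several pairs, and different pairs need not be found with equal probability. We apply a random relabeling. Pick uniformly random permutations of $X$ and of $Y$, and run the walk on the relabeled instance. Among the pairs in the found vertex, output the one with the smallest index under a fresh random order.

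Every pair is then symmetric under the relabeling group, but only if $\mathcal C$'s structure (disjoint pairs, with the remaining elements unmatched) is preserved. It is, because the permutations act on $X$ and $Y$ separately. Hence, conditioned on output, the output distribution is invariant under any permutation that maps one pair to another, and is therefore uniform $1/L$.

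We must also check that the walk's success probability does not depend on the labeling in a way that breaks this argument. It does not: the uniformly random relabeling makes the whole procedure equivariant, so conditioning on success preserves the symmetry.

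If the symmetry argument proves delicate, the fallback is rejection sampling. Find a marked vertex, pick a uniformly random pair in it, and accept it with probability $1/(\text{number of pairs in the vertex})$ times a normalizing factor. Since typical vertices contain $O(1)$ pairs when $\varepsilon\approx Kr^2/(NM)=\Theta(1/r)$, this costs only a constant-factor overhead.
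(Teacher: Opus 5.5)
Your proposal is correct and follows essentially the same route as the paper. The paper cites Theorem~2.8 of~\cite{AHJKS22} for the $\widetilde O((NM/K)^{1/3})$ product-Johnson-walk bound, which you re-derive via MNRS with $r_1=r_2=(NM/K)^{1/3}$, and it obtains uniformity by the same independent random relabelings of $X$ and $Y$ (like the paper, your symmetry step tacitly uses that the walk's output depends on $(f,g)$ only through the collision matching, since a swap of two pairs preserves that matching but not the function values, and your rejection-sampling fallback is unnecessary).
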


\begin{proof}
The query bound follows implicitly from Theorem~2.8 of~\cite{AHJKS22}. To obtain the uniform-output guarantee, independently apply uniformly random permutations to the elements of $X$ and $Y$, run the product-Johnson-graph implementation of the pair-finding algorithm on the relabelled instance, and undo the permutations on the returned pair. The initial state, walk operations, and mark test of this implementation are invariant under such relabellings. Therefore, for any two pairs $e,e'\in\mathcal C$, the probabilities of returning $e$ and $e'$ are equal. If $\mathsf{Succ}$ denotes the event that a pair is returned, then 
$\Pr[\text{output }e\mid\mathsf{Succ}] = \frac{\Pr[\text{output }e]} {\sum_{e'\in\mathcal C}\Pr[\text{output }e']} = \frac1L.$
The random relabelling uses no input queries and hence does not change the query complexity.
\end{proof}

\subsection{Quantum \textsc{GapUlamTest}}
\label{sec:QGapUlamTester}

We first give a quantum analogue of the \textsc{GapUlamTest} of Andoni and Nguyen~\cite{AN10UlamTester}. Given two non-repetitive strings $A,B\in\Sigma^n$ and thresholds $a<b$, the goal is to distinguish
$$
    \ed(A,B)\leq a
    \qquad\text{from}\qquad
    \ed(A,B)\geq b.
$$
We retain the multiscale framework of~\cite{AN10UlamTester}, replacing its local estimation procedures by the quantum collision primitives developed in the previous subsection. 
We briefly recall the structure of their tester. Partition $A$ into blocks of length $a$, and associate to each block a local window of $B$ of length $3a$:
\begin{align}
\label{eq:GapStringsPartition}
A_k=A[ka+1,\ldots,ka+a],
\qquad
B_k=B[ka-a+1,\ldots,ka+2a].
\end{align}
In~\cite{andoni2012smoothed}, they characterize large Ulam distance through two types of contributions. The first is
$$
    X:=\sum_{k=0}^{n/a-1}|A_k\setminus B_k|,
$$
which counts symbols of $A_k$ that have no local match in $B_k$. The second, denoted $Y_\delta$, counts locally matching pairs $A[u]=B[v]$ that are significantly misaligned: namely, pairs with $A[u]\in A_k\cap B_k$ for some $k$ and
$$
    |A[u-t,\ldots,u-1]\triangle B[v-t,\ldots,v-1]|
    >2\delta t
$$
for some $t\leq4a$. Their analysis shows that estimating these two quantities at geometrically increasing scales suffices to distinguish the two cases above. The $X$-contribution can be handled using our additive collision estimator, since
$$
    |A_k\setminus B_k|
    =
    a-|A_k\cap B_k|.
$$
For the $Y$-contribution, we first use Theorem~\ref{thm:findingcollisions} to enumerate suitably sampled local collisions, and then determine which of them are significantly misaligned. The latter requires one additional primitive.

\begin{theorem}
\label{cor:ytest}
Let $A,B\in\Sigma^n$ be non-repetitive strings, let $\tau\in(0,1)$, and let $a\in\mathbb N$. Given positions $u,v$, there is a quantum algorithm that 
makes $\widetilde O(a^{1/3})$ queries and with probability  $\geq 1-\tau$, distinguishes
$$
    \exists\,t\in[4a]:
    |A[u-t,\ldots,u-1]\triangle B[v-t,\ldots,v-1]|>t $$
from
$$\forall\,t\in[4a]:
        |A[u-t,\ldots,u-1]\triangle B[v-t,\ldots,v-1]|
    \leq0.8t.
$$

\end{theorem}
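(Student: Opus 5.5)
Since $A[u-t,\ldots,u-1]$ and $B[v-t,\ldots,v-1]$ are windows of the same length $t$ of non-repetitive strings, $|A[u-t,\ldots,u-1]\triangle B[v-t,\ldots,v-1]|=2(t-c_t)$, where $c_t$ is the number of collisions between the two windows. The two cases therefore become: (YES) some $t\in[4a]$ has $c_t<t/2$; (NO) every $t\in[4a]$ has $c_t\geq 0.6t$. Both conditions read off the collision count of a pair of aligned suffix windows, so the plan reduces the test to a logarithmic number of calls to the gap-sensitive collision tester, Theorem~\ref{thm:gap_dist_collisions}.

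\textbf{Reducing to geometric scales.} Testing all $4a$ values of $t$ directly would cost an extra factor of $a$, so I restrict to geometric scales. Set $t_i=\lceil(1+\theta)^i\rceil$ for a small constant $\theta$, say $\theta=1/50$. Since $c_t$ is monotone in $t$ and increases by at most $1$ per unit step, for $t_i\leq t\leq t_{i+1}$ we have $c_{t_{i+1}}\leq c_t+(t_{i+1}-t)\leq c_t+\theta t+1$. So in the YES case the witnessing scale $t$ gives $c_{t_{i+1}}<t/2+\theta t+1\leq(0.52+o(1))t_{i+1}$, while in the NO case $c_{t_{i+1}}\geq 0.6t_{i+1}$. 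Very small $t$, below some constant, I handle by reading both windows with $O(1)$ queries. Scales with $t_{i+1}$ slightly exceeding $4a$ are handled by capping at $4a$: I include $t=4a$ and use the same argument, since $c$ only increases.

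\textbf{Testing each scale.} For each of the $O(\log a)$ scales, I run Theorem~\ref{thm:gap_dist_collisions} on the two length-$t_i$ windows. These are non-repetitive, and query access to them follows from query access to $A$ and $B$ with index shifts. I use thresholds $\ell_1=\lfloor 0.53\,t_i\rfloor$ and $\ell_2=\lceil 0.6\,t_i\rceil$. Then $\ell_2-\ell_1=\Theta(t_i)$ and $\ell_2=\Theta(t_i)$, so the cost is $\widetilde O\bigl((t_i\sqrt{t_i}/t_i)^{2/3}\bigr)=\widetilde O(t_i^{1/3})\leq\widetilde O(a^{1/3})$. Each call uses failure probability $\tau/O(\log a)$, which costs only polylog overhead. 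The algorithm outputs YES iff some scale reports $c\leq\ell_1$. By a union bound all calls are correct; in the NO case every scale satisfies the promise $c\geq\ell_2$ and reports that. In the YES case, the scale $t_{i+1}$ found above satisfies $c\leq\ell_1$ and is reported as such. At scales lying inside the promise gap the tester may answer arbitrarily, but this can only add YES answers, which happens only in the YES case, so it is harmless.

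\textbf{Main obstacle.} The main obstacle is the bookkeeping in the discretization: the slack $\theta t+1$ must fit inside the gap between $0.5t$ and $0.6t$ uniformly over scales, including rounding at small $t$. This is why a constant number of small scales is read exactly. I also need to double-check that Theorem~\ref{thm:gap_dist_collisions} applies at window length $t_i$ (its internal requirement $r\leq n/2$ with $n=t_i$), and for tiny $t_i$ to fall back to exhaustive reading. The total cost is then $\widetilde O(a^{1/3}\log(1/\tau))$.
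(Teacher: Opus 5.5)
Your plan follows the paper's proof. Non-repetitiveness turns the symmetric difference into $2(t-c_t)$, you restrict to $O(\log a)$ geometric scales, and at each scale a $\widetilde O(t^{1/3})$-query collision test separates a constant fraction of collisions below $0.6t$ from at least $0.6t$, followed by a union bound. The paper calls the additive estimator of Theorem~\ref{cor:legall} with error $0.02z$ and thresholds at $0.55z$, instead of calling Theorem~\ref{thm:gap_dist_collisions} directly. That difference is immaterial: the estimator is built from the gap tester and costs the same here.

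The discretization step rests on a false claim. $c_t$ can increase by $2$, not $1$, per unit step. Extending both windows by one position adds a new symbol on each side, and each can collide with the other window's old contents. For example, $A[u-1]=p$, $B[v-1]=q$, $A[u-2]=q$, $B[v-2]=p$ gives $c_1=0$ and $c_2=2$.

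The correct bound is $c_{t'}\leq c_t+2(t'-t)$. With it, your estimate becomes $c_{t_{i+1}}<t/2+2\theta t+2\leq 0.54\,t_{i+1}+2$. This no longer sits below $\ell_1=\lfloor 0.53\,t_{i+1}\rfloor$. So the YES-witness scale may fall into the promise gap, and your conclusion that $c\leq\ell_1$ at scale $t_{i+1}$ is not justified as written. (The sharper bound $c_{t_{i+1}}<2t_{i+1}-\tfrac32 t_i$ happens to stay just under $0.53\,t_{i+1}$ for very large $t$, but only by luck.)

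The repair is easy: take $\theta=1/100$, or raise $\ell_1$ to about $0.56\,t$. The cleaner fix is the paper's: round \emph{down} to the largest scale $z\leq t$. The windows at $z$ are subwindows of those at $t$, so monotonicity alone gives $c_z\leq c_t<t/2<0.51z$ once consecutive scales are within a factor $1.02$. No Lipschitz bound is needed. Capping at $4a$ also becomes unnecessary, since $z\leq t\leq 4a$ automatically.
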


\begin{proof}
Let
$
    \mathcal Z
    :=
    \{\lfloor1.01^j\rfloor:
      0\leq j\leq\lceil\log_{1.01}(4a)\rceil\}
    \cap[4a].
$
For $z\in\mathcal Z$, define
$$
    m_z
    :=
    |A[u-z,\ldots,u-1]\cap B[v-z,\ldots,v-1]|.
$$
Since $A$ and $B$ are non-repetitive,
$$
    |A[u-z,\ldots,u-1]\triangle B[v-z,\ldots,v-1]|
    =
    2z-2m_z.
$$
Hence a symmetric difference larger than $0.98z$ implies $m_z<0.51z$, whereas a symmetric difference at most $0.8z$ implies $m_z\geq0.6z$. By Theorem~\ref{cor:legall}, we can estimate $m_z$ to additive error $0.02z$ using $\widetilde O(z^{1/3})$ queries. We therefore accept at scale $z$ whenever the estimate is smaller than $0.55z$. Run this test for every $z\in\mathcal Z$. Since $|\mathcal Z|=O(\log a)$, 
$$
    \sum_{z\in\mathcal Z}\widetilde O(z^{1/3})
    =
    \widetilde O(a^{1/3}).
$$
If the YES case holds for some $t$, choose $z\in\mathcal Z$ such that $z\leq t<1.02z$. Removing the final $t-z$ symbols from each substring changes their symmetric difference by at most $2(t-z)$, and therefore
$$
    |A[u-z,\ldots,u-1]\triangle B[v-z,\ldots,v-1]|
    >
    t-2(t-z)
    =
    2z-t
    >
    0.98z.
$$
Thus the test accepts at scale $z$. In the NO case, every $z\in\mathcal Z$ has symmetric difference at most $0.8z$, so every test rejects. Amplifying each call and taking a union bound completes the proof.
\end{proof}

We now plug these quantum primitives into the multiscale sampling scheme of~\cite{AN10UlamTester}. We give the algorithm explicitly for completeness. All quantum subroutines below are amplified so that their total failure probability is at most $\tau/2$.

\begin{algorithm}[H]
\caption{\textsc{Q-GapUlamTest}$(A,B,a,b,\tau)$}
\label{alg:gapulamtest}
\setlength{\baselineskip}{1.25\baselineskip}
\begin{algorithmic}[1]
\Require Non-repetitive strings $A,B$; thresholds $a<b/512$; failure probability $\tau\in(0,1)$.
\Ensure Distinguish $\ed(A,B)\leq a$ ($\textsf{CLOSE}$) vs.\ $\ed(A,B)\geq b$ ($\textsf{FAR}$).

\State Define $A_k,B_k$ as in Eq~(\ref{eq:GapStringsPartition}), and set
$L_a\gets\lfloor\log_2a\rfloor+1$.

\For{$i\gets0$ \textbf{to} $L_a-1$}
    \State Set $\widehat X_i\gets0$, $\widehat Y_i\gets0$,
    $q_i\gets\min\left\{\frac{2000L_a2^i\ln(4L_a/\tau)}{b},1\right\}$, and
    $r_i\gets\min\left\{\sqrt{\frac{10000\ln(8nL_a/(a\tau))}{2^i}},1\right\}$.
    \State Let $S\subseteq[n/a]$ include each block independently with probability $q_i$.

    \For{\textbf{each} sampled block $k\in S$}
        \State \textit{// $X$-contribution estimation:} \\ Let $\widetilde c_k$ be an
        $(0.1\cdot2^i)$-additive approximation of
        $c_k:=|A_k\cap B_k|$ by Theorem~\ref{cor:legall}, with failure
        probability at most $\tau a/(2nL_a)$.
        \State \textbf{If} $a-\widetilde c_k\geq0.9\cdot2^i$, then
        $\widehat X_i\gets\widehat X_i+1$.

        \State \textit{// $Y$-contribution estimation:} \\ Retain each position in
        $A_k,B_k$ independently with probability $r_i$, keeping its original
        index. Call the resulting strings $A'_k,B'_k$, and find all collisions
        between them using Theorem~\ref{thm:findingcollisions}.
        \State For each sampled collision $(u,v)$, apply
        Corollary~\ref{cor:ytest} to $(A,B,u,v,a)$ with failure probability at
        most $\tau/(4nL_a)$. Let $D$ be the number of pairs for which it returns
        \textsf{YES}.
        \State \textbf{If} $D/r_i^2\geq0.9\cdot2^i$, then
        $\widehat Y_i\gets\widehat Y_i+1$.
    \EndFor
\EndFor

\State Set
$
\widehat X\gets
\sum_{i=0}^{L_a-1}\frac{2^i}{q_i}\widehat X_i,
\qquad
\widehat Y\gets
\sum_{i=0}^{L_a-1}\frac{2^i}{q_i}\widehat Y_i.
$

\State \textbf{If} $\widehat X+\widehat Y>b/10$ \textbf{then return} \textsf{FAR}; \textbf{else return} \textsf{CLOSE}.

\end{algorithmic}
\end{algorithm}

\begin{theorem}
\label{thm:gapulamcomplexity}
Let $A,B\in\Sigma^n$ be non-repetitive strings, and let $a,b$ satisfy $a<b/512$. Given quantum query access to $A$ and $B$, there is a bounded-error quantum algorithm that distinguishes
$
    \ed(A,B)\leq a
 \text{ from }    \ed(A,B)\geq b
$
using expected
$
    \widetilde O\left({na^{1/3}}/{b}\right)
$
queries.
\end{theorem}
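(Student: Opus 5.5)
We analyze Algorithm~\ref{alg:gapulamtest} directly. The proof has two parts: correctness, inherited from the Andoni--Nguyen analysis, and a query bound, computed scale by scale.

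\textbf{Correctness.} The analysis of~\cite{AN10UlamTester} uses exact local tests at each scale $2^i$. It shows that the ideal counts $X_i$ (blocks with $|A_k\setminus B_k|\gtrsim 2^i$) and $Y_i$ (blocks with $\gtrsim 2^i$ misaligned local matches) satisfy two properties. First, $\sum_i 2^i(X_i+Y_i)=O(a)$ when $\ed(A,B)\le a$. Second, $\sum_i 2^i(X_i+Y_i)=\Omega(b)$ when $\ed(A,B)\ge b$, and this is robust to constant slack in the thresholds.

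We check that each quantum test implements these decisions up to constant slack.
\begin{enumerate}
    \item The $X$-test calls Theorem~\ref{cor:legall} (item 2) with additive error $0.1\cdot 2^i$, so it separates $|A_k\setminus B_k|\ge 2^i$ from $|A_k\setminus B_k|\le 0.8\cdot 2^i$.
    \item For the $Y$-test, each local collision survives subsampling independently with probability $r_i^2$. By Chernoff with $r_i^2 2^i=\Theta(\log)$, $D/r_i^2$ concentrates within a constant factor of the number of significantly misaligned pairs whenever that number is $\gtrsim 2^i$. Theorem~\ref{cor:ytest} decides misalignment within the gap between $0.8t$ and $t$.
    \item Since $q_i 2^i/b=\Theta(L_a\log(1/\tau)/b)\cdot 2^i$, each $\widehat X_i$ has expectation $q_i X_i$. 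Bernstein/Chernoff gives $\frac{2^i}{q_i}\widehat X_i = 2^iX_i\pm b/(100L_a)$, and likewise for $Y$.
    \item Summing over the $L_a$ scales and comparing with the threshold $b/10$ (using $a<b/512$) separates the two cases.
\end{enumerate}
The failure probabilities of all subroutines are union-bounded.

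\textbf{Cost at scale $i$.} The expected number of sampled blocks is $q_i n/a=\widetilde O(n2^i/(ab))$.

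For the $X$-step, Theorem~\ref{cor:legall} with $t=0.1\cdot2^i$ on length-$\Theta(a)$ strings (with $m\le a$) costs
\[
\widetilde O\bigl((a\sqrt{a}/2^i)^{2/3}\bigr)=\widetilde O\bigl(a/2^{2i/3}\bigr).
\]
Multiplied by the number of sampled blocks, this gives $\widetilde O\bigl(n2^{i/3}/b\bigr)\le\widetilde O(na^{1/3}/b)$, since $2^i\le a$.

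For the $Y$-step, the subsampled strings have expected length $r_i a$, and the expected number of surviving collisions is $m'\le r_i^2 a$. Theorem~\ref{thm:findingcollisions} then costs
\[
\widetilde O\bigl((r_ia)^{2/3}(r_i^2a+1)^{1/3}\bigr)=\widetilde O\bigl(r_i^{4/3}a\bigr)=\widetilde O\bigl(a/2^{2i/3}\bigr),
\]
using $r_i^2=\widetilde\Theta(2^{-i})$. Concavity (Jensen) lets us use expected sizes here. Running Theorem~\ref{cor:ytest} on all $m'$ pairs costs $\widetilde O(m'a^{1/3})=\widetilde O(a^{4/3}/2^i)$ per block. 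Multiplied by the number of sampled blocks, this gives $\widetilde O(na^{1/3}/b)$.

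Summing over $O(\log a)$ scales gives the claimed $\widetilde O(na^{1/3}/b)$ expected queries.

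\textbf{Main obstacle.} The main obstacle is the $Y$-estimation. The simulation of the non-repetitive queries on subsampled strings is fine, since the retained indices are chosen classically. The hard part is controlling $D/r_i^2$. Pairs may come from different $k$ and share coordinates, so they are not independent. I would first try to argue as in~\cite{AN10UlamTester}: the collisions form a matching, so each pair survives independently, which gives a clean binomial bound. At small $i$, when $r_i=1$, everything is exact.
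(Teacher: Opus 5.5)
Your proposal is correct and follows the paper's proof essentially step for step. Correctness is inherited from Andoni--Nguyen's Lemma~4.3, using the $0.8\cdot 2^i$ slack of the $X$-test and the $\delta\in\{0.4,0.5\}$ slack of Theorem~\ref{cor:ytest}, and your three per-level cost bounds match the paper's. In the testing step, your per-block bound $c_k\le a$ gives the same total $q_ir_i^2\,n\,a^{1/3}$ as the paper's $\sum_k c_k\le n$.

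One small fix is needed: the sampling error of $\frac{2^i}{q_i}\widehat X_i$ is not purely additive $\pm b/(100L_a)$ when $2^iX_i\gg b/L_a$. State it as the paper does, with a multiplicative factor $0.9$ or $1.1$ on the ideal count plus an additive $0.01b/L_a$; this is all the close/far comparison needs.
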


\begin{proof}
We analyze Algorithm~\ref{alg:gapulamtest}. Its correctness follows from the same multiscale sampling argument as the \textsc{GapUlamTest} of~\cite{AN10UlamTester}: the algorithm uses the same block decomposition, distance scales, and sampling probabilities, while our quantum routines approximate the corresponding $X$- and $Y$-contributions within the slack allowed by their analysis. We give the adapted correctness argument in Appendix~\ref{sec:correctGapUlamTester}. It remains to bound the expected query complexity. Ignoring logarithmic factors,
$$
    q_i=\widetilde O(2^i/b),
    \qquad
    r_i=\widetilde O(2^{-i/2}).
$$
We separately bound the costs of estimating the $X$-contribution, enumerating collisions, and testing whether the surviving collisions contribute to $Y$.

\textbf{$X$-contribution.}
For a sampled block $k$, Theorem~\ref{cor:legall} estimates 
$c_k=|A_k\cap B_k|$ to additive error $\Theta(2^i)$~with
$$
    \widetilde O\left(
        \left(
        \frac{a\sqrt{c_k+2^i}}{2^i}
        \right)^{2/3}
    \right)
    \leq
    \widetilde O\left(
        \frac{a}{2^{2i/3}}
    \right)
$$
queries, where we use $c_k\leq a$ and $2^i\leq a$. The expected number of sampled blocks at level $i$ is
$
    \frac{n}{a}q_i
    =
    \widetilde O\left(
        \frac{2^in}{ab}
    \right).
$
Thus the expected cost at level $i$ is
$
    \widetilde O\left(
        \frac{2^{i/3}n}{b}
    \right),
$
and summing over $i=0,\ldots,L_a-1$ gives
$
    \widetilde O\left(
        \frac{na^{1/3}}{b}
    \right).
$

\textbf{Enumerating collisions.}
For a sampled block $k$, let $N'_{k,i}$ denote the length of the subsampled strings and let $c'_{k,i}$ denote their number of collisions. Then
$$
    \Exp[N'_{k,i}]=O(ar_i),
    \qquad
    \Exp[c'_{k,i}]=c_kr_i^2.
$$
By Theorem~\ref{thm:findingcollisions}, the inequality $(x+1)^{1/3}\leq x^{1/3}+1$, H\"older's inequality, and Jensen's inequality, the expected cost of enumerating these collisions is
\[
    \widetilde O\left(
        \Exp\left[
            (N'_{k,i})^{2/3}(c'_{k,i}+1)^{1/3}
        \right]
    \right)
    \leq
    \widetilde O\left(
        (ar_i)^{2/3}(c_kr_i^2)^{1/3}
        +(ar_i)^{2/3}
    \right)
    \leq
    \widetilde O\left(
        \frac{a}{2^{2i/3}}
    \right),
\]
where we use $c_k\leq a$, $r_i=\widetilde O(2^{-i/2})$, and $2^i\leq a$ for the second term.
Multiplying by the expected number of sampled blocks gives
$
    \widetilde O\left(
        \frac{2^{i/3}n}{b}
    \right)
$
queries at level $i$, and hence
$
    \widetilde O\left(
        \frac{na^{1/3}}{b}
    \right)
$
queries over all levels.

\textbf{Testing collisions.}
Since the strings are non-repetitive and the blocks $A_k$ are disjoint,
$
    \sum_k c_k\leq n.
$
At level $i$, each collision is considered with probability $q_ir_i^2$. Therefore, the expected number of collisions tested at this level is at most $
    q_ir_i^2\sum_kc_k
    \leq
    q_ir_i^2n
    =
    \widetilde O\left(\frac{n}{b}\right).
$
By Theorem~\ref{cor:ytest}, testing each collision costs
$\widetilde O(a^{1/3})$ queries. Thus the expected cost at every level is
$
    \widetilde O\left(
        \frac{na^{1/3}}{b}
    \right).
$
There are only $O(\log a)$ levels, which is absorbed into the $\widetilde O(\cdot)$.
 Combining the three costs proves the claimed
$
    \widetilde O\left(
        \frac{na^{1/3}}{b}
    \right)
$
expected query complexity.
\end{proof}

\subsection{Quantum Ulam product tester}
\label{sec:QUlamProcut}

We next give a quantum analogue of the \textsc{UlamProductTest} of~\cite{AN10UlamTester}. We are given $k$ pairs of non-repetitive strings
$$
    (A_1,B_1),\ldots,(A_k,B_k),
$$
each of length at most $\beta R$, and wish to distinguish whether their total Ulam distance
$$
    D:=\sum_{s=1}^k\ed(A_s,B_s)
$$
is large or small. 
One could apply the \textsc{Q-GapUlamTest} from the previous subsection separately to every pair. At distance scale $R$, each such test costs
$
    \widetilde O(\beta R^{1/3}),
$
leading to a total cost $\widetilde O(k\beta R^{1/3})$. To improve the dependence on $k$, we follow the multiscale viewpoint of~\cite{AN10UlamTester}: rather than estimating every
$    D_s:=\ed(A_s,B_s)$  
 For each scale $2^j$, ideally we would like to know
$$
    x_{s,j}^\star:=\mathbf 1[D_s\geq2^j],
    \qquad
    B_j^\star:=\sum_{s=1}^k x_{s,j}^\star.
$$
The weighted counts $B_j^\star$ recover the total distance up to constant factors. Indeed, if we ignore distances below some threshold $b_{\min}=2^{j_0}$ and write
$
    E_{\min}:=\sum_{s:D_s<b_{\min}}D_s,
$
then
\begin{align}
\label{eq:dyadic-distance}
\frac12(D-E_{\min})
\leq
\sum_{j\geq j_0}2^j B_j^\star
\leq
2D.
\end{align}
This follows simply by summing the powers of two below each $D_s$. 
The advantage of this formulation is that different scales have different evaluation costs: \textsc{Q-GapUlamTest} becomes cheaper as the distance threshold grows. We therefore combine the scale-dependent tests using the variable-time approximate-counting primitive from Theorem~\ref{thm:VTcounting}, rather than paying the cost of the most expensive test on every~pair.

\begin{theorem}
\label{thm:QUlamProduct}
Let $R\in\mathbb N$, $\beta>0$, and let $(A_s,B_s)_{s=1}^k$ be non-repetitive string pairs satisfying
$
    |A_s|,|B_s|\leq\beta R.
$
Let
$
    D=\sum_{s=1}^k\ed(A_s,B_s).
$
Given quantum query access to all the strings, there is a bounded-error quantum algorithm that distinguishes
$
    D>R
    \text{ from }
    D<\frac{R}{32768}
$
using
$$
    \widetilde O\left(
        \beta\min\{\sqrt{kR},\,k^{2/3}R^{1/3}\}
    \right)
$$
queries.
\end{theorem}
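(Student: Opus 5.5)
We plan to prove the bound as the minimum of two separate algorithms. The $\widetilde O(\beta k^{2/3}R^{1/3})$ bound comes from the multiscale dyadic formulation above. The $\widetilde O(\beta\sqrt{kR})$ bound comes from a simpler sampling argument.

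\emph{Multiscale bound.} Fix $b_{\min}=2^{j_0}:=\max\{1,\lfloor R/(c k)\rfloor\}$ for a large constant $c$. Then $E_{\min}\le k b_{\min}\le R/c$ is negligible in the case $D>R$. By~\eqref{eq:dyadic-distance}, it then suffices to distinguish $\sum_{j\ge j_0}2^jB_j^\star\gtrsim R/2$ from $\sum_{j\ge j_0}2^jB_j^\star\le R/16384$. There are $O(\log(\beta R))$ relevant scales, and a constant fraction of the mass must lie at some single scale. So we guess the scale $j$ and test $2^jB_j^\star$ against a threshold $W_j\approx R/(C\log(\beta R))$.

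The indicator $x_{s,j}^\star$ is not directly computable. We replace it by the acceptance probability $p_{s,j}$ of \textsc{Q-GapUlamTest} (Theorem~\ref{thm:gapulamcomplexity}) with thresholds $a=2^j/512$ and $b=2^j$. We first truncate it to worst-case cost via Lemma~\ref{lem:expected-query-truncation}/Markov, and amplify it to error $1/\poly$. Then $p_{s,j}\approx 1$ when $D_s\ge 2^j$ and $p_{s,j}\approx 0$ when $D_s\le 2^j/512$. The pairs with $D_s$ in the gap only shift the count between adjacent scales by a constant factor, and this is absorbed by the constant $32768$: in the far case summing $\sum_j 2^j p_{s,j}$ still captures $\Omega(D_s)$, and in the close case it is $O(D_s)$.

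Each test costs $t_{s,j}=\widetilde O(\beta R\cdot 2^{j/3}/2^j)=\widetilde O(\beta R\,2^{-2j/3})$. Apply Theorem~\ref{thm:VTcounting} at scale $j$ with $B=\sum_s p_{s,j}$, $W=W_j/2^j$, and $T=k\,t_{s,j}^2$. This costs
\[
\widetilde O\!\left(\sqrt{k}\,\beta R\,2^{-2j/3}\sqrt{2^j/R}\right)=\widetilde O\!\left(\beta\sqrt{kR}\,2^{-j/6}\right).
\]
This is largest at $j=j_0$, where it gives $\widetilde O(\beta\sqrt{kR}(k/R)^{1/6})=\widetilde O(\beta k^{2/3}R^{1/3})$ when $R\ge k$. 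When $R<k$ we have $j_0=0$ and the bound reads $\beta\sqrt{kR}$, matching the other term. Summing over the logarithmically many scales finishes this branch.

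\emph{Sampling bound.} For $\widetilde O(\beta\sqrt{kR})$ we note that $D_s\le 2\beta R$, so $D/(2\beta R k)$ is the mean of $k$ values in $[0,1]$. We use amplitude estimation over a uniform superposition of $s$, combined with a constant-factor Ulam estimator per pair. Alternatively, and more simply, we observe that the multiscale computation already yields $\beta\sqrt{kR}\,2^{-j/6}\le\beta\sqrt{kR}$ with $j_0=0$, so the first term of the minimum follows from the same algorithm with $b_{\min}=1$. I expect to use this second route, which makes the minimum come from choosing $j_0$ optimally.

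\emph{Main obstacle.} The hardest step is the correctness bookkeeping: showing that the soft acceptance probabilities $p_{s,j}$ of the gap tester satisfy the two-sided inequalities analogous to~\eqref{eq:dyadic-distance} with enough constant slack to separate $D>R$ from $D<R/32768$. This needs care because the tester's behaviour on gap pairs is arbitrary. I would handle it by charging each pair $s$ only to scales $j$ with $2^j\le 512D_s$ in the upper bound, and to scales with $2^j\le D_s$ in the lower bound. Converting the expected-cost tester into the worst-case-cost subroutines required by Theorem~\ref{thm:VTcounting} is routine.
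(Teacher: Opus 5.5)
\textbf{Verdict: genuine gap in the decision rule.} Your per-scale cost accounting is correct. You are also right that the $\min$ in the statement comes from a single algorithm whose cost $\widetilde O(\beta\sqrt{kR}\,2^{-j_0/6})$ is dominated by the smallest scale. The problem is that you threshold each scale separately at $W_j\approx R/(C\log(\beta R))$ and declare \textsf{FAR} if some scale fires. This cannot distinguish $D>R$ from $D<R/32768$.

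\emph{Far case.} Pigeonhole only guarantees a $1/L$ fraction of $\sum_j 2^jB_j^\star$ at some scale, where $L$ is the number of scales. This $L$ is logarithmic and unbounded, so the fraction is not constant, and an instance spreading $D$ evenly over the scales attains this bound.

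\emph{Close case.} All of $D$ can sit at one scale. Take one pair with $D_1$ just below $R/32768$ and all other $D_s=0$ (with $k$ large enough that this scale lies above $b_{\min}$). Then the scale with $2^j\in(D_1/2,D_1]$ has $2^jB_j^\star\geq D_1/2\approx R/65536$. With arbitrary acceptance on gap pairs, the soft mass $2^jp_{1,j}$ at a larger scale can exceed $256D_1\approx R/128$. Once $C\log(\beta R)$ exceeds an absolute constant, this scale beats your threshold and the tester wrongly outputs \textsf{FAR}.

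So per-scale thresholding only separates $D>R$ from $D<R/\Theta(\log)$, and the fixed constant $32768$ cannot absorb that loss. The two-sided charging in your last paragraph is stated for the summed quantity $\sum_j2^j\sum_sp_{s,j}$, which your algorithm never estimates.

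\textbf{The fix is to aggregate across scales before thresholding.} The paper replicates each test $(s,j)$ $2^j$ times, so that $S=\sum_{j\geq j_0}2^j\sum_sp_{s,j}$ becomes an unweighted sum of acceptance probabilities. It then applies Theorem~\ref{thm:VTcounting} once, with $W=R/4$ and $\eps=1/3$. With $\eta=\min\{1/100,1/(400k\beta)\}$, the bound $\frac{1-\eta}{2}(D-E_{\min})\leq S\leq 2048D+R/100$ separates the two cases with only constant loss.

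The cost matches yours, because $T=\sum_j k\,2^jt_{s,j}^2=\widetilde O(k\beta^2R^2\,2^{-j_0/3})$ is a geometric sum. Hence $\widetilde O(\sqrt{T/R})=\widetilde O(\beta\sqrt{kR}\,2^{-j_0/6})$. Equivalently, you could estimate each scale's count to error $\eps(B_j+W_j)$ and add the estimates; this is what the proof of Theorem~\ref{thm:VTcounting} does internally.

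A minor point: Theorem~\ref{thm:gapulamcomplexity} requires $a<b/512$ strictly, so take $a=2^j/1024$ rather than $2^j/512$.
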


\begin{proof}
Write
$
    D_s:=\ed(A_s,B_s).
$
Set
$$
    j_0
    :=
    \max\left\{
        0,
        \left\lceil\log_2\frac{R}{200k}\right\rceil
    \right\},
    \qquad
    j_{\max}
    :=
    \left\lceil\log_2(\beta R)\right\rceil,
$$
and let
$
    b_{\min}:=2^{j_0}.
$
The contribution
$
    E_{\min}
    :=
    \sum_{s:D_s<b_{\min}}D_s
$
of scales below $b_{\min}$ is less than $R/100$: if $j_0=0$, then $E_{\min}=0$ since the $D_s$ are integers, while if $j_0>0$, then
$
    b_{\min}<\frac{R}{100k}.
$
Let
$
    \eta
    :=
    \min\left\{
        \frac{1}{100},
        \frac{1}{400k\beta}
    \right\}.
$
For every $s\in[k]$ and $j\in\{j_0,\ldots,j_{\max}\}$, consider
$$
    \textsc{Q-GapUlamTest}
    \left(
        A_s,B_s,\frac{2^j}{1024},2^j
    \right),
$$
where output \textsf{FAR} corresponds to acceptance.\footnote{Note that we can pad the shorter string with fresh symbols to make them have the same length and still have length $O(\beta R)$.} Since Theorem~\ref{thm:gapulamcomplexity} gives an expected query bound, we truncate each invocation after a sufficiently large constant multiple of its expectation and amplify it so that its failure probability in either promised case is at most $\eta$. This only changes its query complexity by logarithmic factors. Let $p_{s,j}$ denote the acceptance probability of the resulting procedure. Then
$$
    (1-\eta)\mathbf 1[D_s\geq2^j]
    \leq
    p_{s,j}
    \leq
    \mathbf 1[D_s>2^j/1024]+\eta.
$$
These inequalities also hold when $2^j/1024<D_s<2^j$, since in this case they reduce to the trivial bounds $0\leq p_{s,j}\leq1$.
Consequently, for
$
    S
    :=
    \sum_{j=j_0}^{j_{\max}}
    2^j\sum_{s=1}^k p_{s,j},
$
the dyadic estimate~\eqref{eq:dyadic-distance} gives
\begin{align}
\label{eq:S-vs-D}
\frac{1-\eta}{2}(D-E_{\min})
\leq
S
\leq
2048D+\eta k\sum_{j=j_0}^{j_{\max}}2^j
\leq
2048D+\frac{R}{100},
\end{align}
where the last inequality uses
$
    \sum_{j=j_0}^{j_{\max}}2^j\leq4\beta R
$
and the choice of $\eta$.

It remains to test whether $S$ is large or small. To express the weights $2^j$ as an ordinary sum, for every $r\in[2^j]$ let $\mathcal A_{s,j,r}$ be a copy of the above procedure and let
$
    p_{s,j,r}:=p_{s,j}
$
denote its acceptance probability. Then
$$
    S
    =
    \sum_{j=j_0}^{j_{\max}}
    \sum_{s=1}^k
    \sum_{r=1}^{2^j}
    p_{s,j,r}.
$$
By Theorem~\ref{thm:gapulamcomplexity}, running $\mathcal A_{s,j,r}$ costs
$
    t_{s,j}
    =
    \widetilde O\left(
        \frac{\beta R}{2^{2j/3}}
    \right)
$
queries, where the truncation and amplification overheads are absorbed into the $\widetilde O$ notation. Thus the squared-cost parameter in Theorem~\ref{thm:VTcounting} satisfies
$$
\begin{aligned}
    T
    &:=
    \sum_{j=j_0}^{j_{\max}}
    \sum_{s=1}^k
    \sum_{r=1}^{2^j}
    t_{s,j}^2=
    \widetilde O\left(
        k\beta^2R^2
        \sum_{j=j_0}^{j_{\max}}2^{-j/3}
    \right)=
    \widetilde O\left(
        k\beta^2R^2\,2^{-j_0/3}
    \right).
\end{aligned}
$$
Applying Theorem~\ref{thm:VTcounting} with $\eps=1/3$, $W=R/4$, and constant failure probability therefore uses
$$
    \widetilde O\left(\frac{\sqrt T}{\sqrt R}\right)
    =
    \widetilde O\left(
        \beta\sqrt{kR}\,2^{-j_0/6}
    \right)=
    \widetilde O\left(
        \beta\min\{\sqrt{kR},\,k^{2/3}R^{1/3}\}
    \right)
$$
queries. 
Finally, if $D>R$ and $E_{\min}<R/100$, then, since $\eta\leq1/100$,
$$
    S>
    \frac{1-\eta}{2}
    \left(R-\frac{R}{100}\right)
    >
    \frac R3
    =
    (1+\eps)W.
$$

On the other hand, if
$
    D<\frac{R}{32768},
$ then
$$
    S
    \leq2048D+\frac{R}{100}
    <
    \frac{R}{16}+\frac{R}{100}
    <
    \frac R6
    =
    (1-\eps)W.
$$
Theorem~\ref{thm:VTcounting} therefore distinguishes the two cases, completing the proof.
\end{proof}

\subsection{Quantum algorithms for Ulam distance and smoothed edit distance}
\label{sec:QPartialAlign}

Our final ingredient is a quantum analogue of the partial-alignment procedure of~\cite{AN10UlamTester}.
Once this is established, we combine it with the quantum Ulam product tester from the previous subsection to obtain our quantum Ulam distance estimator, and then apply the reduction of Andoni and Krauthgamer~\cite{andoni2012smoothed} to obtain the smoothed edit distance result.
The purpose of the partial-alignment procedure is to find matching positions $(a_1,b_1),\ldots,(a_q,b_q)$ belonging to a common longest common subsequence of $A$ and $B$. Cutting the strings at these positions produces substring pairs $(A_s,B_s)$ whose edit distances add to $\ed(A,B)$ and whose lengths are $O(\beta R)$, as required for the product tester of Theorem~\ref{thm:QUlamProduct}.

We follow the partial-alignment framework of~\cite{AN10UlamTester}. Starting from a previously identified LCS pair $(a_{i-1},b_{i-1})$, their procedure examines progressively larger local windows around the corresponding alignment and searches for a matching pair $(a_i,b_i)$. A random local collision need not itself belong to the fixed LCS, but the analysis of Andoni and Nguyen shows that, when the local windows contain sufficiently many collisions, a uniformly sampled collision is an LCS pair with sufficiently high probability. Quantumly, we implement this local search using two primitives. We first use Theorem~\ref{cor:legall} to verify that the local windows contain sufficiently many collisions.  The preliminary collision-counting step is a minor modification of the procedure in~\cite{AN10UlamTester}. It does not change the asymptotic query complexity, but makes explicit the promise under which the quantum pair-finding routine is invoked.

\begin{algorithm}[H]
\caption{\textsc{QuantumPartialAlign}$(A,B,R,\beta)$}
\label{alg:quantumpartialalign}
\begin{algorithmic}[1]
\Require Non-repetitive strings $A,B$; distance threshold $R$; block parameter $\beta$.
\Ensure Output matching positions $(a_i,b_i)_i$ forming a subsequence of an LCS of $A,B$, or return \textsf{FAIL}.

\State Set $m_0\gets0$.
\For{$i=1,\ldots,\lceil n/(\beta R)\rceil$}
\State $\mathsf{found}\gets\mathsf{false}$.
\For{$j=4,\ldots,\lceil\log_2(4R)\rceil$}
\State Set $z\gets2^j$ and sample $p\sim\mathrm{Unif}([(i-1)\beta R+4R,;i\beta R-4R])$.
\State Set $I\gets[p,p+z]$ and $J\gets[p+m_{i-1}-z,;p+m_{i-1}+2z]$.
\State Estimate the number $M$ of collisions between $A[I]$ and $B[J]$ to additive error $0.02z$ using Theorem~\ref{cor:legall}, obtaining $\widetilde M$.
\State If $\widetilde M<0.85z$, \textbf{continue}.
\State Use Theorem~\ref{thm:quantumpairfinding} to find a collision $(u,v)$ between $A[I]$ and $B[J]$.
\State Set $a_i\gets u$, $b_i\gets v$, $m_i\gets v-u$, set $\mathsf{found}\gets\mathsf{true}$, and \textbf{break}.
\EndFor
\State If $\mathsf{found}=\mathsf{false}$, \textbf{return} \textsf{FAIL}.
\EndFor
\State \Return $(a_1,b_1),\ldots,(a_q,b_q)$, where $q=\lceil n/(\beta R)\rceil$.
\end{algorithmic}
\end{algorithm}

\begin{lemma}
\label{lem:quantumpartialalignment}
Let $A,B\in\Sigma^n$ be non-repetitive strings with $\ed(A,B)\leq R$, and let $\beta=\Omega(\log^4 n)$. Given quantum query access to $A$ and $B$, Algorithm~\ref{alg:quantumpartialalign} succeeds with probability at least $2/3$ and outputs matching positions $(a_i,b_i)$ belonging to a common LCS such that $|a_{i+1}-a_i|,\ |b_{i+1}-b_i|\leq2\beta R$ for every $i$. Its query complexity is

$$
    \widetilde O\left(
        \frac{n}{\beta R}
        +
        \frac{n^{2/3}}{\beta^{2/3}R^{1/3}}
    \right).
$$

\end{lemma}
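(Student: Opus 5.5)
The plan is to split the argument into correctness, which is inherited from the partial-alignment analysis of \cite{AN10UlamTester}, and query complexity, which comes from the quantum collision primitives (Theorem~\ref{cor:legall} and Theorem~\ref{thm:quantumpairfinding}).

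\textbf{Correctness.} I will follow the Andoni--Nguyen argument essentially verbatim. Fix an LCS of $A,B$. Since $\ed(A,B)\le R$, at most $R$ positions of $A$ are unmatched, and the offsets of LCS pairs vary by at most $R$. Inductively, assume $(a_{i-1},b_{i-1})$ lies on the fixed LCS with offset $m_{i-1}$. For a random $p$ in the $i$th super-block, the window $J$ around $p+m_{i-1}$ is wide enough ($z\ge$ the local drift) to contain the $B$-partners of most LCS positions in $I$.

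By an averaging argument over $p$, with probability $1-O(1/\log n)$ there is a scale $z$ at which the following hold:
\begin{enumerate}
\item $A[I]$ and $B[J]$ have at least $0.9z$ collisions;
\item every scale for which the estimate passes the $0.85z$ threshold has at least $0.83z$ genuine collisions, most of which are LCS pairs.
\end{enumerate}
Here I use that the number of non-LCS collisions in a window is bounded by twice the edits inside it. The key point is that the sum of local edit counts over super-blocks is at most $R$, while each super-block has length $\beta R$; hence a random window of size $z$ contains $O(z\log n/\beta)$ edits in expectation.

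Since Theorem~\ref{thm:quantumpairfinding} returns a uniformly random collision, the sampled $(u,v)$ is an LCS pair with probability $1-O(1/\log^2 n)$ when $\beta=\Omega(\log^4 n)$. A union bound over the $n/(\beta R)$ iterations is then too weak on its own. I will instead use the trick of \cite{AN10UlamTester}: a failure in one block is a ``local'' error that only costs edits inside that block. Alternatively, I would amplify each step to failure probability $1/\poly(n)$ at polylog cost. The distance constraints $|a_{i+1}-a_i|\le 2\beta R$ follow since $p$ lies in consecutive super-blocks of length $\beta R$. I expect this correctness step to be the main obstacle, specifically certifying that the threshold $0.85z$ guarantees the uniform collision is an LCS pair. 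I will try to import the relevant lemma from \cite{AN10UlamTester} directly, replacing their classical sampling estimate by the additive estimator with the $0.02z$ slack.

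\textbf{Query complexity.} For a fixed $i$ and $j$, Theorem~\ref{cor:legall}(2) with $n\to 3z$, $m\le z$, $t=0.02z$ costs $\widetilde O(z^{1/3})$. Theorem~\ref{thm:quantumpairfinding} with $N=z$, $M=3z$, $K=0.8z$ (the hypothesis $M\le KN^2$ holds since $z\ge16$) costs $\widetilde O(z^{1/3})$. Summing over the $O(\log R)$ scales up to $4R$ gives $\widetilde O(R^{1/3})$ per iteration, so the total is $\widetilde O\bigl((n/(\beta R))R^{1/3}\bigr)$. This is not yet the claimed bound, so the iterations must be combined more cleverly.

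To get the claimed bound, I will parallelize the $q=n/(\beta R)$ super-blocks. The expensive large-$z$ scales are needed only in few super-blocks; typically a constant scale suffices. Using the variable-time counting and search framework (Theorem~\ref{thm:VTcounting}, together with variable-time search to locate super-blocks requiring large $z$), the cost becomes
\[
\widetilde O\Bigl(q+\sqrt{q\cdot\sum_{\text{blocks}}z_i^{2/3}}\Bigr).
\]
Combined with $\sum_i z_i=O(R)$ and H\"older's inequality, this gives
\[
\widetilde O\bigl(q+\sqrt{q\cdot q^{1/3}R^{2/3}}\bigr)=\widetilde O\bigl(q+q^{2/3}R^{1/3}\bigr),
\]
which equals
\[
\widetilde O\Bigl(\frac{n}{\beta R}+\frac{n^{2/3}}{\beta^{2/3}R^{1/3}}\Bigr).
\]
The $q$ term also covers the constant-cost steps. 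The dependence on $m_{i-1}$ from the previous block is handled by noting that the offset can be recomputed locally from a neighbouring block, so the blocks become independent.
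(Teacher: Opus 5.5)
Your proposal has genuine gaps in both the query-complexity and the correctness arguments.

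\textbf{Query complexity.} The per-scale costs $\widetilde O(z^{1/3})$ are right. The bound $\widetilde O(R^{1/3})$ per block, however, ignores the \textbf{break} in Algorithm~\ref{alg:quantumpartialalign}: the inner loop stops at the first scale whose density test passes. Let $f_i$ count the non-LCS positions between the previous anchor and the end of block $i$ in $A$, plus those in the corresponding region of $B$. Each such position lies in at most two of these ranges, so $\sum_i f_i\le 4R$. By the stopping claim of \cite{AN10UlamTester}, with probability $1-O(f_i/(\beta R))$ the loop stops by $z=O(\max\{1,f_i\})$. Since the scales grow geometrically, block $i$ costs $\widetilde O(1+f_i^{1/3})$. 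H\"older then gives $\sum_i f_i^{1/3}\le q^{2/3}(4R)^{1/3}$, which is already the target bound for the \emph{sequential} algorithm (truncated at a constant times this budget to absorb the bad event).

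Your expression $\sqrt{q\sum_i z_i^{2/3}}$ is just the Cauchy--Schwarz upper bound on this sequential sum $\sum_i z_i^{1/3}$, so parallelization gains nothing. It also causes real problems:
\begin{itemize}
\item It no longer analyzes the algorithm the lemma is about.
\item The claim that $m_{i-1}$ ``can be recomputed locally from a neighbouring block'' is circular, since the neighbour's window also depends on its predecessor's anchor.
\item Variable-time search and counting do not output one anchor per block.
\item You assert $\sum_i z_i=O(R)$ without defining the $z_i$ or proving it; that is precisely the $f_i$ accounting above.
\end{itemize}

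\textbf{Correctness.} A uniform per-block guarantee such as $1-O(1/\log^2 n)$ cannot be union-bounded over $q=n/(\beta R)$ blocks, and neither of your fixes works. Amplifying each step to failure $1/\poly(n)$ is unavailable, because you cannot test whether a returned collision lies on the fixed LCS. Tolerating ``local'' errors contradicts the conclusion that every anchor lies on a common LCS.

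What is needed is a per-block failure probability proportional to $f_i$. Over the random choice of $p$, the windows at scale $z$ contain in expectation at most $2zf_i/((\beta-8)R)$ non-LCS collisions. When the test passes there are at least $0.83z$ collisions, and pair finding returns a uniform one. So a bad anchor is returned at a given scale with probability $O(f_i/(\beta R))$. Combining this with the stopping claim, block $i$ fails with probability $\widetilde O(f_i/(\beta R))$. Summing over blocks with $\sum_i f_i\le 4R$ gives total failure $\widetilde O(1/\beta)$, apart from subroutine errors, which can be amplified at polylogarithmic cost.
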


\begin{proof}
The correctness argument follows the partial-alignment analysis of~\cite{AN10UlamTester}. Our only modification is the preliminary collision-counting test before invoking quantum pair finding. Conditioned on this test accepting, there are at least $0.8z$ local collisions, and Theorem~\ref{thm:quantumpairfinding} returns one uniformly among them. Thus the argument bounding the probability of selecting a collision outside the fixed LCS applies as in~\cite{AN10UlamTester}. We give the adapted argument, together with the guarantee $|a_{i+1}-a_i|,\ |b_{i+1}-b_i|\leq2\beta R$, in Appendix~\ref{sec:correctPartialAlign}. It remains to bound the query complexity. Fix a longest common subsequence of $A$ and $B$, and let $S_A,S_B$ denote its position sets. Conditioned on $(a_{i-1},b_{i-1})$ lying on this LCS, define

$$
    f_i
    :=
    |[a_{i-1},i\beta R]\setminus S_A|
    +
    |[b_{i-1},m_{i-1}+i\beta R]\setminus S_B|,
$$

where $m_{i-1}=b_{i-1}-a_{i-1}$. The analysis of~\cite{AN10UlamTester}, recalled in Appendix~\ref{sec:correctPartialAlign}, gives $\sum_i f_i\leq4R$ and shows that, with high probability, the inner loop for block $i$ terminates by a scale $z=O(\max{1,f_i})$.\footnote{We truncate Algorithm~\ref{alg:quantumpartialalign} after a sufficiently large constant times the stated query bound and return \textsc{Fail} upon timeout. On the good event analyzed below, the $f_i$-based charging argument stays within this budget; otherwise, the timeout is absorbed into the failure probability.}

We therefore only need to bound the cost of one scale $z$. By Theorem~\ref{cor:legall}, estimating the number of collisions between $A[I]$ and $B[J]$ to additive error $0.02z$ requires $\widetilde O(z^{1/3})$ queries. If the estimate satisfies $\widetilde M\geq0.85z$, then the true number of collisions is at least $0.83z>0.8z$. Since the two local windows have lengths $\Theta(z)$, Theorem~\ref{thm:quantumpairfinding} finds a collision using
$
    \widetilde O\left(
        \left(
            \frac{z^2}{z}
        \right)^{1/3}
    \right)
    =
    \widetilde O(z^{1/3})
$
queries. Thus each scale costs $\widetilde O(z^{1/3})$. 
Because the scales grow geometrically, the total cost for block $i$ is
$$
    \widetilde O\left(
        1+
        \sum_{j:\,2^j\leq O(f_i)}2^{j/3}
    \right)
    =
    \widetilde O(1+f_i^{1/3}).
$$
There are $q=\lceil n/(\beta R)\rceil$ blocks. Hence, by Holder's inequality and $\sum_i f_i\leq4R$,
$$
\begin{aligned}
    \widetilde O\left(
        q+\sum_{i=1}^q f_i^{1/3}
    \right)
    \leq
    \widetilde O\left(
        q+
        q^{2/3}
        \left(\sum_i f_i\right)^{1/3}
    \right)=
    \widetilde O\left(
        \frac{n}{\beta R}
        +
        \frac{n^{2/3}}
             {\beta^{2/3}R^{1/3}}
    \right),
\end{aligned}
$$
which proves the claimed query bound.
\end{proof}

\paragraph{Putting everything together.} 
As we claimed at the beginning of this section, we will give a quantum algorithm for estimating edit distance (with $\sim1/\sigma$ approximation factor) in the $\sigma$-smoothed model, using $\widetilde O\left(\frac{1}{\sigma}\left(\frac{n}{R}+\frac{n^{2/3}}{R^{1/3}}\right)\right)$ quantum queries, where $R$ denotes the edit distance between the two input strings. In particular, the query complexity becomes $\widetilde O( n^{1/3})$ when $R=\Omega(n)$ and $\sigma$ is constant. We start with putting previous subsections together to have the following quantum algorithm for estimating Ulam distance

\thmrestateUlam*

\begin{proof}
Let $\beta=100\log^{10} n$.  We first show how to construct a tester for a guess $r$ under the promise that $\textsf{ed}(A,B)\leq 2r$. Apply Lemma~\ref{lem:quantumpartialalignment} with parameter $2r$. It decomposes $A,B$ into $k=O(n/(\beta r))$ pairs (in terms of positions) of non-repetitive substrings $(A_s,B_s)$ such that
    $|A_s|,|B_s|=O(\beta r) $ and $\sum_s \textsf{ed}(A_s,B_s)=\textsf{ed}(A,B).$

Storing those positions in quantum-readable, classical-writable classical memory to allow quantum queries, we can then apply Theorem~\ref{thm:QUlamProduct} to distinguish the case
    $\textsf{ed}(A,B)>r$ form the case $\textsf{ed}(A,B)<\frac{r}{32768}$.
The query complexity of the quantum product tester is
$\widetilde O\left(\beta\min\left\{\sqrt{kr},k^{2/3}r^{1/3}\right\}\right)=\widetilde O\left(    \min\left\{\sqrt n,\frac{n^{2/3}}{r^{1/3}}\right\}\right)$.
Together with the quantum partial-alignment algorithm, the total query complexity
of the tester is therefore $\widetilde O\left(\frac{n}{r}+\frac{n^{2/3}}{r^{1/3}}\right)$. Now we explain how to use the above test to estimate $R=\textsf{ed}(A,B)$: We run the above tester for $r=n/2,n/4,n/8,\ldots$ and stop at the first value of $r$ for which it returns \textsc{Far}. By the guarantees above, the returned $r$ is within a constant factor of $R$. Amplifying the success probability for each guess only incurs polylogarithmic overhead. Moreover, since the query complexity increases geometrically as $r$ decreases, the total query complexity is dominated by the last guess which is 
\[
    \widetilde O\left(
        \frac{n}{R}+\frac{n^{2/3}}{R^{1/3}}
    \right).
\]
\end{proof}

Finally, we combine Theorem~\ref{thm:ulam} with the reduction of Andoni and Krauthgamer from smoothed edit distance to Ulam distance. There is one additional point that must be addressed for quantum queries.  The procedure in~\cite{andoni2012smoothed} for answering a query to the implicit Ulam instance maintains tables of the positions queried so far, and hence does not directly define a fixed unitary of the form
\[
    |i,z\rangle
    \longmapsto
    |i,z\oplus P[i]\rangle .
\]

In Appendix~\ref{app:ak-quantum-access}, we instead give a reversible implementation tailored to the subset databases appearing in the proof of Theorem~\ref{thm:ulam}.  Conditioned on the high-probability event of Lemma~4.6 of~\cite{andoni2012smoothed}, each stored position is represented by $O(L)$ entries of the original strings, where $ L=\Theta((\log n)/\sigma).$ These local records determine all collisions among the positions currently stored in the database.  The setup cost is multiplied by $O(L)$, replacing one stored position costs $O(L)$ queries, and checking for collisions requires no further queries to the original strings. This gives the following corollary.

\SmoothedEditDistanceCor*

\begin{proof}
Apply Lemma~4.6 of~\cite{andoni2012smoothed} to $X,Y$, and let $P,Q$ denote
the resulting non-repetitive strings.  Write $R_{\mathrm U}:=\ed(P,Q).$
Conditioned on the high-probability event in that lemma,
\[
    \Omega(1)\,\ed(X,Y)
    \leq
    R_{\mathrm U}
    \leq
    \widetilde O(1/\sigma)\,\ed(X,Y).
\]

Let $L=\Theta((\log n)/\sigma).$ By Corollary~\ref{cor:ak-quantum-access}, the proof of Theorem~\ref{thm:ulam} can be executed on the implicit strings $P,Q$ using quantum query access only to $X,Y$, with every setup and update query cost multiplied by $O(L)$.  Its total query complexity
is therefore
    $\widetilde O\left(L\left(\frac{n}{R_{\mathrm U}}+\frac{n^{2/3}}{R_{\mathrm U}^{1/3}}\right)\right).$
Since $R_{\mathrm U}=\Omega(\ed(X,Y))$, this is
\[
    \widetilde O\left(
        \frac{1}{\sigma}
        \left(
            \frac{n}{\ed(X,Y)}
            +
            \frac{n^{2/3}}{\ed(X,Y)^{1/3}}
        \right)
    \right).
\]
The comparison between $\ed(P,Q)$ and $\ed(X,Y)$ gives the claimed
approximation guarantee.
\end{proof}

\bibliographystyle{alpha}
\bibliography{refs}

\appendix
\section{Missing proofs}
\subsection{Proof of the correctness of Algorithm~\ref{alg:gapulamtest}}\label{sec:correctGapUlamTester}

\begin{theorem}[\cite{AN10UlamTester}, Lemma~4.3]
\label{thm:an10}
Let $A,B\in \Sigma^n$ be non-repetitive strings, let $a\mid n$, and let
$\delta\leq 1/2$. Let
$
A_k=A[ka+1,\ldots,ka+a],\hspace{2mm}
B_k=B[ka-a+1,\ldots,ka+2a],\hspace{2mm}
I_k=A_k\backslash B_k,
$
and $X=\sum_{k=0}^{n/a-1}|I_k|$. Define $Y_\delta$ as the number of pairs
$u,v$ such that $A[u]=B[v]$ and
$A[u]\in A_k\cap B_k$ for some $k\in\{0,\ldots,n/a-1\}$, along with
$
|A[u-t,\ldots,u-1]\Delta B[v-t,\ldots,v-1]|>2\delta t
$
for some $t\leq 4a$. Then,
\begin{enumerate}
    \item If $\ed(A,B)\leq a$, then $X\leq a$ and
    $Y_\delta\leq 4a/\delta$.
    \item If $\ed(A,B)\geq b$, then
    $X+Y_\delta\geq b(1-\delta)/2$.
\end{enumerate}
\end{theorem}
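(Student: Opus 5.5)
We prove the two items of Theorem~\ref{thm:an10} separately, working throughout with a fixed longest common subsequence (LCS) of $A$ and $B$. Since $A$ and $B$ are non-repetitive, every symbol occurs at most once in each string. Hence each common symbol determines a unique pair of positions $(u,v)$ with $A[u]=B[v]$. Moreover, $\ed(A,B)$ is, up to a factor of $2$, the number of positions of $A$ not covered by the LCS: it lies between $n-|\mathrm{LCS}|$ and $2(n-|\mathrm{LCS}|)$ (indel versus substitution conventions). I will call a position of $A$ \emph{bad} if it is not in the LCS, and analogously for $B$. Both items reduce to comparing the bad positions with $X$ and $Y_\delta$.

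\emph{Item 1.} Assume $\ed(A,B)\le a$. Then at most $a$ positions of $A$ are bad, and along the LCS the shift $v-u$ never exceeds $a$ in absolute value. So if $u\in A_k$ is an LCS position, its partner $v$ lies in $[ka-a+1,ka+2a]$, which is exactly $B_k$. Therefore only bad positions can contribute to $I_k$, and $X\le a$.

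For $Y_\delta$, consider a pair $(u,v)$ with $A[u]=B[v]$ and
\[
|A[u-t,\ldots,u-1]\triangle B[v-t,\ldots,v-1]|>2\delta t .
\]
Every symbol of this symmetric difference is either a bad position inside one of the two windows or an LCS symbol whose partner lies outside the other window. The latter case requires a misalignment, and a misalignment between the shift at $u$ and the shifts inside the window is itself paid for by bad positions in the window. So each such window of length $t$ contains at least roughly $\delta t$ bad positions, i.e.\ a $\delta$-fraction of bad positions in $[u-t,u)$ or in $[v-t,v)$. By a standard covering (Vitali-type, or the ``density of bad points'' lemma used by Andoni--Nguyen), the number of $u$ that are right endpoints of such dense windows is at most $2\cdot(\#\text{bad})/\delta$ for each string. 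Adding the contributions of the two strings gives $Y_\delta\le 4a/\delta$.

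\emph{Item 2.} Assume $\ed(A,B)\ge b$. I would construct a common subsequence using only pairs $(u,v)$ that are counted neither in $X$ nor in $Y_\delta$. Such pairs are \emph{locally good}: $A[u]\in B_k$, and for every $t\le 4a$ the windows behind $u$ and $v$ agree up to $2\delta t$ symbols. The main step is to show that locally good pairs are mutually consistent, i.e.\ that two locally good pairs $(u,v)$ and $(u',v')$ with $u<u'$ also satisfy $v<v'$, so that all of them together form a common subsequence. For $u'-u\le 4a$, apply the window condition at $(u',v')$ with $t=u'-u$: the window behind $u'$ contains $u$, and with $\delta\le 1/2$ the symmetric-difference bound forces most of that window's symbols to appear just before $v'$. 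An inversion $v>v'$ would make this impossible. For $u'-u>4a$, use locality: $v\in B_k$ and $v'\in B_{k'}$, and the windows $B_k$ stretch only $a$ beyond $A_k$ on either side, which precludes crossing. Granting consistency, $|\mathrm{LCS}|\ge n-X-Y_\delta$, so $b\le\ed(A,B)\le 2(X+Y_\delta)$. Carrying the $2\delta t$ slack through the argument then yields $X+Y_\delta\ge b(1-\delta)/2$.

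The main obstacle is this consistency step with the exact constant $(1-\delta)/2$. In particular, the intermediate range of $u'-u$, where windows from different blocks interact, needs care. My first attempt would be a greedy left-to-right chain argument: discard one endpoint of each inversion and charge the discarded pair to the $2\delta t$ mismatches, which accounts for the extra factor $(1-\delta)$.
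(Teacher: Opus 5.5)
The paper does not prove this statement; it is quoted from Andoni--Nguyen. Your argument therefore has to stand on its own, and it has a genuine gap in Item~2. Your central claim is that pairs counted neither in $X$ nor in $Y_\delta$ are pairwise order-consistent. This is false already for $\delta=1/2$, which is exactly the value used when the paper applies the statement in the far case. Let $A,B$ have identical long prefixes and suffixes and differ only in the middle:
\[
A=\cdots s_{-5}\,s_{-4}\,s_{-3}\,s_{-2}\,s_{-1}\,s_0\,s_1\,s_2\,s_3\,s_4\cdots,\qquad
B=\cdots s_{-5}\,s_2\,s_{-3}\,s_1\,s_3\,s_4\,s_{-2}\,s_{-4}\,s_{-1}\,s_0\cdots .
\]
Here $s_0$ precedes $s_4$ in $A$ but follows it in $B$. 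For each of $s_0$ and $s_4$, the length-$t$ windows preceding it in $A$ and in $B$ share $1,1,2,3,3,3,4$ symbols for $t=1,\dots,7$, and $t-4$ symbols for $t\ge 8$. So the symmetric difference never exceeds $t=2\delta t$. Take $a=4$ and put $s_0$ at the first position of a block. The shifts are then $+4$ and $-4$, so $s_0\in A_k\cap B_k$ and $s_4\in A_{k+1}\cap B_{k+1}$. Both pairs are counted neither in $X$ nor in $Y_{1/2}$, yet they cross.

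Your argument fails exactly where you say an inversion would make the window condition impossible. The window of length $u'-u$ at $(u',v')$ loses only the single symbol $A[u]$, and the slack $2\delta t$ tolerates that. The conditions at $(u,v)$ only look to the left of $u$, so there is no analogue of the two-sided monotonicity argument in which every element between the endpoints is out of order with one of them.

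This is also where the factor $1-\delta$ comes from. If consistency held, you would get $X+Y_\delta\ge b/2$ outright, so your sentence about ``carrying the slack through'' has no content. What must actually be shown is a quantitative statement such as $n-|\mathrm{LCS}|\le (X+Y_\delta)/(1-\delta)$. In words: crossings among good pairs can be removed by discarding pairs whose number is charged to the mismatches already counted in $X$ and $Y_\delta$. Your final sentence names such a scheme but specifies neither what is discarded nor why the charges are bounded, so Item~2 remains unproved.

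Item~1 is fine in substance, with two fixes. For an LCS pair $(u,v)$, the LCS pairs meeting the $A$-window and those meeting the $B$-window are two runs of the LCS ending just before $(u,v)$. So the windows share at least $t-\max(d_A,d_B)$ symbols, where $d_A,d_B$ count non-LCS positions in each window, and a violation forces $\max(d_A,d_B)>\delta t$. The one-sided (rising-sun) maximal inequality then bounds the number of such endpoints by $|D|/\delta$ per string, where $D$ is that string's set of non-LCS positions. First, use this sharp constant rather than $2|D|/\delta$. Second, your shift argument ignores the at most $a$ collision pairs $(u,v)$ that are off the LCS; with the sharp constant they are absorbed, since $2a/\delta+a\le 4a/\delta$.
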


\begin{proofof}{the correctness of Algorithm~\ref{alg:gapulamtest}}
     We first show the correctness of Algorithm~\ref{alg:gapulamtest}. Since $a<b/512$, if $\ed(A,B)\geq b$, then applying Theorem~\ref{thm:an10} with $\delta=0.5$ gives
$X+Y_{0.5}\geq b/4$. If $\ed(A,B)\leq a$, then applying Theorem~\ref{thm:an10} with $\delta=0.4$ gives
$X\leq a$ and $Y_{0.4}\leq 10a$. Therefore, it suffices to show that, with success probability $\geq 1-\tau$, $
0.9X-0.01 b\leq \widehat X\leq 3X+0.01 b$
and
$
0.9Y_{0.5}-0.01 b\leq \widehat Y\leq 2.5Y_{0.4}+0.01 b$.
Indeed, in the close case,
\[
\widehat X+\widehat Y
\leq 3a+2.5\cdot 10a+0.02 b
=28a+0.02 b
<\frac{b}{10},
\]
whereas in the far case,
\[
\widehat X+\widehat Y
\geq 0.9(X+Y_{0.5})-0.02 b
\geq 0.9\cdot\frac{b}{4}-0.02 b
>\frac{b}{10}.
\]
We start with showing
$
0.9X-0.01b\leq \widehat X\leq 3X+0.01b
$
with probability at least $1-\tau$. For every block $k$, let
$
x_k:=|A_k\setminus B_k|=a-|A_k\cap B_k|,
$
so that $X=\sum_k x_k$. For every inner loop $i$, define
$
H_i:=|\{k:x_k\geq 2^i\}| \text{ and }
K_i:=|\{k:x_k>0.8\cdot 2^i\}|.
$
According to Algorithm~\ref{alg:gapulamtest}, we call Theorem~\ref{cor:legall}
$(n/a)L_a$ times, each with failure
probability at most $\tau a/(2nL_a)$. By union bound, all inner loops succeed simultaneously with probability at least $1-\tau/2$. Moreover, by Theorem~\ref{cor:legall}, we know $|\widetilde c_k-c_k|\leq 0.1\cdot 2^i$ $\forall k\in[n/a]$. 
Thus, if $x_k\geq 2^i$, then
\[
a-\widetilde c_k
\geq a-c_k-0.1\cdot 2^i
=x_k-0.1\cdot 2^i
\geq 0.9\cdot 2^i,
\]
and hence the block is counted by $\widehat X_i$. On the other hand, if a block
is counted by $\widehat X_i$, then
\[
x_k=a-c_k
\geq a-\widetilde c_k-0.1\cdot 2^i
>0.8\cdot 2^i.
\]

Let $Z_i^-$ and $Z_i^+$ denote the numbers of sampled blocks among the
$H_i$ and $K_i$ blocks, respectively, so that
$
Z_i^-\leq \widehat X_i\leq Z_i^+,
Z_i^-\sim\operatorname{Bin}(H_i,q_i),\text{ and }
Z_i^+\sim\operatorname{Bin}(K_i,q_i).
$
By Bernstein's inequality (Lemma~\ref{thm: Bernstein}) for each $i$, with success probability $\geq 1-\tau/(4L_a)$,
$
\frac{Z_i^-}{q_i}\geq
0.9H_i-\frac{0.01b}{L_a2^i} \text{ and }
\frac{Z_i^+}{q_i}\leq
1.1K_i+\frac{0.01b}{L_a2^i}.
$
Therefore, by the union bound we know that, with probability at least $1-\tau/2$, they hold
simultaneously for every $i$.
 Combining the above two, the following upper and lower bounds hold with success probability $\geq 1-\tau$: For the lower bound,
\[
\begin{aligned}
\widehat X=\sum_{i=0}^{L_a-1}\frac{2^i}{q_i}\widehat X_i
&\geq
0.9\sum_{i=0}^{L_a-1}2^iH_i-0.01b=
0.9\sum_k\sum_{i:2^i\leq x_k}2^i-0.01b
\geq
0.9\sum_kx_k-0.01b=
0.9X-0.01b.
\end{aligned}
\]
Similarly,
\[
\begin{aligned}
\widehat X
\leq
1.1\sum_{i=0}^{L_a-1}2^iK_i+0.01b=
1.1\sum_k\sum_{i:0.8\cdot2^i<x_k}2^i+0.01b
<
1.1\cdot 2.5\sum_kx_k+0.01b
\leq
3X+0.01b,
\end{aligned}
\]
which proves the $\widehat X$ part.

We next show
$0.9Y_{0.5}-0.01b
\leq
\widehat Y
\leq
2.5Y_{0.4}+0.01b.$
For every block $k$ and $\delta\in\{0.4,0.5\}$, let
$P_{k,\delta}$ be the number of pairs $(u,v)$ contributing to
$Y_\delta$, and therefore,
$
Y_\delta=\sum_k P_{k,\delta}.
$
Condition on the event that all calls to algorithms in Theorem~\ref{thm:findingcollisions} and Corollary~\ref{cor:ytest} succeed. For a fixed block $k$ and level $i$, let $D_{k,i}^-$ be the number of sampled pairs contributing to $Y_{0.5}$, and let $D_{k,i}^+$ be the number of sampled pairs contributing to $Y_{0.4}$. By Corollary~\ref{cor:ytest} and the geometric-scale argument we know that
$D_{k,i}^-\leq D\leq D_{k,i}^+.$
Moreover, since the strings are non-repetitive, distinct collision pairs
use distinct positions, and hence
$D_{k,i}^-\sim\operatorname{Bin}(P_{k,0.5},r_i^2)$ and $D_{k,i}^+\sim\operatorname{Bin}(P_{k,0.4},r_i^2)$.

By Bernstein's inequality and the choice of $r_i$, with high probability
simultaneously for every block and every level,
$P_{k,0.5}\geq2^i $ hence $
\frac{D}{r_i^2}>0.9\cdot2^i,$
whereas
$\frac{D}{r_i^2}>0.9\cdot2^i$ hence $P_{k,0.4}>0.88\cdot2^i.$
For every level $i$, define
$P_i:=|\{k:P_{k,0.5}\geq2^i\}|,
\qquad
Q_i:=|\{k:P_{k,0.4}>0.88\cdot2^i\}|.$
Therefore, the number of blocks passing the level-$i$ test is between
$P_i$ and $Q_i$. Applying the same Bernstein and block-sampling argument
as for $\widehat X$, with high probability simultaneously for every $i$,
$0.9P_i-\frac{0.01b}{L_a2^i}
\leq
\frac{\widehat Y_i}{q_i}
\leq
1.1Q_i+\frac{0.01b}{L_a2^i}.$
Consequently,
$$\widehat Y \geq 0.9\sum_i2^iP_i-0.01b \geq 0.9Y_{0.5}-0.01b,$$ and similarly we have $$\widehat Y \leq 1.1\sum_i2^iQ_i+0.01b< 1.1\cdot\frac{2}{0.88}Y_{0.4}+0.01b =2.5Y_{0.4}+0.01b,$$ which proves the $\widehat Y$ part.

\end{proofof}

\subsection{Proof of the correctness of Algorithm~\ref{alg:quantumpartialalign}}
\label{sec:correctPartialAlign}

Fix an LCS of $A,B$, and let $S_A,S_B$ be the sets of its positions
in $A,B$, respectively. Conditioned on the previous anchor
$(a_{i-1},b_{i-1})$ belonging to this LCS, define
$$
f_i:=|[a_{i-1},i\beta R]\setminus S_A|
+|[b_{i-1},m_{i-1}+i\beta R]\setminus S_B|,
$$
where $m_{i-1}=b_{i-1}-a_{i-1}$. Thus, $f_i$ is the number of
positions outside the fixed LCS from the previous anchor to the end
of the $i$-th block in $A$ and its corresponding region in $B$.
Moreover, each position outside the fixed LCS is contained in at most
two of the intervals defining the $f_i$'s. Since
$|[n]\setminus S_A|\leq R$ and $|[n]\setminus S_B|\leq R$, we have
$$
\sum_i f_i
\leq
2|[n]\setminus S_A|+2|[n]\setminus S_B|
\leq4R.
$$

We recall the following two claims from the proof of Lemma~3.1
in~\cite{AN10UlamTester}.

\begin{claim}[\cite{AN10UlamTester}, Claim~3.1]
\label{clm:an10-bad-anchor}
For any fixed $j\leq\log(4R)$, in the original sampling procedure of
\cite{AN10UlamTester}, the probability that a position outside $S_A$
appears among the sampled collisions is at~most
$$
\frac{f_i\gamma^2}{\beta R}.
$$
\end{claim}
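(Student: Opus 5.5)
The claim to prove is Claim~\ref{clm:an10-bad-anchor}: in the original sampling procedure of~\cite{AN10UlamTester}, at scale $j$ the probability that a position outside $S_A$ appears among the sampled collisions is at most $f_i\gamma^2/(\beta R)$. I would prove it by a first-moment (union) bound over the bad positions, using only the randomness of the anchor offset $p$ and of the subsampling. I take $\gamma$ to be the subsampling parameter of the Andoni--Nguyen procedure: at scale $z=2^j$, each position of the windows $I=[p,p+z]$ and $J$ is retained with probability about $\gamma/\sqrt z$ (up to the normalisation they use), so a fixed collision pair survives with probability about $\gamma^2/z$.

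\textbf{Step 1 (one bad position).} Fix a position $u\in[a_{i-1},i\beta R]\setminus S_A$. For $u$ to appear in a sampled collision, three things must happen.
\begin{enumerate}
\item The offset $p$ must land so that $u\in I$. Since $p$ is uniform over an interval of length $\beta R-8R=\Theta(\beta R)$ (recall $\beta=\Omega(\log^4 n)$), this has probability at most $(z+1)/\Theta(\beta R)$.
\item Given $u\in I$, its unique partner in $B$ must lie in $J$. Non-repetitiveness means $u$ has at most one partner, so $u$ belongs to at most one collision pair.
\item Both endpoints of that pair must survive subsampling, which has probability at most $\gamma^2/z$.
\end{enumerate}
Multiplying, the probability is at most $O(z)/(\beta R)\cdot\gamma^2/z=O(\gamma^2/(\beta R))$. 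The factors of $z$ cancel, which is why the bound is uniform in $j$.

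\textbf{Step 2 (sum over bad positions).} Only positions counted by $f_i$ can matter: the $A$-side positions in $[a_{i-1},i\beta R]\setminus S_A$, together with $B$-side bad positions in the corresponding region $[b_{i-1},m_{i-1}+i\beta R]$, since $J$ is $I$ shifted by $m_{i-1}$. Positions outside these ranges cannot lie in $I$ or $J$, because $p$ is confined to the $i$-th block interior and the $\pm 4R\geq 2z$ margins keep $J$ in range. A union bound over the at most $f_i$ bad positions then gives $f_i\gamma^2/(\beta R)$ after absorbing constants into $\gamma$, as Andoni--Nguyen do.

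\textbf{Main obstacle.} The delicate step is Step~1's interaction between $p$ and the subsampling. If subsampling is done per position with probability $\gamma/\sqrt z$ independently of $p$, the two events are independent and the product is valid. If instead the retention probability depends on the window as a whole, I would condition on $p$ first and then bound the pair's survival probability by $\gamma^2/z$ uniformly in $p$. I would also verify the window-containment claim in Step~2 carefully, since the shift by $m_{i-1}$ is what ties $J$ to the $f_i$ region.
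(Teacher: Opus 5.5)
Your first-moment argument is correct and is essentially the random-window calculation that the paper imports from \cite{AN10UlamTester}. A fixed bad $A$-position lands in $I$ with probability at most $(z+1)/((\beta-8)R)$; this is the same computation behind the bound $\mathbb E_p[M_{i,j}^{\rm bad}]\le 2zf_i/((\beta-8)R)$ quoted in the appendix. Multiplying by the $\approx\gamma^2/z$ survival probability of its unique collision pair and union-bounding over the at most $f_i$ bad positions gives the claim.

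Two minor points. First, the $B$-side containment is unnecessary: by non-repetitiveness, every bad collision has its $A$-endpoint in $I\setminus S_A$, with $I\subseteq[a_{i-1},i\beta R]$. Your margin claim $4R\ge 2z$ also fails for $2R<z\le 4R$. Second, your bound exceeds the stated $f_i\gamma^2/(\beta R)$ by the factor $(1+1/z)\beta/(\beta-8)$. This factor depends on AN10's exact sampling normalisation and is immaterial for how the claim is used.
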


\begin{claim}[\cite{AN10UlamTester}, Claim~3.2]
\label{clm:an10-stopping}
For $f_i>0$, the $j$-loop of the original procedure stops for some
$j$ satisfying $2^j\leq2f_i$ with probability at least
$$
1-\frac{21f_i}{(\beta-8)R}.
$$
If $f_i=0$, the $j$-loop stops at $j=4$.
\end{claim}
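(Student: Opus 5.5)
The plan is a direct structural argument about the LCS near the previous anchor, plus one Markov bound over the random choice of $p$.

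\textbf{Anchor offset.} Fix the LCS with position sets $S_A,S_B$, and condition on $(a_{i-1},b_{i-1})$ lying on it, with offset $m_{i-1}=b_{i-1}-a_{i-1}$. Let $u\in S_A$ be any position in $[a_{i-1},i\beta R]$, and let $v$ be its LCS partner. I will show that $|(v-u)-m_{i-1}|\leq f_i$. Between the anchor and $u$ (respectively $v$), the LCS pairs are matched in order. Hence $(u-a_{i-1})-(v-b_{i-1})$ equals the difference between the number of non-LCS positions of $A$ in $[a_{i-1},u]$ and of $B$ in $[b_{i-1},v]$. Both counts are bounded by the two terms in the definition of $f_i$, which gives the claimed bound.

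\textbf{The case $f_i=0$.} Every position in the relevant ranges lies on the LCS, and every offset equals $m_{i-1}$. So for $z=16$, each $u\in I$ has its partner $u+m_{i-1}\in J$. There are then $z+1>0.85z$ collisions and the loop stops at $j=4$. I would check that the window endpoints stay inside the ranges covered by $f_i$; this is why $p$ is sampled at least $4R$ away from the block boundaries.

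\textbf{The case $f_i>0$.}
\begin{enumerate}
\item Let $j^\ast$ be the smallest $j\geq4$ with $2^{j}>f_i$, and set $z=2^{j^\ast}$. If $f_i\geq 8$, then $2^{j^\ast}\leq 2f_i$. If $f_i<8$, then $z=16$, and I will have to check this against the stated range. The constant in the claim may absorb this, or the original procedure may use a slightly different minimum scale; I will reconcile it with the source.
\item For every LCS position $u\in I=[p,p+z]$, the offset bound gives a partner in $[p+m_{i-1}-f_i,\,p+z+m_{i-1}+f_i]$. Since $f_i<z$, this interval is contained in $J=[p+m_{i-1}-z,\,p+m_{i-1}+2z]$. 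So the number of collisions between $A[I]$ and $B[J]$ is at least $z+1-|I\setminus S_A|$.
\item The loop therefore stops at scale $z$ unless $I$ contains at least $\approx0.1z$ non-LCS positions. Stopping earlier is harmless, because it happens at an even smaller scale.
\item Since $p$ is uniform over a range of length $(\beta-8)R$, each non-LCS position lands in $I$ with probability at most $(z+1)/((\beta-8)R)$. The expected number of bad positions in $I$ is thus at most $f_i(z+1)/((\beta-8)R)$.
\item Markov's inequality at threshold $0.1z$ gives failure probability at most $10(1+1/z)f_i/((\beta-8)R)\leq 21f_i/((\beta-8)R)$, using $z\geq16$.
\end{enumerate}

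\textbf{Main obstacle.} The delicate step is the bookkeeping in the first paragraph. The drift bound must use exactly the quantity $f_i$, and all of $I$ and the relevant part of $J$ must lie within the intervals defining $f_i$. I would handle this first, then adjust the collision threshold ($0.85z$ on the estimate, i.e.\ at least $0.87z$ true collisions) so that the Markov constant comes out at most $21$.
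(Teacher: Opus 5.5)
Your strategy matches the argument the paper attributes to \cite{AN10UlamTester} when it uses this claim. You take the smallest admissible scale $z=2^j$ above $f_i$ and use Markov over the uniform choice of $p$ to bound the non-LCS positions of $A$ in $I$. You then show that the partners of the LCS positions in $I$ lie in $J$. The paper quotes a failure probability of $20f_i/((\beta-8)R)$ for this event, and your constant $10(1+1/z)$ fits within it.

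Your suspicion about small $f_i$ is correct. The loop starts at $j=4$, as the $f_i=0$ clause shows, so for $1\le f_i\le 7$ no admissible $j$ satisfies $2^j\le 2f_i$. No adjustment of the constant rescues the literal statement. What holds, and what the paper actually uses ($z=O(\max\{1,f_i\})$), is $2^j\le\max\{16,2f_i\}$.

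\textbf{The gap is your offset lemma.} The bound $|(v-u)-m_{i-1}|\le f_i$ for every LCS position $u\in[a_{i-1},i\beta R]$ is false. The $B$-count $|[b_{i-1},v]\setminus S_B|$ is bounded by the second term of $f_i$ only when $v\le m_{i-1}+i\beta R$, and near the end of the window the partner can leave that range.

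For example, take $m_{i-1}=0$ and give $B$ $2k$ fresh symbols at positions $(i\beta R-k,\,i\beta R+k]$. Compensate with $2k$ fresh symbols at the end of $A$, so that $\ed(A,B)\le 4k\le R$. Then $f_i=k$, yet the LCS position $u=i\beta R-k+1$ has partner $v=i\beta R+k+1$, an offset of $2k$.

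\textbf{The repair.} Use only $u\in I$, so that $u\le p+z\le i\beta R-4R+z$. Also use the global bound $|[n]\setminus S_A|,|[n]\setminus S_B|\le R$, which follows from $\ed(A,B)\le R$.

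The windows $[a_{i-1},i\beta R]$ and $[b_{i-1},m_{i-1}+i\beta R]$ have equal length, and the second has at most $f_B\le R$ non-LCS positions. Hence it contains the partner of every LCS position $u\le i\beta R-f_B$.

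\emph{Case $z\le 3R$.} Your counting applies to all of $I$, and the offset is at most $\max\{f_A,f_B\}\le f_i<z$.

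\emph{Case $z>3R$.} Bookkeeping confined to the $f_i$-windows is not enough here, because $J$ already extends past $m_{i-1}+i\beta R$ once $z>2R$. Instead, both non-LCS counts are at most $R$, so the offset is at most $R<z$.

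In both cases the partner lies in $J$. The inclusion $I\subseteq[a_{i-1},i\beta R]$ needed for your Markov step does hold. It follows from $z\le\max\{16,2f_i\}\le 4R$ and $a_{i-1}<(i-1)\beta R+4R$, the latter because the previous scale was below $8R$.

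\textbf{Which procedure you analysed.} You used the stopping rule of Algorithm~\ref{alg:quantumpartialalign}, where at least $0.87z$ true collisions force the $0.85z$ density test to pass. The statement concerns the original procedure of \cite{AN10UlamTester}. As Claim~\ref{clm:an10-bad-anchor} indicates, that procedure stops only once its random subsample contains a collision. You must therefore also bound the probability that the subsampling misses all of the at least $0.9z$ collisions, and absorb it into the slack between $10(1+1/z)$ and $21$.
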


\begin{proof}
We follow the proof of Lemma~3.1 in~\cite{AN10UlamTester}. Throughout
the proof, we ignore the amplification needed to make all quantum
subroutines succeed simultaneously, as it only introduces
polylogarithmic factors in the running time.

We prove by induction on $i$ that every returned anchor $(a_i,b_i)$
belongs to the fixed LCS. Suppose $(a_{i-1},b_{i-1})$ belongs to the
fixed LCS. Fix a scale $z=2^j$, and let $M_{i,j}$ be the number of
collisions between the two corresponding full windows and
$M_{i,j}^{\rm bad}$ the number of such collisions not belonging to the
fixed LCS. The random-window calculation in the proof of
Claim~\ref{clm:an10-bad-anchor} gives
$$
\mathbb E_p[M_{i,j}^{\rm bad}]
\leq
\frac{2zf_i}{(\beta-8)R}.
$$
Whenever our density test passes, the $(0.02z)$-additive guarantee
implies $M_{i,j}\geq0.83z$. By the uniform-output guarantee of Theorem~\ref{thm:quantumpairfinding}, every collision is returned with the same probability, and hence, the probability of returning a bad anchor at this scale is at most
$$
\mathbb E_p\left[
\frac{M_{i,j}^{\rm bad}}{M_{i,j}}
\mathbf 1[M_{i,j}\geq0.83z]
\right]
\leq
\frac{2.5f_i}{(\beta-8)R}.
$$
Taking a union bound over all $O(\log R)$ scales, the probability that
the $i$-th returned anchor is bad is
$\widetilde O(f_i/(\beta R))$.

It remains to show that the $j$-loop stops. In the proof of
Claim~\ref{clm:an10-stopping}, for the smallest $j$ satisfying
$z=2^j\geq f_i$, with probability at least
$1-20f_i/((\beta-8)R)$, the $A$-window contains at least $0.9z$
positions from the fixed LCS and all their matching positions belong
to the corresponding $B$-window. Hence $M_{i,j}\geq0.9z$, so its
$(0.02z)$-additive estimate is at least $0.88z>0.85z$. Therefore,
the density test passes and the quantum pair-finding procedure returns
a collision. Thus, the $j$-loop stops by
$z=O(\max\{1,f_i\})$.

Therefore, conditioned on all previous anchors being correct, the
probability that the $i$-th step fails or returns a bad anchor is
$\widetilde O(f_i/(\beta R))$. Since $\sum_i f_i\leq4R$, a union bound
over all $i$ gives an overall failure probability
$\widetilde O(1/\beta)$, which is small for sufficiently large
$\beta=\polylog(n)$. Hence, with high probability, every returned
anchor belongs to the fixed LCS. Since $A,B$ are non-repetitive,
$a_i\in S_A$ and $A[a_i]=B[b_i]$ imply $b_i\in S_B$. Thus the
returned matching positions form a subsequence of the fixed LCS.

Finally, our procedure uses the same windows as
\textsc{PartialAlign} in~\cite{AN10UlamTester}. Therefore, the same
interval argument as in Lemma~3.1 gives
$$
|a_{i+1}-a_i|,\ |b_{i+1}-b_i|\leq2\beta R.
$$
Hence the returned anchors partition $A$ and $B$ into corresponding
substrings of length at most $2\beta R$, which completes the proof.
\end{proof}
\subsection{Proof of Claim~\ref{lem:concentration_collision}}
\label{app:claim5.6}
    We first remove the conditioning on $I$. Since $t=\ceil{\mu}$, if $t\geq2$, then
    \[
        \lambda\rho
        \geq
        \mu-\frac{\sqrt{t}}{10}
        >
        t-1-\frac{\sqrt{t}}{10}
        =
        \Theta(t).
    \]
    Since $\rho<1/16$, this also implies that $t-1\leq\lambda$.

    If $I=\emptyset$, no reduction is necessary. Otherwise, conditioned on $I\subseteq S$, the set $S\setminus I$ is a uniformly random size-$(s-2)$ subset of the concatenation of the two strings obtained by deleting the two entries indexed by $I$. These strings are non-repetitive, have length $n-1$, and have the same $\lambda$ collision pairs. If $s\leq3$, then $\mu<1$, so $t=1$, and the conclusion is immediate because $S\setminus I$ contains at most one index. We may therefore assume that $s\geq4$. Let
    \[
        \rho'
        =
        \frac{(s-2)(s-3)}{(2n-2)(2n-3)}.
    \]
    Since $\lambda\leq n$ and $s\leq n/2$, we have
    \[
        \rho'=\Theta(\rho)
        \qquad\text{and}\qquad
        \lambda|\rho'-\rho|=O(1).
    \]
    Consequently,
    \[
        |\lambda\rho'-\mu|
        =
        O(\sqrt{t}),
    \]
    and, when $t\geq2$, we also have $\lambda\rho'=\Theta(t)$ and $t-1\leq\lambda$. Thus, the desired conditional probability is precisely the corresponding unconditional probability after replacing $n,s,\rho,\mu_\lambda$ by $n-1,s-2,\rho',\lambda\rho'$, respectively. For notational simplicity, we continue to use $n,s,\rho,\mu_\lambda$ for these quantities.
    
    Let $\theta:=s/(2n)$. Consider an experiment in which we choose a subset $R$ of $[2n]$ by including every $i\in[2n]$ independently with probability $\theta$.     Let $N_\lambda(R)$ be the number of collision pairs between $A$ and $B$ fully contained in $R$. Then,
    \begin{align*}
        \Pr_S[N_\lambda(S) = t-1] = \frac{\Pr_R[|R| = s, N_\lambda(R) = t-1]}{\Pr_R[|R| = s]}
    \end{align*}

    We will separately prove that $\Pr_R[|R| = s, N_\lambda(R) = t-1] = \Omega(1/\sqrt{st})$ and $\Pr_R[|R| = s] = O(1/\sqrt{s})$. For the latter, notice that $|R|$ is distributed according to the binomial distribution on $2n$ elements and probability $s/2n$. 
    Thus, $\E[|R|]=s$ and $\text{Var}(|R|)    = s(1-\theta)= \Theta(s).$
By Lemma~\ref{lem:local-limit},
\[
    \Pr_R[|R|=s]
    =
    \Theta\left(\frac{1}{\sqrt s}\right)
\]
when $s$ is sufficiently large. For bounded $s$, the required
$O(1/\sqrt s)$ upper bound is immediate after adjusting the implicit constant. 
    
    We can write $\Pr_R[|R| = s, N_\lambda(R) = t-1]$ as $\Pr_R[N_\lambda(R) = t-1] \cdot \Pr_R[|R| = s \mid N_\lambda(R) = t-1]$.
    For the former, notice that $N_\lambda(R)$ is distributed according to the binomial distribution on $\lambda$ elements and probability $(s/2n)^2$. 
    Thus, $\nu:=\E[N_\lambda(R)] =    \lambda\left(\frac{s}{2n}\right)^2$ and  $\text{Var}(N_\lambda(R)) =    \lambda\left(\frac{s}{2n}\right)^2    \left(1-\left(\frac{s}{2n}\right)^2\right) = \Theta(\nu).$  

    Observe that
    \begin{align*}
        |t-1-\nu|
        &\leq
        |t-1-\mu|
        +
        |\mu-\mu_\lambda|
        +
        |\mu_\lambda-\nu|
        =
        O(\sqrt{t}),
    \end{align*}
    where $|\mu_\lambda-\nu|=O(1)$ since $\lambda\leq n$ and $s\leq n/2$.

    First suppose that $t=1$. Then $\nu=O(1)$, and hence
    \begin{align*}
        \Pr_R[N_\lambda(R)=t-1]
        &=
        \Pr_R[N_\lambda(R)=0]\\
        &=
        \left(1-\left(\frac{s}{2n}\right)^2\right)^\lambda
        =
        \Omega(1)
        =
        \Omega\left(\frac{1}{\sqrt{t}}\right).
    \end{align*}

    Now suppose that $t\geq2$. By the reduction above, $t-1\leq\lambda$ and $\mu_\lambda=\Theta(t)$. If $\nu=O(1)$, then $t=O(1)$ and $\nu=\Theta(1)$. Since $t-1\leq\lambda$, the binomial probability mass function gives
    \[
        \Pr_R[N_\lambda(R)=t-1]
        =
        \Omega(1)
        =
        \Omega\left(\frac{1}{\sqrt{t}}\right).
    \]
    Otherwise, $\nu=\Theta(t)$ and $|t-1-\nu|=O(\sqrt{\nu})$. Therefore, by Lemma~\ref{lem:local-limit},
    \[
        \Pr_R[N_\lambda(R)=t-1]
        =
        \Omega\left(\frac{1}{\sqrt{\nu}}\right)
        =
        \Omega\left(\frac{1}{\sqrt{t}}\right).
    \]

   It remains to show that   $\Pr_R[|R|=s\mid N_\lambda(R)=t-1] =    \Omega\left(\frac{1}{\sqrt{s}}\right).$ For each of the remaining $\lambda-t+1$ collisions, conditioned on not being fully sampled, the probability that it contributes exactly one element to $R$ is   $\phi :=    \frac{2\theta(1-\theta)}{1-\theta^2}    =    \frac{2\theta}{1+\theta}.$ Thus, conditioned on $N_\lambda(R)=t-1$,$    |R|=2(t-1)+U+V,$ where $U\sim\operatorname{Bin}(\lambda-t+1,\phi)$ and $    V\sim\operatorname{Bin}(2n-2\lambda,\theta)$ are independent. Let $W:=U+V$. We have
\[
    \Pr_R[|R|=s\mid N_\lambda(R)=t-1]
    =
    \Pr_R[W=s-2(t-1)].
\]
Notice that
\[
\begin{aligned}
    \E[W]-(s-2(t-1))
    &=
    (\lambda-t+1)\phi
    +(2n-2\lambda)\theta
    -(s-2(t-1))\\
    &=
    \frac{2}{1+\theta}(t-1-\nu)
    =
    O(\sqrt{s}),
\end{aligned}
\]
where the last equality follows from $|t-1-\nu|=O(\sqrt{t})$ and $t=O(s)$. Furthermore,
\[
\begin{aligned}
    \text{Var}(W)
    &=
    \text{Var}(U)+\text{Var}(V)\\
    &=
    (\lambda-t+1)\phi(1-\phi)
    +(2n-2\lambda)\theta(1-\theta)\\
    &=
    \Theta(s).
\end{aligned}
\]
Indeed, $\theta,\phi=\Theta(s/n)$, and either $\lambda-t+1=\Omega(n)$ or $2n-2\lambda=\Omega(n)$. 
If $s$ is bounded, then $t$ is also bounded, and the same lower bound follows directly from the displayed binomial distributions of $U$ and $V$, after adjusting the implicit constant. We may therefore assume that $s$ is sufficiently large.
Since $W$ is a sum of independent Bernoulli random variables, Lemma~\ref{lem:local-limit} gives
\[
    \Pr_R[W=s-2(t-1)]
    =
    \Omega\left(\frac{1}{\sqrt{s}}\right).
\]                                                                                         Hence,
\[
\begin{aligned}
    \Pr_R[|R|=s,N_\lambda(R)=t-1]=
    \Pr_R[N_\lambda(R)=t-1]\,
    \Pr_R[|R|=s\mid N_\lambda(R)=t-1]=
    \Omega\left(\frac{1}{\sqrt{st}}\right).
\end{aligned}
\]
Since $R$ conditioned on $|R|=s$ is a uniformly random size-$s$ subset of $[2n]$, and $\Pr_R[|R|=s]
    =
    O\left(\frac{1}{\sqrt{s}}\right),$
we conclude that
\[
\begin{aligned}
    \Pr_S[N_\lambda(S)=t-1]
    =
    \frac{\Pr_R[|R|=s,N_\lambda(R)=t-1]}
         {\Pr_R[|R|=s]}=
    \Omega\left(\frac{1}{\sqrt{t}}\right),
\end{aligned}
\]
which proves the claim.     
\subsection{Multiple-collision search for a fixed matching}
\label{app:fixed-matching-collisions}

In this subsection we prove the collision-enumeration result used in
Theorem~\ref{thm:findingcollisions}.  The multiple-collision algorithm
of~\cite{Bonnetain2025multiplecollision} is stated for a random function.
Here the input is fixed.  The only property that we use is that the collision
pairs are disjoint.  We give the full argument, including the quantum states,
the operations applied to them, and the error accumulated over repeated
measurements.

Let $U$ be a set of size $N$, and let $F:U\to\Sigma$ be given by the query
unitary
\[
  O_F\ket{x}\ket{w}=\ket{x}\ket{w\oplus F(x)}.
\]
Suppose that exactly $q$ unordered pairs $\{x,y\}\subseteq U$ satisfy
$F(x)=F(y)$ and that these pairs are disjoint.  Let $\mathcal M$ denote this
matching.  For $S\subseteq U$, define
\[
  Z_{\mathcal M}(S)
  :=\bigl|\{e\in\mathcal M:e\subseteq S\}\bigr|.
\]
For a set $I$ of integers, let
\[
  \mathcal V_I(N,q,r)
  :=\{S\subseteq U:|S|=r,\ Z_{\mathcal M}(S)\in I\},
\]
and let
\[
  \ket{\mathcal V_I(N,q,r)}
  :=\frac{1}{\sqrt{|\mathcal V_I(N,q,r)|}}
    \sum_{S\in\mathcal V_I(N,q,r)}\ket{S,D(S)},
\]
where
\[
  D(S):=((x,F(x)):x\in S),
\]
and the pairs in $D(S)$ are stored in a fixed order.  We omit the parameters
when they are clear.  We also write
\[
  \ket{\mathcal V}
  :=\frac{1}{\sqrt{\binom Nr}}
    \sum_{|S|=r}\ket{S,D(S)}.
\]
The initial state $\ket{\mathcal V}$ costs $r$ queries: first prepare a
uniform $r$-subset and then query the values at its positions.

The sizes of all the sets above depend only on $N,q,r$, and not on the
locations of the pairs.  More precisely, if $v_z$ is the number of
$r$-subsets containing exactly $z$ pairs, then
\begin{align}
  v_z
  =\binom qz
    \sum_{h=0}^{q-z}
      \binom{q-z}{h}2^h
      \binom{N-2q}{r-2z-h},
  \label{eq:fixed-matching-count}
\end{align}
where a binomial coefficient is zero when its lower argument is outside its
usual range.  Indeed, after choosing the $z$ complete pairs, we choose $h$
other pairs that contribute one position, choose one of the two positions
from each such pair, and then choose the remaining positions outside the
matching.  Thus $|\mathcal V_I|=\sum_{z\in I}v_z$ can be computed without
queries once $N,q,r$ are known.

We first prove the probability estimate used throughout the algorithm.

\begin{lemma}[Collision statistics for a fixed matching]
\label{lem:fixed-matching-statistics}
Let $S$ be a uniformly random $r$-subset of $U$, where $2\leq r\leq N/4$, and
let
\[
  \mu:=\mathbb E[Z_{\mathcal M}(S)]
      =q\frac{(r)_2}{(N)_2}.
\]
Then
\[
  \operatorname{Var}(Z_{\mathcal M}(S))\leq\mu.
\]
Moreover, for every constant $C>0$, if $\mu$ is sufficiently large and
$|z-\mu|\leq C\sqrt\mu$, then
\[
  \Pr[Z_{\mathcal M}(S)=z]
  =\Theta_C(1/\sqrt\mu).
\]
Consequently, every interval containing $\Theta(\sqrt\mu)$ consecutive
integers, all within distance $O(\sqrt\mu)$ of $\mu$, has probability
bounded above and below by positive constants.  The same conclusions hold
after any collection of already reported pairs and their positions is
deleted, with $N,q,r$ replaced by their new values.
\end{lemma}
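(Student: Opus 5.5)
The plan has three parts: compute the variance directly from pairwise inclusion probabilities, transfer the local limit estimate from a Poissonized sample exactly as in the proof of Claim~\ref{lem:concentration_collision}, and treat deletions as a change of parameters.

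\textbf{Mean and variance.} Write $Z_{\mathcal M}(S)=\sum_{e\in\mathcal M}\mathbf 1[e\subseteq S]$. For one pair, $\Pr[e\subseteq S]=(r)_2/(N)_2=:\pi$, which gives $\mu=q\pi$. Since the pairs in $\mathcal M$ are disjoint, two distinct pairs $e\neq e'$ satisfy
\[
\Pr[e\cup e'\subseteq S]=\frac{(r)_4}{(N)_4}\le \pi^2,
\]
because $(r-2)(r-3)\le(r)_2(N-2)(N-3)/(N)_2$ holds whenever $r\le N$. Hence the covariances are nonpositive, and
\[
\operatorname{Var}(Z)\le\sum_e \pi(1-\pi)\le\mu .
\]

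\textbf{Local limit.} I would reuse the Poissonization argument from the proof of Claim~\ref{lem:concentration_collision} (Appendix~\ref{app:claim5.6}), with $2n$ replaced by $N$ and the $\lambda$ collision pairs replaced by $q$. Let $R$ include each element independently with probability $\theta=r/N\le 1/4$. Then $N(R):=Z_{\mathcal M}(R)\sim\Bin(q,\theta^2)$, with mean $\nu=q\theta^2$. We have $|\nu-\mu|=O(q r/N^2)=O(1)$ and $\nu=\Theta(\mu)$.

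The target probability is
\[
\Pr_S[Z=z]=\frac{\Pr[N(R)=z]\cdot\Pr[|R|=r\mid N(R)=z]}{\Pr[|R|=r]}.
\]
I would estimate the three factors as follows.
\begin{itemize}
\item Lemma~\ref{lem:local-limit} gives $\Pr[|R|=r]=\Theta(1/\sqrt r)$.
\item For $|z-\mu|\le C\sqrt\mu$, Lemma~\ref{lem:local-limit} gives $\Pr[N(R)=z]=\Theta_C(1/\sqrt\mu)$. It needs $\operatorname{Var}=\nu(1-\theta^2)=\Theta(\mu)$ to be large, which holds since $\theta\le1/4$.
\item Conditioned on $N(R)=z$, we have $|R|=2z+U+V$ with independent $U\sim\Bin(q-z,2\theta/(1+\theta))$ and $V\sim\Bin(N-2q,\theta)$. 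Their variance is $\Theta(r)$, since $\theta\le1/4$ and $(q-z)+(N-2q)=\Omega(N)$. The mean offset equals $\tfrac{2}{1+\theta}(z-\nu)=O(\sqrt\mu)=O(\sqrt r)$, using $\mu\le r$. So Lemma~\ref{lem:local-limit} gives $\Theta(1/\sqrt r)$ for this factor as well; it requires $r$ large, which follows from $\mu$ large.
\end{itemize}
Multiplying the three estimates gives $\Theta_C(1/\sqrt\mu)$.

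The main obstacle is that this two-sided control of the conditional factor needs the version of Lemma~\ref{lem:local-limit} with both upper and lower bounds. It also needs uniformity of the constants in $C$ once the offset is rescaled by $\sqrt{\operatorname{Var}}$, so I would track the constant $C'$ that arises there.

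\textbf{Interval consequence.} Summing the pointwise bounds over $\Theta(\sqrt\mu)$ consecutive integers within $O(\sqrt\mu)$ of $\mu$ gives a probability that is bounded above and below by positive constants. The upper bound also follows from Chebyshev's inequality and the variance bound, though this is not needed.

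\textbf{Deletions.} Deleting reported pairs and their positions leaves a ground set of size $N'$ carrying a matching of $q'$ disjoint pairs. None of the arguments above used anything beyond disjointness, the counts $N,q,r$, and $r\le N/4$, so the same conclusions hold with $N',q',r'$ in place of $N,q,r$.
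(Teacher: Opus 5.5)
Your proposal is correct and follows the paper's proof essentially step for step. You use negative pairwise correlation of the pair indicators for the variance bound. You then Poissonize with $\theta=r/N$ and apply the local limit to $Z_{\mathcal M}(R)\sim\Bin(q,\theta^2)$. Next you decompose $|R|=2z+U+V$ conditioned on $Z_{\mathcal M}(R)=z$ and combine the three factors by Bayes' rule. Deletions are handled by observing that what remains is again a matching. The only cosmetic difference is that you get $\operatorname{Var}(U+V)=\Theta(r)$ in one step from $(q-z)+(N-2q)=\Omega(N)$, whereas the paper splits into the cases $q\le N/4$ and $q>N/4$.
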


\begin{proof}
For $e\in\mathcal M$, let $I_e$ be the indicator of $e\subseteq S$.  Then
\[
  Z_{\mathcal M}(S)=\sum_{e\in\mathcal M}I_e,
  \qquad
  \mathbb E[I_e]=\frac{(r)_2}{(N)_2}.
\]
If $e,e'\in\mathcal M$ are distinct, their four positions are distinct and
\[
  \mathbb E[I_eI_{e'}]
  =\frac{(r)_4}{(N)_4}
  \leq
  \left(\frac{(r)_2}{(N)_2}\right)^2.
\]
Thus the indicators are pairwise negatively correlated, and
\[
  \operatorname{Var}(Z_{\mathcal M}(S))
  \leq\sum_{e\in\mathcal M}\operatorname{Var}(I_e)
  \leq\mu.
\]

We next prove the point-probability estimate.  Set $p=r/N$, and form a random
set $R\subseteq U$ by including every position independently with
probability $p$.  Conditioned on $|R|=r$, the set $R$ is a uniformly random
$r$-subset.  Let
\[
  X:=Z_{\mathcal M}(R),
  \qquad
  \nu:=\mathbb E[X]=qp^2.
\]
Since the pairs are disjoint, $X\sim\operatorname{Bin}(q,p^2)$.  Also,
\[
  |\nu-\mu|
  =q\frac{r(N-r)}{N^2(N-1)}=O(1).
\]
The variance of $X$ is $\Theta(\mu)$.  Lemma~\ref{lem:local-limit} therefore
gives, uniformly for $|z-\mu|\leq C\sqrt\mu$,
\begin{align}
  \Pr[X=z]=\Theta_C(1/\sqrt\mu).
  \label{eq:matching-binomial-local}
\end{align}

Condition on $X=z$.  The $z$ complete pairs contribute $2z$ positions to
$R$.  Each of the other $q-z$ pairs contributes one position with
conditional probability
\[
  \phi=\frac{2p(1-p)}{1-p^2}=\frac{2p}{1+p},
\]
and the $N-2q$ positions outside the matching are still selected
independently with probability $p$.  Hence
\[
  |R|\mid(X=z)\ \stackrel{d}{=}\ 2z+Y_1+Y_2,
\]
where
\[
  Y_1\sim\operatorname{Bin}(q-z,\phi),
  \qquad
  Y_2\sim\operatorname{Bin}(N-2q,p)
\]
are independent.  A direct calculation gives
\[
  \mathbb E[|R|\mid X=z]-r
  =\frac{2(z-qp^2)}{1+p}=O(\sqrt\mu)=O(\sqrt r).
\]
Furthermore,
\[
  \operatorname{Var}(|R|\mid X=z)
  =(q-z)\phi(1-\phi)+(N-2q)p(1-p)
  =\Theta(r).
\]
For the last equality, if $q\leq N/4$, the second term is $\Theta(r)$.  If
$q>N/4$, then $z=O(\mu)=O(qr^2/N^2)$, and hence $q-z=\Theta(N)$, while
$\phi=\Theta(r/N)$; the first term is therefore $\Theta(r)$.  Applying
Lemma~\ref{lem:local-limit} once more gives
\begin{align}
  \Pr[|R|=r\mid X=z]=\Theta(1/\sqrt r).
  \label{eq:matching-conditional-size}
\end{align}
Also, $|R|\sim\operatorname{Bin}(N,p)$ with mean $r$, and hence
\begin{align}
  \Pr[|R|=r]=\Theta(1/\sqrt r).
  \label{eq:matching-unconditional-size}
\end{align}
Bayes' rule and
\eqref{eq:matching-binomial-local}--\eqref{eq:matching-unconditional-size}
now give
\[
  \Pr[Z_{\mathcal M}(S)=z]
  =\Pr[X=z\mid |R|=r]
  =\Theta_C(1/\sqrt\mu).
\]
Summing over the integers in the stated interval proves the interval bound.
Deleting reported pairs and their positions leaves another matching, so the
same argument applies to every later instance.
\end{proof}

We next give the quantum operations used to change one uniform state into
another.  This is the part that must be checked independently of the random
function setting of~\cite{Bonnetain2025multiplecollision}.

\begin{lemma}[Reflection about the uniform state]
\label{lem:fixed-matching-reflection}
For every $\delta>0$, there is a unitary $\widetilde R_{\mathcal V}$ with a
work register satisfying the following property.  For every normalized
state $\ket{\psi}$ in the span of the states $\ket{S,D(S)}$,
\[
  \left\|
  \widetilde R_{\mathcal V}(\ket{\psi}\ket{0})
  -(R_{\mathcal V}\ket{\psi})\ket{0}
  \right\|
  \leq\delta,
\]
where
\[
  R_{\mathcal V}:=2\ket{\mathcal V}\!\bra{\mathcal V}-I.
\]
It uses
\[
  O(\sqrt r\log(1/\delta))
\]
queries to $F$.  The same guarantee holds for the controlled unitary.
\end{lemma}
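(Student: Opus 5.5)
We implement $R_{\mathcal V}$ as $U R_0 U^\dagger$, where $U$ is a reversible procedure for preparing $\ket{\mathcal V}$ and $R_0=2\ket{0}\!\bra{0}-I$ acts on the preparation registers. The delicate point is making $U$ map $\ket{0}$ to $\ket{\mathcal V}$ exactly, with all garbage returned to $\ket{0}$, so that conjugation produces the right reflection. Concretely, $U$ acts in three steps. First, query-free, it prepares $\binom Nr^{-1/2}\sum_{|S|=r}\ket{S}$ in a canonical sorted encoding. Second, for each of the $r$ positions of $S$ in fixed order, it applies $O_F$ to an ancilla initially $\ket{0}$, writing the value $F(x)$ into the slot paired with $x$; this produces $\ket{S,D(S)}$ with no extra garbage and costs $r$ queries. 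Third, nothing is left to uncompute, because $D(S)$ is exactly the data register. On the subspace spanned by the states $\ket{S,D(S)}$, the operator $U R_0 U^\dagger$ equals $2\ket{\mathcal V}\!\bra{\mathcal V}-I$. However, $U$ is a unitary on a larger space, and $U^\dagger$ applied to a state $\ket{S,D(S)}$ correctly clears the data register only because $D(S)$ is consistent with $F$. Since $\ket\psi$ lies in that span, this consistency holds.

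This direct construction costs $O(r)$ queries, but the statement asks for $O(\sqrt r\log(1/\delta))$. So the main step is a cheaper approximate reflection, and here I would use the Johnson-graph walk from Section~\ref{sec:q_subroutines}, in the same way as MNRS~\cite{MNRS11}. The walk on $J(U,r)$ with data $D(S)$ has update cost $O(1)$ queries: swapping one element means uncomputing $F$ at the removed point and computing it at the added point. Its spectral gap is $\delta_J=\Theta(1/r)$ when $r\leq N/2$. The walk operator $W$ has $\ket{\mathcal V}$, lifted to the edge space, as its unique eigenvalue-$1$ eigenvector. Every other eigenvector has phase at least $\Theta(\sqrt{\delta_J})=\Theta(1/\sqrt r)$.

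The reflection is then built by phase estimation on $W$ with precision $\Theta(1/\sqrt r)$, which costs $O(\sqrt r)$ applications of $W$. We flip the sign of every branch whose estimated phase is nonzero and then uncompute the estimate. Running $O(\log(1/\delta))$ independent phase-estimation registers and taking a majority, as in the MNRS approximate-reflection lemma, pushes the error to $\delta$ at cost $O(\sqrt r\log(1/\delta))$ queries. To define the walk on the space spanned by the states $\ket{S,D(S)}$, we adjoin an edge register in the standard way, initialized and returned to $\ket0$. The error bound of the MNRS lemma is stated for states orthogonal to $\ket{\mathcal V}$ and for $\ket{\mathcal V}$ itself; by linearity it yields the stated bound for arbitrary $\ket\psi$ in the span. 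The controlled version follows by controlling each elementary step; since the control is only needed on the final phase flip, the query count does not change.

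The main obstacle is checking that the data register remains consistent at all times, so that the walk never leaves the span of the states $\ket{S,D(S)}$. Only then is the spectral analysis of the classical Johnson chain valid, and only then are the update-cost accounting and the final return of the ancillas to $\ket{0}$ correct. The fixed-matching structure plays no role here, so the argument is the same as in~\cite{Bonnetain2025multiplecollision}, and it is enough to verify that the MNRS error analysis carries over.
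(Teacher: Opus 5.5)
Your proposal is correct and follows essentially the same route as the paper: the MNRS approximate reflection, built from phase estimation with precision $\Theta(1/\sqrt r)$ on the two-reflection walk over $J(U,r)$ with $O(1)$-query updates, repeated $O(\log(1/\delta))$ times and conjugated by the query-free preparation of the edge register (your opening $O(r)$ construction $U R_0 U^\dagger$ is never used). The differences are cosmetic: the paper flips the phase unless all tests return zero, whereas you use a majority vote, which also works once the precision constant makes a single test return zero with probability bounded below $1/2$; and you control only the final phase flip, whereas the paper controls every operation.
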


\begin{proof}
Consider the random walk on the Johnson graph $J(U,r)$.  From a subset $S$,
it chooses $x\in S$ and $y\in U\setminus S$ uniformly and moves to
$S-x+y$.  Define
\[
  \ket{p_S}
  :=\frac{1}{\sqrt{r(N-r)}}
    \sum_{x\in S}\sum_{y\notin S}\ket{x,y}.
\]
The following two operations define the quantum walk.  First, without any
query to $F$, prepare the second register by
\[
  A\ket{S,D(S)}\ket0
  =\ket{S,D(S)}\ket{p_S}.
\]
Second, update the subset and its database by
\[
  U_{\rm upd}\ket{S,D(S)}\ket{x,y}
  =\ket{S-x+y,D(S-x+y)}\ket{y,x}.
\]
To implement this map, use one query at $x$ to change the stored value
$F(x)$ to zero, replace $x$ by $y$ in the ordered database, and use one
query at $y$ to write $F(y)$.  The last two registers are then exchanged.
The inverse performs these steps in reverse order.  Thus the update and its
inverse use two queries each.  Both maps are extended to unitaries on the
full space.

Let $\mathcal A$ be the span of the states
$\ket{S,D(S)}\ket{p_S}$, let $\mathcal B=U_{\rm upd}\mathcal A$, and let
\[
  W=(2\Pi_{\mathcal B}-I)(2\Pi_{\mathcal A}-I).
\]
The reflection about $\mathcal A$ is implemented by $A$, a phase change on
the zero register, and $A^\dagger$.  The reflection about $\mathcal B$ is
obtained by conjugating this operation with $U_{\rm upd}$.  Thus one use of
$W$ costs $O(1)$ queries.

The uniform distribution is the unique stationary distribution of the
Johnson walk.  The corresponding state is
\[
  \ket{\overline{\mathcal V}}
  :=\frac{1}{\sqrt{\binom Nr}}
    \sum_{|S|=r}\ket{S,D(S)}\ket{p_S}.
\]
This is the unique zero-phase state of $W$ in the subspace reached from
$\mathcal A$.  The Johnson walk has spectral gap $\Theta(1/r)$.  The
two-reflection construction maps a transition-matrix eigenvalue $\lambda$
to eigenphases whose distance from zero is
$\Theta(\sqrt{1-\lambda})$.  Hence every nonzero eigenphase reached from
$\mathcal A$ has absolute value $\Omega(1/\sqrt r)$.

Perform phase estimation for $W$ with precision $\Theta(1/\sqrt r)$.
Repeat the test $O(\log(1/\delta))$ times, apply phase $+1$ when all tests
return zero and phase $-1$ otherwise, and uncompute every test.  The
zero-phase state is unchanged exactly.  On every other eigenvector, the
probability that all tests return zero is $O(\delta^2)$, so the resulting
state differs from its negative by norm at most $\delta$.  The same bound
holds for every superposition of eigenvectors.  Conjugating by $A$ gives
the claimed operation on the subset register, including the stated bound
on the work register.  Controlling each operation gives the controlled
version.  This is the standard Johnson-graph implementation of the MNRS
reflection~\cite{MNRS11,BCSS23}.
\end{proof}

For an integer set $I$, let $\Pi_I$ denote the projector onto the span of
$\{\ket{S,D(S)}:Z_{\mathcal M}(S)\in I\}$.  Since all the queried values are
stored in $D(S)$, a reversible computation can find
$Z_{\mathcal M}(S)$ and implement
\[
  R_I:=I-2\Pi_I
\]
without an additional query to $F$.

\begin{lemma}[Changing the interval]
\label{lem:fixed-matching-state-conversion}
Let $I,J$ be two integer sets with
\[
  p_I:=\frac{|\mathcal V_I|}{\binom Nr}>0,
  \qquad
  p_J:=\frac{|\mathcal V_J|}{\binom Nr}>0.
\]
Given $\ket{\mathcal V_I}$, one can prepare a state within distance
$\epsilon$ of $\ket{\mathcal V_J}$ using
\[
  \widetilde O\left(
    \sqrt r\left(\frac{1}{\sqrt{p_I}}+
                  \frac{1}{\sqrt{p_J}}\right)
    \log^2(1/\epsilon)
  \right)
\]
queries.  The values $p_I,p_J$ required by the procedure are computable from
\eqref{eq:fixed-matching-count}.
\end{lemma}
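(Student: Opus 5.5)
The natural approach is amplitude amplification in both directions: first a fixed-point rotation from $\ket{\mathcal V_I}$ back to the full uniform state $\ket{\mathcal V}$, then a second one from $\ket{\mathcal V}$ onto $\ket{\mathcal V_J}$. Write
\[
\ket{\mathcal V}=\sqrt{p_I}\,\ket{\mathcal V_I}+\sqrt{1-p_I}\,\ket{\mathcal V_{\bar I}}.
\]
This decomposition holds exactly because every $r$-subset $S$ carries the same weight $1/\sqrt{\binom Nr}$ and lies in exactly one of $\mathcal V_I$, $\mathcal V_{\bar I}$. The overlap is therefore $\langle \mathcal V\mid \mathcal V_I\rangle=\sqrt{p_I}$, where $p_I$ is computable from \eqref{eq:fixed-matching-count} without queries.

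\textbf{Step 1 (from $\ket{\mathcal V_I}$ to $\ket{\mathcal V}$).} Inside the two-dimensional subspace spanned by $\ket{\mathcal V_I}$ and $\ket{\mathcal V}$, apply amplitude amplification toward $\ket{\mathcal V}$ using the two reflections $R_I$ (about $\ket{\mathcal V_I}$ within the span of the $\ket{S,D(S)}$) and the approximate $\widetilde R_{\mathcal V}$ from Lemma~\ref{lem:fixed-matching-reflection}. Because $p_I$ is known exactly, we can use exact amplitude amplification, with one final partial-rotation step, or alternatively the fixed-point version of Theorem~\ref{thm:Fixed_AA}. Either way, $O(1/\sqrt{p_I})$ alternating reflections suffice to rotate $\ket{\mathcal V_I}$ onto $\ket{\mathcal V}$ up to error $\epsilon/2$. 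The roles here are reversed relative to the usual setting: we reflect about the starting state and about the target state $\ket{\mathcal V}$. This is the standard Grover-type rotation, with the angle determined by the overlap.

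\textbf{Step 2 (from $\ket{\mathcal V}$ to $\ket{\mathcal V_J}$).} Since $\Pi_J\ket{\mathcal V}=\sqrt{p_J}\ket{\mathcal V_J}$, amplitude amplification with $R_J$ (no queries) and $\widetilde R_{\mathcal V}$ uses $O(1/\sqrt{p_J})$ rounds. Again, knowing $p_J$ exactly permits an exact rotation, or we use fixed-point amplification to reach error $\epsilon/2$ with an extra $\log(1/\epsilon)$ factor.

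\textbf{Cost and error.} Each use of $\widetilde R_{\mathcal V}$ costs $O(\sqrt r\log(1/\delta))$ queries. The errors of the approximate reflections add linearly over the $K=O(1/\sqrt{p_I}+1/\sqrt{p_J})$ uses, so choosing $\delta=\epsilon/(4K)$ keeps the accumulated error at most $\epsilon/2$. This costs a $\log(K/\epsilon)$ factor, which is absorbed into $\widetilde O$ together with the fixed-point $\log(1/\epsilon)$, giving the stated $\log^2(1/\epsilon)$.

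\textbf{Main obstacle: the work register.} The main point to check is that Lemma~\ref{lem:fixed-matching-reflection} only guarantees its approximation on inputs of the form $\ket{\psi}\ket{0}$ with $\ket\psi$ in the span of the $\ket{S,D(S)}$. After one imperfect reflection, the work register is no longer exactly $\ket 0$. I would handle this with a hybrid argument. Compare the real run to an ideal run in which every $\widetilde R_{\mathcal V}$ is replaced by $R_{\mathcal V}\otimes I$. The ideal states always have the form $\ket\psi\ket 0$ with $\ket\psi$ in the correct span, since $R_I$, $R_J$ and $R_{\mathcal V}$ all preserve it. Because unitaries preserve distance, the deviation grows by at most $\delta$ per step. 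This gives the linear error accumulation used above and completes the proof.
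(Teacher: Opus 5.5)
Your proposal is correct and takes essentially the paper's approach. You pass from $\ket{\mathcal V_I}$ back to the uniform state $\ket{\mathcal V}$ and then amplify onto $\ket{\mathcal V_J}$, using the query-free reflections $R_I,R_J$ and the approximate reflection $\widetilde R_{\mathcal V}$ of Lemma~\ref{lem:fixed-matching-reflection}, each call with error $\epsilon/O(K)$ so the errors add linearly (your hybrid argument spells out what the paper leaves implicit). The only difference is cosmetic: the paper gets your Step~1 as the inverse $P_I^\dagger$ of the forward fixed-point amplification circuit $P_I$ (which maps $\ket{\mathcal V}\ket0$ close to $\ket{\mathcal V_I}\ket0$), instead of running a separate amplification with the roles reversed.
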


\begin{proof}
The normalized projection of $\ket{\mathcal V}$ onto the range of $\Pi_I$
is exactly $\ket{\mathcal V_I}$.  Use the coherent form of fixed-point
amplitude amplification from Theorem~\ref{thm:Fixed_AA}.  In that circuit,
each reflection about the starting state is
$U R_{\ket0}U^\dagger=R_{\mathcal V}$, where
$U\ket0=\ket{\mathcal V}$.  We implement it directly by
Lemma~\ref{lem:fixed-matching-reflection}.  For completeness, the phase
change
\[
  S_{\mathcal V}(\theta)
  :=I+(e^{i\theta}-1)\ket{\mathcal V}\!\bra{\mathcal V}
\]
required by fixed-point amplitude amplification is obtained as follows.
Initialize one qubit to $\ket0$, apply a Hadamard gate, apply
$R_{\mathcal V}$ controlled by that qubit, and apply another Hadamard gate.
The qubit is now $\ket0$ on $\ket{\mathcal V}$ and $\ket1$ on its
orthogonal complement.  Apply phase $e^{i\theta}$ when the qubit is zero and
reverse the preceding operations.  This uses two controlled calls to
$R_{\mathcal V}$ and returns the qubit to zero.  The same construction with
$R_I$, applying the phase when the qubit is one, gives the phase changes
about the accepting subspace.  Thus the amplification circuit defines a
unitary $P_I$ such that
\[
  \|P_I(\ket{\mathcal V}\ket0)-\ket{\mathcal V_I}\ket0\|
  \leq\epsilon/10.
\]
The circuit uses $O(p_I^{-1/2}\log(1/\epsilon))$ such operations.  We
implement every
$R_{\mathcal V}$ by Lemma~\ref{lem:fixed-matching-reflection}, with error at
most $\epsilon$ divided by ten times the number of uses.  The sum of these
errors is at most $\epsilon/10$.  Therefore the stated query bound prepares
$\ket{\mathcal V_I}\ket0$ from $\ket{\mathcal V}\ket0$.  The same
construction gives $P_J$.

Starting from $\ket{\mathcal V_I}\ket0$, apply $P_I^\dagger$ and then
$P_J$.  Since $P_I$ is unitary, the first operation returns a state within
distance $O(\epsilon)$ of $\ket{\mathcal V}\ket0$, and the second produces
a state within distance $O(\epsilon)$ of
$\ket{\mathcal V_J}\ket0$.  Rescaling the error used in the two
constructions gives distance at most $\epsilon$.
\end{proof}

We now specify what happens when a collision pair is measured.  For
integers $1\leq a\leq b$, consider a basis state with
$z=Z_{\mathcal M}(S)\in[a,b]$.  From $D(S)$, list the $z$ collision pairs
in a fixed order as $e_1(S),\ldots,e_z(S)$.  For a collision pair
$e=\{x,y\}$, write $F(e)$ for the common value $F(x)=F(y)$.  The following
unitary uses no new input query:
\begin{align}
  \ket{S,D(S)}\ket0
  \longmapsto
  \ket{S,D(S)}\frac{1}{\sqrt b}
  \left(
    \sum_{i=1}^z\ket{e_i(S),F(e_i(S))}+
    \sum_{i=z+1}^b\ket{d_i}
  \right),
  \label{eq:fixed-matching-extraction-unitary}
\end{align}
where $d_1,\ldots,d_b$ are distinct tagged symbols in the same output
register.  It is implemented by a reversible computation of the ordered
list and the common values, followed by preparation of a uniform index in
$[b]$, and then by uncomputing the list.  Its action outside the stated
subspace can be chosen arbitrarily to make it a unitary.

\begin{lemma}[One extraction]
\label{lem:fixed-matching-extraction}
Suppose the current state is $\ket{\mathcal V_{[a,b]}}$, where
$1\leq a\leq b$, and set
\[
  p:=\frac{|\mathcal V_{[a,b]}|}{\binom Nr}.
\]
There is a procedure that reports one collision pair $e$, deletes its two
positions, and leaves the uniform state
\[
  \ket{\mathcal V_{[a-1,b-1]}(N-2,q-1,r-2)}.
\]
The ideal procedure is exact.  Its expected restoration cost is
\[
  \widetilde O\left(
    \sqrt r\,\frac{b-a}{a\sqrt p}
  \right)
\]
queries.
\end{lemma}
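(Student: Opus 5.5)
Starting from $\ket{\mathcal V_{[a,b]}}$, I will apply the extraction unitary \eqref{eq:fixed-matching-extraction-unitary} and then measure the output register. The squared norm on the non-dummy outcomes is $\sum_S z(S)/(b|\mathcal V_{[a,b]}|)\geq a/b$. Because the collision pairs are disjoint and the counts \eqref{eq:fixed-matching-count} depend only on $N,q,r$, a symmetry argument (relabelling the matching) shows the post-measurement state is exactly computable. If the outcome is a pair $e=\{x,y\}$ with value $F(e)$, the residual amplitude on $\ket{S,D(S)}$ is uniform over the sets $S\supseteq e$ with $Z_{\mathcal M}(S)\in[a,b]$. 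Indeed, for each such $S$ the amplitude is $1/\sqrt{b|\mathcal V_{[a,b]}|}$, independently of $S$. Before measuring I uncompute the list, so the extracted register is disentangled apart from $e$ itself, which is known. I then delete $x,y$ and their stored values from $S$ and $D(S)$ (classically known, so no queries). The map $S\mapsto S\setminus e$ is a bijection onto the $(r-2)$-subsets of $U\setminus e$ whose pair count lies in $[a-1,b-1]$ for the remaining matching of $q-1$ pairs on $N-2$ points. This gives exactly $\ket{\mathcal V_{[a-1,b-1]}(N-2,q-1,r-2)}$, which establishes the exactness of the ideal procedure.

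If the outcome is a dummy $d_i$, the state becomes a nonuniform superposition weighted by $\sqrt{(b-z(S))}$ over $\mathcal V_{[a,b]}$. This is a known state, so I restore $\ket{\mathcal V_{[a,b]}}$ and retry. For the restoration I will use the unitary $U_{\mathrm{ext}}$ of \eqref{eq:fixed-matching-extraction-unitary} together with the interval-change machinery of Lemma~\ref{lem:fixed-matching-state-conversion}. Concretely, I will undo the extraction and reflect, so that the failure branch is mapped back through $P_{[a,b]}^\dagger$ to a state close to $\ket{\mathcal V}$, from which $P_{[a,b]}$ is reapplied. Alternatively, I can view the whole step ``prepare $\ket{\mathcal V_{[a,b]}}$, apply $U_{\mathrm{ext}}$'' as a state-preparation unitary and amplify the non-dummy part. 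Each restoration costs $\widetilde O(\sqrt r/\sqrt p)$ queries by Lemmas~\ref{lem:fixed-matching-reflection} and \ref{lem:fixed-matching-state-conversion}.

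For the expected cost, the failure probability per trial is at most $(b-a)/b$. The expected number of failures before success is therefore at most $(b-a)/b\cdot b/a=(b-a)/a$, by a geometric sum with success probability at least $a/b$. Multiplying by the per-restoration cost $\widetilde O(\sqrt r/\sqrt p)$ gives the stated $\widetilde O(\sqrt r\,(b-a)/(a\sqrt p))$.

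The main obstacle is verifying that the failure branch can be restored at cost $\widetilde O(\sqrt r/\sqrt p)$ without knowing which basis components survive. The key point is that the failure state equals $(\Pi_{\mathrm{dummy}}U_{\mathrm{ext}})\ket{\mathcal V_{[a,b]}}$ with a known, query-free $U_{\mathrm{ext}}$. I will therefore apply $U_{\mathrm{ext}}^\dagger$ and then invoke fixed-point amplification (Theorem~\ref{thm:Fixed_AA}) toward $\ket{\mathcal V_{[a,b]}}$, starting from $\ket{\mathcal V}$ via $P_{[a,b]}^\dagger$. The overlap bookkeeping there is the delicate part, and the error terms must be charged to the $\log(1/\epsilon)$ factors.
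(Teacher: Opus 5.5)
Your success branch is correct and follows the paper: the non-dummy mass is at least $a/b$, an observed pair $e$ leaves the uniform superposition over the $S\in\mathcal V_{[a,b]}$ containing $e$, and $S\mapsto S\setminus e$ is the required bijection. The gap is in the failure branch.

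\emph{The failure state.} The dummies $d_1,\ldots,d_b$ are distinct symbols, and every $S$ places the same amplitude $1/\sqrt b$ on each $d_j$ with $j>Z_{\mathcal M}(S)$. So observing $d_j$ collapses the state exactly to the uniform interval state $\ket{\mathcal V_{[a,j-1]}}$. It is not a superposition weighted by $\sqrt{b-Z_{\mathcal M}(S)}$, as you describe. You need this exactness, because Lemma~\ref{lem:fixed-matching-state-conversion} only converts between uniform interval states.

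\emph{Your restoration routes.} The first does not work as stated. After $U_{\mathrm{ext}}^\dagger$, the normalized failure branch has overlap only $\sqrt{1-s}$ with $\ket{\mathcal V_{[a,b]}}\ket0$, where $1-s\le(b-a)/b$ is the failure probability and may be tiny. So $P_{[a,b]}^\dagger$ does not bring it close to $\ket{\mathcal V}$. The second, amplifying the non-dummy part, needs $\Theta(\sqrt{b/a})$ reflections about $\ket{\mathcal V_{[a,b]}}$, each costing $\widetilde O(\sqrt r/\sqrt p)$. That cost does not shrink with $b-a$. In the regime $a=\Theta(\mu)$, $b-a=\Theta(\sqrt\mu)$, $p=\Theta(1)$ used in Lemma~\ref{lem:fixed-matching-round}, it exceeds the claimed bound by a factor $\sqrt\mu$.

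\emph{The cost accounting.} The restoration cost is not uniformly $\widetilde O(\sqrt r/\sqrt p)$. From $\ket{\mathcal V_{[a,j-1]}}$ it is $\widetilde O\bigl(\sqrt r\,(p_j^{-1/2}+p^{-1/2})\bigr)$, where $p_j:=|\mathcal V_{[a,j-1]}|/\binom Nr\le p$. This can be far larger than $\sqrt r/\sqrt p$, for instance when $j=a+1$. The bound holds only on average. Outcome $d_j$ has probability $p_j/(bp)$, so the expected restoration cost of one attempt is
\[
\sum_{j=a+1}^{b}\frac{p_j}{bp}\,\sqrt r\left(\frac{1}{\sqrt{p_j}}+\frac{1}{\sqrt p}\right)\le 2\sqrt r\,\frac{b-a}{b\sqrt p}.
\]
Multiplying by the at most $b/a$ expected attempts gives $\widetilde O\bigl(\sqrt r\,(b-a)/(a\sqrt p)\bigr)$. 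Your final arithmetic reaches the same expression, but only by assuming a uniform per-failure cost that neither of your routes establishes.
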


\begin{proof}
Apply the unitary in
\eqref{eq:fixed-matching-extraction-unitary} and measure its last register.
The probability of obtaining a collision pair is
\[
  \frac{1}{|\mathcal V_{[a,b]}|}
  \sum_{S\in\mathcal V_{[a,b]}}
  \frac{Z_{\mathcal M}(S)}{b}
  \geq\frac ab.
\]
Suppose the outcome is $(e,F(e))$.  Every subset containing $e$ had the same
amplitude $1/\sqrt b$ on this outcome.  Conditioned on this outcome, the
remaining state is therefore uniform over all $S\in\mathcal V_{[a,b]}$
that contain $e$.  The map
\[
  S\longmapsto S\setminus e
\]
is a bijection from these subsets to
\[
  \mathcal V_{[a-1,b-1]}(N-2,q-1,r-2).
\]
Keep $(e,F(e))$ in the classical output register.  Using this record, remove
the two entries belonging to $e$ from the ordered database by a reversible
computation: the inverse recovers both entries from $e$ and $F(e)$.  Thus no
value is erased and no new query is used.  This proves the claimed state
after a successful measurement.

It remains to handle the other outcomes.  If the outcome is $d_j$, then
$j>Z_{\mathcal M}(S)$, so the conditional state is exactly
\[
  \ket{\mathcal V_{[a,j-1]}}.
\]
Let
\[
  p_j:=\frac{|\mathcal V_{[a,j-1]}|}{\binom Nr}.
\]
This outcome has probability $p_j/(bp)$.  Outcomes for which $p_j=0$ never
occur.  For every other outcome, use
Lemma~\ref{lem:fixed-matching-state-conversion} to return from
$\ket{\mathcal V_{[a,j-1]}}$ to $\ket{\mathcal V_{[a,b]}}$, and repeat the
measurement.  Ignoring logarithmic factors, the expected restoration cost
of one attempt is at most
\begin{align*}
 &\sum_{j=a+1}^b
   \frac{p_j}{bp}\sqrt r
   \left(\frac{1}{\sqrt{p_j}}+
         \frac{1}{\sqrt p}\right)                                      \\
 &\qquad\leq
   2\sqrt r\,\frac{b-a}{b\sqrt p},
\end{align*}
where we used $p_j\leq p$.  Since an attempt succeeds with probability at
least $a/b$, the expected number of attempts is at most $b/a$.  This proves
the stated bound.
\end{proof}

The preceding lemma gives the exact state after every outcome in the ideal
procedure.  We next show that the intervals used in consecutive extractions
continue to have sufficient probability after positions are deleted.

\begin{lemma}[One enumeration round]
\label{lem:fixed-matching-round}
Suppose that $q\geq1$ is known and $\eta\in(0,1/2)$.  There is a quantum
algorithm that reports
\[
  K:=\left\lceil q/32\right\rceil
\]
distinct collision pairs with failure probability at most $\eta$ and with
query complexity
\[
  \widetilde O\left(
    N^{2/3}q^{1/3}\,\operatorname{polylog}(1/\eta)
  \right).
\]
\end{lemma}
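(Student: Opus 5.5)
We will follow the Bonnetain et al.\ multiple-collision strategy, specialized to the fixed matching. Choose the subset size $r:=\lceil N^{2/3}q^{-2/3}K^{2/3}\rceil$, capped at $N/4$ (with $K=\Theta(q)$ this is $\Theta(N^{2/3}q^{0})$ up to constants; more carefully we pick $r$ so that $\mu_0=q(r)_2/(N)_2$ is of order $K$, i.e.\ $r=\Theta(N\sqrt{K/q})=\Theta(N)$ when $K=\Theta(q)$ would be too large). We therefore first fix the rule: set $r$ so that $\mu_0=\Theta(\max\{K,\log^2(1/\eta)\})$ subject to $r\le N/4$, and if this forces $r\ge N/4$ we simply query everything at cost $O(N)=O(N^{2/3}q^{1/3})$ since then $q=\Omega(N)$. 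In the main regime $r=\Theta(N\sqrt{K/q})$; the setup costs $r$ queries to prepare $\ket{\mathcal V}$.

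Next, using the counts \eqref{eq:fixed-matching-count} (no queries), pick an interval $I_0=[a_0,b_0]$ around $\mu_0$ of width $\Theta(\sqrt{\mu_0})$, with $a_0\ge \mu_0/2$ and $a_0\ge K$. By Lemma~\ref{lem:fixed-matching-statistics}, $p_{I_0}=\Omega(1)$, so Lemma~\ref{lem:fixed-matching-state-conversion} converts $\ket{\mathcal V}$ into $\ket{\mathcal V_{I_0}}$ with $\widetilde O(\sqrt r)$ queries. Then iterate Lemma~\ref{lem:fixed-matching-extraction} $K$ times: after the $k$-th extraction we hold $\ket{\mathcal V_{[a_0-k,b_0-k]}(N-2k,q-k,r-2k)}$. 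The key point to verify is that the probability $p^{(k)}$ of the shifted interval in the depleted instance remains $\Omega(1)$: the new mean is $(q-k)(r-2k)_2/(N-2k)_2=\mu_0-k(1+O(r/N))\pm O(1)$, while the interval center moved by exactly $k$. Since $K\le q/32$ and $r\le N/4$, the drift between interval and mean over all $K$ steps is $O(K r/N+K/q\cdot\mu_0)$; we must choose constants (the $1/32$ is the slack for this) so the drift is at most a constant times $\sqrt{\mu_0}$, or else periodically re-center using Lemma~\ref{lem:fixed-matching-state-conversion} at cost $\widetilde O(\sqrt r)$ each time the drift exceeds $\sqrt{\mu_0}$, which happens $O(K/\sqrt{\mu_0})$ times. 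Also $a_0-k\ge a_0/2$ throughout, so $b/a=O(1)$.

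Cost: each extraction costs $\widetilde O(\sqrt r\,(b-a)/(a\sqrt p))=\widetilde O(\sqrt r/\sqrt{\mu_0})$, so $K$ extractions cost $\widetilde O(K\sqrt r/\sqrt{\mu_0})$, plus re-centering $\widetilde O(\sqrt r\,K/\sqrt{\mu_0})$, plus setup $r$. With $\mu_0\approx qr^2/N^2$ this is $r+KN/(\sqrt q\sqrt r)$; balancing gives $r=\Theta(N^{2/3}K^{2/3}q^{-1/3})=\Theta(N^{2/3}q^{1/3})$ and total $\widetilde O(N^{2/3}q^{1/3})$, matching the claim (and confirming $\mu_0=\Theta(q\cdot q^{2/3}N^{-2/3})$, which is $\ge K$ only when $q\gtrsim N$; so in fact we must take $a_0$ near $\mu_0$ rather than $\ge K$, and allow rounds to extract $\min\{K,\mu_0/2\}$ pairs — I expect the correct bookkeeping to reproduce the stated bound by running $\lceil K/(\mu_0/2)\rceil$ sub-rounds, each re-preparing from scratch, whose total cost still sums to the balanced quantity).

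Errors: each call to Lemma~\ref{lem:fixed-matching-state-conversion} and Lemma~\ref{lem:fixed-matching-reflection} is run with accuracy $\eta/\mathrm{poly}(K)$, contributing the $\mathrm{polylog}(1/\eta)$ factor, and the expected-cost statements are converted to worst-case by Markov truncation with failure $\eta/2$. The main obstacle is the drift/coverage step: proving that the shifted intervals keep constant mass in every depleted instance, including the case $\mu_0$ small where the local-limit estimate of Lemma~\ref{lem:fixed-matching-statistics} needs $\mu$ large; there I would handle $\mu_0=O(1)$ separately by taking $I=[1,\infty)$, where $p=\Theta(\mu_0)$ directly.
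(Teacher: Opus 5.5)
There is a genuine gap in your drift/coverage step, and the fix you sketch at the end does not achieve the bound.

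\textbf{The drift claim is wrong.} The mean of the depleted instance does not fall by $k(1+O(r/N))$. From $\mu_k=(q-k)(r-2k)_2/(N-2k)_2$ one gets $\mu_0-\mu_k=O\bigl(k(r^2+qr)/N^2\bigr)$. At the balanced choice $r=(N^2q)^{1/3}$ this is $O\bigl(k(q/N)^{2/3}\bigr)=o(k)$ when $q=o(N)$, and it is always at most a constant fraction of $k$. The interval $[a_0-k,b_0-k]$, however, slides down by exactly $k$. So the interval--mean drift is $\Omega(k)$, not $O(Kr/N+K\mu_0/q)$. After $\Theta(\sqrt{\mu_0})$ extractions the interval leaves the $O(\sqrt{\mu_0})$-window where Lemma~\ref{lem:fixed-matching-statistics} gives constant mass. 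From then on $p$ collapses and the extraction cost $\sqrt r\,(b-a)/(a\sqrt p)$ blows up. Re-centering is therefore mandatory every $\Theta(\sqrt{\mu_0})$ extractions, and the $1/32$ slack is not what controls the drift.

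\textbf{The missing idea: re-center around the current mean, which stays $\Theta(\mu)$.} Since $N\ge 2q$, the capped balanced $r$ satisfies $r=\Omega(q)$. Together with $K\le q/32+1$, this means $N,q,r$ shrink only by constant factors during the whole round; this is where the $1/32$ is used. Hence the current mean $\mu_h$ stays $\Theta(\mu)$ throughout, and every re-centered interval $[\lfloor\mu_h\rfloor,\lfloor\mu_h\rfloor+c\sqrt{\mu_h}]$ has constant probability. Its lower endpoint stays $\Theta(\mu)$, and Lemma~\ref{lem:fixed-matching-state-conversion} reaches it for $\widetilde O(\sqrt r)$ queries with no new setup. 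In effect, the conversion refills the current subset with collision pairs. Consequently:
\begin{itemize}
\item You never need $a_0\ge K$, and $K$ may far exceed $\mu$. (Your initial rule $\mu_0=\Theta(\max\{K,\log^2(1/\eta)\})$ forces $r=\Theta(N)$ for every $q$, so the claim $q=\Omega(N)$ justifying the $O(N)$ fallback is false.)
\item The paper's count follows: $O(1+K/\sqrt\mu)$ groups of $\lfloor c\sqrt\mu\rfloor$ extractions, each group costing $\widetilde O(\sqrt r)$, for a total of $\widetilde O(r+q\sqrt{r/\mu})=\widetilde O(r+N\sqrt q/\sqrt r)$.
\end{itemize}

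\textbf{Why your sub-round fix fails.} Running $\lceil K/(\mu_0/2)\rceil=\Theta(N^2/r^2)$ sub-rounds, each with a fresh $r$-query setup, costs $\Theta(N^2/r)=\Theta(N^{4/3}q^{-1/3})$ at the balanced $r$. That exceeds the target by a factor $(N/q)^{2/3}$, and rebalancing $r$ only gives $\Theta(N)$. In any case, extracting $\mu_0/2$ pairs within one sub-round runs into the drift problem above.

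\textbf{Two smaller issues.}
\begin{itemize}
\item For small $\mu$, the interval $[1,\infty)$ does not fit Lemma~\ref{lem:fixed-matching-extraction}, which needs a finite $b$ with $b/a=O(1)$. The paper uses $[1,L]$ for a constant $L$; second-moment and Markov bounds give mass $\Theta(\mu)$, and after each extraction it converts $[0,L-1]$ back to $[1,L]$.
\item Markov truncation at failure probability $\eta/2$ costs a $1/\eta$ factor, not $\operatorname{polylog}(1/\eta)$. Instead, truncate at constant failure probability, verify the $K$ reported pairs with $O(K)$ queries, and repeat $O(\log(1/\eta))$ times.
\end{itemize}
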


\begin{proof}
For $N$ below a fixed constant, query every position.  Otherwise choose
\[
  r:=\min\left\{
       \left\lfloor\frac N4\right\rfloor,
       \left\lceil(N^2q)^{1/3}\right\rceil
     \right\},
  \qquad
  \mu:=q\frac{(r)_2}{(N)_2}.
\]
The rounding of $r$ changes only constant factors.  Notice that $r=\Omega(q)$
because $q\leq N/2$.  Since $K\leq q/32+1$, deleting the positions of $K$
pairs changes $N,r$ only by constant factors, and it changes $q$ only by a
constant factor unless $q=1$.  In the latter case the round ends after the
only collision pair is reported.

We first consider the case in which $\mu$ is at least a sufficiently large
constant.  Let
\[
  E:=\lfloor\mu\rfloor,
  \qquad
  T:=\lfloor c\sqrt\mu\rfloor,
\]
where $c>0$ is a sufficiently small constant.  By
Lemma~\ref{lem:fixed-matching-statistics}, the interval $[E,E+T]$ has
constant probability.  Prepare $\ket{\mathcal V}$ using $r$ queries and
then prepare $\ket{\mathcal V_{[E,E+T]}}$ using
Lemma~\ref{lem:fixed-matching-state-conversion}.

Starting from this state, apply
Lemma~\ref{lem:fixed-matching-extraction} repeatedly.  After $h\leq T$
successful extractions, the parameters are
\[
  N_h=N-2h,
  \qquad q_h=q-h,
  \qquad r_h=r-2h,
\]
and, before any restoration caused by a $d_j$ outcome, the state is exactly
\begin{align}
  \ket{\mathcal V_{[E-h,E+T-h]}(N_h,q_h,r_h)}.
  \label{eq:fixed-matching-state-after-h}
\end{align}
This follows by induction from the bijection in
Lemma~\ref{lem:fixed-matching-extraction}.

Let
\[
  \mu_h:=q_h\frac{(r_h)_2}{(N_h)_2}.
\]
For $h\leq T$, comparison of the factors in this expression gives
\begin{align*}
  |\mu_h-\mu|
  &\leq
  C\mu\left(\frac hq+\frac hr+\frac hN\right)
  =O(h).
\end{align*}
Here we used
$\mu/q=O((r/N)^2)$, $\mu/r=O(qr/N^2)$, and
$\mu/N=O(qr^2/N^3)$.  Hence every integer in
$[E-h,E+T-h]$ is within $O(\sqrt\mu)$ of $\mu_h$.  The interval has
$\Theta(\sqrt\mu)$ integers, so
Lemma~\ref{lem:fixed-matching-statistics} shows that the state in
\eqref{eq:fixed-matching-state-after-h} always corresponds to a constant
fraction of the current $r_h$-subsets.  This proves the required interval
condition after every deletion, rather than only at the beginning of the
round.

Throughout these $T$ extractions we have $a=\Theta(E)$, $b-a=T$, and
$p=\Theta(1)$ in Lemma~\ref{lem:fixed-matching-extraction}.  Their total
expected restoration cost is therefore
\[
  \widetilde O\left(
    T\sqrt r\frac{T}{E}
  \right)
  =\widetilde O(\sqrt r).
\]
After these $T$ pairs have been reported, recompute the current mean and the
corresponding values of $E,T$.  The current interval and the new interval
$[E,E+T]$ both have constant probability, so
Lemma~\ref{lem:fixed-matching-state-conversion} prepares the state for the
next group of extractions using $\widetilde O(\sqrt r)$ queries.  Continue
until $K$ pairs have been reported.  Since the parameters change only by
constant factors during the round, the number of groups is
$O(1+K/\sqrt\mu)$.  The total expected cost in this case is
\begin{align}
  \widetilde O\left(
    r+\frac{q}{\sqrt\mu}\sqrt r
  \right)
  =\widetilde O\left(
    r+\frac{N\sqrt q}{\sqrt r}
  \right).
  \label{eq:fixed-matching-round-cost}
\end{align}

We now consider the case in which $\mu$ is smaller than the constant above.
Fix a constant $L$ larger than four times that threshold, and define
\[
  \mathcal B:=\mathcal V_{[1,L]}.
\]
The second-moment bound gives
\[
  \Pr[Z_{\mathcal M}>0]
  \geq
  \frac{\mathbb E[Z_{\mathcal M}]^2}
       {\mathbb E[Z_{\mathcal M}^2]}
  \geq\frac{\mu}{1+\mu}.
\]
On the other hand, Markov's inequality gives
$\Pr[Z_{\mathcal M}>L]\leq\mu/(L+1)$.  It follows, by the choice of $L$,
that
\begin{align}
  \frac{|\mathcal B|}{\binom Nr}=\Theta(\mu).
  \label{eq:fixed-matching-small-probability}
\end{align}
Prepare $\ket{\mathcal B}$ from $\ket{\mathcal V}$ using
Lemma~\ref{lem:fixed-matching-state-conversion}.  Apply
Lemma~\ref{lem:fixed-matching-extraction} with $a=1$ and $b=L$.  Its expected
cost, including restoration after a $d_j$ outcome, is
\[
  \widetilde O(\sqrt r/\sqrt\mu).
\]
After a collision pair is measured, the state is the uniform state on the
interval $[0,L-1]$ for the new parameters.  The new mean is no larger than
$\mu$, since $q$ decreases and each of the two ratios in
$(r)_2/(N)_2$ decreases when $r,N$ are both reduced by two.  Markov's
inequality therefore shows that $[0,L-1]$ has probability at least a
positive constant.  Unless $K$ pairs have already been reported, convert
this state to the new interval $[1,L]$ and repeat.  During the round the new
values of $N,q,r$ remain within constant factors of their initial values,
so the new mean remains within a constant factor of $\mu$.  Repeating the
second-moment and Markov bounds above shows that the new interval $[1,L]$
has probability $\Theta(\mu)$.  Thus the total expected cost is again
\[
  \widetilde O\left(
    r+q\frac{\sqrt r}{\sqrt\mu}
  \right)
  =\widetilde O\left(
    r+\frac{N\sqrt q}{\sqrt r}
  \right).
\]

It remains to bound the errors and obtain a fixed query bound.  First define
one trial of the round.  In the ideal procedure, allow at most
$c_1\log(N+1)$ attempts for each reported pair.  Since every attempt succeeds
with probability at least a positive constant, the probability that any
extraction exceeds this limit is at most $1/20$ for a sufficiently large
$c_1$.  There are then at most $M=O(N\log(N+1))$ calls to
Lemma~\ref{lem:fixed-matching-state-conversion}.  Implement each call with
error at most $1/(100M)$.  Replacing the ideal calls one at a time and using
the fact that unitary operations and measurements do not increase trace
distance shows that the output distribution of the trial changes by at most
$1/100$.

Let $Q$ be a fixed upper bound on the expected number of queries in the
ideal trial, including the logarithmic factors just introduced.  The
preceding calculations allow us to choose
\[
  Q=\widetilde O(N^{2/3}q^{1/3}).
\]
Abort the trial before an operation that would make its query count exceed
$20Q$.  By Markov's inequality, this happens with probability at most $1/20$
in the ideal trial.  If the trial finishes, verify that the output consists
of $K$ distinct pairs of current positions and query their endpoints to check
that each pair is a collision.  Since $q\leq N$, this verification uses
$O(K)=O(N^{2/3}q^{1/3})$ queries.
Consequently, one trial returns a verified list of $K$ distinct collision
pairs with probability bounded below by a positive constant.

Repeat the trial independently $c_2\log(2/\eta)$ times, always starting from
the instance at the beginning of the round, and accept the first verified
list.  For a sufficiently large $c_2$, the probability that no trial is
accepted is at most $\eta$.  Every trial has a fixed query bound
$\widetilde O(N^{2/3}q^{1/3})$, so the total query complexity is
\[
  \widetilde O\left(
    N^{2/3}q^{1/3}\operatorname{polylog}(1/\eta)
  \right).
\]

Finally, if $r=\lceil(N^2q)^{1/3}\rceil$, the two terms in
\eqref{eq:fixed-matching-round-cost} are
$O(N^{2/3}q^{1/3})$.  If instead $r=\lfloor N/4\rfloor$, then
$q=\Omega(N)$, and the cost in
\eqref{eq:fixed-matching-round-cost} is $O(N)$, which is again
$O(N^{2/3}q^{1/3})$.  This completes the proof.
\end{proof}

\begin{proofof}{Theorem~\ref{thm:findingcollisions}}
Start with the two non-repetitive strings $A,B$.  After reported pairs are
deleted, keep the surviving positions in their original order.  The two
remaining strings are still non-repetitive, and a query to a remaining
position is implemented by a query to its original position.

We first determine $m$ exactly without using a non-integral padding
parameter.  Append one fresh common symbol $\bot$ to both strings and call
the resulting strings $A^+,B^+$.  They are non-repetitive and have
$m^+=m+1$ collision pairs.  Apply Item~1 of
Theorem~\ref{cor:legall} to $A^+,B^+$ with lower bound $m_0=1$ and relative
error $1/4$, obtaining an estimate $g$.  Conditioned on the success of this
call,
\[
  \frac34m^+<g<\frac54m^+.
\]
Let $\bar g:=\max\{1,g\}$.  Apply Item~1 again with lower bound $m_0=1$ and
relative error $\varepsilon=1/(4\bar g)$, obtaining $\widehat m^+$.  Then
\[
  |\widehat m^+-m^+|
  <\frac{m^+}{4\bar g}
  <\frac13.
\]
Thus rounding $\widehat m^+$ to the nearest integer and subtracting one gives
$m$ exactly.  Give each counting call failure probability at most $\tau/20$.
The first call uses $\widetilde O(n^{2/3})$ queries.  Since
$\bar g=\Theta(m^+)$ whenever the first call succeeds, the second uses
\[
  \widetilde O\left(
    n^{2/3}(m+1)^{1/3}\operatorname{polylog}(1/\tau)
  \right)
\]
queries.

Set $q_0=m$.  The number of remaining pairs is then known throughout the
algorithm: after a round reports $K$ pairs, subtract $K$ from the current
value.  If $q_j=0$, stop.  Otherwise, in round $j$ apply
Lemma~\ref{lem:fixed-matching-round} to the concatenation of the two
remaining strings, with failure probability at most
\[
  \frac{\tau}{10(j+1)^2}.
\]
Every collision pair contains one position from each string because both
strings are non-repetitive.  Let $N_j$ be the total number of their remaining
positions, so $N_j\leq 2n$.

Since round $j$ removes $\lceil q_j/32\rceil$ pairs,
\[
  q_{j+1}\leq\frac{31}{32}q_j.
\]
Consequently, when $m\geq1$, there are $J=O(\log(m+1))$ rounds and
\begin{align*}
  \sum_{j=0}^{J-1}
  \widetilde O\left(N_j^{2/3}q_j^{1/3}\right)
  &\leq
  \widetilde O\left(
    n^{2/3}\sum_{j\geq0}
    \bigl(m(31/32)^j\bigr)^{1/3}
  \right)                                                     =\widetilde O\left(n^{2/3}m^{1/3}\right).
\end{align*}
When $m=0$, there are no enumeration rounds.  Including the
initial exact-counting step, the total query complexity is
\[
  \widetilde O\left(
    n^{2/3}(m+1)^{1/3}\operatorname{polylog}(1/\tau)
  \right).
\]
The sum of the assigned failure probabilities is at most $\tau$.  This proves
that all collision pairs are reported with probability at least $1-\tau$.
\end{proofof}

\subsection{Reversible access to the implicit Ulam instance}
\label{app:ak-quantum-access}

In this subsection, we explain how to use the reduction of
Andoni and Krauthgamer~\cite{andoni2012smoothed} inside the quantum procedures of Section~\ref{sec:smoothedEditD}.  The procedure given in~\cite{andoni2012smoothed} for answering queries to the resulting permutations maintains tables containing the positions queried earlier.  We instead associate a uniquely determined database with every subset appearing in our quantum walks.

Let $X,Y\in\Sigma^n$ be the original smoothed strings.  Apply Lemma~4.6 of~\cite{andoni2012smoothed}, with the perturbation parameter $p$ there equal to $\sigma$, and let $P,Q$ be the resulting non-repetitive strings.  We use the choice $ Q[j]=j\text{ for every }j\in[n],$ and write $L=\Theta\left(\frac{\log n}{\sigma}\right).$ We condition throughout this subsection on the high-probability event in that lemma.

Let $c_{\mathrm{sep}}>0$ be an absolute constant such that the second locality property of Lemma~4.6 of~\cite{andoni2012smoothed} gives edit distance at least $c_{\mathrm{sep}}\sigma L$ for every block whose starting position is at distance at least $2L$ from the selected matching block.  We choose the constant $\varepsilon$ used in that lemma sufficiently small so that $\varepsilon<c_{\mathrm{sep}}.$
This only changes the approximation factor by an absolute constant. We also fix the following rule for resolving ties: whenever more than one block minimizes the edit distance, we choose the one with the smallest starting position.

Since $P$ and $Q$ are non-repetitive, the set $M:=\{(i,j)\in[n]\times[n]:P[i]=Q[j]\}$ is a matching.  In particular, every position in either string belongs to at most one pair in $M$.

\begin{lemma}[Local collision test]
\label{lem:ak-local-collision}
There is a fixed deterministic procedure that, given $i,j\in[n]$,
computes
\[
\mathsf{Mate}(i,j)
:=
\mathbf 1[P[i]=Q[j]]
\]
using $O(L)$ queries to $X,Y$.  The corresponding operation
\[
|i,j,b\rangle
\longmapsto
|i,j,b\oplus\mathsf{Mate}(i,j)\rangle
\]
can be implemented reversibly with the same query complexity.
\end{lemma}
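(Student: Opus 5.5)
We have to prove the local collision test of Lemma~\ref{lem:ak-local-collision}: decide whether $P[i]=Q[j]$ with $O(L)$ queries, by a fixed deterministic procedure (so it can be made reversible). The plan is to unfold the definition of $P$ from Lemma~4.6 of~\cite{andoni2012smoothed} with the normalization $Q[j]=j$. In that construction, $P[i]$ is the position $j$ in $Y$ whose length-$L$ block $Y[j,\ldots,j+L-1]$ best matches the block $X[i,\ldots,i+L-1]$. Best match means minimal edit distance, with ties broken by smallest starting position, and $P[i]$ is set to a fresh symbol if no block is within edit distance $\varepsilon\sigma L$. So $\mathsf{Mate}(i,j)=1$ exactly when $j$ is the selected matching block for $i$.

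The procedure is as follows.
\begin{enumerate}
\item Query the $L$ symbols $X[i,\ldots,i+L-1]$.
\item Query the symbols $Y[j',\ldots,j'+L-1]$ for every $j'$ with $|j'-j|<2L$. This is a window of $O(L)$ positions.
\item From these queried values, classically compute the edit distance between the $X$-block and every such $Y$-block.
\item Output $1$ iff $\ed(X[i..],Y[j..])\le\varepsilon\sigma L$ and every $j'$ with $|j'-j|<2L$ and $j'<j$ has strictly larger edit distance, while every $j'>j$ in the window has edit distance at least that of $j$.
\end{enumerate}
The procedure is deterministic and uses $O(L)$ queries.

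For correctness we condition on the high-probability event of Lemma~4.6. If $P[i]=j$, then block $j$ is within $\varepsilon\sigma L$ of the $X$-block and is the minimizer under the tie rule, so every test passes. Conversely, suppose the tests pass but the true selected block $j^\ast\neq j$. If $|j^\ast-j|<2L$, the local minimality test already contradicts the choice of $j^\ast$. If $|j^\ast-j|\ge 2L$, the second locality property says block $j$, which starts at distance at least $2L$ from the selected block $j^\ast$, has edit distance at least $c_{\mathrm{sep}}\sigma L>\varepsilon\sigma L$. This contradicts the threshold test. The same threshold also shows that if no block is within $\varepsilon\sigma L$, both sides give $0$.

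This separation argument is the main point to verify carefully. It is the reason $\varepsilon<c_{\mathrm{sep}}$ was fixed, and the reason the paper fixed the tie-breaking rule, so that $P[i]$ is a well-defined function certifiable locally. The proof must also confirm that the fresh-symbol convention for unmatched $i$ never collides with an index $j$.

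For reversibility, use the standard compute--copy--uncompute pattern. Query the $O(L)$ positions into ancilla registers, and compute the bit $\mathsf{Mate}(i,j)$ by a reversible classical circuit (edit-distance dynamic programming needs no queries). XOR that bit into $b$, then run the circuit and the queries in reverse to clean the ancillas. The total is $O(L)$ queries. Since the procedure is deterministic and depends only on $(i,j)$ and the input, the map is exactly $|i,j,b\rangle\mapsto|i,j,b\oplus\mathsf{Mate}(i,j)\rangle$, and controlled versions follow likewise.
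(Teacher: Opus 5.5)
Your skeleton is the one the paper uses, spelled out in its footnote. The separation property with $\varepsilon<c_{\mathrm{sep}}$ confines the selected match to a $2L$-window, and the fixed smallest-start tie-break makes the local and global minimizers coincide. A fixed $O(L)$-position window followed by compute--copy--uncompute then gives the reversible map; querying a fixed window non-adaptively is what the paper secures by storing the whole neighborhoods $\mathsf{Rec}_X(i),\mathsf{Rec}_Y(j)$.

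The gap is that your step~4 does not compute $\mathbf 1[P[i]=Q[j]]$ for the $P$ of Lemma~4.6 of~\cite{andoni2012smoothed}, because you misdescribe that construction. There, $X$ is partitioned into fixed blocks $X_k$ of length $L$. A position $i$ with offset $r$ in $X_k$ receives $P[i]=s_k+r$, where $s_k=\operatorname{match}(X_k)$. So $\mathsf{Mate}(i,j)=1$ requires $s_k=j-r$. The test must compare $X_k$, not $X[i,\ldots,i+L-1]$, against the $Y$-blocks starting within $2L$ of $j-r$, not of $j$. Under your sliding-window reading, nearby positions can select the same start (e.g.\ around an indel). Then $P$ would not be non-repetitive and $M$ would not be a matching.

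There is a second omission. Passing the threshold and minimality tests does not by itself give $P[i]=s_k+r$. The paper's footnote applies, after the tie-break, the invalidation steps of Lemma~4.6, which can replace $P[i]$ by a fresh symbol; in those cases your step~4 wrongly outputs $1$. The paper does not re-derive these rules. It invokes the locality property of~\cite{andoni2012smoothed}: on the high-probability event, the whole outcome (selected match, tie-break, invalidation) is determined by an $O(L)$-neighborhood of the block containing $i$ and an $O(L)$-neighborhood of $Y$ around $j$. Hence $\mathsf{Mate}(i,j)=g(i,j,\mathsf{Rec}_X(i),\mathsf{Rec}_Y(j))$ for a fixed $g$. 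Make the offset correction; the $Y$-window around $j-r$ still has $O(L)$ positions, all within $3L$ of $j$. Then either appeal to that locality or check explicitly that each invalidation rule reads only these neighborhoods. With both fixes, your argument goes through, including the reversibility part.
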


\begin{proof}
Define a partial matching map
\[
\mathcal A(i)
=
\begin{cases}
P[i],
&
P[i]\in[n],
\\
\bot,
&
P[i]\notin[n].
\end{cases}
\]
Since $Q[j]=j$, we have
\[
\mathsf{Mate}(i,j)=1
\quad\Longleftrightarrow\quad
\mathcal A(i)=j.
\]

The locality property established in the proof of Theorem~4.1 of~\cite{andoni2012smoothed}, and formalized through Lemma~4.6 there, states that, conditioned on its high-probability event, whether
\[
\mathcal A(i)=j
\]
holds can be determined by querying $O(L)$ entries in an $O(L)$-length neighborhood of the block of $X$ containing $i$ and an $O(L)$-length neighborhood of $Y$ around $j$.  In particular, this answer is determined by $i,j$ and these two neighborhoods, and does not depend on entries outside them. \footnote{For fixed $i,j$, initialize the query procedure of~\cite[Section~4.3]{andoni2012smoothed} with empty tables and query $Q[j]$ followed by $P[i]$.  If $i$ has offset $r$ in $X_k$, equality is possible only through the candidate block beginning at $j-r$. The separation property localizes $s_k=\operatorname{match}(X_k)$ to the queried $O(L)$-neighborhood, after which the fixed tie-breaking rule and all invalidation steps of Lemma~4.6 are applied.  Thus the two answers agree exactly when $P[i]=Q[j]$, and this fixed computation can be evaluated and uncomputed reversibly using $O(L)$ queries.}

For every $i\in[n]$, let $\mathsf{Rec}_X(i)$ contain the entries in
the corresponding neighborhood of $X$, stored in increasing order.
For every $j\in[n]$, define $\mathsf{Rec}_Y(j)$ analogously.  We
store the entire two neighborhoods, so the records are fixed
functions of $i,j,X,Y$, even if the local procedure in
\cite{andoni2012smoothed} chooses which entries to examine
adaptively.  Both records contain $O(L)$ entries.

It follows that there is a fixed deterministic function $g$ such
that
\[
\mathsf{Mate}(i,j)
=
g\bigl(i,j,\mathsf{Rec}_X(i),\mathsf{Rec}_Y(j)\bigr).
\]

To implement the predicate reversibly, first query the two records
into work registers.  Next, reversibly compute $g$, add its output
to the target qubit, and reverse the computation of $g$.  Finally,
apply the input queries in reverse to return all work registers to
zero.  This implements
\[
|i,j,b\rangle
\longmapsto
|i,j,b\oplus\mathsf{Mate}(i,j)\rangle
\]
using $O(L)$ queries to $X,Y$.
\end{proof}

\begin{lemma}[Subset database for the implicit Ulam instance]
\label{lem:ak-subset-database}
Let $S,T\subseteq[n]$.  There is a database $\mathcal D(S,T)$ satisfying the following properties.

\begin{enumerate}
    \item The database can be prepared using
 $   O\bigl((|S|+|T|)L\bigr)
$    queries to $X,Y$.

    \item Given $\mathcal D(S,T)$, the value
    $\mathsf{Mate}(i,j)$ can be computed reversibly, without further
    queries to $X,Y$, for every $i\in S$ and $j\in T$.

    \item If one stored position is removed and one new position is
    inserted, the database can be updated using $O(L)$ queries to
    $X,Y$.  The inverse update has the same query complexity.

    \item The contents and ordering of $\mathcal D(S,T)$ are uniquely
    determined by $S,T,X,Y$.  In particular, they do not depend on
    the order in which the positions were inserted.
\end{enumerate}

Moreover, the database provides labels
\[
\lambda^P_{S,T}(i),\qquad i\in S,
\]
and
\[
\lambda^Q_{S,T}(j),\qquad j\in T,
\]
such that $\lambda^P_{S,T}(i)=\lambda^Q_{S,T}(j)\quad\Longleftrightarrow\quad P[i]=Q[j].$
Reversible access to these labels requires no further queries to
$X,Y$.
\end{lemma}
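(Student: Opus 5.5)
The plan is to build $\mathcal D(S,T)$ from the local records used in Lemma~\ref{lem:ak-local-collision}, and then to derive canonical labels from the collisions among them. For each $i\in S$ we store the pair $(i,\mathsf{Rec}_X(i))$, and for each $j\in T$ the pair $(j,\mathsf{Rec}_Y(j))$. Each record holds the $O(L)$ entries of the fixed neighborhood, listed in increasing position order. The two lists are sorted by $i$ and by $j$ respectively. This settles item~4 immediately, since the contents are functions of $S,T,X,Y$ only and the order is canonical. Preparation queries every record once, which gives $O((|S|+|T|)L)$ queries for item~1. For item~2 we evaluate the fixed function $g$ from the proof of Lemma~\ref{lem:ak-local-collision} on the stored records. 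This is reversible, uses no queries, and is uncomputed afterwards.

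For item~3, removing $i$ and inserting $i'$ is done as follows. We erase $\mathsf{Rec}_X(i)$ by querying its $O(L)$ entries again and XOR-ing them out, which is possible because the stored values equal the true ones. We then query $\mathsf{Rec}_X(i')$ into a fresh slot. Finally we reversibly re-sort, which costs nothing in queries since a sorted insertion is a reversible permutation of slots given the index. The inverse runs the same steps backwards, so it has the same $O(L)$ cost. The update step of the Johnson-walk implementation in Lemma~\ref{lem:fixed-matching-reflection} is built the same way.

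For the labels we use the fact that $M$ is a matching. Each $i\in S$ then has at most one $j\in T$ with $\mathsf{Mate}(i,j)=1$, and vice versa. We define $\lambda^Q_{S,T}(j):=j$, which is consistent with $Q[j]=j$. We define $\lambda^P_{S,T}(i):=j$ if the unique mate $j\in T$ exists, and otherwise $\lambda^P_{S,T}(i):=(\bot,i)$, a tagged symbol that never equals any $j$. By construction $\lambda^P(i)=\lambda^Q(j)$ if and only if $\mathsf{Mate}(i,j)=1$, which holds if and only if $P[i]=Q[j]$. Access works by scanning $T$ with the query-free evaluation of $\mathsf{Mate}$ from item~2, writing the result, and uncomputing. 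This needs no queries and is well defined because the mate is unique.

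The main obstacle is justifying that $g$ depends only on the two stored neighborhoods. In particular, tie-breaking and the invalidation steps of Lemma~4.6 of~\cite{andoni2012smoothed} must never consult positions outside them, for otherwise the records would not determine the labels. We rely on the conditioning on the high-probability event and the choice $\varepsilon<c_{\mathrm{sep}}$, exactly as in Lemma~\ref{lem:ak-local-collision}. We also need care that the uniqueness in item~4 is exact, not approximate, so that the walk's reflection states in Lemma~\ref{lem:fixed-matching-reflection} are not entangled with insertion history.
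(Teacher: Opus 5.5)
Your proposal is correct and follows the paper's construction: sorted local records $(i,\mathsf{Rec}_X(i))$ and $(j,\mathsf{Rec}_Y(j))$, query-free evaluation of $g$ to obtain $\mathsf{Mate}$, updates that un-query the old record and query the new one, and labels built from the uniqueness of mates in the matching $M$. The only difference is cosmetic: the paper also stores the mate map $\mu_{S,T}(j)$ inside the database and sets $\lambda^P_{S,T}(i)=i$ and $\lambda^Q_{S,T}(j)=\mu_{S,T}(j)$ (or $n+j$ if $j$ is unmatched), whereas you use the mirrored choice $\lambda^Q_{S,T}(j)=j$ and compute the $P$-side mate on demand.
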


\begin{proof}
For every $i\in S$, store the pair  $\bigl(i,\mathsf{Rec}_X(i)\bigr),$ and for every $j\in T$, store the pair $\bigl(j,\mathsf{Rec}_Y(j)\bigr).$ The pairs on each side are stored in increasing order of their indices. For every $j\in T$, define
\[
\mu_{S,T}(j)
:=
\begin{cases}
i,
&
\text{if there exists }i\in S
\text{ such that }\mathsf{Mate}(i,j)=1,
\\
\bot,
&
\text{otherwise}.
\end{cases}
\]
The value $i$, when it exists, is unique because $M$ is a matching.
After the local records have been loaded, all values
$\mu_{S,T}(j)$ can be computed reversibly by evaluating the function
from Lemma~\ref{lem:ak-local-collision} on the stored records.  This
may require additional reversible computation, but no further
queries to $X,Y$. The database $\mathcal D(S,T)$ consists of the ordered local records
together with the values $\mu_{S,T}(j)$.  Preparing all local records
uses
\[
O\bigl((|S|+|T|)L\bigr)
\]
queries, which proves the setup bound. We now define the labels.  For $i\in S$, let
\[
\lambda^P_{S,T}(i):=i.
\]
For $j\in T$, let
\[
\lambda^Q_{S,T}(j)
:=
\begin{cases}
\mu_{S,T}(j),
&
\mu_{S,T}(j)\neq\bot,
\\
n+j,
&
\mu_{S,T}(j)=\bot.
\end{cases}
\]
The labels on each side are all distinct.  Indeed, the left-hand
labels are the distinct indices in $S$.  On the right-hand side,
matched positions receive distinct indices in $S$, while unmatched
positions receive distinct values in
$\{n+1,\ldots,2n\}$.

For every $i\in S$ and $j\in T$, we have
\[
\lambda^P_{S,T}(i)=\lambda^Q_{S,T}(j)
\]
if and only if
\[
\mu_{S,T}(j)=i,
\]
which holds if and only if
\[
\mathsf{Mate}(i,j)=1.
\]
Thus
\[
\lambda^P_{S,T}(i)=\lambda^Q_{S,T}(j)
\quad\Longleftrightarrow\quad
P[i]=Q[j].
\]
Hence the labels preserve exactly the collisions between the
positions in $S$ and the positions in $T$, and introduce no
additional collisions.

It remains to describe an update.  Suppose first that
\[
S'=S\setminus\{u\}\cup\{v\}.
\]
Starting from $\mathcal D(S,T)$, reversibly uncompute all values
$\mu_{S,T}(j)$.  Next, apply the queries used to prepare
$\mathsf{Rec}_X(u)$ in reverse, thereby returning that record to
zero.  Replace the index $u$ by $v$, query the $O(L)$ entries needed
to prepare $\mathsf{Rec}_X(v)$, and restore increasing order.
Finally, reversibly compute all values $\mu_{S',T}(j)$ from the new
records.

All temporary registers are returned to zero.  If an auxiliary
register stores the removed and inserted positions, the update acts
as
\[
U_{\mathrm{upd}}
|S,T,\mathcal D(S,T)\rangle|u,v\rangle
=
|S',T,\mathcal D(S',T)\rangle|v,u\rangle.
\]
Applying these operations in reverse implements
$U_{\mathrm{upd}}^\dagger$.  Both directions use $O(L)$ input
queries.  Replacing one position of $T$ is handled in the same way.
If the removed and inserted positions lie on different sides, one
record is removed from one side and one record is inserted on the
other side, and the query complexity remains $O(L)$.

Because the records are stored in increasing order, the
tie-breaking rule is fixed, and all values $\mu_{S,T}(j)$ are
computed by a fixed deterministic procedure, there is exactly one
database basis state associated with each choice of $S,T,X,Y$.
Consequently, any two sequences of updates that reach the same
sets $S,T$ also produce the same database state.  The setup and
update operations therefore extend linearly to superpositions of
the subsets. Finally, access to a stored label can be implemented by a reversible
lookup.  For the left side, the operation is
\[
|S,T,\mathcal D(S,T)\rangle|0,i,z\rangle
\longmapsto
|S,T,\mathcal D(S,T)\rangle
|0,i,z\oplus\lambda^P_{S,T}(i)\rangle.
\]
For the right side, the operation is
\[
|S,T,\mathcal D(S,T)\rangle|1,j,z\rangle
\longmapsto
|S,T,\mathcal D(S,T)\rangle
|1,j,z\oplus\lambda^Q_{S,T}(j)\rangle.
\]
These lookups use only information already contained in the database
and therefore require no further queries to $X,Y$.
\end{proof}

\begin{corollary}[Application to the quantum Ulam distance algorithm]
\label{cor:ak-quantum-access}
Conditioned on the high-probability event of Lemma~4.6
of~\cite{andoni2012smoothed}, the proof of Theorem~6.3 can be executed using
quantum query access only to the original smoothed strings $X,Y$.
Relative to direct quantum query access to the non-repetitive strings
$P,Q$, its input-query complexity increases by a factor of
\[
O(L)
=
O\left(\frac{\log n}{\sigma}\right).
\]
\end{corollary}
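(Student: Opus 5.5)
The plan is to substitute, in the proof of Theorem~\ref{thm:ulam}, every query to the non-repetitive strings $P,Q$ by the reversible machinery of Lemmas~\ref{lem:ak-local-collision} and~\ref{lem:ak-subset-database}. Then I will check, primitive by primitive, that each operation of the algorithm touches $P,Q$ only in one of three ways: (i) preparing or updating a subset database, (ii) testing equality $P[i]=Q[j]$ for positions held in such a database, or (iii) a single-position equality test between given $i$ and $j$. Each of these costs $O(L)$ queries to $X,Y$, and no intermediate garbage depends on insertion order, so interference patterns of the ideal algorithm are preserved. The essential observation is that no routine in Section~\ref{sec:smoothedEditD} ever needs the actual symbol $P[i]$. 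It only needs the collision structure between $P$ and $Q$, and the labels $\lambda^P_{S,T},\lambda^Q_{S,T}$ of Lemma~\ref{lem:ak-subset-database} reproduce exactly that structure.

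Concretely, I would go through the primitives in order.

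\textbf{Johnson-walk collision tester (Theorem~\ref{thm:gap_dist_collisions} and its corollaries).} The setup, update and checking operations are implemented with $\mathcal D(S,T)$ in place of the stored values. Markedness depends only on the number of collisions inside $S$, which the labels give without further queries. Setup therefore costs $O(rL)$ and each update costs $O(L)$, while the spectral gap and the marked fraction are unchanged. Hence Theorem~\ref{thm:approx_marked_fraction} yields the same bound times $O(L)$.

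\textbf{Enumeration, the $Y$-tester, and pair finding.} The same replacement applies to the enumeration procedure of Appendix~\ref{app:fixed-matching-collisions}. The reflection in Lemma~\ref{lem:fixed-matching-reflection} uses only updates, and the extraction unitary~\eqref{eq:fixed-matching-extraction-unitary} reads matched pairs from the database, outputting the label in place of $F(e)$. The $Y$-tester (Theorem~\ref{cor:ytest}) and pair finding (Theorem~\ref{thm:quantumpairfinding}) are again Johnson-walk or collision-count procedures on windows, so they inherit the $O(L)$ factor. For pair finding I will note that the product-Johnson-graph implementation stores values on both sides, and this is precisely the two-sided database.

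\textbf{Remaining operations.} Subsampling, windows and block decompositions use only indices and need no queries. Variable-time counting (Theorem~\ref{thm:VTcounting}) and amplitude estimation are compositions of the above unitaries, so their costs scale linearly.

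The main obstacle is verifying that the database is a genuine function of $(S,T,X,Y)$, so that the walk states remain of the form $\sum_S\ket{S,\mathcal D(S,T)}$ with no entangled history. This is item~4 of Lemma~\ref{lem:ak-subset-database}, together with the fixed tie-breaking rule. I will also check that one-sided walks, where only $S\subseteq[2n]$ on the concatenation is maintained, fit the lemma by splitting $S$ into its $A$-part and $B$-part. A second point is that all of this holds conditioned on the high-probability event of Lemma~4.6 of~\cite{andoni2012smoothed}. The corollary is stated conditionally, so only the total multiplicative overhead $O(L)=O((\log n)/\sigma)$ remains to be summed over the whole algorithm.
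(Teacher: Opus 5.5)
Your proposal is correct and follows essentially the same route as the paper's proof. You replace every stored-value database by the order-independent subset database $\mathcal D(S,T)$, splitting a subset of the concatenation into its $P$- and $Q$-parts, and you note that only the collision structure of $P,Q$ is ever used. You then propagate the $O(L)$ setup and update overhead through the Johnson walk, the product-Johnson pair finder and the multiple-collision procedure. From there it passes to the collision estimators, the $Y$-tester, partial alignment and variable-time counting, where $\sqrt{\sum_s (Lt_s)^2}=O(L)\sqrt{\sum_s t_s^2}$.

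The paper also spells out two small bookkeeping points that you leave implicit. First, when a collision is extracted, it retains the two local records alongside the endpoints, so that deleting them from the database stays reversible without new queries. Second, the fresh padding symbols used in the collision estimators are known explicitly and cost no queries.
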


\begin{proof}
We do not apply Theorem~\ref{thm:ulam} as a black box with a quantum value oracle for the complete strings $P,Q$.  Instead, we inspect its proof and replace every database of queried symbols by the reversible subset database from Lemma~\ref{lem:ak-subset-database}. We inspect the ways in which the proof of Theorem~6.3 accesses the non-repetitive strings.  In every case, the numerical names of the symbols are irrelevant.  The procedures use the order of the positions and test which pairs of positions contain the same
symbol.

Consider first the Johnson-graph walk in Theorem~6.5.  For a current subset of positions in the concatenation of $P$ and $Q$, let $S$ be the positions coming from $P$ and let $T$ be the positions coming from $Q$.  Replace the database of queried symbols by $\mathcal D(S,T)$.  By Lemma~\ref{lem:ak-subset-database}, the setup cost changes from $O(r)$ to $O(rL)$ and the update cost changes from $O(1)$ to $O(L)$.  The number of collisions contained in the current subset can be computed from the values $\mu_{S,T}(j)$ without any further input queries.  Thus the checking cost remains zero.

The same replacement applies to the product-Johnson-graph procedure
used in Theorem~6.10: one subset supplies $S$, the other supplies
$T$, and the marked condition is exactly the existence of a pair
$(i,j)$ satisfying $\mathsf{Mate}(i,j)=1$.

The multiple-collision procedure in Appendix~A.3 is formulated in
terms of a fixed matching of disjoint collision pairs.  In the
present setting, this matching is $M$.  Its database can therefore
be replaced by the database of
Lemma~\ref{lem:ak-subset-database}.  When a collision is extracted,
we retain its two endpoint positions.  If needed for reversing a
subsequent deletion, we also retain the two corresponding local
records.  These records are already present in the database and can
be copied without an additional query.  Hence setup and update costs
in the multiple-collision procedure also increase by at most a
factor of $O(L)$.

The collision estimators of Theorem~6.7 are obtained from
Theorem~6.5 and inherit the same overhead.  The fresh padding symbols
introduced there are known explicitly and require no query to
$X,Y$.

The procedure in Theorem~6.11 uses the strings only through
collision counts between pairs of intervals.  The Q-GapUlamTest,
the partial-alignment procedure, and the Ulam product tester use the
strings only through the collision estimators, the
multiple-collision procedure, and the pair-finding procedure
discussed above.  Restricting $P,Q$ to substrings simply restricts
the matching $M$ to the corresponding intervals, while preserving
the original position indices.

All the required setup, update, checking, and inverse operations are
therefore available with an $O(L)$ multiplicative overhead.  In the
variable-time application, if a controlled procedure originally
uses $t_s$ queries, the new implementation uses $O(Lt_s)$ queries.
Consequently,
\[
\sqrt{\sum_s O(Lt_s)^2}
=
O(L)\sqrt{\sum_s t_s^2}.
\]
Thus the total input-query complexity in the proof of
Theorem~6.3 increases by a factor of $O(L)$.
\end{proof}

\begin{remark}
The preceding argument concerns input-query complexity.  Computing
all collision relations among the local records stored in a subset
may require additional reversible elementary operations.  As in the
rest of this work, we do not analyze this additional gate
complexity.
\end{remark}
\end{document}